\documentclass[reqno,12pt]{amsart}

\usepackage[a4paper, 
            left=1in,
            right=1in,
            top=1in,
            bottom=1in]{geometry}

\usepackage[T1]{fontenc}
\usepackage[utf8]{inputenc}
\usepackage{lmodern}
\usepackage{microtype}
\usepackage{amsmath,amssymb,amsthm,mathtools}
\usepackage{xcolor}
\usepackage{enumitem}
\usepackage{booktabs}
\usepackage{array}
\usepackage{float}
\usepackage{hyperref}
\usepackage[nameinlink,capitalize,noabbrev]{cleveref}
\usepackage{thmtools}
\usepackage{tikz}
\usetikzlibrary{arrows.meta,positioning,fit,calc}

\usepackage{caption}
\definecolor{hornblue}{RGB}{58,82,112}
\definecolor{hornaccent}{RGB}{117,89,67}
\definecolor{hornfill}{RGB}{244,247,250}
\definecolor{hornwarmfill}{RGB}{249,246,242}
\definecolor{horngray}{RGB}{92,92,92}
\tikzset{
  horn flow/.style={
    draw=hornblue,
    fill=hornfill,
    rounded corners=2pt,
    line width=.65pt,
    align=center,
    inner xsep=6pt,
    inner ysep=5pt,
    font=\small
  },
  horn horn/.style={
    draw=hornaccent,
    fill=hornwarmfill,
    rounded corners=2pt,
    line width=.65pt,
    align=center,
    inner xsep=6pt,
    inner ysep=5pt,
    font=\small
  },
  horn component/.style={
    draw=hornblue,
    fill=hornfill,
    rounded corners=3pt,
    line width=.65pt,
    inner sep=7pt
  },
  horn element/.style={
    draw=horngray,
    fill=white,
    rounded corners=1.5pt,
    line width=.55pt,
    inner xsep=4pt,
    inner ysep=3pt,
    font=\scriptsize
  },
  horn point/.style={
    circle,
    draw=horngray,
    fill=white,
    line width=.55pt,
    minimum size=5.6mm,
    inner sep=0pt,
    font=\scriptsize
  },
  horn math/.style={
    draw=horngray,
    fill=white,
    rounded corners=2pt,
    line width=.55pt,
    align=center,
    inner xsep=5pt,
    inner ysep=4pt,
    font=\small
  },
  horn arrow/.style={-{Latex[length=2.2mm,width=1.5mm]}, line width=.65pt, draw=horngray},
  horn relation/.style={-{Latex[length=2mm,width=1.4mm]}, line width=.6pt, draw=hornblue},
  horn dashed/.style={-{Latex[length=2mm,width=1.4mm]}, line width=.6pt, dashed, draw=hornaccent},
  horn label/.style={font=\scriptsize, text=horngray, align=center}
}

\hypersetup{
  colorlinks=true,
  linkcolor=blue!50!black,
  citecolor=blue!50!black,
  urlcolor=blue!50!black
}

\declaretheorem[name=Theorem,numberwithin=section]{theorem}
\declaretheorem[name=Proposition,sibling=theorem]{proposition}
\declaretheorem[name=Lemma,sibling=theorem]{lemma}
\declaretheorem[name=Observation,sibling=theorem]{observation}
\declaretheorem[name=Corollary,sibling=theorem]{corollary}
\declaretheorem[name=Definition,sibling=theorem,style=definition]{definition}
\declaretheorem[name=Construction,sibling=theorem,style=definition]{construction}
\declaretheorem[name=Remark,sibling=theorem,style=remark]{remark}

\crefname{construction}{Construction}{Constructions}
\Crefname{construction}{Construction}{Constructions}

\newcommand{\struct}[1]{\mathfrak{#1}}   
\newcommand{\fm}{\operatorname{fm}}
\newcommand{\age}{\operatorname{age}}
\newcommand{\Pol}{\operatorname{Pol}} 
\newcommand{\ar}{\operatorname{ar}}
\newcommand{\maxar}{\operatorname{maxar}}
\newcommand{\dom}{\operatorname{dom}}

\newcommand{\Hom}{\operatorname{Hom}}
\newcommand{\CSP}{\operatorname{CSP}}
\newcommand{\ComplexityClass}[1]{\textnormal{\textsc{#1}}}

\newcommand{\K}{\mathcal K}
\newcommand{\T}{\mathfrak T}
\newcommand{\I}{\mathfrak I}

\title[Deciding Amalgamation Beyond Arity Two:
The Semantic Horn Case]{Deciding Amalgamation Beyond Arity Two: \\
The Semantic Horn Case}
\author{Jakub Rydval}

\thanks{This research was funded in whole or in part by the Austrian Science Fund (FWF) [ESP 1571225]. For the purpose of Open Access, the author has applied a CC BY public copyright licence to any Author Accepted Manuscript (AAM) version arising from this submission.}

\begin{document} 
\begin{abstract}
We study the amalgamation decision problem (ADP): given a universal
first-order sentence $\Phi$, decide whether the class $\fm(\Phi)$ of its
finite models has the amalgamation property. We call $\Phi$
\emph{semantic Horn} if $\fm(\Phi)$ is closed under binary direct products.
By McKinsey's theorem, this is equivalent to $\Phi$ being logically
equivalent to a universal Horn sentence. The distinction concerns the input
representation: $\Phi$ itself need not be given in Horn form, and converting
it into an explicit Horn normal form can incur an exponential blow-up. We
prove that the ADP is decidable under this semantic promise.

To this end, we associate with $\Phi$ a finite-domain CSP template, called
its \emph{completion template}, and a distinguished infinite family of CSP
instances, called its \emph{atlas instances}. The class $\fm(\Phi)$ has the
amalgamation property precisely when all atlas instances have a solution.
For semantic Horn inputs, we prove that the completion template admits a
semilattice polymorphism. Consequently, its CSP has bounded width and is
decided by a fixed level of local consistency. We introduce uniform
contextual strategies, finite certificates expressing this local-consistency
condition simultaneously for all atlas instances, and give an effective
fixed-point procedure for deciding whether such a certificate exists.

The resulting algorithm runs in $\ComplexityClass{2ExpTime}$ in general and
in $\ComplexityClass{ExpTime}$ under any fixed bound on the arities of the
input relations. More generally, the construction gives a decision procedure
whenever the associated completion template has bounded width. These bounds
are optimal: the semantic Horn ADP is
$\ComplexityClass{2ExpTime}$-complete in general and
$\ComplexityClass{ExpTime}$-complete for every fixed arity bound of at least
three. Both hardness results already hold for syntactic universal Horn
sentences.
\end{abstract}
\maketitle

\section{Introduction}\label{sec:introduction}

The amalgamation property is one of the basic structural properties of classes
of finite relational structures.  For a universal first-order sentence
$\Phi$, let $\fm(\Phi)$ denote the class of its finite models.  The
\emph{Amalgamation Decision Problem} (ADP) asks whether $\fm(\Phi)$ has the
amalgamation property.
The \emph{Amalgamation Property} (AP) states that whenever
$\struct{B}_1,\struct{B}_2\in\fm(\Phi)$ induce the same substructure on
$B_1\cap B_2$, there are embeddings
$f_i\colon\struct{B}_i\to\struct{C}$ $(i\in\{1,2\})$ into some
$\struct{C}\in\fm(\Phi)$ that agree on $B_1\cap B_2$.

Despite its elementary formulation, decidability of the ADP is open in general.
The systematic complexity-theoretic study of the ADP was initiated in
\cite{RydvalInsideOut}, where the problem was shown to be
$\ComplexityClass{2NExpTime}$-hard and a conditional decision procedure was obtained
from a conjectural bound in structural Ramsey theory.   
Without additional restrictions on the model class, the only known
unconditional decidability result is for the ADP over binary signatures: there
is a $\ComplexityClass{coNExpTime}$ upper bound~\cite{RydvalInsideOut} refining a
decidability result of Lachlan~\cite{lachlan1986homogeneous} based on the
observation that failures of the AP are witnessed by small counterexamples; see
also~\cite{BodirskyKnauerStarkeASNP}.  There are, however, decidable modifications of the ADP in arbitrary arity.  In particular, Lachlan showed that, given a universal sentence $\Phi$,  one can decide whether $\fm(\Phi)$ forms a \emph{stable amalgamation class}~\cite[Theorem~6.2]{LachlanStableSurvey1997}.
The decidability question is relevant beyond model theory, notably to
infinite-domain constraint satisfaction over reducts of finitely bounded
homogeneous structures~\cite{Bodirsky-Mottet,mottet2022smooth}, verification of database-driven
systems~\cite{bojanczyk2013verification}, computation over sets with atoms and
infinite structured alphabets~\cite{BojanczykKlinLasota2014,clemente2015reachability},
and description logics with concrete
domains~\cite{lutz2007tableau,baader2022using,borgwardt2024precise}.

In this paper we settle the decidability question for the semantic Horn fragment.
We call a universal sentence $\Phi$ \emph{semantic Horn} if $\fm(\Phi)$ is
closed under binary direct products, equivalently under all nonempty finite
iterated direct products.  By McKinsey's theorem, this holds if and only if $\Phi$ is
logically equivalent to a universal Horn sentence~\cite[Theorem~2]{schrottenloher2022universal}.
Thus semantic Horn input does not define new finite model classes beyond the
universal Horn case; rather, it permits succinct non-Horn presentations of
those classes.  Recognizing the semantic Horn promise is
$\Pi^{p}_2$-complete~\cite[Proposition~1]{schrottenloher2022universal}.

Our result in particular answers Question~1
of~\cite{schrottenloher2022universal}, which asks whether the AP is decidable for
the finite models of a universal Horn sentence.  Unlike the previously known
general decidability theorem for binary signatures, our procedure allows input
signatures of unbounded arity; it is complementary to the earlier decidability
result for stable amalgamation classes mentioned above.
This particularly positive outcome is in sharp contrast with the fact that the \emph{Joint Embedding Property} (JEP) is undecidable already for the finite models of universal Horn sentences over binary signatures~\cite{schrottenloher2022universal}.
For finite models of universal first-order sentences, the JEP can be viewed as a
special case of the AP where the intersection of $\struct{B}_1$ and
$\struct{B}_2$ is empty.  There is no contradiction with the decidability
result above: the AP implies the JEP, but the converse fails, so deciding the AP does not
decide the JEP on AP-negative inputs.  

\begin{theorem}\label{thm:semantic-horn-main}
There is a deterministic $\ComplexityClass{2ExpTime}$ algorithm which, given a semantic
Horn sentence $\Phi$, decides whether $\fm(\Phi)$ has the AP.  For every fixed
$r\geq1$,
the algorithm runs in $\ComplexityClass{ExpTime}$ on input signatures of maximum arity
at most $r$.
\end{theorem}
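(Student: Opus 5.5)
The plan follows the route announced in the abstract. It reduces the AP for $\fm(\Phi)$ to the solvability of an infinite family of CSP instances over one fixed finite template. It then uses the semantic Horn promise to get bounded width. Finally, it replaces "all instances pass local consistency" by the existence of a finite certificate, which is found by a fixed-point computation.

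\emph{Step 1: the completion template and atlas instances.} Let $\tau$ be the signature of $\Phi$, let $r$ be its maximum arity, and let $k$ be the number of variables of $\Phi$. Failures of the AP can be normalised: it suffices to amalgamate one-point extensions $\struct{B}_1,\struct{B}_2$ of a common $\struct{A}$ over $A=B_1\cap B_2$, and to amalgamate them without identifying the two new points. The amalgam $\struct{C}$ on $B_1\cup B_2$ is then determined by the tuples that mix the two new points with elements of $A$. Such a tuple involves at most $r$ elements, and $\Phi$ only constrains sets of at most $k$ elements. So I would take as the completion template $\struct{T}$ the finite structure whose domain is the set of possible "mixed atomic types" on small tuples, with relations encoding compatibility with $\Phi$ on $\le k$-element subsets. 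Its size is at most doubly exponential in $|\Phi|$, and singly exponential for fixed $r$.

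The atlas instance of a triple $(\struct{A},\struct{B}_1,\struct{B}_2)$ has one variable per mixed tuple over $A$. Its constraints come from all $\le k$-subsets, with the parts of those subsets lying inside $B_1$ or $B_2$ fixed by the given structures. The claim to prove is: $\fm(\Phi)$ has the AP iff every atlas instance maps homomorphically to $\struct{T}$. One direction reads off the amalgam; the other reconstructs it, using that $\Phi$ is universal, so it suffices to check $\Phi$ locally.

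\emph{Step 2: a semilattice polymorphism.} Closure of $\fm(\Phi)$ under binary products should transfer to $\struct{T}$. Taking products of amalgamation attempts coordinatewise and restricting to a diagonal copy gives a binary operation on types. I would take as the candidate the "intersection" of the positive atomic information, i.e.\ the operation induced by the product. Associativity, commutativity and idempotence hold because the product behaves like meet on atomic diagrams. Preservation of the relations of $\struct{T}$ is exactly the statement that products of models are models, i.e.\ the semantic Horn hypothesis. It then follows that $\CSP(\struct{T})$ has width $(\ell,m)$ for constants depending only on $\struct{T}$. In fact $(1,\text{arity})$-consistency (arc consistency) suffices for semilattice templates, which keeps the bounds explicit.

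\emph{Step 3: uniform contextual strategies.} The atlas instances are infinite in number, but each is built from $\struct{A}$ by a uniform local recipe. A consistency run on an instance assigns to each small set of variables a set of allowed partial maps, and this set depends only on a bounded "context": the atomic type of the involved elements of $A\cup\{b_1,b_2\}$. So I would define a uniform contextual strategy as a function from contexts (finitely many, at most doubly exponentially many) to nonempty sets of local assignments. It must be closed under the forth/restriction conditions. Those conditions must hold for every way a context can be extended inside some model $\struct{A}\in\fm(\Phi)$ extendable to $\struct{B}_1,\struct{B}_2$, and which extensions occur is itself decidable by checking $\Phi$ on bounded sets.

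The key lemma is: all atlas instances pass arc consistency iff a uniform contextual strategy exists. For $\Leftarrow$, instantiate the strategy. For $\Rightarrow$, take the union of the strategies over all instances, pruned by a greatest fixed point over contexts. The largest strategy is computed by iterated deletion, with at most $|\text{contexts}|\cdot|T|$ rounds, each polynomial in these quantities. This gives $\ComplexityClass{2ExpTime}$ overall and $\ComplexityClass{ExpTime}$ for fixed $r$, since then both the number of contexts and $|T|$ are singly exponential.

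\emph{Main obstacle.} I expect the hard part to be the $\Rightarrow$ direction of the key lemma. A local failure found by consistency in a specific large atlas instance must be witnessed purely at the level of contexts. Conversely, a failure of the fixed-point computation on contexts must be realisable in one actual instance. The tricky direction is realisation: different deletions may require incompatible structures $\struct{A}$. To handle this I would first try to glue the realising structures via free amalgamation over their shared context. Under the Horn hypothesis this gluing should stay in $\fm(\Phi)$, or it can be replaced by a product, using closure under products once more.
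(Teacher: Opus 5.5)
Your overall architecture matches the paper's: a completion template, atlas instances, a semilattice polymorphism obtained from closure under products, and finite uniform strategies computed as a greatest fixed point. Your $\Leftarrow$ direction ("instantiate the strategy") is \cref{prop:type-to-instance}.

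\textbf{First gap: the normalisation in Step~1 is false.} An amalgam of one-point extensions may identify $b_1$ with $b_2$. If you insist on amalgams that keep the new points apart, you are testing the SAP, not the AP. For instance, $\forall x,y\,(P(x)\wedge P(y)\Rightarrow x=y)$ is Horn and its finite models have the AP. Yet the diagram $B_1=\{b_1\}$, $B_2=\{b_2\}$ over $\varnothing$ with $P(b_1),P(b_2)$ has no amalgam with $b_1\neq b_2$, so your criterion would answer ``no''.

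The paper first applies the congruence expansion $\Gamma$. It records identifications as a relational congruence $E$, so that $\fm(\Phi)$ has the AP iff $\fm(\Gamma(\Phi))$ has the SAP. This also changes Step~2: the meet must now intersect congruences. Two same-domain models of $\Gamma(\Phi)$ have a model as intersection only because the quotient by $E_1\cap E_2$ embeds into the product of the two quotients (\cref{lem:gamma-intersection}). Your diagonal-of-a-product argument covers only the identification-free case. You also need to extend the meet across different components, for example by letting the $\prec$-least source chart win, as in \cref{prop:semilattice}.

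\textbf{Second gap: the direction you flag as the main obstacle, and your proposed repair fails.} Finite models of a Horn sentence are not closed under free amalgamation: for strict partial orders, gluing $a<c$ and $c<b$ over $c$ violates transitivity. Products of two structures are not amalgams of them either.

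Moreover, you do not need ``all atlas instances pass consistency $\Rightarrow$ a uniform strategy exists''. Correctness only requires ``AP $\Rightarrow$ a uniform strategy exists''. In that direction the gluing must come from the AP hypothesis itself, applied to the class of encoded diagrams. Even then, pooling instance-wise solutions does not by itself yield (T3): a completion of a larger structure realising the extended context need not restrict to the given local assignment.

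The missing idea is to generate the whole strategy from one global object:
\begin{enumerate}
  \item Encode diagrams of arbitrary size, not one-point ones, as models of the source sentence $\Phi_1$ of $\Delta(\Gamma(\Phi))$.
  \item By \cref{thm:inside-out-correspondence}, AP of $\fm(\Phi)$ makes $\fm(\Phi_1)$ an SAP class, hence a Fra\"{\i}ss\'e class with limit $\mathfrak F$.
  \item Compactness gives a single $\Phi_2$-expansion $\mathfrak F^+$ of $\mathfrak F$.
  \item Let $P_{\mathfrak p}$ consist of the pullbacks of $\mathfrak F^+$ along all embeddings of the context's source structure into $\mathfrak F$.
\end{enumerate}
Closure under deletion is then immediate. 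Condition (T3) is exactly the extension property of $\mathfrak F$, because every assignment comes from the same fixed completion $\mathfrak F^+$ (\cref{thm:positive-type-strategies}).
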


The hypothesis of \cref{thm:semantic-horn-main} is semantic.  In contrast, a
universal sentence is \emph{Horn} in the usual syntactic sense if its
quantifier-free part can be written in conjunctive normal form so that every
clause contains at most one positive relational or equality atom.  By the
McKinsey equivalence above, the semantic and syntactic notions describe the
same classes of finite structures, but not necessarily with comparable input
size.  Our algorithm works directly with the given semantic Horn sentence and
does not construct an equivalent Horn normal form.  Standard examples covered by the
theorem include the universal Horn theory of strict partial orders,
\[
  \forall x,y,z\,\bigl((x<y\wedge y<z\Rightarrow x<z)\wedge \neg(x<x)\bigr),
\]
whose finite models have even the \emph{Strong Amalgamation Property} (SAP).  
With the SAP, the \emph{amalgam} $\struct{C}$ can always be obtained by adding tuples to the relations of the union of $\struct{B}_1$ and $\struct{B}_2$.
Even the syntactic universal Horn subclass remains nontrivial for the ADP.  For
example, finite loopless digraphs of out-degree at most one are axiomatized by
the universal Horn sentence
\[
  \forall x,y,z\,\bigl((E(x,y)\wedge E(x,z)\Rightarrow y=z)\wedge\neg E(x,x)\bigr),
\]
and do not have the AP.  Indeed, over two isolated vertices $u,v$, let $\struct{B}_1$
add $b_1$ with $E(u,b_1)$ and $E(b_1,v)$ and let $\struct{B}_2$ add $b_2$ with $E(u,b_2)$.
Any amalgam for $\struct{B}_1$ and $\struct{B}_2$  must either keep $b_1$ and $b_2$ distinct or identify them.  The two
possibilities, and the obstruction in each case, are shown in
\cref{fig:functional-digraph-ap}.

\begin{figure}[ht]
\centering
\begin{tikzpicture}[x=1cm,y=1cm]
  \begin{scope}[xshift=0cm]
    \node[font=\scriptsize,anchor=east] at (.25,3.15) {$\struct B_1$};
    \node[horn point] (a1) at (.8,3.15) {$u$};
    \node[horn point] (u1) at (2.1,3.15) {$b_1$};
    \node[horn point] (b1) at (3.4,3.15) {$v$};
    \draw[horn relation] (a1) -- (u1);
    \draw[horn relation] (u1) -- (b1);

    \node[font=\scriptsize,anchor=east] at (.25,1.35) {$\struct B_2$};
    \node[horn point] (a2) at (.8,1.35) {$u$};
    \node[horn point] (v2) at (2.1,1.35) {$b_2$};
    \node[horn point] (b2) at (3.4,1.35) {$v$};
    \draw[horn relation] (a2) -- (v2);
    \node[horn label] at (2.1,.55) {common part $B_1\cap B_2 = \{u,v\}$};
  \end{scope}

  \begin{scope}[xshift=5.25cm]
    \node[horn point] (a) at (.55,2.25) {$u$};
    \node[horn point] (u) at (2.05,3.05) {$b_1$};
    \node[horn point] (v) at (2.05,1.45) {$b_2$};
    \node[horn point] (b) at (3.55,3.05) {$v$};
    \draw[-{Latex[length=2mm,width=1.4mm]},line width=.8pt,draw=hornaccent] (a) -- (u);
    \draw[-{Latex[length=2mm,width=1.4mm]},line width=.8pt,draw=hornaccent] (a) -- (v);
    \draw[horn relation] (u) -- (b);
    \node[horn label,text=hornaccent] at (2.05,.55)
      {$a$ has two outgoing edges};
  \end{scope}

  \begin{scope}[xshift=10.5cm]
    \node[horn point] (aa) at (.55,2.25) {$u$};
    \node[horn point] (w) at (2.05,2.25) {$b_{*}$};
    \node[horn point] (bb) at (3.55,2.25) {$v$};
    \draw[horn relation] (aa) -- (w);
    \draw[-{Latex[length=2mm,width=1.4mm]},line width=.8pt,draw=hornaccent] (w) -- (bb);
    \node[horn label,text=hornaccent] at (2.05,.72)
      {$b_{*} \to v$ is forced by $\struct B_1$};
    \node[horn label,text=hornaccent] at (2.05,.35)
      {but is absent from $\struct B_2$};
  \end{scope}
\end{tikzpicture}
\caption{Two structures $\struct{B}_1$ and $\struct{B}_2$ witnessing that the class of finite loopless digraphs of
out-degree at most one does not have the AP.}
\label{fig:functional-digraph-ap}
\end{figure}

The proof uses the inside-out correspondence of~\cite{RydvalInsideOut}, more specifically the transformations $\Delta$ and $\Gamma$ from Theorems~1.2 and~1.8 therein. 
We package the direction needed here into the composite
transformation \(\Theta=\Delta\circ\Gamma\).  From a universal sentence \(\Phi\), it produces what we call the
\emph{completion pair}
$ 
  \Theta(\Phi)=(\Phi_1,\Phi_2),
$
consisting of universal sentences $\Phi_1,\Phi_2$ over signatures
\(\sigma_1\subseteq\sigma_2\).
Then \(\fm(\Phi)\) has the AP exactly when every finite model of \(\Phi_1\)
admits a \emph{\(\Phi_2\)-completion}, by which we mean a
\(\sigma_2\)-expansion satisfying \(\Phi_2\). 
Moreover, on positive instances the class \(\fm(\Phi_1)\) has the SAP.
We state the packaged form in
\cref{thm:inside-out-correspondence} and its quantitative bounds in
\cref{prop:theta-size}.  The main argument otherwise treats the semantic
correspondence as a black box; the explicit formulas and the properties needed
later are recorded in \cref{sec:inside-out}.

Our first step is to encode this completion problem as a \emph{Constraint
Satisfaction Problem} (CSP); this is the problem of deciding whether there exists a homomorphism from a given structure $\struct{J}$ to a fixed \emph{template} structure $\struct{T}$. 
Next, we take a local-to-global viewpoint inspired by 
Otto's framework of amalgamation patterns and their finite realisations
\cite{OttoLocalGlobal}, 
where a global structure is assembled from compatible
local charts. 
Our use of this viewpoint is different: compatibility between
bounded local completions is encoded by a finite CSP template rather than by
the groupoid and covering constructions used there.  We associate with the
pair $(\Phi_1,\Phi_2)$ a finite relational template $\T_\Phi$ such
that completions of a finite source structure are equivalent to homomorphisms
from an associated finite CSP instance to $\T_\Phi$.  We then
introduce \emph{uniform $k$-contextual strategies}, a certain finite set-valued consistency
certificate. 
When $\T_\Phi$ has bounded
width, these certificates are complete for a fixed parameter $k\geq 1$: the existence of the relevant uniform $k$-contextual strategy is equivalent to the existence of completions for all finite
source structures.

The reason the semantic Horn condition suffices is algebraic.  Since $\Phi$ is universal,
$\fm(\Phi)$ is already closed under substructures; the additional
semantic Horn promise induces a natural meet on the local completion choices.
Two completions of the same source chart can be intersected, and this meet
commutes with restriction to smaller charts.  Consequently the finite
completion template \(\T_\Phi\) has a semilattice polymorphism.  Since
\(\T_\Phi\) has only unary and binary relations, the elementary
\cref{lem:semilattice-width} gives width \(2\).  The technical verification of
this semantic Horn meet property is given in
\cref{lem:gamma-intersection,lem:component-meet}; the main argument uses only
its transparent algebraic consequence.

This argument also isolates a more general decidable fragment of the ADP.

\begin{theorem}\label{thm:main}
There is a deterministic algorithm with the following specification.  On input
a universal sentence $\Phi$, it computes a finite relational structure
$\T_\Phi$ and:
\begin{enumerate}[label=\textup{(\roman*)}]
  \item \label{item:bw1} decides whether $\T_\Phi$ has bounded width;
  \item \label{item:bw2} if $\T_\Phi$ has bounded width, decides whether $\fm(\Phi)$ has the AP;
  \item otherwise, reports that the present criterion does not apply.
\end{enumerate} 
Consequently, the ADP restricted to inputs
whose associated completion template has bounded width is decidable.
The algorithm can be implemented in deterministic $\ComplexityClass{3ExpTime}$.
Under the promise that $\T_\Phi$ has bounded width, the recognition step in \ref{item:bw1} can be skipped and the AP test in \ref{item:bw2} can be implemented
in deterministic $\ComplexityClass{2ExpTime}$. 
\end{theorem}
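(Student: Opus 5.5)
The plan is to assemble the algorithm from the components announced in the introduction and to account for the cost of each stage.

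\emph{Step 1: build the template.} On input $\Phi$, compute the completion pair $\Theta(\Phi)=(\Phi_1,\Phi_2)$ via \cref{thm:inside-out-correspondence}, with sizes controlled by \cref{prop:theta-size}. From it, build the completion template $\T_\Phi$. Its domain consists of the $\Phi_2$-completions of bounded local charts, that is, $\sigma_1$-structures on at most $m$ elements, where $m$ is bounded by the quantifier rank and arities of $\Phi_1,\Phi_2$. It carries unary relations for chart types and binary relations expressing that two local completions agree under restriction. Enumerating these charts and completions gives $|\T_\Phi|$ doubly exponential in $|\Phi|$ in general, and singly exponential when the arity is fixed. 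Correctness of the encoding is the statement that a finite model of $\Phi_1$ has a $\Phi_2$-completion iff its associated atlas instance maps homomorphically to $\T_\Phi$. Together with \cref{thm:inside-out-correspondence}, this gives: $\fm(\Phi)$ has the AP iff every atlas instance has a solution in $\T_\Phi$.

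\emph{Step 2: item \ref{item:bw1}.} Deciding bounded width of a finite template is decidable: by Barto--Kozik it is equivalent to omitting types $\mathbf 1$ and $\mathbf 2$, which can be tested by searching for weak near-unanimity polymorphisms of arities $3$ and $4$ satisfying $w_3(y,x,x)=w_4(y,x,x,x)$. This search is a CSP over the power template of $\T_\Phi$, so it takes time exponential in $|\T_\Phi|^4$. Applied to a doubly exponential template, this yields the $\ComplexityClass{3ExpTime}$ bound. If the test fails, the algorithm reports that the criterion does not apply, which is item (iii).

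\emph{Step 3: item \ref{item:bw2}, the main obstacle.} Under bounded width, and since $\T_\Phi$ has relations of arity at most $2$, the CSP is solved by $(2,3)$-consistency, or by a fixed $k$ depending only on the relational width bound. Unsatisfiability of an instance is therefore witnessed exactly by failure of local consistency. We must decide this simultaneously for the infinite family of atlas instances. I would show that a uniform $k$-contextual strategy exists iff every atlas instance is $k$-consistent. Here a uniform $k$-contextual strategy assigns to each isomorphism type of a $k$-bounded context (a small $\Phi_1$-chart together with its neighbourhood pattern) a nonempty set of allowed template values, closed under the extension and restriction conditions of local consistency. The key difficulty is the converse direction: from consistency of every single instance, extract a uniform strategy. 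The idea is that the greatest fixed point of the pruning operator, computed on the finite set of context types, is contained in every instance's consistency pattern. This works because $\fm(\Phi_1)$ has the SAP on positive instances, which makes every context type realizable in a common instance and lets local refutations transfer. The fixed point is computed by iteratively deleting unsupported values over finitely many context types, polynomially many rounds in $|\T_\Phi|$ times the number of types. This is doubly exponential in general, and only singly exponential under fixed arity.

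Combining the steps: under the bounded-width promise, the algorithm skips Step 2 and runs Steps 1 and 3 in $\ComplexityClass{2ExpTime}$. Including the recognition step gives $\ComplexityClass{3ExpTime}$ overall. Decidability of the restricted ADP follows immediately.
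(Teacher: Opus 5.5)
Your overall architecture matches the paper's. You build $\T_\Phi$ from $\Theta(\Phi)$, recognize bounded width by exhaustive search for a 3-4-WNU pair in time $2^{M^{O(1)}}$ (with $M$ the doubly exponential size of $\T_\Phi$), and otherwise run the greatest-fixed-point test for uniform $3$-contextual strategies. The complexity accounting also agrees.

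The gap is in the only nontrivial direction of Step~3: showing that on a positive input a uniform $k$-contextual strategy exists, i.e.\ that the greatest fixed point is nonempty on every context. Your one precise claim is that the greatest fixed point ``is contained in every instance's consistency pattern''. That is the wrong inclusion. It is only the soundness direction: a uniform strategy induces an existential $k$-strategy on each $\I_b(\struct A)$, as in \cref{prop:type-to-instance}. It gives no lower bound on the fixed point, so it cannot establish nonemptiness. Per-instance $k$-strategies also cannot simply be glued. They differ from instance to instance, and \ref{item:T3} demands an extension along \emph{every} one-chart extension of \emph{every} context, which no single finite instance realizes. The phrase ``SAP makes every context type realizable in a common instance and lets local refutations transfer'' names the right resource but does not yet give an argument.

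The paper closes this gap as follows.
\begin{enumerate}
\item On a positive input, $\fm(\Phi_1)$ has the AP, so it has a Fra\"{\i}ss\'e limit $\mathfrak F$.
\item Compactness (\cref{lem:compactness}) gives a single global completion $\mathfrak F^+$.
\item One sets $P_{\mathfrak p}$ to be the set of assignments $y_i\mapsto(e\circ\alpha_i)^*\mathfrak F^+$, over all embeddings $e$ of the source of $\mathfrak p$ into $\mathfrak F$.
\item Condition \ref{item:T3} then follows from the extension property (\cref{lem:fraisse-extension}).
\end{enumerate}
This direction works with actual completions rather than $k$-consistency, and needs no bounded width.

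If you want to stay finitary along your lines, you must make the ``transfer'' precise. For example:
\begin{enumerate}
\item Define $P_{\mathfrak p}$ as the set of $h$ whose transport lies in the largest existential $k$-strategy of $\I_b(\struct A)$, for every embedding of the context's source into every finite $\struct A\models\Phi_1$.
\item Restricting strategies along embeddings shows these sets shrink as $\struct A$ grows.
\item The AP of $\fm(\Phi_1)$ makes the system directed, so by finiteness the intersection is attained at a single $\struct A$.
\item That single $\struct A$ gives nonemptiness and \ref{item:T3}; \ref{item:T2} needs one further amalgamation.
\end{enumerate}
Without such an argument, your claimed equivalence ``a uniform strategy exists iff every atlas instance is $k$-consistent'' remains unproved in the direction on which the algorithm's completeness depends.
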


Under the promise that $\T_\Phi$ has bounded width, Barto's collapse theorem
allows the fixed parameter $3$, and the resulting amalgamation test runs in
deterministic $\ComplexityClass{2ExpTime}$.  Applicability of the criterion can be
recognized in $\ComplexityClass{2NExpTime}$ by guessing and verifying the fixed-arity
3-4-WNU polymorphisms; deterministic exhaustive recognition gives the
$\ComplexityClass{3ExpTime}$ bound stated in \cref{thm:main}.  For semantic Horn inputs
no such recognition step is needed: the explicit width-$2$ argument gives the
general bounds in \cref{thm:semantic-horn-main}.

We provide two examples showing that the scope of Theorem~\ref{thm:main} properly extends the scope of Theorem~\ref{thm:semantic-horn-main}. Namely, the bounded width procedure additionally correctly decides the ADP for universal sentences specifying finite linear orders and their expansions by the betweenness relation 
\[
  \operatorname{Betw}(x,y,z)
  \quad\Longleftrightarrow\quad
  (x<y<z)\vee(z<y<x).
\]
Interestingly, dropping the linear order yields a universal sentence which is a YES-instance of the ADP but its completion template does not have bounded width; in fact it is NP-complete.
This example exposes a fundamental limit of our approach.

Finally, we show that the upper bounds we obtained are tight.
%
For bounded arity, a reduction from tree-automaton
universality improves the earlier $\ComplexityClass{PSpace}$ lower bound to
$\ComplexityClass{ExpTime}$ already for syntactic universal Horn sentences of maximum
relation arity three.
\begin{theorem}\label{thm:bounded-arity-hardness}
The ADP for syntactic universal Horn sentences of maximum relation arity at
 most three is $\ComplexityClass{ExpTime}$-hard.  
\end{theorem}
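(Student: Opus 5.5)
We reduce from universality of nondeterministic top-down (equivalently bottom-up) tree automata over binary trees, which is \ComplexityClass{ExpTime}-complete. Given an automaton $\mathcal A=(Q,\Sigma,\delta,F)$ with binary transitions $(q_1,q_2,a)\mapsto q$, we build in polynomial time a syntactic universal Horn sentence $\Phi_{\mathcal A}$ of maximum arity three such that $\fm(\Phi_{\mathcal A})$ has the AP if and only if $\mathcal A$ accepts every tree. The guiding idea follows the out-degree example from the introduction: Horn ``functionality'' clauses of the form $R(x,y)\wedge R(x,z)\Rightarrow y=z$ force identifications in every amalgam. These identifications propagate information across the common part, and we arrange this propagation to simulate the bottom-up run of $\mathcal A$.

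\emph{Signature.} We use a ternary relation $C(x,y,z)$ meaning ``$x$ has left child $y$ and right child $z$'', unary letters $L_a$ for $a\in\Sigma$, and unary state predicates $S_q$ for $q\in Q$. Since there are only polynomially many of each, the encoding is polynomial.

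\emph{Horn axioms.}
\begin{enumerate}[label=\textup{(\arabic*)}]
  \item Functionality of $C$ in its last two coordinates given the first, expressed by Horn clauses with an equality head.
  \item For every transition $(q_1,q_2,a)\mapsto q$, the clause
  \[
    C(x,y,z)\wedge S_{q_1}(y)\wedge S_{q_2}(z)\wedge L_a(x)\Rightarrow S_q(x).
  \]
  \item For every non-final $q$, a negative clause forbidding $S_q$ at a marked root. Alternatively, we dualize states, which makes rejection a positive fact so that Horn closure computes the set of reachable states.
\end{enumerate}
Since Horn closure is the least fixed point, the set of $S_q$ forced at a node is exactly the set of states reachable on its subtree. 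Nondeterminism of $\mathcal A$ therefore becomes the union semantics of this least fixed point.

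\emph{Correctness, AP failure from a rejected tree.} If some tree $t$ is rejected, we take as common part a single marked root. We let $\struct B_1$ contain an encoding of $t$ whose derived state set at the root avoids $F$ (we encode a complement-style predicate). We let $\struct B_2$ contain a root child tuple whose identification with $\struct B_1$'s child tuple is forced by axiom (1). Gluing then violates the negative clause, exactly as in \cref{fig:functional-digraph-ap}.

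\emph{Correctness, AP from universality.} If $\mathcal A$ is universal, we must show that every amalgamation problem has a solution. We take the free amalgam and close it under the functionality identifications, noting that Horn classes are closed under such quotient-by-forced-equality constructions whenever no negative clause fires. The only possible obstructions are tree-shaped patterns hanging off the common part, so universality rules them out. For this direction I would use \cref{thm:inside-out-correspondence}: the completion problem for $\Theta(\Phi_{\mathcal A})$ has solutions on all finite $\Phi_1$-models exactly when every such tree pattern is accepted.

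\emph{Main obstacle.} The hardest step is the ``universal $\Rightarrow$ AP'' direction. Arbitrary finite models need not be trees: they can contain DAG sharing and cycles in $C$, and extraneous structure can create spurious obstructions. I would first try to add Horn axioms making cyclic or shared configurations harmless, for example by letting states propagate only along ``well-founded'' marked paths. If that fails, I would relativize via an auxiliary unary guard predicate, so that unguarded structure never triggers negative clauses. Checking that these additions preserve arity three is routine.
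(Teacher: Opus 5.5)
There is a genuine gap, and it lies in the mechanism you borrow from the introduction. The equality-headed functionality clauses in (1) destroy the AP for \emph{every} automaton, so your map also sends universal automata to NO-instances. Take the common part $\{r\}$. Let $\struct B_1$ consist of $r,y_1,z_1$ with the single fact $C(r,y_1,z_1)$, and let $\struct B_2$ consist of $r,y_2$ with the single fact $C(r,y_2,y_2)$. Neither side carries a letter, a state, or a root mark, so no clause in (2) or (3) can fire and both sides are models. In any amalgam, functionality forces $f_1(y_1)=f_2(y_2)=f_1(z_1)$, contradicting injectivity of $f_1$.

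This is exactly why the out-degree-one digraphs of \cref{fig:functional-digraph-ap} fail the AP. A forced identification across the two sides, together with any discrepancy in local data, is already an obstruction, independent of any encoded computation. For the same reason your sketch of universal $\Rightarrow$ AP cannot work: after taking the free amalgam and quotienting by the forced equalities, the quotient may be a model, but the sides no longer embed into it. Appealing to \cref{thm:inside-out-correspondence} does not repair this, since it only restates the AP as a completion property.

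The acceptance mechanism is also inverted. Horn closure computes the reachable root states, but rejection (no final state reachable) is non-monotone and cannot be a Horn premise. Dualizing the states of a nondeterministic automaton amounts to complementation, which is not polynomial. Forbidding the non-final states at a marked root makes a marked tree consistent only when \emph{every} run accepts. It even contradicts your own $\struct B_1$, whose root state set is supposed to avoid $F$.

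The missing idea is the paper's pole mechanism:
\begin{itemize}
\item The sentence is equality-free, so the intended amalgam is the closure of $\struct B_1\cup\struct B_2$ under positive-head clauses, with no identifications.
\item All clauses except the seed clauses are complete, hence harmless by \cref{lem:complete-free-union}. The seed clauses miss only the edge between the two poles $y_1,y_2$.
\item This forces every newly derived tuple $Q(a,b,c)$ to have its poles on opposite sides, with a tree-shaped witness mapped homomorphically into the common part (\cref{lem:cross-q-tree-extraction}).
\item The run computation $P_s(y,x)$ is anchored at a single pole, and the negative clause forbids \emph{final} states at a $T$-marked root. So a one-pole side containing a padded $T$-rooted copy of $t$ is consistent exactly when $t$ is rejected, while the two-pole parser derives $\bot$ on every tree unconditionally.
\end{itemize}
Because the extracted tree need only map homomorphically, not injectively, the Horn entailment $\lambda_t\Rightarrow P_s$ still applies. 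This also disposes of the DAG-sharing and cycle issues you flag as the main obstacle.
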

For unbounded
arity, an exponential corridor-tiling game construction improves the earlier
$\ComplexityClass{ExpSpace}$ lower bound from the conference version of
\cite{RydvalInsideOut} to $\ComplexityClass{2ExpTime}$.
%
\begin{theorem}\label{thm:unbounded-arity-hardness}
The ADP for syntactic universal Horn sentences is
$\ComplexityClass{2ExpTime}$-hard when the relation arity is unbounded.  
%
\end{theorem}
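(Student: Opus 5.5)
We prove \cref{thm:unbounded-arity-hardness} by a polynomial-time reduction from a $\ComplexityClass{2ExpTime}$-complete game: the \emph{exponential-width corridor tiling game}. An instance consists of a finite tile set $D$, horizontal and vertical compatibility relations $H,V\subseteq D^2$, an initial row and a final tile, and a number $n$ in unary. The corridor has width $2^n$, and two players alternate placing tiles row by row. Existential player wins if a row containing the final tile is eventually reached (or some equivalent winning condition holds). Alternating exponential space equals $\ComplexityClass{2ExpTime}$, so deciding the winner is $\ComplexityClass{2ExpTime}$-complete.

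We use the same pattern as the bounded-arity reduction behind \cref{thm:bounded-arity-hardness}. There, a tree-automaton universality instance was encoded so that a failure of the AP corresponds to a rejecting tree. Here the tree is a strategy tree of the game. Positions inside a row are addressed by $n$-bit counters. Unbounded arity lets us carry the counter inside single tuples: we use relations of arity about $n+O(1)$ whose tuples $R(\bar b, x, y)$ say that $y$ is the tile cell at address $\bar b$ following cell $x$, where the bits $\bar b$ range over two designated elements $0,1$ of the common part. Succinct Horn clauses of polynomial size can then express:
\begin{itemize}
\item increment of the address (the classical carry encoding: one Horn clause per bit position);
\item vertical compatibility between the cells at the same address in consecutive rows, via a second relation linking a cell to its successor $2^n$ steps later;
\item local horizontal compatibility.
\end{itemize}
All of these are Horn: each constraint is an implication to a single atom, to an equality, or to $\bot$.

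The AP failure is engineered as in the digraph example of the introduction. The common part $\struct A$ holds the bit constants and the initial row, described succinctly. $\struct B_1$ contains a branch chosen by the universal player. $\struct B_2$ asserts a functional-dependency-style Horn constraint, as in the out-degree example, which forces identification of elements whenever the existential player has a consistent response. We must show that $\fm(\Phi)$ fails the AP iff the existential player has no winning strategy (or its complement, up to swapping roles). This requires two directions.

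\emph{Completeness.} Given a winning strategy for the universal player, build explicit finite $\struct B_1,\struct B_2$ whose amalgamation is blocked. Equality-generating Horn clauses propagate along the forced identifications until an irreflexivity clause is violated, exactly as in \cref{fig:functional-digraph-ap}.

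\emph{Soundness.} If the existential player wins, every pair $\struct B_1,\struct B_2$ over a common part amalgamates. We plan to prove this through the completion pair $\Theta(\Phi)=(\Phi_1,\Phi_2)$ of \cref{thm:inside-out-correspondence}: show that each finite model of $\Phi_1$ has a $\Phi_2$-completion, built by following the winning strategy chart by chart.

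The main obstacle is soundness. Arbitrary finite models may contain junk structure not shaped like game plays, such as cycles, conflicting counters, or partial rows. We would handle this by making every non-well-formed configuration freely amalgamable: closing under free amalgamation wherever the counter axioms already fail locally. A winning strategy is then used only on the well-formed part, which by the counter axioms has the shape of a play prefix of width $2^n$. We also check that $\Phi$ has size polynomial in $n$ and $|D|$. Hardness then transfers even though the maximum arity grows with $n$, which is consistent with the $\ComplexityClass{ExpTime}$ upper bound for fixed arity in \cref{thm:semantic-horn-main}.
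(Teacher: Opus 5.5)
\paragraph{A genuine gap: the obstruction mechanism is missing and misoriented.}
Your source problem (an exponential corridor game, via $\ComplexityClass{AExpSpace}=\ComplexityClass{2ExpTime}$) and your use of $n$-bit addresses inside high-arity tuples are right. The core of the reduction, however, is missing, and the orientation you sketch works against Horn logic. For a Horn sentence, a diagram fails to amalgamate because forward chaining on the union reaches $\bot$. That is a finite, well-founded derivation, a least fixed point. What it can naturally certify is a \emph{finite} winning strategy tree of the reachability player EP, not a safety strategy of UP. Complementation makes either orientation acceptable in principle, but the workable encoding is: EP wins $\Leftrightarrow$ AP fails.

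Your completeness step gives no mechanism for this. A single UP branch placed in $\struct{B}_1$ cannot encode alternation, because the obstruction must account for both UP moves at every UP node. Also, a model $\struct{B}_2$ cannot ``assert'' a functional dependency, since every constraint lives in $\Phi$.

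In the paper the construction works as follows:
\begin{itemize}
\item The entire finite EP tree sits in the common part: one element $p_v$ per node carries the $2^n$ facts $T_\alpha(\bar c,p_v)$, with address bits over two shared elements $b_0,b_1$.
\item Each side adds only a single pole, marked $L$ or $R$ and joined by padding atoms to every common element.
\item Seed clauses need both poles, so they fire only in the union.
\item Scanners derive Row and Step facts.
\item The $W$-rules implement OR (one rule per EP move) and AND (one UP rule containing both successors).
\item The unique labelled successors of the normal form prevent Step facts from mixing different witnesses.
\end{itemize}
No equality is used; the contradiction is a direct $\bot$.

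\paragraph{The soundness direction lacks the key idea.}
You correctly locate the difficulty here. The idea you are missing is padding that makes every clause \emph{except the seeds} complete, where the seed premise's Gaifman graph lacks only the pole edge $(y_1,y_2)$. Then \cref{lem:complete-free-union} shows that the union of any two sides satisfies all complete clauses. Closing the union under positive-head clauses adds only tuples whose two poles lie on opposite sides outside $B_1\cap B_2$ and whose other entries all lie in $B_1\cap B_2$.

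Hence the closure is itself a strong amalgam unless the final clause fires. If it does fire, the derivation, ranked by closure stage, unfolds into a genuine finite EP winning tree built from common tile facts. Junk in the input models is irrelevant, because the derivation itself is the certificate.

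This removes the need for your remaining steps. There is no need to pass through $\Theta(\Phi)$. It is also unclear how ``following the winning strategy chart by chart'' would produce completions: in the correct orientation, AP holds exactly when there is \emph{no} EP strategy to follow. Finally, ``free amalgamation wherever the counter axioms fail locally'' is not a construction, since every model of $\Phi$ satisfies all of its axioms.
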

%
In particular, its completion templates have width $2$ by
the semantic-Horn argument above.  Thus the deterministic
$\ComplexityClass{2ExpTime}$ upper bound under the bounded-width promise in
\cref{thm:main} is optimal as well.
We remark that the 
$\ComplexityClass{2NExpTime}$-lower bound from \cite{RydvalInsideOut}  concerns the ADP for unrestricted universal sentences: the construction
in~\cite{RydvalInsideOut} was not shown to produce semantic Horn sentences and
therefore does not give a lower bound for the fragment considered here.

\subsection*{Organization of the article}
\Cref{sec:preliminaries} fixes the notation and recalls the required material
from Fra\"{\i}ss\'e theory, universal algebra, constraint satisfaction, and logic.
In \cref{sec:inside-out} we package the transformations of
\cite{RydvalInsideOut} into the completion pair \(\Theta(\Phi)\).
\Cref{sec:completion-template} encodes finite completions as homomorphisms to a
finite template.  The set-valued local consistency machinery is developed in
\cref{sec:context-strategies}, and \cref{sec:positive-strategies} proves that
the resulting uniform contextual strategies exist on positive completion
pairs.  In \cref{sec:semantic-horn-inputs} we construct the semilattice
polymorphism associated with a semantic Horn sentence;
\cref{sec:semantic-horn-complexity} then proves the quantitative bounds in
\cref{thm:semantic-horn-main}.  The general bounded-width criterion is proved
in \cref{sec:main-proof}, and its recognition and running-time bounds are
established in \cref{sec:bw-complexity}.  We examine the scope and invariance
of the criterion in \cref{sec:algebraic-approach}.  The matching bounded- and
unbounded-arity lower bounds are proved in
\cref{sec:optimizing-lower-bounds}.  Finally, \cref{sec:conclusion} summarizes
the results and records the remaining open questions.

%
%

\section{Preliminaries}\label{sec:preliminaries}

The set $\{1,\dots,n\}$ is denoted by $[n]$, and we use the bar notation $\bar{t}$ for tuples.
The \emph{component-wise action} of a function $f\colon A^n \rightarrow B$ on $k$-tuples is given by \[f\big((x_{1,1},\dots, x_{1,k}),\dots, (x_{n,1},\dots, x_{n,k}) \big)\coloneqq \big(f(x_{1,1},\dots,x_{n,1}),\dots, f(x_{1,k},\dots,x_{n,k}) \big).\] 
For a function $f\colon A \rightarrow B$, the \emph{graph} of $f$ is the set $G_f\coloneqq \{(a,b)\in A\times B \mid f(a)=b\}$. 

\subsection{Structures}
A (\emph{relational}) \emph{signature} $\tau$ is a set of \emph{relation symbols}, each $R\in\tau$ with an associated natural number called \emph{arity}. 
We write $\maxar(\sigma)$ for the maximum arity of a
symbol in $\sigma$ (and $\maxar(\emptyset)\coloneqq 0$).

A (\emph{relational}) \emph{$\tau$-structure} $\struct{A}$ consists of a set $A$ (the \emph{domain}) together with the relations $R^{\struct{A}}\subseteq A^{k}$ for each $R\in \tau$ with arity $k$.
Note that structures are allowed
to have empty domain but we assume that signatures contain no nullary relation symbols.  

An \emph{expansion} of $\struct{A}$ is a $\sigma$-structure $ \struct{B}$ with $A=B$ such that $ \tau\subseteq \sigma$ and $R^{\struct{B}}=R^{\struct{A}}$ for each relation symbol $R\in \tau$. Conversely, we call $\struct{A}$ a \emph{reduct} of $\struct{B}$ and denote it by $\struct{B}\mathord{\upharpoonright}_\tau$.   
%
%
The \emph{union} and \emph{intersection} of two $\tau$-structures $\struct{A}$ and $\struct{B}$ are the $\tau$-structures $\struct{A}\cup \struct{B}$ and $\struct{A}\cap \struct{B}$ with domains $A\cup B$ and $A\cap B$, respectively, and relations
\[
R^{\struct{A}\cup \struct{B}}\coloneqq R^{\struct{A}}\cup R^{\struct{B}} \qquad R^{\struct{A}\cap \struct{B}}\coloneqq R^{\struct{A}}\cap R^{\struct{B}}
\]  for every $R\in \tau$.
The \emph{substructure} of a $\tau$-structure $\struct{A}$ \emph{induced} on a subset $B\subseteq A$ is the $\tau$-structure $\struct{B}$ with domain $B$ and relations $R^{\struct{B}}=R^{\struct{A}}\cap B^k$ for every $R\in \tau$ of arity $k$.
 
Let $f\colon X\to A$ be an injective map.
The \emph{pullback structure} $f^*\struct{A}$ is the $\tau$-structure with domain $X$ and relations  
\[
  R^{f^*\struct{A}}
  \coloneqq
  \bigl\{(x_1,\dots,x_k)\in X^k \mid
  (f(x_1),\dots,f(x_k))\in R^{\struct{A}}\bigr\}
\]
for every $k$-ary relation symbol $R\in\tau$.

The \emph{direct product} of two $\tau$-structures $\struct{A}$ and
$\struct{B}$ is the $\tau$-structure $\struct{A}\times\struct{B}$ with domain
$A\times B$ and, for every $k$-ary relation symbol $R\in\tau$,
\[
  R^{\struct{A}\times\struct{B}}
  \coloneqq
  \bigl\{
    ((a_1,b_1),\dots,(a_k,b_k))\in(A\times B)^k
    \mid
    (a_1,\dots,a_k)\in R^{\struct{A}}
    \text{ and }
    (b_1,\dots,b_k)\in R^{\struct{B}}
  \bigr\}.
\]
The \emph{factor} of a $\tau$-structure $\struct{A}$ through an equivalence relation $E\subseteq A^2$ is the $\tau$-structure $\struct{A}/{E}$ with domain $A/E$ and relations $R^{\struct{A}/{E}}= q_E(R^{\struct{A}})$, where $q_E$ denotes the factor map $x\mapsto [x]_E$. %
We say that $E$ is a \emph{relational congruence} on $\struct{A}$ if the definitions of the corresponding relations of $\struct{A}/E$ do not depend on the choices of the representatives of the equivalence classes of $E$:  for every $R\in \tau$ and all tuples $\bar{t}$ we have $q_E(\bar{t})\in R^{\struct{A}/{E}}$ if and only if $\bar{t}\in R^{\struct{A}}$.

 A \emph{homomorphism} $h\colon \struct{A} \rightarrow \struct{B}$ for $\tau$-structures $\struct{A}$, $\struct{B}$ is a mapping $h\colon  A\rightarrow B$ that \emph{preserves} each $\tau$-relation: if $ \bar{t} \in R^{\struct{A}}$ for some $R\in \tau$, then $h(\bar{t})\in R^{\struct{B}}$.
We set
\[
  \Hom(\struct{A},\struct{B})
  \coloneqq
  \{h\mid h\colon\struct{A}\rightarrow\struct{B}\}
\]
and write $\struct{A} \rightarrow \struct{B}$ if $\struct{A}$ maps
homomorphically to $\struct{B}$.
Two structures $\struct{A}$, $\struct{B}$ are \emph{homomorphically equivalent} if $\struct{A} \rightarrow \struct{B}$ and $\struct{B} \rightarrow \struct{A}$.
An \emph{embedding} is an injective homomorphism $h\colon \struct{A} \rightarrow \struct{B}$ that additionally satisfies the following condition: for every $k$-ary relation symbol $R\in \tau$ and $\bar{t}\in A^{k}$ we have $h(\bar{t})\in R^{\struct{B}}$ only if $\bar{t}\in R^{\struct{A}}.$
%
%
The \emph{age} of $\struct{A}$, denoted by $\age(\struct{A})$, is the class of all finite structures which embed into $\struct{A}$. 

An \emph{isomorphism} is a surjective embedding. Two structures $\struct{A}$ and $\struct{B}$ are \emph{isomorphic} if there exists an isomorphism from $\struct{A} $ to $\struct{B}$.  
An \emph{endomorphism} of $\struct{A}$ is a homomorphism from $\struct{A}$ to $\struct{A}$ and an \emph{automorphism} of $\struct{A}$  is an isomorphism from $\struct{A}$ to $\struct{A}$.  
A finite structure $\struct{A}$ is a \emph{core} if every endomorphism of $\struct{A}$ is an automorphism of $\struct{A}$ and \emph{rigid} if the only  automorphism of $\struct{A}$ is the identity map $\mathrm{id}_{A}\colon A \rightarrow A, x\mapsto x$.

\subsection{Fra\"{i}ss\'{e} theory}
Let $\mathcal{K}$ be a class of finite structures in a finite relational signature $\tau$, closed under isomorphisms and substructures.
An \emph{amalgamation diagram} for $\mathcal{K}$ is a pair $\struct{B}_1,\struct{B}_2\in\mathcal{K}$ whose substructures on $B_1\cap B_2$ coincide.
An \emph{amalgam} for such a diagram is a structure $\struct{C}\in\mathcal{K}$ together with embeddings $f_1\colon \struct{B}_1\to \struct{C}$ and $f_2\colon \struct{B}_2\to \struct{C}$ such that $f_1|_{B_1\cap B_2}=f_2|_{B_1\cap B_2}$.
The class $\mathcal{K}$ has the \emph{amalgamation property} if every amalgamation diagram has an amalgam in $\mathcal{K}$.
The \emph{strong} version of the AP (SAP) is when the amalgam can always be chosen so that $f_1(B_1) \cap f_2(B_2) = f_1(B_1\cap B_2).$
The \emph{free} version of the AP (FAP) is when the amalgam can always be chosen as the union $\struct{B}_1\cup \struct{B}_2$.
Under our convention, amalgamation diagrams may have empty intersection.
Consequently, if $\mathcal K$ is nonempty and hereditary, then AP implies the
joint embedding property: take disjoint copies of any two members of
$\mathcal K$ and amalgamate them over the empty structure.

A relational structure $\struct{A}$ is \emph{homogeneous} if every isomorphism between finite substructures of $\struct{A}$ extends to an automorphism of $\struct{A}$.
Countable homogeneous structures arise as limit objects of well-behaved classes of finite structures in the sense of Fra\"{i}ss\'{e}'s theorem.
We use the following standard special case for finite relational signatures.

\begin{theorem}[Fra\"iss\'e's theorem, special case] \label{theorem:fraisse_2}
Let $\mathcal{K}$ be a  nonempty class of finite structures in a finite relational signature, closed under isomorphisms and substructures.
Then $\mathcal{K}$ is the age of a countable homogeneous structure if and only if $\mathcal{K}$ has the amalgamation property.
In that case the countable homogeneous structure is unique up to isomorphism and called the \emph{Fra\"{i}ss\'{e} limit} of $\mathcal{K}$.
We call such a class a \emph{Fra\"{i}ss\'{e} class}.
\end{theorem}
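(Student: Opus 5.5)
This is the standard Fraïssé theorem, specialised to finite relational signatures. I would prove it in the classical way: a back-and-forth argument for uniqueness, and a chain construction for existence.

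\emph{Necessity.} Suppose $\mathcal K=\age(\struct{A})$ with $\struct{A}$ countable and homogeneous. Take an amalgamation diagram $\struct{B}_1,\struct{B}_2$ with common part $\struct{B}_0$ on $B_1\cap B_2$, and fix embeddings $e_i\colon\struct{B}_i\to\struct{A}$. The map $e_2\circ e_1^{-1}$ restricted to $e_1(B_0)$ is an isomorphism between finite substructures of $\struct{A}$. By homogeneity it extends to an automorphism $\alpha$ of $\struct{A}$. Then $\alpha\circ e_1$ and $e_2$ agree on $B_0$. Let $\struct{C}$ be the substructure of $\struct{A}$ induced on $\alpha e_1(B_1)\cup e_2(B_2)$. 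This $\struct{C}$ lies in $\mathcal K$ and is the required amalgam. The empty intersection case is covered as well, since the empty map is a partial isomorphism.

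\emph{Sufficiency and existence.} Suppose $\mathcal K$ is nonempty, hereditary and isomorphism-closed, and has the AP; by the convention above this includes the JEP. Because the signature is finite relational, $\mathcal K$ has only countably many isomorphism types. I would build a chain $\struct{D}_0\subseteq\struct{D}_1\subseteq\cdots$ of members of $\mathcal K$ by dovetailing over two kinds of tasks, each of which must be met at some stage.
\begin{enumerate}
\item[(a)] For every isomorphism type in $\mathcal K$, embed a copy into some $\struct{D}_n$. This is done by jointly embedding it with the current $\struct{D}_n$, i.e.\ amalgamating over the empty structure.
\item[(b)] For every triple $(\struct{B}_0\subseteq\struct{B}\in\mathcal K,\ f\colon\struct{B}_0\to\struct{D}_n)$ with $f$ an embedding, extend $f$ to an embedding $\struct{B}\to\struct{D}_m$ for some $m\geq n$. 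This is done by amalgamating $\struct{D}_n$ and $\struct{B}$ over $\struct{B}_0$, after renaming so the common part is exactly $f(B_0)$.
\end{enumerate}
The tasks of type (b) arising at each stage are countably many, because each $\struct{D}_n$ is finite and there are countably many types. So a standard enumeration of pairs $(n,\text{task})$ makes every task eventually handled. The union $\struct{A}=\bigcup_n\struct{D}_n$ then has age exactly $\mathcal K$: it contains everything by (a), and contains nothing more by heredity, since every finite substructure lies in some $\struct{D}_n$. It also has the extension property given by (b).

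\emph{Homogeneity and uniqueness.} Both follow from the extension property by back-and-forth. Given an isomorphism $p$ between finite substructures of $\struct{A}$ or of two such limits, alternately extend $p$ to the next element of an enumeration of the domain, then to the next element of the range. Each step uses the extension property applied to the one-point extension, which lies in $\mathcal K$. The union of these partial isomorphisms is an automorphism, respectively an isomorphism. For uniqueness, run the same argument starting from the empty map between any two countable homogeneous structures with age $\mathcal K$. Here one first observes that homogeneity together with the age condition implies the extension property: embed $\struct{B}$ anywhere in the structure, then use an automorphism to move the copy of $\struct{B}_0$ onto $f(B_0)$.

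The main obstacle is purely bookkeeping: arranging the countable enumeration of tasks in (b) so that every task is treated, including those that refer to structures created at later stages. I would handle this with a pairing function $\mathbb N\to\mathbb N^2$ and a fixed enumeration of the countably many tasks attached to each $\struct{D}_n$.
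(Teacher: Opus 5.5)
The paper does not prove this statement: it quotes it as the classical Fra\"{\i}ss\'e theorem for finite relational signatures and uses it as a black box. Under the paper's convention, AP includes amalgamation over the empty structure and hence JEP. Your argument is the standard textbook proof and is correct.

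One detail to tidy in task (b): a task attached to $\struct{D}_n$ is handled at some later stage $m\geq n$. At that point you should amalgamate the \emph{current} structure $\struct{D}_m$ with $\struct{B}$ over $f(B_0)$, rather than $\struct{D}_n$, so that the chain stays increasing; $f$ is still an embedding into $\struct{D}_m$. Also, since the paper's conventions put the empty structure in $\mathcal K$, task (a) is just the case $B_0=\varnothing$ of task (b).
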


\subsection{Universal algebra} \label{sect:univ_alg}
An \emph{equational condition} is a set of \emph{identities}, i.e.~formal
expressions of the form $s \approx t$, where $s$ and $t$ are terms over
a common set of function symbols.  
An example is the \emph{semilattice condition}, consisting of the following three identities:
\begin{align*}
f(x,y) \approx  \ &  f(y,x), \tag{commutativity} \\ 
f(f(x,y),z) \approx \ &  f(x,f(y,z)), \tag{associativity} \\
f(x,x) \approx \ &  x. \hfill \tag{idempotency}
\end{align*} 
An equational condition is \emph{height-1} if it contains neither nested terms nor terms consisting of a single variable.
Two important examples are the \emph{3-4-WNU condition}
\begin{align*}
  v(y,x,x)& \approx v(x,y,x)\approx v(x,x,y),\\
  w(y,x,x,x)& \approx w(x,y,x,x)\approx w(x,x,y,x)\approx w(x,x,x,y),\\
  v(y,x,x)&\approx w(y,x,x,x).
\end{align*}
and the ($4$-ary) \emph{Siggers identity}
$$  
s(a,r,e,a) \approx s(r,a,r,e).
$$  
 
A \emph{polymorphism} of a relational structure $\struct{A}$ is a  homomorphism from $\struct{A}^k$ into $\struct{A}$ for some $k\in \mathbb{N}$; the set of all polymorphisms of $\struct{A}$, the \emph{polymorphism clone} of $\struct{A}$, is denoted by $\Pol(\struct{A})$. 
An important special case is the $i$-th $k$-ary \emph{projection} 
\[ \pi^k_i\colon A^k \rightarrow A, (a_1,\dots,a_k) \mapsto a_i.\]
Every $\struct{A}$ has $\pi^k_i$ as a polymorphism for all $1\leq i \leq k$.
We say that $\Pol(\struct{A})$ (or some set $S$ of operations on %
$A$) \emph{satisfies} an %
equational condition $\Sigma$ if the function symbols can be interpreted as elements of $\Pol(\struct{A})$ (of $S$) so that, for each identity $s \approx t$ in $\Sigma$, the equality $s = t$ holds for any evaluation of variables in $A$. 

An equational condition is \emph{non-trivial} if it is not satisfiable by
projections over a $2$-element set.  For instance, the 3-4-WNU condition is non-trivial,
since no ternary projection on a $2$-element set satisfies its first line.

The Siggers identity is famously minimal among non-trivial height-one identities in polymorphism clones of finite structures in the following sense.
\begin{theorem}[\cite{BOP,Siggers_2010}] \label{thm:siggers}
    Let $\struct{A}$ be a finite relational structure with $|A|\geq 2$. Then $\Pol(\struct{A})$ satisfies a non-trivial height-1 condition if and only if $\struct{A}$ has a Siggers polymorphism.
\end{theorem}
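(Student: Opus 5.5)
The equivalence has an easy direction and a hard one. The hard direction reduces to Siggers' theorem for idempotent Taylor clones, via two standard transfers: one between a structure and its core, and one from height-1 conditions to Taylor terms.

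For the easy direction ($\Leftarrow$), a Siggers polymorphism satisfies the identity $s(a,r,e,a)\approx s(r,a,r,e)$. This identity is height-1, so it suffices to show it is non-trivial. Interpret $s$ as a projection $\pi^4_i$ on $\{0,1\}$; each choice forces an identity between distinct variables:
\begin{itemize}
\item $i=1$ gives $a\approx r$;
\item $i=2$ gives $r\approx a$;
\item $i=3$ gives $e\approx r$;
\item $i=4$ gives $a\approx e$.
\end{itemize}
Each of these fails on a $2$-element set, so no projection works.

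For the hard direction ($\Rightarrow$), assume $\Pol(\struct A)$ satisfies a non-trivial height-1 condition $\Sigma$. Let $\struct C$ be the core of $\struct A$, with homomorphisms $h\colon\struct A\to\struct C$ and $i\colon\struct C\to\struct A$.

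\emph{Step 1 (transfer to the core and back).} Height-1 conditions transfer between homomorphically equivalent structures. The map $f\mapsto h\circ f\circ i$, applied componentwise to arguments, sends polymorphisms of $\struct A$ to polymorphisms of $\struct C$. It commutes with identifying and permuting variables, so it preserves satisfaction of height-1 identities. The same argument in the reverse direction, with $f\mapsto i\circ f\circ h$, will transport a Siggers polymorphism found on $\struct C$ back to $\struct A$. We may therefore work with the core. Since $\struct C$ is a core, we may also add all unary singleton relations without destroying the property. The idempotent polymorphisms of $\struct C$ then still satisfy a non-trivial height-1 condition: compose with the inverse of the automorphism $x\mapsto f(x,\dots,x)$, which is a unary polymorphism of a core.

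\emph{Step 2 (Taylor term).} Next we show that an idempotent clone satisfying a non-trivial height-1 condition has a Taylor term. Suppose it had none. Then, by the Taylor/Hobby–McKenzie characterization for idempotent clones, it would admit a clone homomorphism to the projections on $\{0,1\}$. Such a homomorphism would map $\Sigma$ to a satisfaction of $\Sigma$ by projections, contradicting non-triviality.

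\emph{Step 3 (Taylor to Siggers).} This is the main obstacle. We have an idempotent finite Taylor clone and must produce a $4$-ary $s$ satisfying the Siggers identity. I would use the Barto–Kozik loop lemma (or its Siggers-era precursor for smooth digraphs). Let $F$ be the free algebra on generators $a,r,e$ in the variety generated by the clone. Inside $F^2$, consider the subuniverse $R$ generated by the four pairs $(a,r),(r,a),(e,r),(a,e)$. As a digraph on $F$, $R$ is compatible with a Taylor algebra, and it contains a closed walk of algebraic length $1$ (e.g.\ $a\to r\to a$ and $a\to e\to r\to a$). After restricting to a smooth part, the loop lemma yields a loop $(t,t)\in R$. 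A loop in $R$ means there is a $4$-ary term $s$ with $s(a,r,e,a)=s(r,a,r,e)$ in the free algebra, which is exactly the Siggers identity.

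The delicate points are establishing smoothness of the relevant subdigraph and verifying the algebraic-length condition. The loop lemma itself I would cite rather than reprove.
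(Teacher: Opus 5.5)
Your proposal is correct. It is the standard argument behind the paper's citation; the paper gives no proof of its own for this statement. The route is: pass to the core, make the polymorphisms idempotent, apply Taylor's theorem, and extract the $4$-ary term from the Barto--Kozik loop lemma applied to the subuniverse of $F^2$ generated by $(a,r),(r,a),(e,r),(a,e)$. This is how Kearnes, Markovi\'c and McKenzie obtained the $4$-ary form; Siggers' original argument gives a $6$-ary term.

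The two points you leave implicit both go through:
\begin{itemize}
\item Setting all variables equal in a height-1 identity shows that both sides have the same diagonal. Hence composing each symbol with the inverse of its diagonal preserves $\Sigma$.
\item The set of vertices from which some directed $R$-walk of length $|F|$ starts, and at which some such walk ends, is pp-definable from $R$. It is therefore a subuniverse; it contains $a,r,e$, and $R$ is smooth on it. Algebraic length $1$ is witnessed by traversing $a\to e\to r\to a$ forwards and $a\to r\to a$ backwards.
\end{itemize}
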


\subsection{Constraint satisfaction}
A fixed-template \emph{Constraint Satisfaction Problem} (CSP) is parametrized by a
finite relational structure $\struct T$, called the \emph{template}.  The
problem $\CSP(\struct T)$ asks whether a given finite structure $\struct J$
with the same signature admits a homomorphism to $\struct T$.
We call the domain elements of an instance $\struct J$ \emph{variables}.

The following notion of bounded width for CSPs is crucial for the present article.
Let $\struct J$ be a CSP instance over a finite template $\T$.  A
\emph{partial homomorphism} from $\struct J$ to $\T$ is a map $h$ from a
subset $\dom(h)$ of the variables of $\struct J$ to the domain of $\T$ which
satisfies every constraint of $\struct J$ all of whose variables lie in
$\dom(h)$.  An \emph{existential $k$-strategy} for
$\struct J\to\T$ is a nonempty family $\mathcal H$ of partial
homomorphisms from $\struct J$ to $\T$, each with domain of size at most
$k$, such that:
\begin{enumerate}[label=\textup{(S\arabic*)}]
  \item \label{item:S1} if $h\in\mathcal H$ and $h'$ is a restriction of $h$, then
  $h'\in\mathcal H$;
  \item \label{item:S2} if $h\in\mathcal H$, $|\operatorname{dom}(h)|<k$, and $x$ is a
  variable of $\struct J$, then $h$ extends to some $h'\in\mathcal H$ whose
  domain contains $x$.
\end{enumerate} 
 
A finite template $\T$ has \emph{width at most $k$} if every finite instance
$\struct J$ over the signature of $\T$ which admits an existential
$k$-strategy has a homomorphism to $\T$.  It has \emph{bounded width} if it
has width at most $k$ for some finite $k$. 

Width is monotone in $k$.  Indeed, let $\mathcal H$ be an existential
$(k+1)$-strategy and let $\mathcal H_{\leq k}$ consist of its members with
domain of size at most $k$.  If $h\in\mathcal H_{\leq k}$ has
$|\dom(h)|<k$ and $x$ is a variable, condition \ref{item:S2} in
$\mathcal H$ yields an extension $g$ whose domain contains $x$; restricting
$g$ to $\dom(h)\cup\{x\}$ by \ref{item:S1} yields a member of
$\mathcal H_{\leq k}$.  Thus $\mathcal H_{\leq k}$ is an existential
$k$-strategy, and width at most $k$ implies width at most $k+1$.

This formulation is equivalent to the usual Datalog and local-consistency
formulations of bounded width; see
\cite{KolaitisVardi,BartoKozik,BartoCollapse}.  We use the following standard
finite-domain consequences of~\cite{BartoKozik,BartoCollapse,KKVW}.\,\footnote{
The cited bounded-width collapse gives
relational width \((2,3)\) after passing to the core with constants.  If the
largest relation arity is \(r>3\), the second parameter is replaced by \(r\),
so that every constraint scope is inspected in full.  The standard
translation between \((j,k)\)-minimality and the existential pebble game
identifies the surviving partial assignments with a family of
partial homomorphisms on at most \(k\) variables.  Thus \(k=\max\{3,r\}\) gives precisely an existential
\(k\)-strategy in the sense used here.} 

\begin{theorem}\label{thm:bounded-width-known}
Let $\T$ be a finite relational structure with nonempty domain, and let $r$ be
the maximum arity of its relations.  Then $\T$ has bounded width if and
only if $\Pol(\T)$ satisfies the 3-4 WNU condition.  If these conditions hold,
then $\T$ has width at most $\max\{3,r\}$.
%
Consequently bounded width is decidable by finite search for a 3-4 WNU
pair.
\end{theorem}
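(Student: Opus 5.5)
The statement collects known results from \cite{BartoKozik,BartoCollapse,KKVW}, so the plan is to assemble them and supply only the translation into the strategy language used here.

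\emph{From bounded width to a 3-4-WNU pair.} Suppose $\T$ has bounded width. First reduce to the core with constants. Let $\T'$ be the core of $\T$ expanded by all singleton unary relations. Bounded width passes from $\T$ to $\T'$, because instances of $\CSP(\T')$ reduce to instances of $\CSP(\T)$ by a pp-construction that preserves the existence of existential strategies. If the 3-4-WNU condition failed in $\Pol(\T')$, then by the known characterization (the Larose--Z\'adori direction) $\T'$ would pp-construct a system of linear equations over some $\mathbb Z_p$, and such systems do not have bounded width. Hence $\Pol(\T')$ satisfies 3-4-WNU. Because the condition is height-1, it transfers back to $\T$ along the homomorphic equivalence: compose with the retraction of $\T$ onto its core.

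\emph{From a 3-4-WNU pair to width $\max\{3,r\}$.} Conversely, if $\Pol(\T)$ satisfies 3-4-WNU, then the same holds for the idempotent core with constants. By Barto--Kozik every instance that is $(2,3)$-minimal and nonempty has a solution, and Barto's collapse upgrades this to relational width $(2,3)$. The step I expect to need the most care is translating this into the existential-strategy definition above. Given an existential $k$-strategy $\mathcal H$ with $k=\max\{3,r\}$, the plan is as follows.
\begin{enumerate}[label=\textup{(\arabic*)}]
  \item For every set $S$ of at most $3$ variables, and for every constraint scope, take the projections of $\mathcal H$ onto $S$ or onto that scope.
  \item Check that these projections form a nonempty $(2,3)$-minimal instance refining $\struct J$. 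Property \ref{item:S1} gives closure under restriction. Property \ref{item:S2} gives extendability from $2$ to $3$ variables and from subsets of a scope to the whole scope, which is where $k\ge r$ is used. Nonemptiness of $\mathcal H$ together with \ref{item:S2} gives that every variable has a value.
  \item Apply the collapse theorem to obtain a solution.
\end{enumerate}
Passage through the core with constants is harmless here too, since a strategy into $\T$ composed with the retraction yields a strategy into the core. Adding constants is unnecessary for instances over the original signature.

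\emph{Decidability.} This follows directly. The 3-4-WNU condition involves a ternary and a $4$-ary operation on the finite set $T$, so there are finitely many candidate pairs, and each can be checked to be a polymorphism and to satisfy the identities by finite inspection.
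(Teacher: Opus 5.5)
Your overall plan matches the paper's, which gives no proof of its own but cites Barto--Kozik, Barto's collapse theorem and Kozik--Krokhin--Valeriote--Willard, plus the standard translation between $(2,3)$-minimality and existential $k$-strategies for $k=\max\{3,r\}$. Your first direction and the decidability part are fine.

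There is a gap in step (3) of the converse, which is exactly the step you flagged. The Barto--Kozik/collapse theorem is about $(2,3)$-minimal instances whose constraint relations are subuniverses of powers of the idempotent algebra of the core with constants. Equivalently, it is about the output of the $(2,3)$-minimality algorithm run on an instance of $\CSP(\T)$. The projections $P_S$ of an \emph{arbitrary} existential $k$-strategy $\mathcal H$ are arbitrary subsets of $T^S$, and a nonempty $(2,3)$-minimal instance with arbitrary relations need not be solvable. Here is an example. Take a two-element template with relations of arity at most $3$, and fix an unsatisfiable system of $3$-XOR equations on pairwise distinct triples of six variables. For the constraint-free instance on these variables, consider all partial assignments on at most three variables that satisfy every equation whose triple lies inside their domain. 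They form an existential $3$-strategy, and its projections are an unsolvable $(2,3)$-minimal refinement. So your claim that every nonempty $(2,3)$-minimal instance has a solution is false as stated, and the theorem cannot be applied to your refined instance directly.

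Either of two observations repairs this. First, your refinement is stable under the pruning rules of the $(2,3)$-minimality algorithm on $\struct J$: every tuple has its projections and required extensions inside the family. Hence it is contained in the algorithm's output, which is therefore nonempty, and relational width $(2,3)$ yields a solution of $\struct J$ itself rather than of the refinement. Alternatively, compose $\mathcal H$ with the retraction onto the core $\struct C$ and pass to the largest existential $k$-strategy into $\struct C$; it exists because the union of strategies is a strategy. Applying a polymorphism coordinatewise to members with a common domain again yields a strategy, so the largest one is closed under $\Pol(\struct C)$. Its projections are then subuniverses, and Barto--Kozik applies directly.

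Separately, your claim that a 3-4-WNU pair for $\T$ gives one for the idempotent core with constants needs a one-line justification. For the transferred pair $v,w$ on the core, the linking identity gives $v(x,x,x)=w(x,x,x,x)$. This unary polymorphism is an automorphism $\alpha$ of the core, and $\alpha^{-1}\circ v$, $\alpha^{-1}\circ w$ are idempotent and satisfy the same identities.
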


The following observation is straightforward.  

\begin{observation} \label{lem:empty-template-width}
If the domain of  $\T$ is empty, then $\T$ has width at most
$1$.
\end{observation}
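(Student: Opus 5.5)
We verify the definition of width at most $1$ directly: let $\struct J$ be any finite instance over the signature of $\T$ that admits an existential $1$-strategy $\mathcal H$, and show that $\struct J\to\T$.

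\emph{Step 1: $\struct J$ has no variables.} Since $\mathcal H$ is nonempty, pick some $h\in\mathcal H$. By \ref{item:S1}, the restriction of $h$ to the empty set, namely the empty map $\emptyset$, lies in $\mathcal H$. Suppose $\struct J$ had a variable $x$. Because $|\dom(\emptyset)|=0<1$, condition \ref{item:S2} gives an extension $h'\in\mathcal H$ whose domain contains $x$. Then $h'(x)$ would be an element of the domain of $\T$, which is empty, a contradiction. Hence the domain of $\struct J$ is empty.

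\emph{Step 2: the empty map is a homomorphism.} The empty map from the empty domain of $\struct J$ to the empty domain of $\T$ is a homomorphism. Indeed, there are no nullary symbols in the signature, so every relation of $\struct J$ consists of tuples of positive length over an empty set and is therefore empty, and preservation holds vacuously. Thus $\struct J\to\T$, and $\T$ has width at most $1$.

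Neither step presents a real obstacle. The only points needing care are that \ref{item:S1} puts the empty restriction into $\mathcal H$, and that the convention excluding nullary relation symbols makes the vacuous homomorphism valid.
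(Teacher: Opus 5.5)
Your proof is correct and follows the same route as the paper: \ref{item:S1} and \ref{item:S2} force every instance admitting an existential $1$-strategy to have no variables, and such an instance maps to the empty template by the empty map, since without nullary symbols all its relations are empty. You simply spell out the details that the paper's two-line justification leaves implicit.
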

Indeed, an instance with a variable cannot admit an existential $1$-strategy into an empty template, and a variable-free instance has the empty
homomorphism to it.  Hence every instance admitting an existential
$1$-strategy maps homomorphically to $\T$.

The connection between semilattice polymorphisms and local consistency is classical.\,\footnote{Dalmau and Pearson~\cite{DalmauPearson} proved that constraint languages preserved by a
set function are solved by generalized arc consistency.  
For a template of maximum relation arity
$r$, generalized arc consistency implies width at most $r$ in our sense.  
}
   For direct use with our
existential-strategy formulation, we record the following explicit bound.  A proof
can be found in Appendix~\ref{section:lemma_proof}.

\begin{restatable}{lemma}{semilatticewidth} \label{lem:semilattice-width}     
Let $\T$ be a finite relational structure all of whose relations have arity at most $r$,
and suppose $\T$ has a semilattice polymorphism. Then $\T$ has
width at most $\max\{2,r\}$.
\end{restatable}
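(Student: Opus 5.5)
The plan is the classical argument that a semilattice polymorphism makes the ``largest surviving set'' at each variable a solution. Let $s\colon T^2\to T$ be a semilattice polymorphism. Write $x\wedge y\coloneqq s(x,y)$; by associativity and commutativity this extends to a well-defined meet $\bigwedge S$ of any nonempty finite $S\subseteq T$. Applying the identities componentwise, $\bigwedge$ applied to finitely many tuples of a relation $R^{\T}$ stays in $R^{\T}$, since $\bigwedge$ of finitely many elements is an iterated binary polymorphism. Now set $k=\max\{2,r\}$. Let $\struct J$ be an instance with an existential $k$-strategy $\mathcal H$; we must produce a homomorphism $\struct J\to\T$.

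\emph{Step 1: define the candidate.} For each variable $x$, let
\[
  D_x\coloneqq\{h(x)\mid h\in\mathcal H,\ x\in\dom(h)\}.
\]
Since $\mathcal H$ is nonempty, \ref{item:S1} gives $\emptyset\in\mathcal H$, as the empty map is a restriction of any member. Because $k\ge 1$, \ref{item:S2} applied to the empty map shows $D_x\neq\emptyset$. Define $g(x)\coloneqq\bigwedge D_x$.

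\emph{Step 2: verify each constraint.} Take a constraint $(x_1,\dots,x_m)\in R^{\struct J}$ with $m\le r\le k$. Fix $i$ and a value $a\in D_{x_i}$, witnessed by some $h\in\mathcal H$. Restricting $h$ to $\{x_i\}$ by \ref{item:S1} and extending one variable at a time by \ref{item:S2} (each time the domain has size $<m\le k$), we obtain $h_a\in\mathcal H$ with domain $\{x_1,\dots,x_m\}$ and $h_a(x_i)=a$. Since $h_a$ is a partial homomorphism covering the whole scope, $h_a(\bar x)\in R^{\T}$. Repeated variables cause no trouble, since the domain is a set.

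Now collect all these tuples $h_a(\bar x)$ over all $i$ and all $a\in D_{x_i}$, and take their meet $\bar t$, which lies in $R^{\T}$. At coordinate $j$, every entry in the meet lies in $D_{x_j}$, because each $h_a$ belongs to $\mathcal H$ and is defined at $x_j$. Moreover, every element of $D_{x_j}$ occurs there, namely from the choice $i=j$. Hence, by idempotency, $t_j=\bigwedge D_{x_j}=g(x_j)$. Here we use that a meet of a multiset over a semilattice depends only on the underlying set. So $g(\bar x)\in R^{\T}$, and $g$ is a homomorphism.

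The main subtle point is the extension step. It needs the domain to grow up to the full scope size $m$, which is why $k\ge r$ is required; $k\ge 2$ only handles degenerate cases such as $r\le 1$, where $k=1$ would in fact also suffice. One must also check that the empty map lies in $\mathcal H$ so that $D_x$ is nonempty. The case of an empty domain $T$ is covered separately by \cref{lem:empty-template-width}.
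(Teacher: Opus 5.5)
Your proposal is correct and follows the paper's proof in Appendix~\ref{section:lemma_proof} essentially step for step: the same sets of surviving values, the same meet-based candidate $g$, the same one-variable-at-a-time extension to the full constraint scope, and the same coordinatewise-meet argument using idempotency. The paper makes one detail explicit that you leave implicit: after each application of \ref{item:S2}, it restricts by \ref{item:S1} to the old domain together with the new variable. This keeps the domain size below $k$, so that the next extension step remains legal.
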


\subsection{Logic}

We say that a first-order formula is \emph{$k$-ary} if it has $k$ free variables.
For a first-order formula $\phi$, we use the notation $\phi(\bar{x})$ to indicate that the free variables of $\phi$ are among $\bar{x}$.
This does not mean that the truth value of $\phi$ depends on each entry in $\bar{x}$.
We assume that equality $=$ as well as the nullary predicate symbol $\bot$ for falsity are 
always available when building first-order formulas. 
Thus, \emph{atomic $\tau$-formulas}, or \emph{$\tau$-atoms} for short, over a relational signature $\tau$ are of the form $\bot$, $(x=y)$, and $R(\bar{x})$ for some $R\in \tau$ and a tuple $\bar{x}$ of first-order variables matching the arity of $R$.
A formula is \emph{quantifier-} or \emph{equality-free} if it does not contain any quantifiers or equality atoms, respectively.
 
A \emph{primitive-positive formula}, or \emph{pp-formula}, is a formula of
the form
\(
  \exists \bar y\;\bigwedge_{i=1}^m \alpha_i,
\)
where every \(\alpha_i\) is an atomic formula; the tuples of quantified and
free variables may be empty.  
  A
\emph{pp-power} of a $\tau$-structure \(\struct A\) is a structure
\(\struct B\) in some finite relational signature \(\tau'\), with domain
\(A^d\) for some \(d\geq1\), such that, for every \(R\in\tau'\) of arity
\(k\), the flattened relation
\[
\{
(a_{11},\ldots,a_{1d},\ldots,a_{k1},\ldots,a_{kd}) \in A^{kd} \mid 
  ((a_{11},\ldots,a_{1d}),\ldots, (a_{k1},\ldots,a_{kd}) )\in R^{\struct{B}}
\}
\]
 is definable in
\(\struct A\) by a pp-formula (\emph{pp-definable}).  We say that \(\struct A\) \emph{pp-constructs}
\(\struct B\) if \(\struct B\) is homomorphically equivalent to a pp-power of
\(\struct A\). 
We record the following standard fact. 

\begin{proposition}[\cite{BOP}]
\label{prop:pp-height-1-transfer}
Let \(\struct A\) and \(\struct B\) be relational structures such that
\(\struct A\) pp-constructs \(\struct B\). Then every height-1 condition
satisfied by \(\Pol(\struct A)\) is also satisfied by
\(\Pol(\struct B)\). Consequently, if \(\struct A\) and \(\struct B\) are
finite and \(\struct A\) has bounded width, then \(\struct B\) has bounded
width.
\end{proposition}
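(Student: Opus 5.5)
The plan is to factor the pp-construction into its two constituent steps, a pp-power followed by a homomorphic equivalence, and to show that each step transports polymorphisms by a map that commutes with minors. Such a map automatically preserves every height-1 identity. The bounded-width consequence then follows by applying \cref{thm:bounded-width-known}, together with \cref{lem:empty-template-width} for the degenerate empty case.

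\emph{Step 1: pp-definable relations.} Every polymorphism of \(\struct A\) preserves every relation that is pp-definable in \(\struct A\). The argument is an induction on the pp-formula \(\exists\bar y\,\bigwedge_i\alpha_i\):
\begin{itemize}
\item atoms \(R(\bar x)\) are preserved by definition of a polymorphism;
\item \(\bot\) defines the empty relation, which is trivially preserved;
\item equality atoms are preserved because the action of \(f\) is componentwise;
\item conjunctions are preserved because \(f\) acts coordinatewise on all variables at once;
\item for existential quantification, apply \(f\) to the chosen witnesses as well.
\end{itemize}

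\emph{Step 2: passing to a pp-power.} Let \(\struct C\) be a pp-power with domain \(A^d\). Given an \(n\)-ary \(f\in\Pol(\struct A)\), define \(f^{(d)}\colon (A^d)^n\to A^d\) by applying \(f\) in each of the \(d\) coordinates. Then \(f^{(d)}\) preserves each relation \(R^{\struct C}\), since its flattening is pp-definable in \(\struct A\) and \(f\) preserves it by Step 1. Moreover, \((f(x_{\pi(1)},\dots,x_{\pi(n)}))^{(d)}=f^{(d)}(x_{\pi(1)},\dots,x_{\pi(n)})\) for every map \(\pi\colon[n]\to[m]\), so \(f\mapsto f^{(d)}\) commutes with minors. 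A height-1 identity \(s(x_{\pi(1)},\dots)\approx t(x_{\rho(1)},\dots)\) satisfied by interpretations in \(\Pol(\struct A)\) is therefore satisfied by their \((d)\)-lifts, because the two sides agree coordinatewise.

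\emph{Step 3: homomorphic equivalence.} Let \(h\colon\struct B\to\struct C\) and \(g\colon\struct C\to\struct B\) be homomorphisms. For an \(n\)-ary \(f\in\Pol(\struct C)\), the map \(g\circ f\circ(h,\dots,h)\) is a polymorphism of \(\struct B\), being a composite of homomorphisms \(\struct B^n\to\struct C^n\to\struct C\to\struct B\). This assignment also commutes with minors. The key point is that height-1 identities have no nesting and no bare variables, so each side has the form \(g(f(h(x_{\pi(1)}),\dots))\). Equality of the two sides in \(\struct C\) thus transfers directly to \(\struct B\). Note that this is exactly where the height-1 restriction is essential: an identity such as idempotency would fail to transfer.

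\emph{Step 4: bounded width.} Suppose \(\struct A\) and \(\struct B\) are finite and \(\struct A\) has bounded width.
\begin{itemize}
\item If \(B=\emptyset\), then \cref{lem:empty-template-width} gives width at most \(1\).
\item If \(B\neq\emptyset\), then \(\struct B\to\struct C\) forces \(C\neq\emptyset\), hence \(A\neq\emptyset\). By \cref{thm:bounded-width-known}, \(\Pol(\struct A)\) satisfies the 3-4-WNU condition, which is height-1. By Steps 2 and 3, so does \(\Pol(\struct B)\). Applying \cref{thm:bounded-width-known} again, \(\struct B\) has bounded width.
\end{itemize}

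The only genuinely non-formal point is the existential step in Step 1, which needs witnesses chosen uniformly for each argument tuple. Everything else is bookkeeping on minors.
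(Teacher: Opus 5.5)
Your proof is correct. It is essentially the standard argument behind the paper's citation to \cite{BOP}; the paper gives no proof of its own. A pp-construction induces a minor-preserving map from $\Pol(\struct A)$ to $\Pol(\struct B)$, and the bounded-width consequence then follows from \cref{thm:bounded-width-known} and \cref{lem:empty-template-width} exactly as you argue, since the paper's 3-4-WNU condition omits idempotency and is therefore height-1.

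One small overstatement: the existential step in Step 1 needs no uniform choice of witnesses, since you may pick witnesses independently for each argument tuple before applying $f$ coordinatewise.
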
 

A \emph{universal} first-order $\tau$-sentence is of the form $\forall \bar{x} \ldotp \phi(\bar{x})$ for a quantifier-free $\tau$-formula $\phi$.
Let $\Phi$ be a universal first-order $\tau$-sentence whose quantifier-free part $\phi$ is in CNF, i.e., a conjunction of \emph{clauses}, which are disjunctions of possibly negated $\tau$-atoms. 
We call $\Phi$ \emph{syntactic Horn} (or simply \emph{Horn}) if every clause of $\phi$ is Horn, i.e., contains at most one positive disjunct.   
For a clause $\phi_i$ of $\phi$, we define the \emph{Gaifman graph} of $\phi_i$ as the undirected graph whose vertex set consists of the variables appearing in some atom of $\phi_i$ and where two distinct variables $x,y$ form an edge if and only if they appear jointly in a negative atom of $\phi_i$. 
A clause $\phi_i$ of $\phi$ is \emph{complete}  if the Gaifman graph of $\phi_i$ is complete.
 
The crucial part of this definition is that all variables in a clause count towards the vertices of the Gaifman graph but only the negative atoms count towards its edges; e.g., among the two clauses 
\[
 \neg E(x,y)\vee \neg E(x,z) \vee \neg E(y,z) \quad \text{and} \quad \neg E(x,y)\vee \neg E(x,z) \vee E(y,z),
\]
the former is complete, the latter is not. 
We call $\Phi$ \emph{complete} if every clause of $\phi$ is \emph{complete}.

We record the following folklore fact, which will be used repeatedly in
\cref{sec:optimizing-lower-bounds}.

\begin{lemma}\label{lem:complete-free-union}
Let \(\struct B_1\) and \(\struct B_2\) be models of a complete universal
sentence $\Phi$ that induce the same substructure on \(B_1\cap B_2\).  Then
\(\struct B_1\cup\struct B_2\models\Phi\).  Consequently,
$\fm(\Phi)$ has the FAP.
\end{lemma}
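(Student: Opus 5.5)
The plan is to argue directly on the universal sentence $\Phi = \forall \bar x\,\phi(\bar x)$, with $\phi$ in CNF and every clause complete. Fix a clause $\phi_i$ and an assignment $s$ of its variables into $B_1\cup B_2$. We must show that $\struct B_1\cup\struct B_2$ satisfies $\phi_i$ under $s$. Two cases arise, according to whether the image of $s$ lies inside a single $B_j$.

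In the first case, the image of $s$ is contained in some $B_j$. We then compare $\struct B_1\cup\struct B_2$ with $\struct B_j$ on tuples from $B_j$. The key observation is that, for every $R$ and every tuple $\bar t\in B_j^k$, we have $\bar t\in R^{\struct B_1\cup\struct B_2}$ if and only if $\bar t\in R^{\struct B_j}$.
\begin{itemize}[label={}]
  \item The direction from $\struct B_j$ to the union is trivial.
  \item Conversely, suppose $\bar t\in R^{\struct B_{3-j}}$. Then $\bar t$ lies in $(B_1\cap B_2)^k$. Since the two structures induce the same substructure on $B_1\cap B_2$, we get $\bar t\in R^{\struct B_j}$.
\end{itemize}
Equality atoms are evaluated identically in both structures. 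Hence $s$ evaluates every atom of $\phi_i$ the same way in $\struct B_j$ and in the union. Since $\struct B_j\models\Phi$, the clause $\phi_i$ holds.

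In the second case, the image of $s$ meets both $B_1\setminus B_2$ and $B_2\setminus B_1$. Choose variables $x,y$ of $\phi_i$ with $s(x)\in B_1\setminus B_2$ and $s(y)\in B_2\setminus B_1$. In particular $x\neq y$, so by completeness they occur jointly in some negative atom $\neg R(\bar z)$ of $\phi_i$.
\begin{itemize}[label={}]
  \item If this atom is an equality, then $s(x)=s(y)$ is false, since these elements lie in the disjoint sets $B_1\setminus B_2$ and $B_2\setminus B_1$. So the negated equality holds.
  \item If it is a relational atom, then every tuple of $R^{\struct B_1\cup\struct B_2}$ lies entirely in $B_1$ or entirely in $B_2$. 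The tuple $s(\bar z)$ contains both $s(x)\notin B_2$ and $s(y)\notin B_1$, so it lies in neither. Hence $R(s(\bar z))$ is false.
\end{itemize}
In both subcases the negative literal is true, so $\phi_i$ is satisfied. (The falsity atom $\bot$ has no variables and plays no role in the Gaifman graph.) The two cases are exhaustive, so $\struct B_1\cup\struct B_2\models\Phi$.

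For the consequence, note that $\struct B_1\cup\struct B_2$ is then in $\fm(\Phi)$, since it is finite. The inclusions $B_j\hookrightarrow B_1\cup B_2$ are embeddings by the key observation of the first case, and they agree on $B_1\cap B_2$. This is exactly the FAP.

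The only real point, and the one I expect to need care, is the second case: completeness must be applied to the specific pair of variables straddling the two sides. One must also check that the witnessing negative atom is handled correctly when it is an equality rather than a relational atom; the argument above covers both.
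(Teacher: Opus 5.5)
Your proof is correct and takes the same route as the paper. The paper argues by contradiction: by completeness, a falsifying assignment cannot straddle $B_1\setminus B_2$ and $B_2\setminus B_1$, since the witnessing negative relational or equality atom would fail; hence it lies in one side, and agreement on $B_1\cap B_2$ transfers the falsification there. Your two cases run exactly this argument directly, and your explicit check that the inclusions are embeddings spells out a step the paper leaves implicit in deriving the FAP.
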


\begin{proof}
Suppose that an assignment falsifies a clause of $\Phi$ in
$\struct B_1\cup\struct B_2$.  Thus every atom occurring negatively in the
clause is true under the assignment, while every atom occurring positively
is false.  The image of the assignment cannot meet both
$B_1\setminus B_2$ and $B_2\setminus B_1$: by completeness, variables mapped
to the two opposite sides occur together in a negative atom.  If this atom is
relational, the corresponding tuple belongs to neither side and hence not to
their union; if it is an equality atom, it is false because the two values are
distinct.  Either case contradicts the assumed truth of every negative atom.
Therefore the entire image lies in one side, say $B_i$.  Every negative atom
that is true in the union is then true in $\struct B_i$; for tuples contained
in the intersection, this uses that the two structures induce the same
substructure there.  Every positive atom false in the union is also false in
$\struct B_i$.  The same assignment would consequently falsify the clause in
$\struct B_i$, contradicting $\struct B_i\models\Phi$.
\end{proof}

A universal sentence \(\Phi\) is called \emph{semantic Horn} if
\(\fm(\Phi)\) is closed under binary direct products.  Equivalently, it is
closed under direct products indexed by every nonempty finite set.  By
McKinsey's theorem, for a universal sentence this is equivalent to being
logically equivalent to a universal Horn sentence
\cite[Theorem~2]{schrottenloher2022universal}; the same proof applies when
logical equality is allowed among the atoms.  The preservation of universal
Horn sentences under products goes back to McKinsey and Horn
\cite{McKinsey1943,Horn1951}; see also Galvin's systematic treatment
\cite{Galvin1970}.  Thus the adjective ``semantic'' concerns the presentation
of the input sentence rather than a larger class of finite model classes.
Conversion to an explicit Horn normal form can incur an exponential blow-up,
and no such normal form is constructed by our algorithm.  Since every universal sentence is
preserved under substructures, the finite models of a semantic Horn
sentence are closed under both nonempty finite direct products and 
substructures. 

\subsection{The ADP} \label{sec:ADP}
The \emph{Amalgamation Decision Problem} (\emph{ADP}) takes as input a
universal first-order sentence $\Phi$ over a finite relational signature and
asks whether $\fm(\Phi)$ has the amalgamation property.
Several restrictions of the ADP are of fundamental importance to the present article, in particular the restriction to
signatures of maximum relation arity at most $r$ for some fixed $r\geq 1$.  The \emph{syntactic Horn
ADP} is the restriction to syntactic universal Horn sentences.  By the
\emph{semantic Horn ADP} we mean the promise problem in which
the input sentence is promised to be semantic Horn.

We represent every input universal sentence as a finite conjunction
$ 
  \Psi=\bigwedge\nolimits_{i=1}^t \forall \bar x_i\,\psi_i(\bar x_i),
$ 
where each $\psi_i$ is quantifier-free.  We first preprocess away the
degenerate case in which some quantifier-free formula $\psi_i$ contains no variables.  Because
the relational signatures have no nullary symbols, such a formula has a fixed
truth value.  If it is true, the conjunct is deleted.  If it is false, then
$\fm(\Psi)$ is either empty (when $\bar x_i$ is empty) or consists only of the
empty structure (when $\bar x_i$ is nonempty).  In either case the AP holds,
so the ADP input is answered immediately.  After these cases have been
removed, unused quantified variables may be deleted without changing the
finite model class.  If all conjuncts are deleted, we insert the tautology
$\forall x\,(x=x)$.  Hence every sentence to which the constructions below
are applied has a nonempty list of conjuncts and at least one quantified
variable in each conjunct.  We write
$ 
  \operatorname{var}(\Psi)\coloneqq \max\nolimits_i |\bar x_i|,
$ 
so $\operatorname{var}(\Psi)\geq1$.  The inside-out transformation in Theorem~\ref{thm:inside-out-correspondence} below preserves this convention.

We use an explicit encoding of formulas and signatures throughout.  The
encoding of a sentence includes its finite relational signature, with relation
arities written in unary, together with the full syntactic representation of
the sentence.  For a formula or sentence \(\psi\), we denote the length of this
encoding by \(|\psi|\).  In particular, if \(\psi\) is a universal sentence
over a finite relational signature \(\tau\), then
$ 
  |\tau|,  \maxar(\tau),  \operatorname{var}(\psi)
 \in  O(|\psi|).
$ 
Any fixed polynomially equivalent explicit encoding gives the same complexity
bounds.

\section{The inside-out correspondence}
\label{sec:inside-out}

Let \(\Phi_1\) and \(\Phi_2\) be universal first-order sentences over finite
relational signatures \(\sigma_1\subseteq\sigma_2\).  If
\(\struct A\models\Phi_1\), a \emph{\(\Phi_2\)-completion of \(\struct A\)}
is a \(\sigma_2\)-expansion \(\mathfrak B\) of \(\struct A\) such that
\(\mathfrak B\models\Phi_2\). 
In this context, we might refer to $\Phi_1$ as the \emph{source} and $\Phi_2$ as the \emph{target} of the completion pair $(\Phi_1,\Phi_2)$.
We now recall explicitly the two
transformations from \cite{RydvalInsideOut} that produce the completion pairs $\Theta(\Phi) = (\Phi_1,\Phi_2)$
used throughout the paper.

\subsection{The congruence expansion \texorpdfstring{$\Gamma$}{Gamma}}
\label{subsubsec:gamma}

Let
$ 
  \Phi=\bigwedge_{i=1}^t\forall\bar x_i\,\psi_i(\bar x_i)
$
be a universal sentence over \(\sigma\), in the conjunctive presentation
fixed above.  The transformation \(\Gamma\) adds a fresh binary relation
symbol \(E\).  Set \(\rho=\sigma\cup\{E\}\), and let
\(\Gamma(\Phi)\) be the conjunction of
\begin{align}
  &\forall x\,E(x,x),
  \qquad
  \forall x,y\,(E(x,y)\Rightarrow E(y,x)),
  \notag\\[-1mm]
  &\forall x,y,z\,
    (E(x,y)\wedge E(y,z)\Rightarrow E(x,z)),
  \label{eq:gamma-equivalence}\\
  &\bigwedge\nolimits_{X\in\sigma}\forall\bar u,\bar v\,
    \bigl(E(\bar u,\bar v)\Rightarrow
    (X(\bar u)\Leftrightarrow X(\bar v))\bigr),
  \label{eq:gamma-compatibility}\\
  &\bigwedge\nolimits_{i=1}^t\forall\bar x_i\,
    \left(
      \psi_i(\bar x_i)\vee
      \bigvee\nolimits_{x_p,x_q\in\bar x_i}
        (E(x_p,x_q)\wedge x_p\neq x_q)
    \right),
  \label{eq:gamma-quotient-axioms}
\end{align}
where \(E(\bar u,\bar v)\) abbreviates coordinatewise \(E\)-equivalence.

For every \(\rho\)-structure \(\struct A\), we have
$
  \struct A\models\Gamma(\Phi)
$
if and only if \(E^{\struct A}\) is a relational congruence of the
\(\sigma\)-reduct and the quotient of that reduct by \(E^{\struct A}\)
satisfies \(\Phi\).  Indeed, to verify the forward implication, choose lifts
of a quotient assignment that use the same representative for variables in
the same \(E\)-class.  The second disjunct in
\eqref{eq:gamma-quotient-axioms} is then false, and the equality pattern is
preserved.  Conversely, if the quotient satisfies \(\Phi\), then an
assignment with two distinct \(E\)-equivalent values satisfies the second
disjunct, while every other assignment has the same equality pattern as its
image in the quotient and satisfies \(\psi_i\) by
\eqref{eq:gamma-compatibility}.

In the present article, when we mention a congruence, we are always explicitly referring to the particular relational congruence $E$.
Theorem~1.2 of \cite{RydvalInsideOut} gives
\begin{equation}
  \fm(\Phi)\text{ has the AP}
  \quad\Longleftrightarrow\quad
  \fm(\Gamma(\Phi))\text{ has the SAP}.
  \label{eq:gamma-ap-sap}
\end{equation}
Thus \(\Gamma\) records identifications made by an ordinary amalgam through a
relational congruence while keeping the underlying elements distinct.

We shall also need the following consequence of the semantic Horn condition.

\begin{lemma}
\label{lem:gamma-intersection}
Let \(\Phi\) be semantic Horn.  If finite structures
\(\struct B_1,\struct B_2\models\Gamma(\Phi)\) have the same domain, then
their relationwise intersection also satisfies \(\Gamma(\Phi)\).
\end{lemma}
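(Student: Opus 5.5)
The plan is to embed the relationwise intersection $\struct B_1\cap\struct B_2$ into the direct product $\struct B_1\times\struct B_2$, reduced to $\sigma$ and quotiented by the product congruence. Let $D$ be the common domain and $E_i=E^{\struct B_i}$, and write $\struct C=\struct B_1\cap\struct B_2$. The axioms \eqref{eq:gamma-equivalence} are universal Horn, so they survive intersection. Hence $E^{\struct C}=E_1\cap E_2$ is an equivalence relation on $D$.

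Compatibility \eqref{eq:gamma-compatibility} also transfers. Suppose $\bar u\mathrel{E_1\cap E_2}\bar v$ and $X(\bar u)$ holds in $\struct C$. Then $X(\bar u)$ holds in both $\struct B_i$, so $X(\bar v)$ holds in both by compatibility in each $\struct B_i$, hence in $\struct C$. So $E^{\struct C}$ is a relational congruence of the $\sigma$-reduct of $\struct C$. By the characterization after \eqref{eq:gamma-quotient-axioms}, it then remains to show that $\struct Q\coloneqq(\struct C\mathord{\upharpoonright}_\sigma)/(E_1\cap E_2)$ satisfies $\Phi$.

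To do this, let $\struct Q_i=(\struct B_i\mathord{\upharpoonright}_\sigma)/E_i$. Each $\struct Q_i$ is a model of $\Phi$ by the same characterization. Since $\Phi$ is semantic Horn, $\struct Q_1\times\struct Q_2\models\Phi$. Consider the map
\[
 g\colon D/(E_1\cap E_2)\to Q_1\times Q_2,\qquad [d]\mapsto([d]_{E_1},[d]_{E_2}).
\]
This map is well defined and injective, because $[d]_{E_1}=[d']_{E_1}$ and $[d]_{E_2}=[d']_{E_2}$ exactly when $d\mathrel{E_1\cap E_2}d'$. It is also an embedding. Because each $E_i$ is a relational congruence, a tuple of classes lies in $X^{\struct Q_i}$ iff a (any) representative tuple lies in $X^{\struct B_i}$. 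So $g([\bar d])\in X^{\struct Q_1\times\struct Q_2}$ iff $\bar d\in X^{\struct B_1}\cap X^{\struct B_2}=X^{\struct C}$, iff $[\bar d]\in X^{\struct Q}$, the last step using that $E_1\cap E_2$ is a relational congruence on $\struct C$. Thus $\struct Q$ is isomorphic to a substructure of a model of the universal sentence $\Phi$, hence a model itself, and therefore $\struct C\models\Gamma(\Phi)$.

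The main obstacle is bookkeeping rather than depth: the embedding check depends on representative-independence in all three quotients, and the equality atoms of $\Phi$ must be handled correctly. Equality is handled precisely by working in the quotients, where $E$-classes become genuine equality, rather than on $D$ directly. The empty-domain case is trivial, since all structures involved are then empty.
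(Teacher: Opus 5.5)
Your proof is correct and follows the paper's argument. Like the paper, you embed $(\struct B_1\cap\struct B_2)\mathord{\upharpoonright}_\sigma/(E_1\cap E_2)$ into the product of the two quotients via $[b]\mapsto([b]_{E_1},[b]_{E_2})$, then apply closure under products, closure under substructures, and the quotient characterization of $\Gamma(\Phi)$; you additionally verify explicitly that $E_1\cap E_2$ is a relational congruence and that $g$ reflects relations, which the paper only asserts.
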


\begin{proof}
Set
$
  \struct B =\struct B_1\cap\struct B_2,
 $ $ E_i=E^{\struct B_i},
$ $E=E_1\cap E_2.
$
Then \(E\) is a relational congruence of the \(\sigma\)-reduct of
\(\struct B\), and
$
  [b]_E\longmapsto([b]_{E_1},[b]_{E_2})
$
defines an embedding
\[
  (\struct B\mathord{\upharpoonright}_\sigma)/E
  \hookrightarrow
  ((\struct B_1\mathord{\upharpoonright}_\sigma)/E_1)
  \times
  ((\struct B_2\mathord{\upharpoonright}_\sigma)/E_2).
\]
Both factors are finite models of \(\Phi\).  Their product satisfies
\(\Phi\) by the semantic Horn assumption, and the embedded quotient satisfies
\(\Phi\) because universal sentences are preserved under  
substructures.  The quotient characterization now yields
\(\struct B\models\Gamma(\Phi)\).
\end{proof}

\subsection{The completion encoding \texorpdfstring{$\Delta$}{Delta}}
\label{subsubsec:delta}
Let
 $ 
  \Psi=\bigwedge_{i=1}^u\forall\bar x_i\,\varphi_i(\bar x_i)
$
be a universal sentence over a relational signature \(\rho\).  The
transformation \(\Delta\) uses the signatures
\[
  \sigma_1=\rho\cup\{L,R\},
  \qquad
  \sigma_2=\sigma_1\cup\{X'\mid X\in\rho\},
\]
where \(L,R\) are unary and \(X'\) has the same arity as \(X\).  
We refer to $L$ and $R$ as the \emph{left marker} and \emph{right marker}, respectively.
Moreover, for each $X\in \rho$, we call $X'$ the \emph{primed copy} of $X$ and $X$ the \emph{unprimed version} of $X'$. 
For a tuple
\(\bar y\), we set
\[
  \operatorname{free}(\bar y)
  \coloneqq
  \bigvee\nolimits_{j,k}
  (\neg L(y_j)\wedge\neg R(y_k)).
\]
Let \(\varphi_i'\) be obtained from \(\varphi_i\) by replacing every
relational \(\rho\)-atom \(X(\bar z)\) by \(X'(\bar z)\).  Define
\(\Delta(\Psi)=(\Phi_1,\Phi_2)\), where
\begin{align}
\Phi_1\coloneqq {}&
  \forall x\,(L(x)\vee R(x))
  \;\wedge\;
  \bigwedge\nolimits_{i=1}^u\forall\bar x_i\,
    (\varphi_i(\bar x_i)\vee\operatorname{free}(\bar x_i))
  \notag\\[-1mm]
  &\wedge
  \bigwedge\nolimits_{X\in\rho}\forall\bar y\,
    (\operatorname{free}(\bar y)\Rightarrow\neg X(\bar y)),
  \label{eq:delta-source-exact}\\
\Phi_2\coloneqq {}&
  \forall x\,(L(x)\vee R(x))
  \;\wedge\;
  \bigwedge\nolimits_{i=1}^u\forall\bar x_i\,\varphi_i'(\bar x_i)
  \notag\\[-1mm]
  &\wedge
  \bigwedge\nolimits_{X\in\rho}\forall\bar y\,
  \Bigl(
    (\operatorname{free}(\bar y)\wedge\neg X(\bar y))
    \vee
    (X(\bar y)\Leftrightarrow X'(\bar y))
  \Bigr).
  \label{eq:delta-target-exact}
\end{align}

We refer to \(\forall x(L(x)\vee R(x))\) and the second line in~\eqref{eq:delta-target-exact} as the \emph{marker} and the \emph{linking} conjuncts, respectively.
Relative to the marker conjunct, a tuple is non-free precisely when
all its entries lie on one side.  Hence a model of  $\Phi_1$ consists of two
overlapping models of \(\Psi\), one on each side, with every $\rho$-relation
false on free tuples.  
 
Given a structure $\struct{A} \models \Phi_1$, we refer to the $\{X'\mid X\in\rho\}$-reduct of a \(\Phi_2\)-completion of $\struct{A}$ as its \emph{primed reduct}. 
By the definition of $\Phi_2$, this structure is a model
of \(\Psi\) up to replacing each $X'$ with $X$; on non-free tuples every \(X'\) is forced to agree with \(X\),
whereas on free tuples it is unconstrained except by \(\Psi\).
Theorem~1.8 of \cite{RydvalInsideOut} gives
\begin{equation}
\begin{aligned}
  \fm(\Psi)\text{ has the SAP}
  &\quad\Longleftrightarrow\quad
  \fm(\Phi_1)\text{ has the SAP}\\
  &\quad\Longleftrightarrow\quad
  \text{every finite }\struct A\models\Phi_1
  \text{ has a }\Phi_2\text{-completion}.
\end{aligned}
\label{eq:delta-sap-completion}
\end{equation}
The transformation therefore replaces an amalgamation diagram by a 
structure encoding its two sides, while a completion records on the same
underlying domain the relations needed to form a strong amalgam.

\subsection{The composite transformation \texorpdfstring{$\Theta$}{Theta}}
\label{sec:composite}
Define
$ 
  \Theta(\Phi)
  \coloneqq
  \Delta(\Gamma(\Phi))=(\Phi_1,\Phi_2).
$

\begin{theorem}\label{thm:inside-out-correspondence}
The effective transformation \(\Theta\) sends a universal sentence \(\Phi\)
over a finite relational signature to universal sentences \(\Phi_i\) over
signatures \(\sigma_1\subseteq\sigma_2\) such that the following are
equivalent:
\[
\begin{aligned}
  \fm(\Phi)\text{ has the AP}
  &\quad\Longleftrightarrow\quad
  \fm(\Phi_1)\text{ has the SAP}\\
  &\quad\Longleftrightarrow\quad
  \text{every finite }\struct A\models\Phi_1\text{ has a }
  \Phi_2\text{-completion}.
\end{aligned}
\]
\end{theorem}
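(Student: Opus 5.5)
The theorem follows by composing the two equivalences already recorded from \cite{RydvalInsideOut}: \eqref{eq:gamma-ap-sap} for \(\Gamma\) and \eqref{eq:delta-sap-completion} for \(\Delta\). First, \(\Gamma\) is effective: from the conjunctive presentation of \(\Phi\) over \(\sigma\) it writes down the equivalence axioms \eqref{eq:gamma-equivalence}, the compatibility axioms \eqref{eq:gamma-compatibility} and the relativised conjuncts \eqref{eq:gamma-quotient-axioms}. The result \(\Gamma(\Phi)\) is again a universal sentence over the finite relational signature \(\rho=\sigma\cup\{E\}\). Second, \(\Delta\) is applied to \(\Psi\coloneqq\Gamma(\Phi)\). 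It produces the universal sentences \(\Phi_1,\Phi_2\) of \eqref{eq:delta-source-exact} and \eqref{eq:delta-target-exact} over \(\sigma_1=\rho\cup\{L,R\}\subseteq\sigma_2\). Both are finite relational signatures, and the construction is clearly computable. Hence \(\Theta=\Delta\circ\Gamma\) is effective.

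For the equivalences, instantiate \eqref{eq:delta-sap-completion} with \(\Psi=\Gamma(\Phi)\). This gives
\[
  \fm(\Gamma(\Phi))\text{ has the SAP}
  \iff\fm(\Phi_1)\text{ has the SAP}
  \iff\text{every finite }\struct A\models\Phi_1\text{ has a }\Phi_2\text{-completion}.
\]
Prepending \eqref{eq:gamma-ap-sap}, namely \(\fm(\Phi)\) has the AP iff \(\fm(\Gamma(\Phi))\) has the SAP, yields exactly the chain in \cref{thm:inside-out-correspondence}.

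The only point needing care is that Theorem~1.8 of \cite{RydvalInsideOut} is applied to a legitimate input. I would check that \(\Gamma(\Phi)\) satisfies the preprocessing convention of \cref{sec:ADP}: a nonempty list of conjuncts, each with at least one quantified variable. Every conjunct of \(\Gamma(\Phi)\) quantifies at least one variable, and there is at least one conjunct, namely \(\forall x\,E(x,x)\). So no degenerate variable-free conjuncts arise, and \(\Delta\) is applied within its stated hypotheses.

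I also expect a possible hidden subtlety with empty structures and empty intersections, since amalgamation diagrams may have empty intersection under the paper's convention. Both cited theorems use the same conventions (relational signatures without nullary symbols, and empty domains allowed), so this matching of conventions is the main, though minor, obstacle. Beyond this bookkeeping, the proof is purely a composition of the two black-box correspondences.
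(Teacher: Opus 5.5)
Your proposal is correct and matches the paper's proof, which is exactly the composition of \eqref{eq:gamma-ap-sap} applied to \(\Phi\) with \eqref{eq:delta-sap-completion} applied to \(\Psi=\Gamma(\Phi)\). The extra bookkeeping about effectiveness and the preprocessing convention is harmless but not needed for the equivalence itself; the paper records those points separately, in \cref{prop:theta-size} and the remark following it.
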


\begin{proof}
Apply \eqref{eq:gamma-ap-sap} to \(\Phi\), and then apply
\eqref{eq:delta-sap-completion} to \(\Psi=\Gamma(\Phi)\).
\end{proof}

In the present article, \(\Theta(\Phi)\) is called the
\emph{completion pair associated with \(\Phi\)}.  The SAP is essential to the equivalence~\eqref{eq:delta-sap-completion}, although the later
context-strategy argument uses only the AP of the transformed source class and
the existence of completions.

\begin{proposition}\label{prop:theta-size}
The transformation \(\Theta\) can be computed in polynomial time.  Set
\[
  s=|\sigma|,
  \qquad r=\max\{2,\maxar(\sigma)\},
  \qquad q=\operatorname{var}(\Phi),
  \qquad b_0=\max\{q,2r,3\}.
\]
Then \(|\sigma_2|=O(s+1)\), \(\maxar(\sigma_2)\leq r\), and, for
$ 
  b=\max\{\maxar(\sigma_2),\operatorname{var}(\Phi_2)\},
$
we have
\[
  \operatorname{var}(\Phi_1)
  \leq \operatorname{var}(\Phi_2)
  \leq b
  \leq b_0.
\]
\end{proposition}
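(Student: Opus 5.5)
The plan is to trace the two transformations explicitly and count, first for $\Gamma$ and then for $\Delta$, since $\Theta=\Delta\circ\Gamma$.

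\textbf{Polynomial time and signature size.} The sentence $\Gamma(\Phi)$ adds the three equivalence axioms in \eqref{eq:gamma-equivalence}. It adds one compatibility conjunct per $X\in\sigma$, which is of size $O(\ar(X))$. It also adds, to each conjunct $\psi_i$, a disjunction over at most $q^2$ pairs of variables. All of this is polynomial in $|\Phi|$. Next, $\Delta$ applied to $\Psi=\Gamma(\Phi)$ over $\rho=\sigma\cup\{E\}$ adds the markers $L,R$ and primed copies $X'$ of every $X\in\rho$. For a tuple $\bar y$, the formula $\operatorname{free}(\bar y)$ has $|\bar y|^2$ disjuncts. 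Each conjunct of $\Psi$ is therefore copied twice with an extra disjunction of quadratic size, and there is one linking and one source conjunct per $X\in\rho$ of size $O(\ar(X)^2)$. This again gives a polynomial total. For the signature we get $|\sigma_2|=2|\rho|+2=2(s+1)+2=O(s+1)$.

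\textbf{Arity.} The new symbols are $E$ (binary), $L$ and $R$ (unary), and the primed copies, which keep the arity of the originals. Hence $\maxar(\sigma_2)=\max\{2,\maxar(\sigma)\}=r$.

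\textbf{Variable counts.} The conjuncts of $\Gamma(\Phi)$ use at most $\max\{1,2,3,q,2\maxar(\sigma)\}$ variables. The compatibility conjunct for $X$ uses $\bar u,\bar v$, which is $2\ar(X)\le 2r$ variables. So $\operatorname{var}(\Gamma(\Phi))\le\max\{q,2r,3\}=b_0$. Under $\Delta$, the conjuncts of $\Phi_1$ are as follows:
\begin{itemize}
\item the marker conjunct, with one variable;
\item the conjuncts $\varphi_i\vee\operatorname{free}(\bar x_i)$, which use exactly the variables of $\varphi_i$;
\item the conjuncts $\operatorname{free}(\bar y)\Rightarrow\neg X(\bar y)$, which use $\ar(X)\le r$ variables.
\end{itemize}
The conjuncts of $\Phi_2$ are the marker conjunct, the $\varphi_i'$ with the same variables as $\varphi_i$, and the linking conjuncts with $\ar(X)$ variables. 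Both sentences therefore have the same variable count $\max\{1,\operatorname{var}(\Psi),\maxar(\rho)\}$. This gives $\operatorname{var}(\Phi_1)\le\operatorname{var}(\Phi_2)$; in fact the two are equal. It also gives $\operatorname{var}(\Phi_2)\le\max\{b_0,r\}=b_0$. The remaining inequalities $\operatorname{var}(\Phi_2)\le b$ (by the definition of $b$) and $b=\max\{r,\operatorname{var}(\Phi_2)\}\le b_0$ follow immediately.

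\textbf{Normalization convention.} After $\Theta$ no variable-free conjuncts appear, and every quantified variable is used, so the preprocessing convention of \cref{sec:ADP} is preserved. I expect no real obstacle here. The only care needed is to count the $2\ar(X)$ variables of the compatibility conjunct \eqref{eq:gamma-compatibility}, which is why $b_0$ contains the term $2r$. One should also check that the linking conjuncts do not exceed $r$ variables; they do not, since each uses only $\bar y$.
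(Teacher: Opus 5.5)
Your proposal is correct and follows essentially the same route as the paper's proof. You bound the variable counts of the conjuncts introduced by $\Gamma$ (equivalence axioms, the $2\ar(X)$-variable compatibility axioms, and the translated input conjuncts), then observe that $\Delta$ pairs source and target conjuncts on identical variable tuples and otherwise adds only conjuncts on one or $\ar(X)$ variables. Your sharper explicit counts, $|\sigma_2|=2s+4$, $\maxar(\sigma_2)=r$, and $\operatorname{var}(\Phi_1)=\operatorname{var}(\Phi_2)$, are consistent with and slightly stronger than the bounds the paper states.
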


\begin{proof}
The transformation \(\Gamma\) adds one binary relation.  Its equivalence
axioms use at most three variables, compatibility with an \(a\)-ary relation
uses at most \(2a\leq2r\) variables, and every translated input conjunct uses
the same variables as the original conjunct.  Thus, for
\(\rho=\sigma\cup\{E\}\), we have
$ 
  |\rho|=s+1,$ $
 \maxar(\rho)\leq r,$ and $ \operatorname{var}(\Gamma(\Phi))
  \leq\max\{q,2r,3\}.$
  
The transformation \(\Delta\) adds the two unary symbols $L$, $R$ and a primed
copy $X'$ of every  $X \in \rho$.  It does not increase the maximum
relation arity, and its conjuncts use at most
$ 
  \max\{1,\operatorname{var}(\Gamma(\Phi)),\maxar(\rho)\}
$
variables.  Moreover,
 the marker conjunct occurs in both sentences, every source conjunct involving
\(\varphi_i\) is matched by a target conjunct on the same variables, and the
source and target linking conjuncts for \(X\) use tuples of the same arity.
The displayed bounds follow.  The formulas
\eqref{eq:gamma-equivalence}--\eqref{eq:delta-target-exact} have size
polynomial in the explicit conjunctive presentation of \(\Phi\), proving the
complexity claim.
\end{proof}

The explicit formulas also preserve the preprocessing convention fixed in Section~\ref{sec:ADP}.
After variable-free cases and unused variables have been removed, every input
conjunct contains a variable.  The axioms introduced by \(\Gamma\) contain
variables, and its translated conjuncts retain the original variable tuples.
In \(\Delta\), the marker conjunct has one variable, the translated conjuncts
retain those of \(\Gamma(\Phi)\), and every linking conjunct uses a tuple of
the positive arity of its relation symbol.  Hence every conjunct of
\(\Phi_1\) and \(\Phi_2\) again has at least one quantified variable; in
particular, the empty structure satisfies \(\Phi_1\).

\section{The finite completion template}\label{sec:completion-template}

Fix universal sentences \(\Phi_1,\Phi_2\) over signatures
\(\sigma_1\subseteq\sigma_2\).  We now encode the completion problem
associated with this pair as a finite-domain CSP.
Define the \emph{cutoff} as
\begin{equation}\label{eq:b}
  b\coloneqq \max\{\maxar(\sigma_2),\operatorname{var}(\Phi_2)\}.
\end{equation}
The integer $b$ is large enough both to bound the number of distinct entries
in every target relation tuple and to witness every violation of $\Phi_2$.

\subsection{Source and target charts}

For $1\leq m\leq b$, the set of \emph{source charts} $\mathcal C_m$ consists of all
$\sigma_1$-structures $\struct{C}$ on $[m]$ satisfying $\Phi_1$.  
%
For $\struct{C}\in\mathcal C_m$, the set of \emph{target charts} $\mathcal{E}_{\struct{C}}$ consists of all $\Phi_2$-completions of $\struct{C}$.
We refer to the sets $\mathcal{E}_{\struct{C}}$ for $\struct{C}\in\mathcal C_m$ as \emph{completion fibres}.

\begin{construction}\label{constr:template}
The \emph{completion template}  of the completion pair $(\Phi_1,\Phi_2)$ is a finite
relational structure $\T=\T(\Phi_1,\Phi_2)$  with domain
\[
  T\coloneqq
  \bigcup\nolimits_{1\leq m\leq b}\;
  \bigcup\nolimits_{\struct{C}\in\mathcal C_m}
  (\{\struct{C}\}\times \mathcal{E}_{\struct{C}}).
\]
For each source chart $\struct{C}$, the completion template contains a unary relation
\[
  U_{\struct{C}}
  \coloneqq \{\struct{C}\}\times \mathcal{E}_{\struct{C}}.
\]
Let
$\struct{C}\in\mathcal C_m$, $\struct{C}'\in\mathcal C_n$, and let
$f\colon[m]\hookrightarrow[n]$ be an injection satisfying
$\struct{C}=f^*\struct{C}'$.  The template also contains the binary
restriction relation
\[
  \operatorname{Res}_{\struct{C},\struct{C}',f}
  \coloneqq
  \left\{
    \bigl((\struct{C},\mathfrak D),(\struct{C}',\mathfrak D')\bigr) \mid 
    \mathfrak D\in \mathcal{E}_{\struct{C}},\ 
    \mathfrak D'\in \mathcal{E}_{\struct{C}'},\ 
    \mathfrak D=f^*\mathfrak D'
  \right\}.
\]
When the source chart is clear from context, we write simply $\mathfrak D$ for the element $(\struct{C},\mathfrak D)$. 
In the case of the completion pair $\Theta(\Phi)$ from Section~\ref{sec:composite}, we abbreviate
$  \T_\Phi\coloneqq\T(\Theta(\Phi)).$  
\end{construction}

\Cref{fig:completion-template} depicts the components of $\T$ and the
restriction relations between them.

\begin{figure}[ht!]
\centering
\begin{tikzpicture}[x=1cm,y=1cm]
  \node[horn math, minimum width=2.55cm] (C) at (2.40,4.10)
    {$\struct C\in\mathcal C_m$};
  \node[horn math, minimum width=2.55cm] (Cp) at (11.20,4.10)
    {$\struct C'\in\mathcal C_n$};
  \draw[horn arrow] (C) -- node[horn label,above]
    {$f\colon[m]\hookrightarrow[n],\quad \struct C=f^*\struct C'$} (Cp);

  \node[horn component, minimum width=4.55cm, minimum height=2.70cm] (UC) at (2.40,1.55) {};
  \node[horn component, minimum width=4.55cm, minimum height=2.70cm] (UCp) at (11.20,1.55) {};
  \node[horn label, text=hornblue, anchor=south] at ($(UC.north)+(0,.12)$)
    {$U_{\struct C}=\{\struct C\}\times\mathcal E_{\struct C}$};
  \node[horn label, text=hornblue, anchor=south] at ($(UCp.north)+(0,.12)$)
    {$U_{\struct C'}=\{\struct C'\}\times\mathcal E_{\struct C'}$};

  \node[horn element] (D0) at (2.40,1.85) {$(\struct C,\mathfrak D_0)$};
  \node[horn element] (D1) at (2.40,.80) {$(\struct C,\mathfrak D_1)$};
  \node[horn element] (Dp0) at (11.20,1.85) {$(\struct C',\mathfrak D'_0)$};
  \node[horn element] (Dp1) at (11.20,.80) {$(\struct C',\mathfrak D'_1)$};

  \draw[horn relation] (D0.east) -- node[horn label,above]
    {$\mathfrak D_0=f^*\mathfrak D'_0$} (Dp0.west);
  \draw[horn relation] (D1.east) -- node[horn label,below]
    {$\mathfrak D_1=f^*\mathfrak D'_1$} (Dp1.west);

  \node[horn math, minimum width=5.30cm] at (6.80,-.45)
    {$((\struct C,\mathfrak D),(\struct C',\mathfrak D'))
      \in \operatorname{Res}_{\struct C,\struct C',f}
      \iff \mathfrak D=f^*\mathfrak D'$};
\end{tikzpicture}
\caption{The completion template from \cref{constr:template}.  Its domain is
the disjoint union of the finite sets $\{\struct C\}\times\mathcal E_{\struct C}$
of local $\Phi_2$-completions.  The unary predicate $U_{\struct C}$ names one
component.  Whenever $\struct C=f^*\struct C'$, the binary relation
$\operatorname{Res}_{\struct C,\struct C',f}$ records exactly which target
charts are related by pullback along $f$.}
\label{fig:completion-template}
\end{figure}



\subsection{Atlas instances}

Fix an arbitrary $\struct{A}\in \fm(\Phi_1)$ and an  arbitrary linear order on
$A$.  For every nonempty $S\subseteq A$ with $|S|\leq b$, let
$
  \iota_S\colon[|S|]\rightarrow S
$
be the increasing enumeration with respect to said linear order, and put
$
  \struct{C}_S\coloneqq \iota_S^*\struct{A}.
$

\begin{construction}
The CSP instance $\I_b(\struct{A})$ over $\T$ has:
\begin{itemize}[leftmargin=*]
  \item one variable $x_S$ for every nonempty $S\subseteq A$ with
  $|S|\leq b$;
  \item for every such $S$, the unary component constraint
  \(
    U_{\struct{C}_S}(x_S),
  \)
  where $U_{\struct{C}}$ is the unary predicate naming the component
  $\mathcal{E}_{\struct{C}}$ from \cref{constr:template};
  \item for every nonempty $T\subseteq S$, the restriction constraint
  \(
    \operatorname{Res}_{\struct{C}_T,\struct{C}_S,f_{T,S}}(x_T,x_S),
  \)
  where $f_{T,S}$ is the unique injection satisfying
  $\iota_T=\iota_S\circ f_{T,S}$; it is unique because $\iota_S$ is
  injective.\,\footnote{The unary component constraints are retained for readability, although they
are redundant in the construction as stated: taking $T=S$ yields the identity
constraint
$
\operatorname{Res}_{\struct{C}_S,\struct{C}_S,\mathrm{id}}(x_S,x_S),
$
which already forces the image of $x_S$ to lie in $U_{\struct{C}_S}$.} 
\end{itemize}
\end{construction}

A homomorphism $h\colon\I_b(\struct{A})\to\T$ chooses, for every small
subset $S$, a target chart $h(x_S)$, and the restriction constraints say that
all these choices form a coherent atlas.
We refer to the entries of $h$, viewed as a tuple in $T^{|\{S\subseteq A \mid\, |S|\leq b\}|}$, as \emph{atlas coordinates} of $\struct{A}$.

\begin{theorem}\label{thm:completion-csp}
For every finite $\struct{A}\in \fm(\Phi_1)$, the following are equivalent:
\begin{enumerate}[label=\textup{(\roman*)}]
  \item $\struct{A}$ has a $\Phi_2$-completion;
  \item there is a homomorphism  from  
    $\I_b(\struct{A})$ to $\T$.
\end{enumerate}
More precisely, the map
$ 
  \struct B\longmapsto
  (x_S\longmapsto\iota_S^*\struct B )
$ 
is a bijection from the set of $\Phi_2$-completions of $\struct A$ to
$\Hom(\I_b(\struct A),\T)$, and the construction in the converse direction
below is its inverse.
\end{theorem}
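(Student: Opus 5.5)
We prove the bijection directly; the equivalence of (i) and (ii) follows at once. Write $F(\struct B)$ for the assignment $x_S\mapsto(\struct C_S,\iota_S^*\struct B)$.

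\emph{Forward direction: $F$ lands in $\Hom(\I_b(\struct A),\T)$.} Let $\struct B$ be a $\Phi_2$-completion of $\struct A$ and $S\subseteq A$ nonempty with $|S|\le b$. Since $\iota_S$ is injective, $\iota_S^*\struct B$ is isomorphic to the substructure of $\struct B$ induced on $S$. Universal sentences are preserved under substructures, so $\iota_S^*\struct B\models\Phi_2$. Its $\sigma_1$-reduct is $\iota_S^*\struct A=\struct C_S$, hence $\iota_S^*\struct B\in\mathcal E_{\struct C_S}$ and the unary constraint $U_{\struct C_S}(x_S)$ holds. For $T\subseteq S$ we have $\iota_T=\iota_S\circ f_{T,S}$, so pullbacks compose and $f_{T,S}^*(\iota_S^*\struct B)=\iota_T^*\struct B$. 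Also $\struct C_T=f_{T,S}^*\struct C_S$, so the relation $\operatorname{Res}_{\struct C_T,\struct C_S,f_{T,S}}$ exists in $\T$ and contains the required pair.

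\emph{Converse construction.} Given a homomorphism $h$, the unary constraint gives $h(x_S)=(\struct C_S,\mathfrak D_S)$ with $\mathfrak D_S\in\mathcal E_{\struct C_S}$. Define a $\sigma_2$-expansion $\struct B$ of $\struct A$ as follows. For $X\in\sigma_2\setminus\sigma_1$ of arity $k$ and $\bar a\in A^k$, let $S=\{a_1,\dots,a_k\}$; we have $1\le|S|\le k\le b$ by the choice of the cutoff \eqref{eq:b}. Put $\bar a\in X^{\struct B}$ iff $\iota_S^{-1}(\bar a)\in X^{\mathfrak D_S}$. The $\sigma_1$-part is that of $\struct A$.

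\emph{Coherence.} For any $S'\supseteq S$ with $|S'|\le b$, the restriction constraint gives $\mathfrak D_S=f_{S,S'}^*\mathfrak D_{S'}$. So $\bar a\in X^{\struct B}$ iff $\iota_{S'}^{-1}(\bar a)\in X^{\mathfrak D_{S'}}$, because $\iota_{S'}\circ f_{S,S'}=\iota_S$. The same identity holds for $\sigma_1$-symbols, since each $\mathfrak D_{S'}$ expands $\struct C_{S'}=\iota_{S'}^*\struct A$. Hence $\iota_{S'}^*\struct B=\mathfrak D_{S'}$ for every admissible $S'$.

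\emph{$\struct B\models\Phi_2$.} Suppose an assignment violates a conjunct of $\Phi_2$. The image $S'$ of its variables has size at most $\operatorname{var}(\Phi_2)\le b$, and it is nonempty because every conjunct has a variable. Pulling back along $\iota_{S'}$ yields a violating assignment in $\mathfrak D_{S'}$: relational atoms transfer by the coherence identity, and equality atoms transfer by injectivity. This contradicts $\mathfrak D_{S'}\models\Phi_2$.

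\emph{Mutual inverses.} Coherence gives $F(\struct B)=h$. In the other direction, a completion is determined by its restrictions to the sets $S$ spanned by its relation tuples, so rebuilding from $F(\struct B)$ recovers $\struct B$.

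The only real care needed is in the coherence step, together with checking that the cutoff $b$ covers both relation arities and the number of variables in a conjunct. Everything else is routine pullback bookkeeping.
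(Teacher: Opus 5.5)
Your proof is correct and follows the paper's argument essentially step for step. The matching steps are: the forward map via preservation of universal sentences under substructures, reading each relation tuple off the chart on its entry set, deriving the coherence identity $\iota_{S'}^*\mathfrak B=h(x_{S'})$ from the restriction constraints, and using the cutoff $b$ to bound both relation arities and the size of a violating assignment's image. The only cosmetic difference is that you fix the $\sigma_1$-part to be that of $\struct A$ and then check consistency, whereas the paper reads all of $\sigma_2$ off the charts and afterwards observes that the result expands $\struct A$.
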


\begin{proof}
Suppose first that $\mathfrak B$ is a $\Phi_2$-completion of $\struct{A}$.
For every $S$, define
$ 
  h(x_S)\coloneqq \iota_S^*\mathfrak B.
$
Since $\Phi_2$ is universal, every substructure of $\mathfrak B$
satisfies $\Phi_2$, and $\iota_S^*\mathfrak B$ is an expansion of
\(\iota_S^*\struct A=\struct C_S\).  Thus
$h(x_S)\in \mathcal{E}_{\struct{C}_S}$, so the unary component constraints are satisfied.
The choices commute with restriction, so $h$ is a homomorphism to $\T$.

Conversely, suppose that $h\colon\I_b(\struct{A})\to\T$ is a homomorphism.
The unary component constraints give $h(x_S)\in \mathcal{E}_{\struct{C}_S}$ for every $S$.
We define a $\sigma_2$-structure $\mathfrak B$ on $A$.  Let $R\in\sigma_2$ be
$a$-ary and let $\bar a\in A^a$.  Let $S$ be the set of entries occurring in
$\bar a$.  Since $1\leq|S|\leq a\leq\maxar(\sigma_2)\leq b$, the variable
$x_S$ is present. 
We include $\bar a$ in $R^{\mathfrak B}$ if and only if $\iota_S^{-1}(\bar a)\in R^{h(x_S)}.$
This definition uses the set $S$ of entries occurring in $\bar a$ itself and is therefore unambiguous.
The restriction constraints imply, more generally, that for every nonempty
$U\subseteq A$ with $|U|\leq b$, we have $\iota_U^*\mathfrak B=h(x_U)$. 
Indeed, the value of a relation on a tuple from $U$ is read from the chart on
the set of entries occurring in that tuple, and that chart is the restriction of $h(x_U)$.

Every target chart is an expansion of its source chart, so
\(\mathfrak B\) is a \(\sigma_2\)-expansion of \(\struct A\).
It remains to prove $\mathfrak B\models\Phi_2$.  If not, a violated conjunct
of $\Phi_2$ is witnessed by an assignment whose image is contained in a set
$U$ of size at most $\operatorname{var}(\Phi_2)\leq b$.  By $\iota_U^*\mathfrak B=h(x_U)$, the induced target structure on $U$ is
$h(x_U)$, which satisfies $\Phi_2$ because it belongs to the component $\mathcal{E}_{\struct{C}_U}$.  This is a
contradiction.

Finally, $\iota_U^*\mathfrak B=h(x_U)$ shows that reconstructing
\(\mathfrak B\) from the homomorphism associated with a completion returns
that completion, and that associating a homomorphism with the reconstructed
\(\mathfrak B\) returns \(h\).  Hence the two constructions are mutually
inverse.
\end{proof}

\begin{remark}
    The terminology of charts and atlases, as well as the underlying
local-to-global viewpoint, is adapted from Otto's framework for amalgamation
patterns and their realisations~\cite{OttoLocalGlobal}.  In that setting, a
realisation is equipped with an atlas of distinguished local structures whose
overlaps implement prescribed partial isomorphisms. 

Here the analogy is
conceptual rather than technical: our charts encode bounded source structures
and their possible completions, and coherence on overlaps is represented by
restriction relations in a finite CSP template.  We do not use the groupoid or
covering constructions from~\cite{OttoLocalGlobal}.
\end{remark} 

\section{Set-valued local strategies}\label{sec:context-strategies}

A naive approach to solving atlas instances in a uniform way would
be to locally choose a single target chart per source chart (similarly to how recolorings are defined in~\cite{RydvalInsideOut}).
This is not enough: a choice
that is consistent on one subset of the atlas need not survive the constraints
imposed by a larger subset, so one must retain the whole set of locally
possible target choices and propagate.  
%
%
We need a finite object that simultaneously describes strategies
for all atlas instances $\I_b(\struct{A})$ with
$\struct{A}\in \fm(\Phi_1)$.

\subsection{Context instances}

\begin{definition}\label{def:k-context}
A \emph{context} is a tuple
$\mathfrak p=(\struct{C};\alpha_1,\dots,\alpha_\ell),$ $0\leq\ell$,   
where:
\begin{enumerate}[label=\textup{(\roman*)}]
  \item \label{item:ctx-source} $\struct{C}$ is a finite model of $\Phi_1$;
  \item \label{item:ctx-maps} each $\alpha_i\colon[m_i]\hookrightarrow C$ is an injection with
  $1\leq m_i\leq b$;
  \item \label{item:ctx-distinct} the subsets $\alpha_i([m_i])$ are pairwise distinct;
  \item \label{item:ctx-cover} their union is $C$.
\end{enumerate}
We call $\struct{C}$ the \emph{source structure} of the context and the $\alpha_i$s its \emph{chart maps}.
We call $\mathfrak p$ a $k$-\emph{context} if $0\leq\ell\leq k$. 
Two contexts are \emph{isomorphic} if they have the same number of chart maps and there is an isomorphism of their source
structures commuting with every listed chart map $\alpha_i$.
More precisely, let 
  $\mathfrak p=(\struct C;\alpha_1,\ldots,\alpha_\ell)$ and 
  $\mathfrak q=(\struct C';\beta_1,\ldots,\beta_{\ell'})$ be two contexts,
where $\alpha_i\colon[m_i]\hookrightarrow C$ and
$\beta_j\colon[n_j]\hookrightarrow C'$.  An \emph{isomorphism}
$\gamma\colon\mathfrak p\to\mathfrak q$ is an isomorphism
$\gamma\colon\struct C\to\struct C'$ such that $\ell=\ell'$, $m_i=n_i$, and
$
  \gamma\circ\alpha_i=\beta_i $ for every  $i\in[\ell]$. 
Thus the ordering of the listed charts is part of the context data and is
preserved by isomorphisms.  We call $\mathfrak p$ and $\mathfrak q$
\emph{isomorphic} if such a $\gamma$ exists.
\end{definition}

Since $|C|\leq kb$, there are only finitely many isomorphism types of
$k$-contexts, and they can be enumerated effectively.
A schematic $k$-context for $k\geq 3$ is shown in \cref{fig:k-context}.  The picture also
motivates the terminology: the source structure $\struct C$ records the common
local environment in which the marked chart images occur simultaneously.

\begin{construction}
The context $\mathfrak p$ determines a CSP instance
$\mathfrak X(\mathfrak p)$ over $\T$.  Its variables are
$y_1,\dots,y_\ell$, where $y_i$ carries
the unary component constraint $U_{\alpha_i^*\struct{C}}(y_i)$.  Whenever
$\alpha_i([m_i])\subseteq\alpha_j([m_j])$, the restriction constraint
$\operatorname{Res}_{\alpha_i^*\struct{C},\,\alpha_j^*\struct{C},\,g}
 (y_i,y_j)$
is placed between $y_i$ and $y_j$, where $g$ is the unique injection with
$\alpha_i=\alpha_j\circ g$; it is unique because $\alpha_j$ is injective. 
\end{construction}

\begin{figure}[tbp]
\centering
\begin{tikzpicture}[x=1cm,y=1cm]
  \node[horn label, text=hornblue, anchor=west] at (.05,3.30)
    {chart domains};
  \node[horn math, minimum width=1.55cm] (M1) at (1.10,2.05) {$[m_1]$};
  \node[horn math, minimum width=1.55cm] (M2) at (1.10,0.00) {$[m_2]$};
  \node[horn math, minimum width=1.55cm] (M3) at (1.10,-2.05) {$[m_3]$};

  \node[horn component, fill=white, minimum width=7.35cm,
        minimum height=5.35cm] (Cbox) at (9.55,0) {};
  \node[horn label, text=hornblue, anchor=south] at ($(Cbox.north)+(0,.10)$)
    {source structure $\struct C$};

  \draw[draw=hornblue, fill=hornblue!18, fill opacity=.25,
        line width=.8pt, rounded corners=18pt]
    (7.05,-.35) rectangle (10.05,2.05);
  \draw[draw=hornaccent, fill=hornaccent!18, fill opacity=.22,
        line width=.8pt, rounded corners=18pt]
    (9.10,-.35) rectangle (12.15,1.95);
  \draw[draw=horngray, fill=horngray!16, fill opacity=.20,
        line width=.8pt, rounded corners=18pt]
    (7.65,-1.95) rectangle (11.95,.45);

  \node[horn label, text=hornblue, anchor=west] at (7.15,1.72)
    {$\alpha_1([m_1])$};
  \node[horn label, text=hornaccent, anchor=east] at (12.05,1.62)
    {$\alpha_2([m_2])$};
  \node[horn label, text=horngray, anchor=south] at (9.80,-1.95)
    {$\alpha_3([m_3])$};

  \node[horn point] (c1) at (7.65,1.18) {$c_1$};
  \node[horn point] (c2) at (9.50,1.15) {$c_2$};
  \node[horn point] (c3) at (11.35,1.12) {$c_3$};
  \node[horn point] (c4) at (8.15,.15) {$c_4$};
  \node[horn point] (c5) at (9.85,.10) {$c_5$};
  \node[horn point] (c6) at (11.45,.12) {$c_6$};
  \node[horn point] (c7) at (8.35,-1.15) {$c_7$};
  \node[horn point] (c8) at (9.95,-1.15) {$c_8$};
  \node[horn point] (c9) at (11.35,-1.12) {$c_9$};

  \draw[-{Latex[length=2.2mm,width=1.5mm]}, line width=.7pt, draw=hornblue]
    (M1.east) to[out=0,in=180] node[horn label,above] {$\alpha_1$} (7.05,1.25);
  \draw[-{Latex[length=2.2mm,width=1.5mm]}, line width=.7pt, draw=hornaccent]
    (M2.east) to[out=0,in=190] node[horn label,above] {$\alpha_2$} (9.10,.65);
  \draw[-{Latex[length=2.2mm,width=1.5mm]}, line width=.7pt, draw=horngray]
    (M3.east) to[out=0,in=180] node[horn label,below] {$\alpha_3$} (7.65,-1.10);

  \node[horn label, anchor=north] at (9.55,-2.88)
    {the chart images are pairwise distinct and satisfy
     $\alpha_1([m_1])\cup\alpha_2([m_2])\cup\alpha_3([m_3])=C$};
\end{tikzpicture}
\caption{A schematic $k$-context
$\mathfrak p=(\struct C;\alpha_1,\alpha_2,\alpha_3)$ with $\ell=3\leq k$.
The maps $\alpha_i$ identify the chart domains with pairwise distinct,
possibly overlapping subsets of the common source structure $\struct C$;
their images jointly cover $C$.  Only the underlying domain is depicted, since
the relational signature is arbitrary.}
\label{fig:k-context}
\end{figure}

If
$\mathfrak p=(\struct C;\alpha_1,\ldots,\alpha_\ell)$ and
$\mathfrak q=(\struct C';\beta_1,\ldots,\beta_\ell)$ are isomorphic, then
$\alpha_i^*\struct C=\beta_i^*\struct C'$ for every $i$.  Moreover,
$\alpha_i=\alpha_j\circ g$ holds exactly when
$\beta_i=\beta_j\circ g$ holds.  Hence the construction above gives
$\mathfrak X(\mathfrak p)=\mathfrak X(\mathfrak q)$.

\begin{definition}\label{def:context-extension}
A \emph{deletion} of a context is obtained by deleting some listed charts and
then restricting the source structure to the union of the remaining charts.
A \emph{one-chart extension}
$
  e\colon
  (\struct{C};\alpha_1,\dots,\alpha_\ell)
  \rightarrow
  (\struct{C}';\alpha'_1,\dots,\alpha'_{\ell+1})
$
consists of an embedding $e\colon\struct{C}\hookrightarrow\struct{C}'$
such that $e\circ\alpha_i=\alpha'_i$ for $1\leq i\leq\ell$.
\end{definition}

By the covering condition~\ref{item:ctx-cover}, the embedding $e$ in a one-chart extension is
uniquely determined by the listed chart maps.  Equivalently, a one-chart
extension of an $\ell$-chart context is an $(\ell+1)$-chart context whose
deletion of the last chart is isomorphic to the original context.  Thus
one-chart extensions can be enumerated up to isomorphism together with the
contexts themselves.

\subsection{Uniform contextual strategies}

\begin{definition}\label{def:type-strategy}
A \emph{uniform $k$-contextual strategy} assigns to every $k$-context
$\mathfrak p$ a nonempty set
$ 
  P_{\mathfrak p}
  \subseteq
  \Hom(\mathfrak X(\mathfrak p),\T)
$
such that:
\begin{enumerate}[label=\textup{(T\arabic*)}]
  \item \label{item:T1} if $\mathfrak p$ and $\mathfrak q$ are isomorphic,
  then $P_{\mathfrak p}=P_{\mathfrak q}$;
  \item \label{item:T2} if $\mathfrak q$ is a deletion of $\mathfrak p$ and
  $h\in P_{\mathfrak p}$, then the restriction of $h$ belongs to
  $P_{\mathfrak q}$;
  \item \label{item:T3} if $\mathfrak p\to\mathfrak q$ is a one-chart extension with
  fewer than $k$ charts in $\mathfrak p$, then every
  $h\in P_{\mathfrak p}$ extends to some $h'\in P_{\mathfrak q}$.
\end{enumerate}
\end{definition}

\begin{proposition} \label{prop:fixed-point-test}
It is decidable whether a uniform $k$-contextual strategy exists.
\end{proposition}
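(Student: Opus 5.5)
We reduce the problem to a finite greatest-fixed-point computation, in the spirit of the usual $k$-consistency algorithm, carried out over isomorphism types of $k$-contexts rather than over variables of a single instance.

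\textbf{Finiteness.} Every chart map has $m_i\leq b$ and there are at most $k$ charts, so a $k$-context has $|C|\leq kb$. Hence there are finitely many isomorphism types of $k$-contexts, and we fix one representative of each. By \ref{item:T1}, a uniform strategy is determined by its values on these representatives. For each representative $\mathfrak p$, the set $\Hom(\mathfrak X(\mathfrak p),\T)$ is finite and computable, since $\T$ is finite and computable by \cref{constr:template}.

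Because $\mathfrak X(\mathfrak p)=\mathfrak X(\mathfrak q)$ for isomorphic contexts, we can transport sets along isomorphisms. We then also tabulate, effectively, the following data:
\begin{itemize}
\item all deletions of each representative, each identified with its representative type together with the induced reindexing of variables;
\item all one-chart extensions of representatives with fewer than $k$ charts, up to isomorphism.
\end{itemize}
The second list is finite, since an extension is again a $k$-context and the embedding $e$ is determined by the charts.

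\textbf{Pruning.} Start with $P^{(0)}_{\mathfrak p}=\Hom(\mathfrak X(\mathfrak p),\T)$ for every representative. Repeatedly remove any $h\in P_{\mathfrak p}$ for which one of the following holds:
\begin{enumerate}[label=(\alph*)]
\item some deletion $\mathfrak q$ has a restriction of $h$ not in $P_{\mathfrak q}$;
\item some one-chart extension $\mathfrak p\to\mathfrak q$ with fewer than $k$ charts in $\mathfrak p$ has no $h'\in P_{\mathfrak q}$ extending $h$.
\end{enumerate}
Each step strictly shrinks a finite family of finite sets, so the process terminates at some family $P^\ast$ that satisfies \ref{item:T1}--\ref{item:T3} by construction. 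The algorithm answers yes if and only if every $P^\ast_{\mathfrak p}$ is nonempty.

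\textbf{Correctness.} Soundness is immediate: $P^\ast$ is a uniform $k$-contextual strategy. For completeness, the key observation is that the union of two families satisfying \ref{item:T1}--\ref{item:T3} again satisfies them, because all three conditions are preserved by pointwise union. So there is a largest family satisfying \ref{item:T1}--\ref{item:T3}, possibly with empty members. By induction on the pruning steps, no element of any uniform strategy is ever removed: its witnesses for (a) and (b) stay inside the strategy, which stays inside the current sets. Hence any uniform $k$-contextual strategy is contained pointwise in $P^\ast$, and since its sets are nonempty, $P^\ast$ has nonempty sets too.

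\textbf{Expected obstacle.} The delicate step is the bookkeeping of isomorphism types. Transporting $P$ along isomorphisms must be well defined, so automorphisms of a context must fix the chart list. They do: an isomorphism commutes with every $\alpha_i$, and the charts cover $C$, so the only automorphism of a context is the identity. Because of this, deletions and extensions can be matched to representatives canonically, and the pruning rules are well defined.
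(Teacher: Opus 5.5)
Your proposal is correct and follows essentially the same argument as the paper: fix representatives of the finitely many isomorphism types of $k$-contexts, prune each $\Hom(\mathfrak X(\mathfrak p),\T)$ by the deletion rule and the one-chart-extension rule down to the greatest fixed point, and note that no member of a uniform strategy is ever removed, so a strategy exists if and only if every surviving set is nonempty. Your extra remarks, namely closure of such families under pointwise union and rigidity of contexts via the covering condition, are correct but not needed, since isomorphic contexts already determine literally the same instance $\mathfrak X(\mathfrak p)$.
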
 
\begin{proof}
Choose one representative of every isomorphism type of $k$-context.  There are
finitely many representatives, and each finite set
$\Hom(\mathfrak X(\mathfrak p),\T)$ is computable.  Condition \ref{item:T1} is then automatic: isomorphic contexts determine the
same CSP instance, so a strategy is stored only on representatives and assigned
unchanged to every isomorphic context.
Deletions and one-chart extensions of a representative are likewise identified
with their representatives before the rules below are applied; this is
well-defined by \cref{def:context-extension} and the observation following it.
Initialize 
$
Q_{\mathfrak p}\coloneqq \Hom(\mathfrak X(\mathfrak p),\T)
$
for every representative.  Repeatedly delete an assignment
$h\in Q_{\mathfrak p}$ if either:
\begin{enumerate}[label=\textup{(\alph*)}]
  \item for some deletion $\mathfrak q$ of $\mathfrak p$, the corresponding
  restriction of $h$ is not in $Q_{\mathfrak q}$; here ``corresponding'' means
  restriction to the surviving chart variables, re-indexing them according
  to the deletion, and transport along the fixed isomorphism to the chosen
  representative; or
  \item $\mathfrak p$ has fewer than $k$ charts and, for some one-chart
  extension $\mathfrak p\to\mathfrak q$, there is no
  $h'\in Q_{\mathfrak q}$ extending $h$.
\end{enumerate}
The procedure terminates because only finitely many assignments can be
deleted.  At the greatest fixed point, the surviving family is a uniform
$k$-contextual strategy exactly when every $Q_{\mathfrak p}$ is nonempty.
Indeed, a nonempty fixed point satisfies the conditions of
\cref{def:type-strategy}.  Conversely, every uniform $k$-contextual strategy is
contained in the initial family and none of its assignments can be removed by
either deletion rule.  It is therefore contained in the greatest fixed point.
\end{proof}

\begin{proposition} 
\label{prop:type-to-instance}
If a uniform $k$-contextual strategy exists, then for every  
$\struct{A}\in \fm(\Phi_1)$, the atlas instance $\I_b(\struct{A})$ admits an
existential $k$-strategy.
\end{proposition}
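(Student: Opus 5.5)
Given a uniform $k$-contextual strategy $(P_{\mathfrak p})$ and a finite $\struct A\models\Phi_1$, the plan is to build an existential $k$-strategy $\mathcal H$ for $\I_b(\struct A)\to\T$ by reading off partial homomorphisms from the contexts that $\struct A$ itself realizes. Fix the linear order on $A$ used for $\I_b(\struct A)$. Let $\{x_{S_1},\dots,x_{S_\ell}\}$ be a set of at most $k$ distinct variables of $\I_b(\struct A)$, listed in some order. It determines a context
\[
  \mathfrak p(S_1,\dots,S_\ell)\coloneqq(\struct C;\iota_{S_1},\dots,\iota_{S_\ell}),
\]
where $\struct C$ is the substructure of $\struct A$ induced on $S_1\cup\dots\cup S_\ell$. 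This is a $k$-context. Indeed, $\struct C\models\Phi_1$ because $\Phi_1$ is universal, each $\iota_{S_i}$ is an injection from $[|S_i|]$ with $1\le |S_i|\le b$, the images are pairwise distinct, and they cover $C$. The instance $\mathfrak X(\mathfrak p)$ is then literally the subinstance of $\I_b(\struct A)$ induced on $x_{S_1},\dots,x_{S_\ell}$ under $y_i\mapsto x_{S_i}$. The unary constraints agree since $\iota_{S_i}^*\struct C=\iota_{S_i}^*\struct A=\struct C_{S_i}$. The restriction constraints agree since $\iota_{S_i}=\iota_{S_j}\circ g$ holds exactly when $S_i\subseteq S_j$, and then $g=f_{S_i,S_j}$.

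Define $\mathcal H$ to consist of all maps $x_{S_i}\mapsto h(y_i)$ with $h\in P_{\mathfrak p(S_1,\dots,S_\ell)}$, over all such lists with $\ell\le k$, the empty list included. Each member is a partial homomorphism: a constraint of $\I_b(\struct A)$ whose variables lie in the domain is a constraint of $\mathfrak X(\mathfrak p)$, and $h$ satisfies it. The family is nonempty because $P$ of the empty context is nonempty; that context is $(\emptyset;)$, and the empty structure models $\Phi_1$.

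The subtle point is that $\mathcal H$ should not depend on the ordering of the list. I will handle this by including all orderings in the definition above. This is then harmless for the axioms, since a permutation of a list is a deletion of nothing only up to reindexing. The cleaner route is to observe that a reordering of charts gives a context with the same source and permuted chart maps, and that restricting along a deletion of all but a subset, in any order, is covered by (T2) as deletions permit re-indexing. I would check that the paper's notion of deletion allows reordering the surviving charts. If it does not, I would simply define $\mathcal H$ as the union over orderings, which is all that is needed.

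Next come the axioms. For (S1), let $h'$ be a restriction of some $h\in\mathcal H$ arising from $\mathfrak p(S_1,\dots,S_\ell)$, with domain $\{x_{S_i}:i\in I\}$. Deleting the charts outside $I$ and restricting the source to $\bigcup_{i\in I}S_i$ gives exactly $\mathfrak p((S_i)_{i\in I})$, an induced substructure of $\struct A$. By (T2), the restricted assignment lies in $P$ of that context, so $h'\in\mathcal H$. For (S2), let $h$ have domain of size $\ell<k$ and let $x_S$ be any variable. If $x_S$ is already in the domain, we are done. Otherwise $\mathfrak p(S_1,\dots,S_\ell)\to\mathfrak p(S_1,\dots,S_\ell,S)$, with the inclusion of induced substructures of $\struct A$, is a one-chart extension. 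By (T3) there is an extension $h'\in P$ of the larger context, and $h'$ yields a member of $\mathcal H$ extending $h$ whose domain contains $x_S$. I expect the only real obstacle to be the bookkeeping in identifying $\mathfrak X(\mathfrak p)$ with the induced subinstance and in the reindexing under deletions. (T1) guarantees that these identifications can be made on representatives without ambiguity.
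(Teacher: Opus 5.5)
Your proposal is correct and follows essentially the same route as the paper. You take the union, over all ordered lists of at most $k$ distinct atlas variables, of the transported sets $P_{\mathfrak p}$ for the induced contexts $(\struct C;\iota_{S_1},\dots,\iota_{S_\ell})$. You then obtain (S1) from (T2) via an order-preserving deletion, and (S2) from (T3) by appending the new chart at the end.

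The detour about reordering is unnecessary, and the remark that a permutation is a kind of deletion does not hold here, since both deletions and context isomorphisms preserve the chart order. The union over all orderings already suffices, as the paper notes when it says that no independence from the ordering is needed.
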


\begin{proof}
Let
$
  \bar V=(x_{S_1},\dots,x_{S_\ell})
$
be a tuple of pairwise distinct variables of
$\I_b(\struct{A})$, where $\ell\leq k$.  Restrict $\struct{A}$ to
$S_1\cup\cdots\cup S_\ell$ and take $\alpha_i$ to be the increasing
enumeration $\iota_{S_i}$, viewed as an injection into that union.  Conditions
\ref{item:ctx-source}--\ref{item:ctx-cover} of \cref{def:k-context} hold, and we obtain a context $\mathfrak p_{\bar V}$.  Moreover, the context instance
$\mathfrak{X}(\mathfrak p_{\bar V})$ is isomorphic to the substructure of
$\I_b(\struct{A})$ on the variables occurring in $\bar V$, with the
$i$-th context variable corresponding to $x_{S_i}$.

Transport every member of $P_{\mathfrak p_{\bar V}}$ along this 
isomorphism, and let $\mathcal H$ be the union of the resulting partial
homomorphisms over all ordered tuples $\bar V$ of length at most $k$.  No
independence from the ordering of a fixed domain is asserted or needed.
The family is nonempty because every $P_{\mathfrak p}$ is nonempty.

Suppose that $h\in\mathcal H$ is obtained from
$\bar V=(x_{S_1},\dots,x_{S_\ell})$, and restrict $h$ to a subdomain.  Delete
from $\bar V$ precisely the variables outside that subdomain, retaining the
relative order of the remaining variables.  The resulting context is a
deletion of $\mathfrak p_{\bar V}$, so \ref{item:T2} gives
\ref{item:S1}.  Now assume $\ell<k$ and let $x_S$ be a variable not already
in the domain of $h$.  Append it to the ordered tuple 
$
  \bar V'=(x_{S_1},\dots,x_{S_\ell},x_S).
$
Then $\mathfrak p_{\bar V'}$ is a one-chart extension of
$\mathfrak p_{\bar V}$ along the inclusion of the underlying source
structures, and \ref{item:T3} gives an extension of $h$ whose domain contains
$x_S$.  If $x_S$ is already in the domain, take $h$ itself.  Thus
\ref{item:S2} holds, and $\mathcal H$ is an existential $k$-strategy.
\end{proof}

\section{Positive completion pairs}
\label{sec:positive-strategies}

The converse direction uses two standard model-theoretic facts: compactness
for universal expansions and the extension property of a Fraïssé limit.

\begin{lemma} \label{lem:compactness}
Let \(\sigma_1\subseteq\sigma_2\), let \(\Phi_2\) be a universal
\(\sigma_2\)-sentence, and let $\mathfrak F$ be a possibly infinite
$\sigma_1$-structure.  Suppose that every finite substructure of
$\mathfrak F$ has a $\sigma_2$-expansion satisfying $\Phi_2$.  Then
$\mathfrak F$ itself has a $\sigma_2$-expansion satisfying $\Phi_2$.
\end{lemma}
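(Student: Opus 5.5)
The plan is to use first-order compactness on an expanded language with constants, or, equivalently, a König-type argument when $\mathfrak F$ is countable. I will go with compactness, since it covers arbitrary cardinalities uniformly. Add a fresh constant $c_a$ for every element $a\in F$. Let $D$ be the atomic diagram of $\mathfrak F$ in $\sigma_1$: all atoms $R(c_{a_1},\dots,c_{a_k})$ with $(a_1,\dots,a_k)\in R^{\mathfrak F}$, all negated atoms $\neg R(c_{a_1},\dots,c_{a_k})$ with the tuple outside $R^{\mathfrak F}$, and all inequalities $c_a\neq c_{a'}$ for $a\neq a'$. The theory under consideration is $\Sigma\coloneqq D\cup\{\Phi_2\}$ in the signature $\sigma_2$ together with the constants.

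The first step is finite satisfiability. A finite subset $\Sigma_0\subseteq\Sigma$ mentions only finitely many constants, say those for the elements of a finite set $F_0\subseteq F$. By hypothesis, the substructure of $\mathfrak F$ induced on $F_0$ has a $\sigma_2$-expansion $\mathfrak B_0\models\Phi_2$. We interpret each $c_a$ for $a\in F_0$ as $a$ itself; any remaining constants do not occur in $\Sigma_0$. Since the $\sigma_1$-reduct of $\mathfrak B_0$ is the induced substructure, every diagram sentence in $\Sigma_0$ holds, and so does $\Phi_2$. If $F_0$ is empty, we use the empty-structure expansion, which the hypothesis also supplies (or any finite substructure, since $\Sigma_0$ is then just $\{\Phi_2\}$ together with variable-free facts, which are absent because there are no nullary symbols). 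Compactness therefore gives a model $\mathfrak M\models\Sigma$.

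The second step extracts the expansion. Because of the inequalities, the map $a\mapsto c_a^{\mathfrak M}$ is injective. Because of the positive and negative atoms, it is an embedding of $\mathfrak F$ into $\mathfrak M\mathord{\upharpoonright}_{\sigma_1}$. Let $\mathfrak N$ be the substructure of $\mathfrak M\mathord{\upharpoonright}_{\sigma_2}$ on the image of this map. Since $\Phi_2$ is universal, $\mathfrak N\models\Phi_2$. Pulling $\mathfrak N$ back along the embedding gives a $\sigma_2$-structure on $F$ whose $\sigma_1$-reduct is exactly $\mathfrak F$ and which is isomorphic to $\mathfrak N$, hence satisfies $\Phi_2$.

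No step is a real obstacle; this is routine. The only care points are that restricting to the substructure uses universality of $\Phi_2$, and that the empty-domain convention must not break the argument. Both are handled above.
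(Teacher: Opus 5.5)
Your proposal is correct and follows the paper's own proof: add a constant for each element of $F$, take $\Phi_2$ together with the atomic diagram of $\mathfrak F$, and get finite satisfiability from the expansions of finite induced substructures. Then apply compactness, restrict the resulting model to the interpretations of the constants, and conclude using preservation of universal sentences under substructures. Your explicit inequalities $c_a\neq c_{a'}$ and your remark about the empty case only spell out what the paper's ``atomic formulas and their negations'' already covers.
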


\begin{proof}
Expand the language by a constant $c_a$ for every $a\in F$.
Consider the
theory consisting of $\Phi_2$ together with all atomic $\sigma_1$-formulas and negations of such formulas on these constants which hold in $\mathfrak F$.  Every finite subset mentions only finitely
many elements of $F$ and is satisfiable by a $\Phi_2$-expansion of the induced
finite substructure.  By compactness, the whole theory has a model.

The interpretations of the constants form a substructure copy of
$\mathfrak F$ in its $\sigma_1$-reduct.  Restrict all $\sigma_2$-relations to this
copy.  Since $\Phi_2$ is universal, it is preserved under 
substructures.  The restriction is therefore the required expansion of
$\mathfrak F$.
\end{proof}

\begin{lemma} \label{lem:fraisse-extension}
Let $\K$ be a Fraïssé class and let $\mathfrak F$ be its Fraïssé limit.  If
$u\colon\struct C\hookrightarrow\struct C'$ is an embedding between members
of $\K$, then every embedding $h\colon\struct C\hookrightarrow\mathfrak F$
extends along $u$: there is an embedding
$h'\colon\struct C'\hookrightarrow\mathfrak F$ with $h'\circ u=h$.
\end{lemma}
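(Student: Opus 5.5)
The plan is to derive the extension property from homogeneity of $\mathfrak F$ together with the fact that $\age(\mathfrak F)=\K$. Both facts are supplied by \cref{theorem:fraisse_2}. The idea is to first realise $\struct C'$ somewhere inside $\mathfrak F$, and then move that copy so it sits over $h$ by an automorphism.

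First, since $\struct C'\in\K=\age(\mathfrak F)$, there is an embedding $g\colon\struct C'\hookrightarrow\mathfrak F$. Then $g\circ u\colon\struct C\hookrightarrow\mathfrak F$ is also an embedding. Hence $g(u(C))$ and $h(C)$ are finite substructures of $\mathfrak F$, and the map
\[
  \phi\coloneqq h\circ (g\circ u)^{-1}\colon g(u(C))\to h(C)
\]
is well defined because $g\circ u$ is injective. It is an isomorphism between these finite substructures, being a composite of isomorphisms onto induced images. (Recall that embeddings in the paper's sense are isomorphisms onto the induced substructure on their image.)

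By homogeneity, $\phi$ extends to an automorphism $\gamma$ of $\mathfrak F$. Set $h'\coloneqq\gamma\circ g$. This is an embedding $\struct C'\hookrightarrow\mathfrak F$, since it is a composite of an embedding and an automorphism. Moreover $h'\circ u=\gamma\circ g\circ u=\phi\circ g\circ u=h$, as required.

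The only point needing care is the degenerate case $C=\emptyset$. There $\phi$ is the empty map and the conclusion is just the existence of $g$, so the argument goes through unchanged, provided homogeneity is read as including the empty partial isomorphism (trivially extended by the identity). I do not expect any step to be a real obstacle. The main thing to check is that the version of Fraïssé's theorem quoted in \cref{theorem:fraisse_2} indeed yields both $\age(\mathfrak F)=\K$ and homogeneity, which it does by the definition of the Fraïssé limit.
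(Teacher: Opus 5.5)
Your proposal is correct and follows essentially the same route as the paper's proof: embed $\struct C'$ via some $g$, extend the finite partial isomorphism $h\circ(g\circ u)^{-1}$ to an automorphism by homogeneity, and compose it with $g$. Your remark about the empty case is a harmless extra check that the paper leaves implicit.
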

\begin{proof}
Choose any embedding $g\colon\struct C'\hookrightarrow\mathfrak F$.  The map
from $g(u(C))$ to $h(C)$ induced by
$h\circ(g\circ u)^{-1}$ is an isomorphism between finite substructures of
$\mathfrak F$.  By homogeneity, it extends to an automorphism
$\alpha\in\operatorname{Aut}(\mathfrak F)$.  Then
$h'\coloneqq\alpha\circ g$ satisfies $h'\circ u=h$.
\end{proof}

\begin{theorem} 
\label{thm:positive-type-strategies}
Assume that $\K\coloneqq \fm(\Phi_1)$ is a Fraïssé class and that every member of
$\K$ has a $\Phi_2$-completion.  Then, for every $k\geq1$, there
exists a uniform $k$-contextual strategy for the associated completion template $\struct{T}\coloneqq \struct{T}(\Phi_1,\Phi_2)$.
\end{theorem}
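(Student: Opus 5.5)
Let $\mathfrak F$ be the Fraïssé limit of $\K$ (\cref{theorem:fraisse_2}; $\K$ is nonempty since it contains the empty structure). By hypothesis every finite substructure of $\mathfrak F$, being a member of $\K$, has a $\Phi_2$-completion. \Cref{lem:compactness} therefore yields a single $\sigma_2$-expansion $\mathfrak F^+$ of $\mathfrak F$ with $\mathfrak F^+\models\Phi_2$.

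The strategy reads charts off embeddings into $\mathfrak F$. For a $k$-context $\mathfrak p=(\struct C;\alpha_1,\dots,\alpha_\ell)$, let
\[
  P_{\mathfrak p}\coloneqq\bigl\{\, h_e \mid e\colon\struct C\hookrightarrow\mathfrak F \text{ an embedding}\,\bigr\},
  \qquad h_e(y_i)\coloneqq\bigl(\alpha_i^*\struct C,\,(e\circ\alpha_i)^*\mathfrak F^+\bigr).
\]
Because $\Phi_2$ is universal and $e\circ\alpha_i$ is injective, $(e\circ\alpha_i)^*\mathfrak F^+$ satisfies $\Phi_2$. Its $\sigma_1$-reduct is $(e\circ\alpha_i)^*\mathfrak F=\alpha_i^*\struct C$, since $e$ is an embedding. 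So $h_e(y_i)$ lies in $U_{\alpha_i^*\struct C}$.

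Next, $h_e$ satisfies the restriction constraints. If $\alpha_i=\alpha_j\circ g$, then $(e\alpha_i)^*\mathfrak F^+=g^*\bigl((e\alpha_j)^*\mathfrak F^+\bigr)$, which is exactly membership in $\operatorname{Res}$. Hence $h_e\in\Hom(\mathfrak X(\mathfrak p),\T)$. The set $P_{\mathfrak p}$ is nonempty because $\struct C\in\K=\age(\mathfrak F)$ embeds into $\mathfrak F$.

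We now check the three conditions of \cref{def:type-strategy}.
\begin{itemize}[leftmargin=*]
  \item \ref{item:T1}: Let $\gamma\colon\mathfrak p\to\mathfrak q$ be an isomorphism. Precomposing with $\gamma^{-1}$ gives a bijection between embeddings of the two source structures. Since $\gamma\circ\alpha_i=\beta_i$, the charts $(e\circ\alpha_i)^*\mathfrak F^+$ coincide, so $P_{\mathfrak p}=P_{\mathfrak q}$.
  \item \ref{item:T2}: A deletion has as source structure a substructure $\struct C_0\subseteq\struct C$. The restriction of $h_e$ to the surviving variables (re-indexed) equals $h_{e|_{C_0}}$, and $e|_{C_0}$ is an embedding.
  \item \ref{item:T3}: This is the only step needing homogeneity. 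Given a one-chart extension $u\colon\mathfrak p\to\mathfrak q$ and $h_e\in P_{\mathfrak p}$, \cref{lem:fraisse-extension} gives an embedding $e'\colon\struct C'\hookrightarrow\mathfrak F$ with $e'\circ u=e$. Then $e'\circ\alpha'_i=e\circ\alpha_i$ for $i\le\ell$, so $h_{e'}$ extends $h_e$.
\end{itemize}

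The main care point is \ref{item:T3}. It is essential that the strategy is defined through embeddings into one fixed completed limit $\mathfrak F^+$. Arbitrary local completions would not let us extend a given partial assignment coherently, and extending along an embedding, rather than merely re-embedding the larger structure, is exactly what homogeneity provides. A smaller point is that the definition of a context requires $\struct C$ to be a finite model of $\Phi_1$, which guarantees that it embeds into $\mathfrak F$.
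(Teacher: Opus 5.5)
Your proposal is correct and follows the paper's proof essentially step for step. Like the paper, you fix a $\Phi_2$-expansion $\mathfrak F^+$ of the Fraïssé limit via \cref{lem:compactness} and let $P_{\mathfrak p}$ consist of the assignments $h_e$ induced by all embeddings $e\colon\struct C\hookrightarrow\mathfrak F$; you then verify \ref{item:T1}, \ref{item:T2} and \ref{item:T3} by transport along isomorphisms, by restricting $e$, and by \cref{lem:fraisse-extension}, respectively, and you check explicitly that each $h_e$ is a homomorphism, which the paper only asserts.
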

\begin{proof}
Let $\mathfrak F$ be the Fraïssé limit of $\K$.  Every finite  
substructure of $\mathfrak F$ belongs to $\K$ and hence has a
$\Phi_2$-completion.  By \cref{lem:compactness}, $\mathfrak F$ has an expansion satisfying $\Phi_2$; fix one and denote it by
$\mathfrak F^+$.
We now define a uniform $k$-contextual strategy for $\struct{T}\coloneqq \struct{T}(\Phi_1,\Phi_2)$.

Let
$\mathfrak p=(\struct{C};\alpha_1,\dots,\alpha_\ell)$ be a representative
$k$-context.  For every embedding $e\colon\struct{C}\hookrightarrow
\mathfrak F$, define
$
  h_e(y_i)\coloneqq
  (e\circ\alpha_i)^*\mathfrak F^+
  \in \mathcal{E}_{\alpha_i^*\struct{C}}.
$
Indeed, since $e$ is an embedding, we have
$
  (e\circ\alpha_i)^*\mathfrak F
  =\alpha_i^*\struct{C}.
$
These assignments form a homomorphism
$h_e\colon\mathfrak X(\mathfrak p)\to\T$.  
Now, let $P_{\mathfrak p}$ be the set of all assignments
obtained from all embeddings $e\colon\struct{C}\hookrightarrow\mathfrak F$.
It is nonempty by universality of the Fraïssé limit.

The family is invariant under isomorphisms.  Indeed, if
$\gamma\colon\mathfrak p\to\mathfrak q$ is an isomorphism, then
$\mathfrak X(\mathfrak p)=\mathfrak X(\mathfrak q)$, and precomposition with
$\gamma$ forms a bijection between the embeddings of the source structure of
$\mathfrak q$ into $\mathfrak F$ with those of the source structure of
$\mathfrak p$, without changing the induced assignment.  Deleting charts corresponds to
restricting the embedding $e$ to the union of the images of the remaining chart maps,
so the family is closed under deletion.  Finally, consider a one-chart
extension $\mathfrak p\to\mathfrak q$ represented by an embedding
$u\colon\struct{C}\hookrightarrow\struct{C}'$.  Given an assignment in
$P_{\mathfrak p}$ induced by $e\colon\struct{C}\hookrightarrow\mathfrak F$,
\cref{lem:fraisse-extension} provides an embedding
$e'\colon\struct{C}'\hookrightarrow\mathfrak F$ with
$e'\circ u=e$.  The assignment induced by $e'$ belongs to
$P_{\mathfrak q}$ and extends the original assignment.  Thus all conditions
of \cref{def:type-strategy} hold.
\end{proof}

\begin{corollary}
\label{cor:special-positive-strategy}
Let $(\Phi_1,\Phi_2)$ be the completion pair associated with a universal
sentence $\Phi$.
Then the following
statements are equivalent:
\begin{enumerate}[label=\textup{(\roman*)}]
  \item \label{item:context1} $\fm(\Phi)$ has the AP;
  \item \label{item:context2} a uniform $k$-contextual strategy exists for every $k\geq 1$.
\end{enumerate}
\end{corollary}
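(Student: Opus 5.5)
The forward direction \ref{item:context1}$\Rightarrow$\ref{item:context2} will be obtained by reducing to \cref{thm:positive-type-strategies}. Assume $\fm(\Phi)$ has the AP. By \cref{thm:inside-out-correspondence}, $\K=\fm(\Phi_1)$ has the SAP, and every finite model of $\Phi_1$ has a $\Phi_2$-completion. We then check that $\K$ is a Fraïssé class in the sense of \cref{theorem:fraisse_2}. It is closed under isomorphisms and substructures because $\Phi_1$ is universal. The signature $\sigma_1$ is finite. The class is nonempty, since the empty structure satisfies $\Phi_1$ by the remark after \cref{prop:theta-size}. Finally, it has the AP because the SAP implies the AP. \cref{thm:positive-type-strategies} then produces a uniform $k$-contextual strategy for every $k\geq1$.

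For the converse \ref{item:context2}$\Rightarrow$\ref{item:context1}, we use the existence of a uniform $k$-contextual strategy for a single, well-chosen $k$; we do not need it for all $k$ simultaneously. By \cref{thm:inside-out-correspondence} and \cref{thm:completion-csp}, it suffices to show that for every finite $\struct A\models\Phi_1$ the atlas instance $\I_b(\struct A)$ has a homomorphism to $\T=\T_\Phi$. \cref{prop:type-to-instance} already gives an existential $k$-strategy for $\I_b(\struct A)$. The remaining task is therefore to conclude global solvability from this local consistency, which is exactly the width property of $\T$ at level $k$.

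This is the main obstacle. The statement as written does not assume that $\T_\Phi$ has bounded width, so existential $k$-strategies alone need not yield solutions. My first attempt would be to bypass width altogether with a compactness argument that exploits the fact that the strategy is defined for all $k$. Fix $\struct A$, and take $k$ at least the number $N$ of variables of $\I_b(\struct A)$, i.e.\ the number of nonempty subsets of size at most $b$. Consider the context $\mathfrak p$ whose charts are all the $\iota_S$, listed in a fixed order, with source structure $\struct A$ restricted to the union of the charts, which is all of $A$ apart from trivialities. This is an $N$-context. Its instance $\mathfrak X(\mathfrak p)$ is isomorphic to the whole of $\I_b(\struct A)$, as in the proof of \cref{prop:type-to-instance}. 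Because $P_{\mathfrak p}$ is nonempty and consists of homomorphisms $\mathfrak X(\mathfrak p)\to\T$, any member of $P_{\mathfrak p}$ transports to a homomorphism $\I_b(\struct A)\to\T$. Applying \cref{thm:completion-csp}, we conclude that $\struct A$ has a $\Phi_2$-completion.

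If this direct route is correct, the converse uses only the nonemptiness of $P_{\mathfrak p}$ for large contexts and needs none of the closure conditions \ref{item:T2}--\ref{item:T3}. I would double-check two points. First, a context requires pairwise distinct chart images and covering, and both hold automatically for the charts $\iota_S$. Second, every restriction constraint of the atlas instance must appear in $\mathfrak X(\mathfrak p)$, which holds because inclusions $S\subseteq S'$ are exactly the situations in which constraints are placed. With $\struct A$ arbitrary, \cref{thm:inside-out-correspondence} then gives the AP for $\fm(\Phi)$.
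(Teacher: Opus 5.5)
Your proof is correct and follows the paper's: the forward direction goes through \cref{thm:inside-out-correspondence}, a check that $\fm(\Phi_1)$ is a Fraïssé class (nonempty because it contains the empty structure), and then \cref{thm:positive-type-strategies}. The converse likewise applies a uniform $N_{\struct A}$-contextual strategy to the context $(\struct A;\iota_{S_1},\dots,\iota_{S_{N_{\struct A}}})$, whose instance is exactly $\I_b(\struct A)$, and then invokes \cref{thm:completion-csp}. One framing point: the required $k$ grows with $|A|$, so your opening claim that a single $k$ suffices is misleading, and the label ``compactness'' does not fit, although the argument you actually give is sound.
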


\begin{proof}
Assume \ref{item:context1}.  By \cref{thm:inside-out-correspondence},
$\fm(\Phi_1)$ has the SAP, and every member of $\fm(\Phi_1)$ has a
$\Phi_2$-completion.  Since $\Phi_1$ is universal, its finite model class is
closed under isomorphisms and substructures.  Moreover, the
preprocessing convention preceding the inside-out construction ensures that
every conjunct of $\Phi_1$ has at least one quantified variable.  As the
signature has no nullary relation symbols, the empty structure therefore
models $\Phi_1$, so $\fm(\Phi_1)$ is nonempty.  Its SAP implies its AP, and
\cref{theorem:fraisse_2} now shows that $\fm(\Phi_1)$ is a Fraïssé class.
Thus all hypotheses of \cref{thm:positive-type-strategies} hold, which yields
a uniform $k$-contextual strategy for every $k\geq1$.

Conversely, suppose that \ref{item:context2} holds, and let
$\struct A \in \fm(\Phi_1)$ be arbitrary. Define 
\[
  \mathcal S_{\struct A}
  \coloneqq
  \{S\subseteq A \mid 1\leq |S|\leq b\} 
\]
and set $ 
  N_{\struct A} \coloneqq |\mathcal S_{\struct A}|$.
Next, list the members of $\mathcal S_{\struct A}$ as
$S_1,\dots,S_{N_{\struct A}}$ and consider the
$N_{\struct A}$-context
$
  \mathfrak p_{\struct A}
  =
  (\struct A;\iota_{S_1},\dots,\iota_{S_{N_{\struct A}}}).
$
Its context instance is precisely the atlas instance: 
$ \mathfrak X(\mathfrak p_{\struct A}) 
  =
  \I_b(\struct A).
$ 
By \textup{(ii)}, a uniform $N_{\struct A}$-contextual strategy exists.
Therefore
$ 
  \varnothing\neq P_{\mathfrak p_{\struct A}}
  \subseteq
  \Hom(\I_b(\struct A),\T),
$ 
so $\I_b(\struct A)\to\T$. By \cref{thm:completion-csp},
$\struct A$ has a $\Phi_2$-completion. Thus every finite model of
$\Phi_1$ has a $\Phi_2$-completion, and
\cref{thm:inside-out-correspondence} implies that $\fm(\Phi)$ has the AP.
The case $A=\varnothing$ follows in the same way from the empty context.
\end{proof}

\section{Semantic Horn input sentences}\label{sec:semantic-horn-inputs}

We now show that the bounded-width hypothesis of \cref{thm:main} holds for
every semantic Horn input sentence.  Let
$ 
  \Theta(\Phi)=(\Phi_1,\Phi_2)
$
be its associated completion pair and let \(\T_\Phi\) be the finite completion
template from \cref{constr:template}.  The internal construction of \(\Theta\)
is used only to establish the algebraic property in Proposition~\ref{prop:semilattice}.
A subtle point is that the inside-out transformation does not simply preserve
the semantic Horn condition at the level of the target sentence
\(\Phi_2\).  Thus the meet property below is \emph{not} obtained by taking the
intersection of two models of \(\Phi_2\) and appealing directly to its syntax.
Its proof instead uses the intersection property of the congruence expansion
in \cref{lem:gamma-intersection}, which invokes closure under binary direct
products only for the original finite models of \(\Phi\).

\begin{lemma} \label{lem:component-meet}
Let \(\Phi\) be a semantic Horn sentence.  For every source chart \(\struct C\), the completion fibre \(\mathcal E_{\struct C}\) carries a semilattice $ 
  \wedge_{\struct C}\colon
  \mathcal E_{\struct C}^2\rightarrow \mathcal E_{\struct C}.
$
These operations are compatible with restriction: if
\(\struct C=f^*\struct C'\), then we have  
$   
  f^*(\mathfrak D'_0\wedge_{\struct C'}\mathfrak D'_1)
  =
  f^*\mathfrak D'_0\wedge_{\struct C}f^*\mathfrak D'_1
$ for all  $\mathfrak D'_0,\mathfrak D'_1\in\mathcal E_{\struct C'}$.
\end{lemma}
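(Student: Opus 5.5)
The meet will be relationwise intersection on the primed part. Fix a source chart \(\struct C\in\mathcal C_m\) and recall that \(\sigma_2=\sigma_1\cup\{X'\mid X\in\rho\}\) with \(\rho=\sigma\cup\{E\}\), where \(\sigma_1=\rho\cup\{L,R\}\). Every \(\mathfrak D\in\mathcal E_{\struct C}\) is a \(\sigma_2\)-expansion of \(\struct C\), so two members of the fibre differ only on the primed relations. For \(\mathfrak D_0,\mathfrak D_1\in\mathcal E_{\struct C}\), define \(\mathfrak D_0\wedge_{\struct C}\mathfrak D_1\) as the \(\sigma_2\)-structure on \([m]\) that agrees with \(\struct C\) on \(\sigma_1\) and satisfies \(X'^{\mathfrak D_0\wedge\mathfrak D_1}=X'^{\mathfrak D_0}\cap X'^{\mathfrak D_1}\) for every \(X\in\rho\). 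In other words, it is the relationwise intersection \(\mathfrak D_0\cap\mathfrak D_1\), because the \(\sigma_1\)-parts coincide. Commutativity, associativity and idempotency are then immediate from the same laws for set intersection. The real content is that this operation stays inside \(\mathcal E_{\struct C}\), together with compatibility with restriction.

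To show \(\mathfrak D_0\cap\mathfrak D_1\models\Phi_2\), I will check the three kinds of conjuncts in \eqref{eq:delta-target-exact} separately.

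\emph{Marker conjunct.} It only mentions \(L,R\), which are inherited from \(\struct C\), so it holds.

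\emph{Linking conjuncts.} Fix \(X\in\rho\) and a tuple \(\bar y\).
\begin{itemize}
\item If \(\bar y\) is free, then \(\struct C\models\Phi_1\) forces \(\neg X(\bar y)\) by \eqref{eq:delta-source-exact}. Hence the first disjunct \(\operatorname{free}(\bar y)\wedge\neg X(\bar y)\) holds in the intersection.
\item If \(\bar y\) is not free, then \(X'(\bar y)\Leftrightarrow X(\bar y)\) holds in both \(\mathfrak D_0\) and \(\mathfrak D_1\). Since \(X\) is common to both structures, the intersection also has \(X'(\bar y)\Leftrightarrow X(\bar y)\).
\end{itemize}

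\emph{Conjuncts \(\varphi_i'\) with \(\varphi_i\) from \(\Gamma(\Phi)\).} These are exactly the statement that the primed reduct, with each \(X'\) renamed to \(X\), is a model of \(\Psi=\Gamma(\Phi)\). The renamed primed reducts of \(\mathfrak D_0\) and \(\mathfrak D_1\) are two finite models of \(\Gamma(\Phi)\) on the common domain \([m]\). By \cref{lem:gamma-intersection}, which is where the semantic Horn hypothesis enters, their relationwise intersection again satisfies \(\Gamma(\Phi)\). That intersection is precisely the renamed primed reduct of \(\mathfrak D_0\cap\mathfrak D_1\).

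I expect this last step to be the main obstacle. It is the only place where one must not argue through the syntax of \(\Phi_2\), as warned before the lemma. The point to verify carefully is that the \(\varphi_i'\)-conjuncts involve only primed symbols and equality, so they are literally \(\Gamma(\Phi)\) evaluated on the primed reduct.

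For compatibility, let \(\struct C=f^*\struct C'\) and \(\mathfrak D'_0,\mathfrak D'_1\in\mathcal E_{\struct C'}\). Pullback along \(f\) acts on each relation by taking preimages under the componentwise action of \(f\), and preimages commute with intersections. Hence \(f^*(\mathfrak D'_0\cap\mathfrak D'_1)=f^*\mathfrak D'_0\cap f^*\mathfrak D'_1\). Moreover, both \(f^*\mathfrak D'_j\) have \(\sigma_1\)-reduct \(f^*\struct C'=\struct C\), so they lie in \(\mathcal E_{\struct C}\) (universal sentences are preserved under substructures). Therefore the right-hand side is exactly \(f^*\mathfrak D'_0\wedge_{\struct C}f^*\mathfrak D'_1\), which is the claimed identity.
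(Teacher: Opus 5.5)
Your proposal is correct and follows essentially the same route as the paper: the meet is the relationwise intersection of the primed relations with the source reduct \(\struct C\) kept fixed, and \cref{lem:gamma-intersection} handles the \(\varphi_i'\)-conjuncts. The marker and linking conjuncts are checked directly, with free tuples handled via \eqref{eq:delta-source-exact} and non-free tuples via agreement with the unprimed relations, and compatibility with restriction follows because pullback along an injection commutes with intersection and fixes the source reduct.
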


\begin{proof}
Use the notation of \cref{subsubsec:delta} with
\(\Psi=\Gamma(\Phi)\), and let \(\rho\) be the signature of \(\Psi\).  For
\(\mathfrak D_0,\mathfrak D_1\in\mathcal E_{\struct C}\), define
\(\mathfrak D_0\wedge_{\struct C}\mathfrak D_1\) by retaining their common
\(\sigma_1\)-reduct \(\struct C\) and putting
$ 
  {X'}^{\mathfrak D_0\wedge_{\struct C}\mathfrak D_1}
  =
  {X'}^{\mathfrak D_0}\cap {X'}^{\mathfrak D_1}$ for all $X\in\rho$.
The primed reduct is the relationwise intersection of two models of
\(\Gamma(\Phi)\), so it is again a model by
\cref{lem:gamma-intersection}.  The marker conjunct is preserved because the
common source reduct is kept fixed.  It remains to verify the linking
conjunct in \eqref{eq:delta-target-exact}.  On a free tuple,
\eqref{eq:delta-source-exact} forces the unprimed relation \(X\) to be false,
so the first disjunct of the linking axiom holds.  On a non-free tuple, both
completions have \(X'=X\), and their intersection still has \(X'=X\).
Therefore
$
  \mathfrak D_0\wedge_{\struct C}\mathfrak D_1
  \in\mathcal E_{\struct C}.
$
Relationwise intersection is associative, commutative, and idempotent.
Finally, pullback along an injection commutes with relationwise intersection
and fixes the common source reduct.  Hence, whenever
\(\struct C=f^*\struct C'\), we have 
$
  f^*(\mathfrak D'_0\wedge_{\struct C'}\mathfrak D'_1)
  =
  f^*\mathfrak D'_0\wedge_{\struct C}f^*\mathfrak D'_1,
$
as required.
\end{proof}

For readability we suppress the source-chart subscript on \(\wedge\) whenever
both arguments lie in the same completion component.

\begin{proposition}\label{prop:semilattice}
If \(\Phi\) is semantic Horn, then \(\T_\Phi\) has a semilattice
polymorphism.
\end{proposition}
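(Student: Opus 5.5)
The plan is to define a binary operation \(s\colon T^2\to T\) on the domain of \(\T_\Phi\) by gluing the fibrewise meets of \cref{lem:component-meet}, extended to pairs from different components by a fixed choice. On each component \(U_{\struct C}=\{\struct C\}\times\mathcal E_{\struct C}\) we put
\[
  s\bigl((\struct C,\mathfrak D_0),(\struct C,\mathfrak D_1)\bigr)\coloneqq(\struct C,\mathfrak D_0\wedge_{\struct C}\mathfrak D_1).
\]
For pairs lying in different components, we fix a linear order on the finite set of source charts. We then let \(s(x,y)\coloneqq s(y,x)\coloneqq x\) whenever the chart of \(x\) precedes the chart of \(y\). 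Thus on the set of components, \(s\) acts as the meet of a chain, that is, the minimum.

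The first step is to check the semilattice identities. Idempotency and commutativity are immediate, both within a component, by \cref{lem:component-meet}, and across components, by construction. For associativity of \(s(s(x,y),z)\), we split according to the components of \(x,y,z\).
\begin{itemize}
  \item If all three lie in one component, associativity is that of \(\wedge_{\struct C}\).
  \item Otherwise, let \(\struct C_{\min}\) be the least chart occurring among the three. Both bracketings evaluate to the \(\wedge_{\struct C_{\min}}\)-meet of exactly those arguments whose chart is \(\struct C_{\min}\), since arguments from larger components are absorbed.
\end{itemize}
This is the standard fact that a lexicographic combination of a chain with semilattices on each fibre is again a semilattice, namely a Płonka-type sum over a chain with trivial connecting maps.

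The second step is to verify that \(s\) is a polymorphism. The unary relations \(U_{\struct C}\) are preserved because two arguments in \(U_{\struct C}\) are combined by \(\wedge_{\struct C}\). For a restriction relation \(\operatorname{Res}_{\struct C,\struct C',f}\), take two pairs \((\mathfrak D_0,\mathfrak D'_0)\) and \((\mathfrak D_1,\mathfrak D'_1)\) in it. All first coordinates lie in \(U_{\struct C}\) and all second coordinates in \(U_{\struct C'}\), so only the fibrewise meet is ever applied. The image pair is \((\mathfrak D_0\wedge\mathfrak D_1,\mathfrak D'_0\wedge\mathfrak D'_1)\), and the compatibility clause of \cref{lem:component-meet} gives
\[
  f^*(\mathfrak D'_0\wedge\mathfrak D'_1)=f^*\mathfrak D'_0\wedge f^*\mathfrak D'_1=\mathfrak D_0\wedge\mathfrak D_1,
\]
so the image pair lies in \(\operatorname{Res}_{\struct C,\struct C',f}\).

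The only real obstacle is that \(s\) must be defined on all of \(T^2\), including pairs from different components, while still satisfying associativity. The chain-minimum gluing handles this, and it is harmless for preservation because every relation of \(\T_\Phi\) constrains each coordinate to a single component. The case analysis for associativity is the one point that deserves care. In the degenerate case \(T=\varnothing\), the empty function is vacuously a semilattice polymorphism.
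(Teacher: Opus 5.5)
Your proposal is correct and is essentially the paper's own proof: the fibrewise meets of \cref{lem:component-meet} are glued across components by letting the element whose source chart is $\prec$-smaller absorb the other. Associativity then follows by looking at the least chart present, and each $\operatorname{Res}_{\struct C,\struct C',f}$ is preserved because both coordinates stay inside single components, so only the restriction-compatible internal meet is ever applied.

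One minor quibble: the gluing is an ordinal sum of the fibre semilattices over the chain of charts, not a P{\l}onka sum, which would need connecting homomorphisms between fibres (e.g.\ constant maps to a top element that the fibres need not have). Your direct associativity check does not rely on that remark, so the argument is unaffected.
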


\begin{proof}
By \cref{lem:component-meet}, each completion component
\(\mathcal E_{\struct C}\) is a semilattice and every pullback map between
components is a semilattice homomorphism.  Fix a linear order \(\prec\) on the
finitely many source charts.  On the domain \(T\) from
\cref{constr:template}, define
\[
  (\struct C,\mathfrak D)\wedge(\struct C',\mathfrak D')
  \coloneqq
  \begin{cases}
    (\struct C,\mathfrak D\wedge_{\struct C}\mathfrak D')
      & \text{if \(\struct C=\struct C'\)},\\
    (\struct C,\mathfrak D)
      & \text{if \(\struct C\prec\struct C'\)},\\
    (\struct C',\mathfrak D')
      & \text{if \(\struct C'\prec\struct C\)}.
  \end{cases}
\]
This operation is commutative and idempotent.  It is associative because the
meet of finitely many elements retains precisely those whose first component
is the \(\prec\)-least source chart and takes their internal meet.

Every unary predicate \(U_{\struct C}\) is preserved.  For a restriction
relation, suppose that
\[
  ((\struct C,\mathfrak D_i),(\struct C',\mathfrak D'_i))
  \in\operatorname{Res}_{\struct C,\struct C',f}
  \qquad(i\in\{0,1\}).
\]
In each coordinate the two inputs have the same source-chart component, namely
$\struct C$ in the first coordinate and $\struct C'$ in the second.  Hence the
operation uses its internal-meet branch in both coordinates; the
$\prec$-cases do not arise.  Since \(\mathfrak D_i=f^*\mathfrak D'_i\),
\cref{lem:component-meet} gives
$ 
  \mathfrak D_0\wedge_{\struct C}\mathfrak D_1
  =
  f^*(\mathfrak D'_0\wedge_{\struct C'}\mathfrak D'_1).
$ 
Hence coordinatewise application of \(\wedge\) preserves every restriction
relation, and \(\wedge\) is a semilattice polymorphism of \(\T_\Phi\).
\end{proof}

The two consequences of the semantic Horn meet property are summarized in
\cref{fig:semantic-horn-semilattice}.

\begin{figure}[tbp]
\centering
\begin{tikzpicture}[x=1cm,y=1cm]
  \node[horn label, text=hornblue, anchor=west] at (0,3.95)
    {(a) Canonical meet inside a fixed completion component};

  \node[horn math, minimum width=2.65cm] (D0) at (1.85,3.05)
    {$\mathfrak D_0\in\mathcal E_{\struct C}$};
  \node[horn math, minimum width=2.65cm] (D1) at (1.85,2.25)
    {$\mathfrak D_1\in\mathcal E_{\struct C}$};
  \node[horn horn, minimum width=3.25cm] (Dm) at (7.35,2.65)
    {$\mathfrak D_0\wedge_{\struct C}\mathfrak D_1
      \in\mathcal E_{\struct C}$};

  \draw[horn arrow] (D0.east) -- (Dm.west);
  \draw[horn arrow] (D1.east) -- (Dm.west);
  \node[horn label] at (4.55,2.65) {$\wedge_{\struct C}$};

  \node[horn label, anchor=west] at (0.55,1.65)
    {the source chart \(\struct C\) is fixed; only the completion data are combined};

  \draw[draw=horngray!45, line width=.45pt] (0,1.20) -- (14.1,1.20);

  \node[horn label, text=hornblue, anchor=west] at (0,.70)
    {(b) The meet is compatible with restriction};

  \node[horn math, minimum width=3.10cm] (pairp) at (2.25,-.45)
    {$(\mathfrak D'_0,\mathfrak D'_1)\in\mathcal E_{\struct C'}^2$};
  \node[horn math, minimum width=3.30cm] (meetp) at (7.00,-.45)
    {$\mathfrak D'_0\wedge_{\struct C'}\mathfrak D'_1
      \in\mathcal E_{\struct C'}$};
  \node[horn math, minimum width=3.10cm] (pair) at (2.25,-2.35)
    {$(\mathfrak D_0,\mathfrak D_1)\in\mathcal E_{\struct C}^2$};
  \node[horn math, minimum width=3.30cm] (meet) at (7.00,-2.35)
    {$\mathfrak D_0\wedge_{\struct C}\mathfrak D_1
      \in\mathcal E_{\struct C}$};

  \draw[horn arrow] (pairp) -- node[horn label,above] {$\wedge_{\struct C'}$} (meetp);
  \draw[horn arrow] (pair) -- node[horn label,above] {$\wedge_{\struct C}$} (meet);
  \draw[horn arrow] (pairp) -- node[horn label,left] {$f^*\times f^*$} (pair);
  \draw[horn arrow] (meetp) -- node[horn label,right] {$f^*$} (meet);

  \node[horn horn, minimum width=3.45cm] (pol) at (11.70,-1.40)
    {$\wedge$ is a semilattice\\[-1pt]polymorphism of $\T_\Phi$};
  \draw[horn dashed] (meet.east) -- (pol.west);
\end{tikzpicture}
\caption{The semantic Horn meet property.  Local completions of a fixed source chart are
closed under a canonical meet, and restriction maps preserve this meet.  Together
with the definition above on different components, these operations therefore
give a semilattice polymorphism of the completion template.}
\label{fig:semantic-horn-semilattice}
\end{figure}

\begin{corollary} \label{cor:semantic-horn-bw}
If $\Phi$ is a semantic Horn sentence,
then $\T_\Phi$ has width at most $2$; in particular it has bounded width.
\end{corollary}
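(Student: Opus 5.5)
The plan is to combine \cref{prop:semilattice} with \cref{lem:semilattice-width}, after checking the arity bound that the latter requires.

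First, by \cref{prop:semilattice}, the semantic Horn hypothesis on $\Phi$ yields a semilattice polymorphism $\wedge$ of $\T_\Phi$. This is the only place where the semantic Horn assumption enters. Through \cref{lem:component-meet} and \cref{lem:gamma-intersection}, it rests on closure of $\fm(\Phi)$ under binary direct products.

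Second, I will record the maximum arity of the relations of $\T_\Phi$. By \cref{constr:template}, the signature consists only of the unary predicates $U_{\struct C}$ and the binary restriction relations $\operatorname{Res}_{\struct C,\struct C',f}$. There are finitely many of each, since there are finitely many source charts on $[m]$ with $m\leq b$ and finitely many injections $[m]\hookrightarrow[n]$. Hence the template is a finite relational structure whose relations all have arity at most $r=2$.

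Third, I apply \cref{lem:semilattice-width} with $r=2$. This gives width at most $\max\{2,r\}=2$, and therefore bounded width by definition.

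One edge case deserves a remark. If $T$ is empty, every fibre $\mathcal E_{\struct C}$ is empty. A semilattice polymorphism then exists only vacuously, and \cref{lem:semilattice-width} may implicitly assume a nonempty domain. In that case I will instead invoke \cref{lem:empty-template-width}, which gives width at most $1$, and hence at most $2$ by the monotonicity of width in $k$ shown in the preliminaries.

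There is no genuine obstacle, since the substantive work lies in \cref{prop:semilattice} and the appendix proof of \cref{lem:semilattice-width}. The only point needing care is confirming that the template contains no relations of arity above two.
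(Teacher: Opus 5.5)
Your proposal is correct and matches the paper's proof: \cref{prop:semilattice} supplies the semilattice polymorphism, the template has only unary and binary relations, and \cref{lem:semilattice-width} with $r=2$ gives width at most $2$. Your separate handling of an empty template via \cref{lem:empty-template-width} and monotonicity is valid but unnecessary. The appendix proof of \cref{lem:semilattice-width} already covers that case vacuously, since no existential strategy can exist for an instance with a variable when the template is empty.
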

\begin{proof}
The template $\T_\Phi$ has only the unary component predicates
$U_{\struct{C}}$ and the binary restriction relations
$\operatorname{Res}_{\struct{C},\struct{C}',f}$, so its maximum arity is
$2$.  By \cref{prop:semilattice} it has a semilattice polymorphism.  Now apply
\cref{lem:semilattice-width} with $r=2$.
\end{proof}

\begin{proposition}\label{prop:semantic-horn-algorithm-correctness}
For every semantic Horn sentence \(\Phi\), the fixed-point procedure of
\cref{prop:fixed-point-test}, run with \(k=2\) on the completion pair
\(\Theta(\Phi)\), accepts if and only if \(\fm(\Phi)\) has the AP.
\end{proposition}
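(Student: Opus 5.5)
The proof has two directions. Each follows from results already established once we note what "accepts" means. By the proof of \cref{prop:fixed-point-test}, the procedure run with \(k=2\) accepts exactly when the greatest fixed point has every \(Q_{\mathfrak p}\) nonempty. That is the case exactly when a uniform \(2\)-contextual strategy for \(\T_\Phi=\T(\Theta(\Phi))\) exists.

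\emph{Completeness.} Suppose \(\fm(\Phi)\) has the AP. Then \cref{cor:special-positive-strategy} (direction \ref{item:context1}\(\Rightarrow\)\ref{item:context2}) gives a uniform \(k\)-contextual strategy for every \(k\geq1\), in particular for \(k=2\). That strategy is contained in the greatest fixed point, so every \(Q_{\mathfrak p}\) stays nonempty and the procedure accepts. This direction does not use the semantic Horn hypothesis.

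\emph{Soundness.} Suppose the procedure accepts, so a uniform \(2\)-contextual strategy exists. Let \(\struct A\in\fm(\Phi_1)\) be arbitrary and finite. By \cref{prop:type-to-instance} with \(k=2\), the atlas instance \(\I_b(\struct A)\) admits an existential \(2\)-strategy into \(\T_\Phi\). Since \(\Phi\) is semantic Horn, \cref{cor:semantic-horn-bw} says \(\T_\Phi\) has width at most \(2\), so \(\I_b(\struct A)\to\T_\Phi\). By \cref{thm:completion-csp}, \(\struct A\) then has a \(\Phi_2\)-completion. As \(\struct A\) was arbitrary, \cref{thm:inside-out-correspondence} yields the AP for \(\fm(\Phi)\).

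\emph{Edge cases.} The main point to check is the degenerate situations. If \(\struct A\) is empty, \(\I_b(\struct A)\) has no variables and the empty map is a homomorphism, so the empty structure trivially has a completion. If the domain of \(\T_\Phi\) is empty, we use \cref{lem:empty-template-width} in place of \cref{lem:semilattice-width}. One should also confirm that the width definition is applied to \(\I_b(\struct A)\) over the same signature as \(\T_\Phi\), which holds by construction. Inputs handled directly by the preprocessing of Section~\ref{sec:ADP} never reach the procedure, so they need no separate treatment.
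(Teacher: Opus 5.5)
Your proposal is correct and matches the paper's proof. Completeness comes from \cref{cor:special-positive-strategy}. Soundness combines \cref{cor:semantic-horn-bw} with the argument of \cref{thm:type-criterion}, which you unfold explicitly into \cref{prop:type-to-instance}, width~$2$, and \cref{thm:completion-csp}, followed by \cref{thm:inside-out-correspondence}. Your extra edge-case remarks (empty $\struct A$, empty template) are harmless; the empty-template case is in fact already covered vacuously by \cref{lem:semilattice-width}.
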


\begin{proof}
Given a semantic Horn sentence \(\Phi\), compute its completion pair
\(\Theta(\Phi)=(\Phi_1,\Phi_2)\) and the finite completion template
\(\T_\Phi\).
By \cref{cor:semantic-horn-bw}, the template has width at most $2$.  Apply the finite
fixed-point procedure of \cref{prop:fixed-point-test} with $k=2$.  If it
finds a uniform $2$-contextual strategy, then \cref{thm:type-criterion} implies that
every finite model of \(\Phi_1\) has a \(\Phi_2\)-completion, and
\cref{thm:inside-out-correspondence} yields AP for
\(\fm(\Phi)\).  Conversely, if
$\fm(\Phi)$ has the AP, then \cref{cor:special-positive-strategy} supplies a
uniform $2$-contextual strategy, so the fixed-point procedure accepts.  Thus this
direct width-$2$ test decides AP. 
\end{proof}

\section{Complexity bounds for semantic Horn input}
\label{sec:semantic-horn-complexity}

The preceding proof gives an explicit finite algorithm.  We now analyze its
size.  Let $\Phi$ be a semantic Horn sentence over a signature $\sigma$,
and put
\[
  s\coloneqq |\sigma|,\qquad
  r\coloneqq \max\{2,\maxar(\sigma)\},\qquad
  q\coloneqq \operatorname{var}(\Phi),\qquad
  b_0\coloneqq \max\{q,2r,3\}.
\]
By \cref{prop:theta-size}, the completion pair
\(\Theta(\Phi)=(\Phi_1,\Phi_2)\) uses only \(O(s+1)\) relation symbols,
all of arity at most \(r\), and 
 $
  \operatorname{var}(\Phi_1)
  \leq \operatorname{var}(\Phi_2)
  \leq b.
  $
Consequently, the parameter \(b\) from \eqref{eq:b} satisfies
$ 
  b\leq b_0.$  

\begin{theorem} 
\label{thm:semantic-horn-complexity}
There is a deterministic implementation of the algorithm from
\cref{prop:semantic-horn-algorithm-correctness} whose running time is bounded by
\begin{equation}\label{eq:semantic-horn-runtime}
  2^{O\!\left((s+1)(2b_0)^r+b_0\log b_0\right)}
  \operatorname{poly}(|\Phi|).
\end{equation}
In particular, semantic Horn ADP belongs to $\ComplexityClass{2ExpTime}$ under the semantic Horn promise.
For every fixed bound on the maximum arity of the input signature,
\eqref{eq:semantic-horn-runtime} is singly exponential in $|\Phi|$, and hence the
corresponding semantic Horn restriction belongs to $\ComplexityClass{ExpTime}$. 
\end{theorem}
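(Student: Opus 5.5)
The plan is to bound, in turn, the size of the completion template $\T_\Phi$, the number of isomorphism types of $2$-contexts together with their homomorphism sets, and the cost of the greatest-fixed-point iteration of \cref{prop:fixed-point-test} run with $k=2$. Correctness is already given by \cref{prop:semantic-horn-algorithm-correctness}, so only running time remains.

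\textbf{Step 1: the size of the template.} By \cref{prop:theta-size}, $\Theta(\Phi)$ is computed in polynomial time. It has $|\sigma_2|=O(s+1)$ symbols, each of arity at most $r$, and cutoff $b\le b_0$. A $\sigma_2$-structure on $[m]$ with $m\le b_0$ is determined by one bit per symbol and per tuple. This gives at most
\[
2^{O(s+1)\,b_0^r}
\]
structures of each size, and the same bound (up to the constant) counts source charts and target charts. Summing over $m\le b_0$, we get $|T|\le 2^{O((s+1)b_0^r)}$. Each candidate chart is tested against $\Phi_1$ or $\Phi_2$ by brute-force evaluation over all assignments of the at most $b_0$ quantified variables. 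That takes $b_0^{b_0}\operatorname{poly}(|\Phi|)=2^{O(b_0\log b_0)}\operatorname{poly}(|\Phi|)$ time per chart. Building the unary predicates and the relations $\operatorname{Res}_{\struct C,\struct C',f}$ ranges over pairs of charts and at most $b_0^{b_0}$ injections $f$. So the whole template is computable within the claimed bound.

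\textbf{Step 2: counting $2$-contexts.} A $2$-context has source structure on at most $2b_0$ points, and the source signature $\sigma_1$ also has $O(s+1)$ symbols of arity at most $r$. Choosing a canonical domain $[n]$ with $n\le 2b_0$ and listing the chart maps explicitly gives at most
\[
2^{O((s+1)(2b_0)^r)}\cdot (2b_0)^{O(b_0)}
\]
representatives. This is where the factor $(2b_0)^r$ in \eqref{eq:semantic-horn-runtime} comes from. I would avoid computing isomorphism types exactly. It suffices to work with all concrete contexts on canonical domains, because the fixed-point procedure is still correct if \ref{item:T1} is enforced by brute-force transport across the at most $(2b_0)!$ relabellings, or simply by also running the deletion rule across isomorphisms. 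Each context instance $\mathfrak X(\mathfrak p)$ has at most two variables, so $|\Hom(\mathfrak X(\mathfrak p),\T)|\le |T|^2$, which is still $2^{O((s+1)b_0^r)}$. One-chart extensions of a $1$-chart context are again among the enumerated $2$-contexts, and they can be listed by pairing the contexts.

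\textbf{Step 3: the fixed-point iteration.} The total number of stored assignments is $N=2^{O((s+1)(2b_0)^r+b_0\log b_0)}$. Each round deletes at least one assignment, so there are at most $N$ rounds. Each round checks rules (a) and (b) for every assignment against at most $N$ deletions and extensions, costing $\operatorname{poly}(N)$ per round. The total is $\operatorname{poly}(N)\operatorname{poly}(|\Phi|)$, which is \eqref{eq:semantic-horn-runtime}. Since $b_0,r,s\le O(|\Phi|)$, the exponent is $2^{O(|\Phi|\log|\Phi|)}$, giving $\ComplexityClass{2ExpTime}$. For fixed $r$, the exponent is polynomial in $|\Phi|$, giving $\ComplexityClass{ExpTime}$.

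\textbf{Main obstacle.} I expect the main difficulty to be the bookkeeping in Step 2. The exponent must stay at $(s+1)(2b_0)^r$ rather than picking up an extra factor from the number of chart maps or from isomorphism testing. The plan is to absorb every such factor into $2^{O(b_0\log b_0)}$ by bounding the number of chart lists and relabellings by $(2b_0)^{O(b_0)}$.
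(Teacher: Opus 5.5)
Your proposal is correct and follows essentially the paper's own proof. Both arguments enumerate all labelled structures on at most $2b$ points via $O((s+1)n^r)$ bits and test them against the transformed sentences by brute force over $2^{O(b\log b)}$ assignments. Both absorb injections, deletions and one-chart extensions into the same factor, run the greatest-fixed-point computation in time polynomial in the explicit state space, and finally substitute $b\le b_0$.

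Your explicit handling of isomorphism types, via canonical-domain contexts and relabellings, is a harmless implementation detail that the paper leaves implicit. One small omission: the source structures of $2$-contexts, on up to $2b_0$ points, must also be checked against $\Phi_1$. That check costs $(2b_0)^{b_0}\operatorname{poly}(|\Phi|)$ per candidate and so stays within the stated bound.
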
 

\begin{proof}
We use the fixed width parameter $k=2$ from \cref{cor:semantic-horn-bw}.  Thus every
source context considered by the fixed-point algorithm has at most $2b$
elements, and no bounded-width recognition or polymorphism search is required.

Let $n\leq 2b$.  A labelled structure on $[n]$ over the transformed source or
target signature is specified by at most
$ 
  O((s+1)n^r)
$ 
bits.  Hence all labelled structures on sets of size at most $2b$ can be
enumerated within
$ 
  2^{O((s+1)(2b)^r)}
$
time.  Testing a candidate structure against a transformed universal sentence
requires at most $(2b)^{b}=2^{O(b\log b)}$ assignments for each conjunct;
for a fixed assignment, its quantifier-free part can be evaluated in time
polynomial in $|\Phi|$.  Thus no syntactic normalization is needed for
the complexity bound.  The same exponential factor bounds the enumeration of injections, embeddings, listed
local subsets, deletions, and one-chart extensions between structures of size
at most $2b$.

It follows that the completion template, all isomorphism types of
$2$-contexts, and all homomorphisms $\mathfrak X(\mathfrak p)\to\T$ can be generated in
time
$
  2^{O\!\left((s+1)(2b)^r+b\log b\right)}
  \operatorname{poly}(|\Phi|).
$
The greatest-fixed-point procedure of \cref{prop:fixed-point-test} performs at
most one deletion for each explicitly represented context-assignment pair.
Even a naive implementation tests only polynomially many pairs in this
explicit state space, so it remains within the same exponential bound after
adjusting the hidden constant.  Substituting $b\leq b_0$  proves
\eqref{eq:semantic-horn-runtime}.

If the maximum input arity $r$ is fixed, then $s,b_0=O(N)$ and
$(2b_0)^r=N^{O(1)}$, so \eqref{eq:semantic-horn-runtime} is of the form
$2^{N^{O(1)}}$ and therefore lies in $\ComplexityClass{ExpTime}$.
For unrestricted arity, $s,r,b_0\leq O(N)$ and
$(2b_0)^r=2^{O(N\log N)}$, so \eqref{eq:semantic-horn-runtime} is bounded by
$2^{2^{O(N\log N)}}$, which lies in $\ComplexityClass{2ExpTime}$.
\end{proof}

\begin{proof}[Proof of \cref{thm:semantic-horn-main}]
By \cref{prop:semantic-horn-algorithm-correctness}, the width-$2$ fixed-point
procedure decides whether \(\fm(\Phi)\) has the AP on every semantic Horn
input.  Its running time is bounded by
\eqref{eq:semantic-horn-runtime}, by
\cref{thm:semantic-horn-complexity}.  For unrestricted arity this bound is
doubly exponential in \(\lvert\Phi\rvert\), while for every fixed maximum
relation arity it is singly exponential. 
\end{proof}

\section{Proof of the bounded-width fragment theorem}\label{sec:main-proof}

\begin{theorem} 
\label{thm:type-criterion}
Assume that the completion template $\T$ has width at most $k$.  Consider the
following statements:
\begin{enumerate}[label=\textup{(\roman*)}]
  \item \label{item:tc-strategy} a uniform $k$-contextual strategy exists;
  \item \label{item:tc-completion} every finite model of $\Phi_1$ has a
  $\Phi_2$-completion.
\end{enumerate}
Then \ref{item:tc-strategy}$\Rightarrow$\ref{item:tc-completion}.  If 
$\fm(\Phi_1)$ is a Fraïssé class, then the two statements are equivalent.
\end{theorem}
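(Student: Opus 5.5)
The forward implication is a direct chain of earlier results. Assume a uniform $k$-contextual strategy exists and let $\struct A\in\fm(\Phi_1)$ be finite. By \cref{prop:type-to-instance}, the atlas instance $\I_b(\struct A)$ admits an existential $k$-strategy for $\struct A\to\T$. Since $\T$ has width at most $k$, this yields a homomorphism $\I_b(\struct A)\to\T$. \Cref{thm:completion-csp} then turns that homomorphism into a $\Phi_2$-completion of $\struct A$.

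The degenerate cases need a brief check. If $A=\varnothing$, the atlas instance has no variables and the empty map is a homomorphism; the empty $\sigma_2$-structure is then a completion, since every conjunct of $\Phi_2$ has a quantified variable. If $T=\varnothing$, the width convention of \cref{lem:empty-template-width} applies. Here one must note that the strategy is nonempty: for nonempty $A$ it supplies, via the one-chart contexts, elements of $T$, so $T\neq\varnothing$ whenever $\fm(\Phi_1)$ has a nonempty member. This step therefore needs no extra care beyond citing these results in order.

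For the converse, assume in addition that $\fm(\Phi_1)$ is a Fraïssé class and that (ii) holds. The hypotheses of \cref{thm:positive-type-strategies} are then met exactly, so a uniform $k$-contextual strategy exists for every $k\geq1$, in particular for the given $k$. Note that this direction does not use the width hypothesis at all.

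I do not expect a genuine obstacle, since the substance sits in \cref{prop:type-to-instance}, the definition of width, and \cref{thm:positive-type-strategies}. The only point needing attention is that the existential $k$-strategy produced in \cref{prop:type-to-instance} matches the notion in the definition of width verbatim: domains of size at most $k$, closure under restriction, and extension. I would confirm this by rereading that proof rather than reproving it.
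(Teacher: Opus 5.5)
Your proposal is correct and matches the paper's proof: for (i)$\Rightarrow$(ii) the paper chains \cref{prop:type-to-instance}, the width-$k$ hypothesis and \cref{thm:completion-csp}, and for the converse under the Fra\"iss\'e hypothesis it cites \cref{thm:positive-type-strategies}, just as you do. Your degenerate-case remarks are not needed, since the definition of width applies verbatim to an empty template and \cref{thm:completion-csp} already covers $A=\varnothing$, but they do no harm.
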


\begin{proof}
Assume \ref{item:tc-strategy} and let $\struct{A}\models\Phi_1$ be finite.  By
\cref{prop:type-to-instance}, the atlas instance $\I_b(\struct{A})$ has an
existential $k$-strategy.  Since $\T$ has width at most $k$, there is a
homomorphism
$\I_b(\struct{A})\to\T$.  By \cref{thm:completion-csp}, $\struct{A}$ has a
$\Phi_2$-completion.  This proves \ref{item:tc-strategy}$\Rightarrow$\ref{item:tc-completion}.

Under the stated Fraïssé hypothesis, the converse is
\cref{thm:positive-type-strategies}.
\end{proof}

\begin{proposition}\label{prop:bw-algorithm-correctness}
The following three-way procedure is correct.  Given a universal sentence
\(\Phi\), construct \(\T_\Phi\) and decide whether it has bounded width.  If
it does, run the uniform-context fixed-point test with \(k=3\) to decide AP;
otherwise report that the bounded-width criterion does not apply.
\end{proposition}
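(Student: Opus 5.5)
The plan is to verify the three branches separately, using \cref{thm:bounded-width-known} for recognition and \cref{thm:type-criterion} together with \cref{cor:special-positive-strategy} for the acceptance test.

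First, $\T_\Phi$ is computable. Its domain is built from finitely many source charts on $[m]$ with $m\leq b$ and their completion fibres, each obtained by model checking universal sentences. We may assume $T_\Phi$ is nonempty, since otherwise \cref{lem:empty-template-width} already gives width at most $1\leq 3$. By \cref{thm:bounded-width-known}, bounded width is decidable by a finite search for a 3-4-WNU pair of polymorphisms. So the recognition step is effective, and the branch reporting that the criterion does not apply is correct by definition.

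For the main branch, assume $\T_\Phi$ has bounded width. All its relations are unary or binary, so its maximum arity is $r\leq2$. \cref{thm:bounded-width-known} then gives width at most $\max\{3,r\}=3$. We run the procedure of \cref{prop:fixed-point-test} with $k=3$, and it terminates. Suppose it accepts, so a uniform $3$-contextual strategy exists. The implication (i)$\Rightarrow$(ii) of \cref{thm:type-criterion} with $k=3$ shows that every finite model of $\Phi_1$ has a $\Phi_2$-completion. \cref{thm:inside-out-correspondence} then gives the AP for $\fm(\Phi)$. Conversely, suppose $\fm(\Phi)$ has the AP. Then \cref{cor:special-positive-strategy} supplies a uniform $k$-contextual strategy for every $k$, in particular for $k=3$. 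Since the fixed-point procedure finds a strategy whenever one exists, it accepts.

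The main obstacle is the width parameter. \cref{thm:bounded-width-known} is stated for nonempty templates, and its footnote relies on Barto's collapse after passing to the core with constants. We must make sure the resulting bound is exactly width at most $3$ in the existential-strategy sense used here. This matters because \cref{thm:type-criterion} needs width at most $k$ for the same $k$ used in the fixed-point test. Monotonicity of width in $k$, noted in the preliminaries, settles any slack: if a smaller width holds, then width at most $3$ holds as well. The empty-template case is likewise absorbed by \cref{lem:empty-template-width} and monotonicity.
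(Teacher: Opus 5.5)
Your proposal is correct and follows the paper's proof essentially step by step. It handles the empty template via \cref{lem:empty-template-width} and monotonicity, and gets width at most $3$ for nonempty $\T_\Phi$ from \cref{thm:bounded-width-known}, since all relations are at most binary. Soundness of acceptance comes from \cref{thm:type-criterion} and \cref{thm:inside-out-correspondence}, and completeness from \cref{cor:special-positive-strategy} together with the greatest-fixed-point property of \cref{prop:fixed-point-test}. Your extra remarks on computing $\T_\Phi$ and on the effectiveness of the 3-4-WNU search are harmless additions that the paper leaves implicit.
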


\begin{proof}
On input \(\Phi\), compute the completion pair
$ 
  \Theta(\Phi)=(\Phi_1,\Phi_2)
$ 
and construct the finite completion template \(\T_\Phi\) as in
\cref{constr:template}.  If $\T_\Phi=\varnothing$, then $\T_\Phi$ has width at
most $1$ by \cref{lem:empty-template-width}.  Otherwise,
\cref{thm:bounded-width-known} decides whether $\T_\Phi$ has bounded width.  If
it does not, report that the present criterion does not apply.

Suppose that $\T_\Phi$ has bounded width.  All relations of $\T_\Phi$ are unary
or binary, so their maximum arity is at most $2$.  If $\T_\Phi\neq\varnothing$,
\cref{thm:bounded-width-known} therefore shows that width $3$ suffices; if
$\T_\Phi=\varnothing$, width at most $1$ was noted above.  By monotonicity of
width, in either case width $3$ suffices.  Use the finite fixed-point procedure
from \cref{prop:fixed-point-test} to decide whether a uniform $3$-contextual
strategy exists.

If the strategy exists, \cref{thm:type-criterion} implies that every finite
model of \(\Phi_1\) has a \(\Phi_2\)-completion.  By \cref{thm:inside-out-correspondence},
\(\fm(\Phi)\) has the AP.

Conversely, if $\fm(\Phi)$ has the AP, then
\cref{cor:special-positive-strategy} gives a uniform $3$-contextual strategy, so the
fixed-point test accepts.  Thus the test decides AP on every input whose
completion template has bounded width.
\end{proof}

\section{Complexity of the bounded-width criterion}
\label{sec:bw-complexity}

The bounded-width recognition step and the subsequent AP test have different
complexity profiles.  Let $N=|\Phi|$ and let
$\T_\Phi$ be the completion template constructed in \cref{constr:template}.
By \cref{prop:theta-size}, \(\Theta\) has polynomial output size, while the parameter \(b\), the number
of relation symbols, and the maximum relation arity are all \(O(N)\).  Hence the explicit
encoding size of $\T_\Phi$ is bounded by
$ 
  M=2^{2^{O(N\log N)}}.$

\begin{proposition} 
\label{prop:bw-promise-bound}
There is a deterministic $\ComplexityClass{2ExpTime}$ algorithm with the following
promise specification: on input a universal sentence $\Phi$ for which
$\T_\Phi$ has bounded width, the algorithm decides whether $\fm(\Phi)$ has the AP.
More concretely, its running time is bounded by
$ 
  2^{2^{O(N\log N)}}.
$ 
\end{proposition}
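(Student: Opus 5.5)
The plan is to run exactly the AP test of \cref{prop:bw-algorithm-correctness}, but without the recognition step. Correctness then comes for free, and the work is a running-time count modelled on \cref{thm:semantic-horn-complexity} with $k=3$ in place of $k=2$.

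\textbf{The algorithm.} On input $\Phi$ we compute $\Theta(\Phi)=(\Phi_1,\Phi_2)$ in polynomial time (\cref{prop:theta-size}). We then build $\T_\Phi$ and run the greatest-fixed-point procedure of \cref{prop:fixed-point-test} with $k=3$. The promise says $\T_\Phi$ has bounded width. If $\T_\Phi\neq\varnothing$, then since its relations are at most binary, \cref{thm:bounded-width-known} gives width at most $3$. If $\T_\Phi=\varnothing$, then \cref{lem:empty-template-width} together with monotonicity of width gives the same conclusion. Hence \cref{thm:type-criterion} and \cref{cor:special-positive-strategy}, combined with \cref{thm:inside-out-correspondence}, show that acceptance is equivalent to the AP. This is verbatim the second half of the proof of \cref{prop:bw-algorithm-correctness}. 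No 3-4-WNU search is performed, and this is why we avoid the $\ComplexityClass{3ExpTime}$ recognition cost.

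\textbf{Running time.} Set $N=|\Phi|$ and $b\leq b_0=O(N)$, and write $r'$ for the maximum arity of $\sigma_2$, so $r'\leq r=O(N)$.

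\begin{itemize}
\item Every $3$-context has a source structure on at most $3b$ elements. A labelled $\sigma_1$- or $\sigma_2$-structure on $[n]$ with $n\leq 3b$ is described by $O((s+1)n^{r'})=2^{O(N\log N)}$ bits. So all source charts, target charts, contexts up to isomorphism, injections and one-chart extensions can be enumerated in time $2^{2^{O(N\log N)}}$.
\item Model checking a candidate structure costs $(3b)^{b}\operatorname{poly}(N)=2^{O(N\log N)}$ per structure. This gives the template $\T_\Phi$, whose size $M=2^{2^{O(N\log N)}}$ was already noted.
\item Each context instance has at most three variables, so $|\Hom(\mathfrak X(\mathfrak p),\T_\Phi)|\leq M^3$. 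The state space of context–assignment pairs therefore still has size $2^{2^{O(N\log N)}}$.
\item The fixed-point loop makes at most one deletion per pair. Each deletion test inspects polynomially many (in the state space) deletions and extensions, so the loop runs in time polynomial in the state space size.
\end{itemize}

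The total is $2^{2^{O(N\log N)}}$.

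\textbf{Main obstacle.} The step needing care is checking that nothing hidden in the cited theorems leaks into the algorithm. In particular, the width bound $3$ from \cref{thm:bounded-width-known} is used only in the correctness argument, never computed. Likewise, the identification of contexts with canonical representatives must be done by brute-force isomorphism testing on structures of size $3b$. That costs $(3b)!\cdot 2^{O(N\log N)}$, which stays within a single exponential factor $2^{O(N\log N)}$ and is absorbed into the bound.
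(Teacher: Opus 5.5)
Your proposal is correct and follows the paper's proof essentially verbatim. Under the promise, $\T_\Phi$ has width at most $3$: use \cref{thm:bounded-width-known} for a nonempty template with at most binary relations, and \cref{lem:empty-template-width} plus monotonicity for the empty one. You then run the $k=3$ fixed-point test, and since every $3$-context lives on at most $3b=O(N)$ elements, the explicit state space and the greatest-fixed-point computation are bounded by $2^{2^{O(N\log N)}}$.

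Your extra accounting only fills in details the paper leaves implicit: $O((s+1)(3b)^{r'})$ bits per structure, $(3b)^b$ assignments per conjunct, and $|T|^3$ local homomorphisms. The brute-force factorial isomorphism search is harmless but unnecessary, because by the covering condition an isomorphism of contexts is already determined by the chart maps.
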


\begin{proof}  If
$\T_\Phi=\varnothing$, then $\T_\Phi$ has width at most $1$ by
\cref{lem:empty-template-width}.  Otherwise, all relations of
$\T_\Phi$ are unary or binary, and \cref{thm:bounded-width-known} gives width at
most $3$ in the existential-strategy sense used here.  By monotonicity of
width, we may therefore run the fixed-point procedure of
\cref{prop:fixed-point-test} with $k=3$ in either case.

Every $3$-context has at most $3b=O(N)$ source elements.  Enumerating all such
contexts, all local homomorphisms $\mathfrak X(\mathfrak p)\to\T_\Phi$, all
deletions and one-chart extensions, and then computing the greatest fixed point
is polynomial in an explicitly represented state space of size at most
$2^{2^{O(N\log N)}}$.  The context-strategy criterion
\cref{thm:type-criterion} then decides AP.  This yields the claimed deterministic
$\ComplexityClass{2ExpTime}$ bound.
\end{proof}
Recall the 3-4-WNU condition from Section~\ref{sect:univ_alg}. 
By \cref{thm:bounded-width-known}, the bounded width condition is decidable in non-deterministic polynomial time in the size of the completion template.
We observe the following.
\begin{observation}
\label{prop:bw-recognition-2nexp}
Define
\begin{align*}
     L_{\mathrm{BW}}
  \coloneqq & \ \{\Phi \mid \T_\Phi\text{ has bounded width}\} \\
  L_{\mathrm{BW,AP}}
  \coloneqq & \ \{\Phi \mid \T_\Phi\text{ has bounded width and }\fm(\Phi)\text{ has the AP}\}.
\end{align*} 
Then both $L_{\mathrm{BW}}$ and $L_{\mathrm{BW,AP}}$ belong to
$\ComplexityClass{2NExpTime}$.
\end{observation}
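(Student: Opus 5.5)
The plan is a guess-and-verify argument run on the explicitly constructed completion template. On input $\Phi$ with $N=|\Phi|$, we first compute $\Theta(\Phi)=(\Phi_1,\Phi_2)$ in polynomial time by \cref{prop:theta-size}. We then construct $\T_\Phi$ deterministically by enumerating all source charts on $[m]$ with $m\leq b$, their completion fibres, and all restriction relations. By the size analysis preceding \cref{prop:bw-promise-bound}, the domain size $|T|$ and the full encoding size $M$ of $\T_\Phi$ are at most $2^{2^{O(N\log N)}}$, and this construction runs in deterministic time polynomial in $M$.

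If $\T_\Phi=\varnothing$, we accept $\Phi$ into $L_{\mathrm{BW}}$ directly by \cref{lem:empty-template-width}. Otherwise we use \cref{thm:bounded-width-known}: $\T_\Phi$ has bounded width if and only if it admits polymorphisms $v,w$ satisfying the 3-4-WNU condition. The nondeterministic step guesses the tables of $v\colon T^3\to T$ and $w\colon T^4\to T$. These tables have size $O(|T|^4\log|T|)$, which is still $2^{2^{O(N\log N)}}$, since raising a doubly exponential quantity to the fourth power only changes the constant in the inner exponent. Verification is deterministic and polynomial in $M$:
\begin{itemize}
\item[] check the WNU identities on all triples and quadruples;
\item[] check preservation of each unary $U_{\struct C}$ on all arguments;
\item[] check preservation of each binary $\operatorname{Res}_{\struct C,\struct C',f}$ by ranging over all $3$- or $4$-tuples of pairs in the relation, which takes at most $|T|^{8}$ steps per relation.
\end{itemize}
This places $L_{\mathrm{BW}}$ in $\ComplexityClass{2NExpTime}$.

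For $L_{\mathrm{BW,AP}}$, we first run the same guess-and-verify procedure, and on a successful branch we then run the deterministic $k=3$ fixed-point test of \cref{prop:fixed-point-test}. Correctness follows from \cref{prop:bw-algorithm-correctness}: once bounded width is certified, the test accepts exactly when $\fm(\Phi)$ has the AP, by \cref{thm:type-criterion}, \cref{thm:inside-out-correspondence} and \cref{cor:special-positive-strategy}. By \cref{prop:bw-promise-bound} this second phase costs $2^{2^{O(N\log N)}}$ deterministic time, so composing the two phases stays within $\ComplexityClass{2NExpTime}$.

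The main obstacle is the bookkeeping showing that every step, including guessing the $4$-ary table and checking it against binary relations, stays within $2^{2^{O(N\log N)}}$ rather than creeping up by an exponent level. This comes down to observing that polynomial blow-ups of $M$ are absorbed into the constant of the inner exponent.
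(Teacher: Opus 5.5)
Your proposal is correct and matches the paper's proof: the empty template is handled by \cref{lem:empty-template-width}; otherwise you guess and verify 3-4-WNU polymorphism tables in time polynomial in the doubly exponential encoding size of $\T_\Phi$; and for $L_{\mathrm{BW,AP}}$ you append the deterministic $k=3$ fixed-point test from \cref{prop:bw-promise-bound}. Your extra bookkeeping (table size $O(|T|^4\log|T|)$ and at most $|T|^8$ checks per binary relation) spells out what the paper summarizes as ``polynomial in $M$ because the operation arities are fixed''.
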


\begin{proof}
If $\T_\Phi=\varnothing$, bounded width holds directly by
\cref{lem:empty-template-width}.  Otherwise,
\cref{thm:bounded-width-known} puts bounded-width recognition in NP:
non-deterministically guess tables for polymorphisms
$v\colon T^3\to T$ and $w\colon T^4\to T$, and verify preservation of every
basic relation together with the 3-4-WNU condition.  Since the operation arities are
fixed, the certificate and its verification have size polynomial in the
explicit encoding size $M$ of the template.

By $M=2^{2^{O(N\log N)}}$, a polynomial in $M$ is bounded by
$2^{2^{O(N\log N)}}$.  Thus $L_{\mathrm{BW}}$ is in
$\ComplexityClass{2NExpTime}$.  For $L_{\mathrm{BW,AP}}$, after accepting the empty-template case or
verifying a bounded-width certificate as above, run the deterministic $k=3$
fixed-point computation from \cref{prop:bw-promise-bound}.  This second phase is
also doubly exponential, so the combined nondeterministic computation remains
in $\ComplexityClass{2NExpTime}$.
\end{proof} 
Note that, in contrast to $L_{\mathrm{BW,AP}}$, the three-way algorithm of \cref{prop:bw-algorithm-correctness} also relies on recognizing the  cases where the bounded-width condition does not apply.
Therefore, a naive implementation of the three-way procedure only gives the following deterministic triple-exponential upper bound.
\begin{proposition}
\label{prop:bw-three-way}
The three-way algorithm of \cref{prop:bw-algorithm-correctness} can be
implemented deterministically
in time
$ 
  2^{2^{2^{O(N\log N)}}}.
$ 
In particular, it belongs to $\ComplexityClass{3ExpTime}$.
\end{proposition}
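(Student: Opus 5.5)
We implement the three steps of \cref{prop:bw-algorithm-correctness} in sequence and bound each one separately. Throughout, $N=|\Phi|$. By \cref{prop:theta-size}, $\Theta(\Phi)$ is computed in polynomial time, and the encoding size of $\T_\Phi$ is $M=2^{2^{O(N\log N)}}$ (as recorded at the start of this section). The same enumeration used in \cref{thm:semantic-horn-complexity} constructs $\T_\Phi$: list all labelled structures on $[m]$ with $m\le b$, keep those satisfying $\Phi_1$ and those satisfying $\Phi_2$, and record all injections $f$ with $\struct C=f^*\struct C'$. This takes time polynomial in $M$, which is within $2^{2^{O(N\log N)}}$.

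The recognition step is the dominant cost. If $\T_\Phi=\varnothing$, we accept directly by \cref{lem:empty-template-width}. Otherwise we decide bounded width via \cref{thm:bounded-width-known} by deterministic exhaustive search over all pairs of operations $v\colon T^3\to T$ and $w\colon T^4\to T$. There are at most $|T|^{|T|^3}\cdot|T|^{|T|^4}=2^{O(|T|^4\log|T|)}$ such pairs. For each pair we check that it preserves every unary and binary relation of $\T_\Phi$ and satisfies the 3-4-WNU identities, which costs time polynomial in $M$ and $|T|^4$. With $|T|\le M=2^{2^{O(N\log N)}}$, we have $|T|^4\log|T|\le 2^{2^{O(N\log N)}}$, so the total running time is $2^{2^{2^{O(N\log N)}}}$.

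The one point needing care is this exponent arithmetic. Raising a doubly exponential quantity to a fixed power, and multiplying by its logarithm, only changes the constant inside the inner $O(N\log N)$. Hence the search stays within one further exponential, and no worse.

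Finally, if a 3-4-WNU pair is found, we run the deterministic $k=3$ fixed-point procedure of \cref{prop:bw-promise-bound}, which takes time $2^{2^{O(N\log N)}}$. If no pair is found, we report that the criterion does not apply. Correctness is exactly \cref{prop:bw-algorithm-correctness}. Summing the three phases, the total time is dominated by the triple-exponential search, which gives the stated bound and membership in $\ComplexityClass{3ExpTime}$.
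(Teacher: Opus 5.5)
Your proposal is correct and follows the paper's proof essentially step for step. The paper also decides bounded width by deterministic exhaustive search over all pairs of operation tables $v\colon T^3\to T$, $w\colon T^4\to T$, giving at most $2^{M^{O(1)}}=2^{2^{2^{O(N\log N)}}}$ candidates, each verified in time polynomial in $M$; the subsequent $k=3$ fixed-point test of \cref{prop:bw-promise-bound} is dominated by this search. Your explicit accounting for constructing $\T_\Phi$ and for the exponent arithmetic only spells out steps the paper leaves implicit.
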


\begin{proof}
Let $M=2^{2^{O(N\log N)}}$ be the explicit encoding size of $\T_\Phi$.  
If $\T_\Phi=\varnothing$, then $\T_\Phi$ has width at most $1$ by
\cref{lem:empty-template-width}, and no
polymorphism search is needed.  Otherwise, by
\cref{thm:bounded-width-known}, $\T_\Phi$ has bounded width if and only if there
are polymorphisms
$ 
  v\colon T^3\to T
  $  and $ 
  w\colon T^4\to T
$ 
witnessing the $3$--$4$ WNU identities.  We decide this deterministically by
enumerating all pairs of operation tables of these fixed arities and, for each
pair, checking the identities and preservation of every basic relation of
$\T_\Phi$.  Since $|T|\leq M$, the number of pairs of tables is at most
$ 
  |T|^{|T|^3+|T|^4}
  =2^{M^{O(1)}},
$ 
and each pair can be verified in time polynomial in $M$ because the operation
arities and the arity of the relations of $\T_\Phi$ are fixed.  Hence the
bounded-width recognition step runs in deterministic time
$ 
  2^{M^{O(1)}}.
$ 

If $\T_\Phi\neq\varnothing$ and no pair passes the test, the algorithm reports
that the criterion does not apply.  Otherwise $\T_\Phi$ has bounded width and
the AP question is decided by the deterministic fixed-point computation from
\cref{prop:bw-promise-bound}, whose running time is
$2^{2^{O(N\log N)}}$.  This is dominated by the recognition step.  Substituting
the bound on $M$ gives
\[
  2^{M^{O(1)}}
  =
  2^{2^{2^{O(N\log N)}}}.
\]
This proves the claimed deterministic $\ComplexityClass{3ExpTime}$ upper bound.
\end{proof}

\begin{proof}[Proof of \cref{thm:main}]
The construction and correctness of the three-way procedure, including the
fact that it decides AP whenever \(\T_\Phi\) has bounded width, are given by
\cref{prop:bw-algorithm-correctness}.  Its deterministic
\(\ComplexityClass{3ExpTime}\) implementation is established in
\cref{prop:bw-three-way}.  Under the promise that \(\T_\Phi\) has bounded
width, the recognition step can be omitted; the remaining AP test has the
deterministic \(\ComplexityClass{2ExpTime}\) bound of
\cref{prop:bw-promise-bound}.  This proves every assertion of the theorem.
\end{proof}

\section{The scope of the bounded width promise}
\label{sec:algebraic-approach}


The semantic Horn hypothesis is one way of producing useful polymorphisms of
the completion template, but it is not the natural endpoint of the method.
The actual hypothesis in \cref{thm:main} is bounded width of the finite
completion template. Primitive positive constructions between completion
templates, short \emph{completion pp-constructions}, preserve all height-1
conditions by \cref{prop:pp-height-1-transfer}, and therefore preserve bounded
width. They consequently extend the applicability of the amalgamation
algorithm beyond the semantic Horn fragment.
The central example of the present section
constructs the completion template associated with the strict linear order
\((\mathbb Q;<)\) from the corresponding strict-partial-order template.

An important test for the robustness of the algorithm is the quantifier-free interdefinability of the original universal sentences, which is well-known to translate between positive instances of the ADP.
The interdefinition lifts through the inside-out transformation
and induces mutually inverse finite-dimensional completion pp-constructions.
By \cref{prop:pp-height-1-transfer}, the resulting completion templates
satisfy exactly the same height-1 conditions; in particular, bounded width is
invariant under this input transformation.
The central example is the expansion of \((\mathbb Q;<)\) by the homogeneous betweenness relation 
\[
\operatorname{Betw} \coloneqq \{(x,y,z) \in \mathbb{Q}^3 \mid x<y<z \text{ or } x>y>z \}.
\]
Interestingly, the completion template for the betweenness relation alone does
not have bounded width; in fact, it does not satisfy any non-trivial height-1 condition.
Consequently, it pp-constructs the complete graph on $3$ vertices $\struct{K}_3$, and its CSP is
NP-complete~\cite{BOP}. 
This correlates well with the fact that taking quantifier-free
reducts need not preserve positive instances of the ADP.
Although the age of \((\mathbb Q;\operatorname{Betw})\) has the AP, the
bounded-width criterion of our algorithm  cannot certify this fact.
  
 For a universal completion pair
\(\mathcal P=(\Psi_1,\Psi_2)\), where \(\Psi_2\) has signature
\(\sigma_2\), write
\[
  b_{\mathcal P}
  \coloneqq
  \max\{\maxar(\sigma_2),
          \operatorname{var}(\Psi_2)\}
  \qquad\text{and}\qquad
  \T(\mathcal P)\coloneqq\T(\Psi_1,\Psi_2)
\]
for the cutoff  and the completion template of
\cref{constr:template}, respectively.

\subsection{Strict linear orders from strict partial orders}
\label{subsec:linear-from-partial-pp}

We now give a particularly simple completion pp-construction.  Let
\(\Phi_{\mathrm{po}}\) axiomatize strict partial orders,
\[
  \forall x\,\neg(x<x)
  \quad\wedge\quad
  \forall x,y,z\,((x<y\wedge y<z)\Rightarrow x<z),
\]
and let \(\Phi_<\) add the totality axiom,
\[
  \forall x,y\,(x=y\vee x<y\vee y<x).
\]
The finite models of \(\Phi_<\) form the age of \((\mathbb Q;<)\).  Write
$ 
  \Theta(\Phi_<)=(\Phi_1^<,\Phi_2^<)$ and $
  \Theta(\Phi_{\mathrm{po}})
  =(\Phi_1^{\mathrm{po}},\Phi_2^{\mathrm{po}}).$  
A \(\Phi_1^<\)-chart is also a \(\Phi_1^{\mathrm{po}}\)-chart.  For such a
chart \(\struct C\), write \(\mathcal E_{\struct C}^{\mathrm{po}}\) for its
fibre of partial-order completions and \(\mathcal E_{\struct C}^<\) for its fibre of
linear-order completions.

The only ingredient needed for the construction is a  way to turn
such a partial-order completion into a linear-order completion.  Recall the
two unary side markers \(L,R\) from the completion encoding and set
\[
  L^\circ \coloneqq L^{\struct C}\setminus R^{\struct C},
  \qquad
  R^\circ \coloneqq R^{\struct C}\setminus L^{\struct C}.
\]

\begin{lemma} 
\label{lem:pp-purity-incomparable-classes}
Let $\struct C \in \fm(\Phi_1^<)$, let
\(\struct D\in\mathcal E_{\struct C}^{\mathrm{po}}\), and let
\(\struct P=\struct C/E'^{\struct D}\) be the quotient partial order of the
completion.  If distinct classes \(X,Y\in P\) are
incomparable, then either
\[
  X\subseteq L^\circ\ \text{ and }\ Y\subseteq R^\circ,
  \qquad\text{or}\qquad
  X\subseteq R^\circ\ \text{ and }\ Y\subseteq L^\circ.
\]
\end{lemma}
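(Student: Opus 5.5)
Here $\struct C\models\Phi_1^<$ with $\Phi_1^<$ coming from $\Theta(\Phi_<)=\Delta(\Gamma(\Phi_<))$. By the discussion after \eqref{eq:delta-source-exact}, the $\rho$-reduct of $\struct C$ restricted to $L^{\struct C}$, and likewise to $R^{\struct C}$, is a model of $\Gamma(\Phi_<)$. The primed reduct of $\struct D$ is a model of $\Gamma(\Phi_{\mathrm{po}})$. On non-free tuples, i.e.\ tuples lying entirely in $L^{\struct C}$ or entirely in $R^{\struct C}$, the linking conjunct forces $E'=E$ and $<'=<$.

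The plan is to take distinct incomparable classes $X,Y$ of $\struct P$ and pick representatives $x\in X$, $y\in Y$. We then locate them on the two sides and derive a contradiction unless they are pure and on opposite sides.

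\textbf{Step 1: $x$ and $y$ cannot lie on a common side.} Suppose both lie in $L^{\struct C}$; the case of $R^{\struct C}$ is symmetric. Then the pair $(x,y)$ is non-free, so $E'(x,y)\Leftrightarrow E(x,y)$ and $x<'y\Leftrightarrow x<y$, and similarly for $(y,x)$. The reduct of $\struct C$ on $L^{\struct C}$ satisfies $\Gamma(\Phi_<)$, so by the quotient characterization its quotient by $E$ is a strict linear order. Hence either $E(x,y)$, $x<y$ or $y<x$ holds. Transferring along the linking equalities, either $X=Y$ or $X,Y$ are comparable in $\struct P$, which is a contradiction.

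\textbf{Step 2: conclude.} By the marker conjunct every element lies in $L$ or in $R$. Step 1 applied to the single pair $x,y$ already forces one of them into $L^\circ$ and the other into $R^\circ$. Say $x\in L^\circ$ and $y\in R^\circ$.

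Now let $x_1\in X$ be any other element. If $x_1\in R^{\struct C}$, then $x_1$ and $y$ share the side $R$. Since $[x_1]_{E'}=X$ and $[y]_{E'}=Y$ are still distinct and incomparable, Step 1 gives a contradiction. Hence $x_1\in L^\circ$, so $X\subseteq L^\circ$. Symmetrically, every $y_1\in Y$ lies in $R^\circ$, so $Y\subseteq R^\circ$.

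The only point needing care is Step 1. There we must confirm that $E'$ and $<'$ really agree with $E$ and $<$ on both $(x,y)$ and $(y,x)$. This holds because both tuples are non-free, and the linking conjunct of \eqref{eq:delta-target-exact} is applied to each ordered tuple separately. We must also confirm that a single-sided restriction of $\struct C$ satisfies $\Gamma(\Phi_<)$, which is the stated reading of $\Phi_1$. Once these are in place, everything else is bookkeeping.
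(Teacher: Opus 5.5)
Your proof is correct and follows essentially the same route as the paper. The paper's key step is that two distinct incomparable classes cannot both meet $L^{\struct C}$, or both meet $R^{\struct C}$, because the linking conjuncts transfer the chain structure of each side's $E$-quotient to $E'$ and $<'$; your Step~1 is this same step phrased with arbitrary representatives, and your Step~2 spells out the paper's closing use of the marker conjunct in more detail.
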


\begin{proof}
Suppose that \(X\) and \(Y\) both meet \(L^{\struct C}\), and choose
\(x\in X\cap L^{\struct C}\) and \(y\in Y\cap L^{\struct C}\).  The pair \((x,y)\) is
non-free, so the linking axioms force \(E'\) and \(<'\) to agree there with
the unprimed relations \(E\) and \(<\).  Since
\(\struct C\models\Phi_1^<\), the quotient of the left side by \(E\) is a chain.
Thus either \(E(x,y)\), in which case \(X=Y\), or \(x<y\) or \(y<x\), in
which case \(X\) and \(Y\) are comparable.  Each alternative contradicts
the hypothesis.  Hence two incomparable classes cannot both meet
\(L^{\struct C}\).
The same argument applies to \(R^{\struct C}\).  Every element belongs to at least one
side, so one class is contained in \(L^\circ\) and the other in
\(R^\circ\).
\end{proof}

\begin{lemma} 
\label{lem:canonical-linearization}
For every \(\Phi_1^<\)-chart \(\struct C\), there is a map
$ 
  \ell_{\struct C}:\mathcal E_{\struct C}^{\mathrm{po}}
  \rightarrow
  \mathcal E_{\struct C}^<
$ 
which fixes every linear-order completion and commutes with restriction:
for every \(\Phi_1^<\)-chart \(\struct C'\), every embedding
\(f:\struct C\hookrightarrow\struct C'\), and every
\(\struct D'\in\mathcal E_{\struct C'}^{\mathrm{po}}\), we have 
$ 
  \ell_{\struct C}(f^*\struct D')=f^*\ell_{\struct C'}(\struct D').
$ 
\end{lemma}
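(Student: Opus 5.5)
The plan is to define $\ell_{\struct C}$ by a canonical, isomorphism-invariant linearization of the quotient partial order, using \cref{lem:pp-purity-incomparable-classes} to limit where incomparabilities can occur.

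Fix $\struct D\in\mathcal E^{\mathrm{po}}_{\struct C}$ and let $\struct P=\struct C/E'^{\struct D}$ be its quotient partial order. By \cref{lem:pp-purity-incomparable-classes}, every incomparable pair of classes consists of one class inside $L^\circ$ and one inside $R^\circ$. So the canonical rule is:
\begin{quote}
\emph{whenever $X\subseteq L^\circ$ and $Y\subseteq R^\circ$ are incomparable, put $X$ below $Y$.}
\end{quote}
Concretely, define $\ell_{\struct C}(\struct D)$ by keeping the source reduct and the relation $E'$, and setting $x<'y$ if and only if one of the following holds:
\begin{enumerate}[label=\textup{(\alph*)}]
  \item $[x]<[y]$ in $\struct P$;
  \item $[x]\neq[y]$ are incomparable, $[x]\subseteq L^\circ$ and $[y]\subseteq R^\circ$.
\end{enumerate}
Every other primed relation (the primed markers, if any, in $\rho=\{<,E\}$) is copied from $\struct D$.

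The key steps are as follows.
\begin{enumerate}[label=\textup{(\arabic*)}]
  \item \textbf{Linear order.} Show the new relation on classes is a strict linear order. Irreflexivity and totality are immediate. For transitivity, take $X<'Y<'Z$ and split into cases by which of the two steps are new (L-to-R) edges.
  \begin{itemize}
    \item If both are old, use transitivity of $\struct P$.
    \item If exactly one is new, say $X\subseteq L^\circ$, $Y\subseteq R^\circ$ with $X,Y$ incomparable and $Y<Z$: we need $X<'Z$. Suppose $Z<X$; then $Y<X$, contradicting incomparability. If $X,Z$ are incomparable, the lemma forces $Z\subseteq R^\circ$, so the new rule gives $X<'Z$.
    \item Two consecutive new edges are impossible, since $Y$ would have to lie in both $R^\circ$ and $L^\circ$.
    \item The symmetric case is analogous.
  \end{itemize}
  \item \textbf{Completion of $\Phi_2^<$.} Check that the result satisfies $\Phi_2^<$.
  \begin{itemize}
    \item The primed reduct is a model of $\Gamma(\Phi_<)$: $E'$ is still an equivalence compatible with $<'$, because the definition is class-based, and the quotient is linear.
    \item The linking conjunct holds because the new pairs are free tuples. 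Indeed, $[x]\subseteq L^\circ$ and $[y]\subseteq R^\circ$ make $(x,y)$ free, so the unprimed $<$ is false there and the first disjunct of the linking axiom applies. On non-free tuples nothing changed.
  \end{itemize}
  \item \textbf{Fixing linear completions.} If $\struct D$ is already a linear-order completion, clause (b) never fires, so $\ell_{\struct C}(\struct D)=\struct D$.
  \item \textbf{Naturality.} Show $\ell_{\struct C}(f^*\struct D')=f^*\ell_{\struct C'}(\struct D')$. Both sides agree on the source reduct and on $E'$. For $<'$ on a pair $(x,y)$ there are two points to check.
  \begin{itemize}
    \item Comparability and order of $[x],[y]$ are determined by the pair's own $<'$ and $E'$ values, which pull back along $f$.
    \item Membership in $L^\circ$ or $R^\circ$ is a source-reduct property preserved by the embedding $f$.
  \end{itemize}
  The subtle point is that clause (b) is stated in terms of whole classes, which shrink under pullback. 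I will reformulate it pointwise: by the lemma it is equivalent to ``$x\in L^\circ$, $y\in R^\circ$, and neither $x<'y$, $y<'x$ nor $E'(x,y)$ holds in $\struct D$''. The lemma guarantees that an incomparable class lies entirely on one pure side, so the pointwise and class-wise conditions coincide. The pointwise form depends only on the restriction of $\struct D$ to $\{x,y\}$, and this gives commutation with $f^*$.
\end{enumerate}

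I expect the main obstacle to be exactly this last reformulation. It requires that the lemma apply in $f^*\struct D'$ as well, where $f^*\struct C'$ is again a $\Phi_1^<$-chart and $f^*\struct D'$ a partial-order completion, both by universality. With that in hand, the transitivity case analysis in step (1) is the only remaining work, and it is routine.
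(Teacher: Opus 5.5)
Your proposal is correct and follows the paper's proof. It uses the same left-priority orientation of incomparable classes justified by \cref{lem:pp-purity-incomparable-classes}, the same transitivity case split (the case $Z=X$ is excluded exactly as $Z<X$ is), and the same freeness argument for the linking conjunct. The only variation is in the naturality step: you rewrite the new clause pointwise, as a quantifier-free condition on the pair obtained by applying the purity lemma to $f^*\struct D'$, so that commutation with $f^*$ is immediate. The paper instead compares the downstairs classes $X\cap C$ with their parent classes and applies the purity lemma to $\struct D'$ upstairs; both arguments work.
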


\begin{proof}
Given \(\struct D\in\mathcal E_{\struct C}^{\mathrm{po}}\), let
\(\struct P=\struct C/E'^{\struct D}\).  Retain every comparison of
\(\struct P\), and orient every incomparable pair by putting the class
contained in \(L^\circ\) below the class contained in \(R^\circ\).  This is
defined on every incomparable pair by
\cref{lem:pp-purity-incomparable-classes}, and it produces a total relation.

To verify transitivity, denote the resulting order by
\(\mathord{\triangleleft}\) and suppose that
$X\triangleleft Y$, $Y\triangleleft Z$.  If both comparisons belong to
\(\struct P\), use transitivity of \(\struct P\).  
Suppose that
\(X\triangleleft Y\) is new.  Then
\(X\subseteq L^\circ\), \(Y\subseteq R^\circ\), and
\(X\parallel_{\struct P}Y\).  The second comparison cannot also be new, since
that would require \(Y\subseteq L^\circ\), so \(Y<_{\struct P}Z\).  If
\(Z<_{\struct P}X\), then
\(Y<_{\struct P}Z<_{\struct P}X\), contradicting
\(X\parallel_{\struct P}Y\); the same contradiction excludes \(Z=X\).
Hence either \(X<_{\struct P}Z\), or \(X\parallel_{\struct P}Z\).  In the
latter case Lemma~\ref{lem:pp-purity-incomparable-classes} gives \(Z\subseteq R^\circ\), and the new rule
again gives \(X\triangleleft Z\).  The case in which only the second
comparison is new is symmetric.  Thus \(\mathord{\triangleleft}\) is a
strict linear order.

Every newly oriented pair consists of a left-only and a right-only class, so
all its representative pairs are free.  Lifting
\(\mathord{\triangleleft}\) to the elements of \(C\), while retaining
\(E'^{\struct D}\), therefore changes no target relation on a non-free tuple
and gives a linear-order completion.  Define this completion to be
\(\ell_{\struct C}(\struct D)\).  If \(\struct D\) was already linear, there
are no incomparable classes and \(\ell_{\struct C}(\struct D)=\struct D\).

Finally, identify \(\struct C\) with its image under \(f\), and let
\(\struct P'\) be the quotient poset of \(\struct D'\).  The quotient of
\(f^*\struct D'\) is the induced subposet
whose classes are the nonempty intersections \(X\cap C\), for classes
\(X\in P'\) meeting \(C\).  If two classes are comparable downstairs, their
comparison is the restriction of the corresponding comparison in
\(\struct P'\), so
both linearizations retain it.  If two classes are incomparable downstairs,
their parent classes \(X,Y\in P'\) are incomparable as well: a comparison
between \(X\) and \(Y\) would restrict to a comparison between their nonempty
intersections with \(C\).  Apply
\cref{lem:pp-purity-incomparable-classes} to \(\struct C'\) and
\(\struct D'\).  It puts one
of \(X,Y\) wholly in \(L^\circ\) and the other wholly in \(R^\circ\), and
these containments pass to their intersections with \(C\).  Thus the
left-priority rule orients the downstairs pair exactly as the restriction of
the upstairs rule.  This proves
\(\ell_{\struct C}(f^*\struct D')
=f^*\ell_{\struct C'}(\struct D')\).
\end{proof}

The two completion pairs in this example have the same cutoff.
Indeed, the compatibility axiom introduced by \(\Gamma\) for the binary
relation \(<\) uses four variables and dominates every other axiom: the
partial-order axioms use at most three variables, totality uses two, and the
remaining fixed axioms introduced by \(\Gamma\) and \(\Delta\) use at most
four.  Thus  
\[
  \operatorname{var}(\Phi_2^{\mathrm{po}})
  =\operatorname{var}(\Phi_2^<)=4,
  \qquad
  b_{\Theta(\Phi_{\mathrm{po}})}
  =b_{\Theta(\Phi_<)}
  =\max\{2,4\}=4.
\]

\begin{proposition} 
\label{prop:linear-orders-completion-pp}
Set
$ 
  \struct A=\T(\Theta(\Phi_{\mathrm{po}})),
$ $
  \struct L=\T(\Theta(\Phi_<)),
$
and let \(\tau_<\) be the signature of \(\struct L\).  After identifying every symbol in \(\tau_<\) with the corresponding symbol
of \(\struct A\), we have that
\(\struct A\mathord{\upharpoonright}_{\tau_<}\) is homomorphically
equivalent to \(\struct L\).
Consequently, $\struct A$ pp-constructs $\struct L$. 
\end{proposition}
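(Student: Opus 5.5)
We will exhibit homomorphisms in both directions between \(\struct A\mathord{\upharpoonright}_{\tau_<}\) and \(\struct L\). First we make the identification of signatures explicit. Every symbol of \(\tau_<\) is either a unary predicate \(U_{\struct C}\) for a \(\Phi_1^<\)-chart \(\struct C\), or a restriction symbol \(\operatorname{Res}_{\struct C,\struct C',f}\) between such charts. Since \(\Phi_<\) implies \(\Phi_{\mathrm{po}}\), every \(\Phi_1^<\)-chart is a \(\Phi_1^{\mathrm{po}}\)-chart, and the two cutoffs agree (both equal \(4\), as computed above). Hence each such symbol also names a relation of \(\struct A\). Similarly, every linear-order completion is a partial-order completion, so \(\mathcal E^<_{\struct C}\subseteq\mathcal E^{\mathrm{po}}_{\struct C}\).

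\emph{Step 1: the map \(\struct L\to\struct A\mathord{\upharpoonright}_{\tau_<}\).} We take the inclusion \((\struct C,\mathfrak D)\mapsto(\struct C,\mathfrak D)\). It preserves every \(U_{\struct C}\) by the fibre inclusion. It preserves every \(\operatorname{Res}_{\struct C,\struct C',f}\) because the defining condition \(\mathfrak D=f^*\mathfrak D'\) does not depend on the ambient template.

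\emph{Step 2: the map \(\struct A\mathord{\upharpoonright}_{\tau_<}\to\struct L\).} The domain of \(\struct A\) also contains elements \((\struct C,\mathfrak D)\) whose chart \(\struct C\) satisfies \(\Phi_1^{\mathrm{po}}\) but not \(\Phi_1^<\). No unary predicate of \(\tau_<\) contains such elements, and no restriction relation of \(\tau_<\) touches them. We therefore send them to an arbitrary fixed element of \(\struct L\); the domain of \(\struct L\) is nonempty, since the one-element chart has a completion. On the remaining elements we define \((\struct C,\mathfrak D)\mapsto(\struct C,\ell_{\struct C}(\mathfrak D))\), using \cref{lem:canonical-linearization}. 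This map preserves \(U_{\struct C}\) because \(\ell_{\struct C}\) lands in \(\mathcal E^<_{\struct C}\). For a pair with \(\mathfrak D=f^*\mathfrak D'\), the commutation \(\ell_{\struct C}(f^*\mathfrak D')=f^*\ell_{\struct C'}(\mathfrak D')\) shows that the image pair lies in \(\operatorname{Res}_{\struct C,\struct C',f}\) of \(\struct L\). Here \(f\) is an injection with \(\struct C=f^*\struct C'\), which is exactly an embedding \(\struct C\hookrightarrow\struct C'\), as the lemma requires.

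\emph{Step 3: the consequence.} The reduct \(\struct A\mathord{\upharpoonright}_{\tau_<}\) is a pp-power of \(\struct A\) with \(d=1\), since every relation of the reduct is atomic and hence pp-definable. By Steps 1 and 2 it is homomorphically equivalent to \(\struct L\), so \(\struct A\) pp-constructs \(\struct L\) by definition.

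All of the mathematical content sits in \cref{lem:canonical-linearization}, which is already established. The only delicate point left is bookkeeping: checking that the relation symbols really coincide. The template elements are pairs consisting of a chart together with a completion, and the charts of the two pairs live over the same signature \(\sigma_1\) and the same sizes \(m\leq 4\), so they can be identified. We also need that elements off the \(\tau_<\)-charts impose no constraints, which holds because every relation symbol of \(\tau_<\) mentions only \(\Phi_1^<\)-charts.
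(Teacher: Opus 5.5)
Your proposal is correct and follows the paper's proof essentially verbatim. You use the inclusion $\struct L\to\struct A\mathord{\upharpoonright}_{\tau_<}$ and the map $(\struct C,\mathfrak D)\mapsto(\struct C,\ell_{\struct C}(\mathfrak D))$ on nonisolated elements, with isolated elements sent to a fixed element of $L$, and you note that the reduct is a one-dimensional pp-power; the paper additionally records $r\circ s=\operatorname{id}_L$, which is not needed for homomorphic equivalence.
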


\begin{proof}
Every linear-order source chart is also a partial-order source chart.  Since
the two completion pairs have the same cutoff, every unary chart predicate
and every restriction symbol of \(\tau_<\) therefore has a canonical
counterpart in the signature of \(\struct A\).  Let
$
  \struct A_{\mathrm{lin}}
  \coloneqq \struct A\mathord{\upharpoonright}_{\tau_<}.
$
Thus \(\struct A_{\mathrm{lin}}\) is simply the reduct obtained by forgetting
the symbols indexed by partial-order charts that are not linear-order charts.
Each retained relation is defined in \(\struct A\) by its corresponding
atomic formula.  Hence the reduct is automatically a one-dimensional
pp-power of \(\struct A\); this is only the formal step that turns the
homomorphic equivalence below into a pp-construction.
Define $s\colon \struct L\rightarrow\struct A_{\mathrm{lin}}$,
$
 s(\struct C,\struct D) \coloneqq (\struct C,\struct D),
$
by regarding a linear-order completion as a partial-order completion.  
Next, we define   $r\colon \struct A_{\mathrm{lin}} \rightarrow \struct L$. 
The elements belonging to components indexed by partial-order charts that are
not linear-order charts occur in no basic relation of
\(\struct A_{\mathrm{lin}}\), and hence are \emph{isolated}.
On a
nonisolated element of \(A_{\mathrm{lin}}\), define
$
   r(\struct C,\struct D) \coloneqq  (\struct C,\ell_{\struct C}(\struct D)).
$
Then, we map every isolated element to an arbitrary fixed element of \(L\); such an
element exists already over a one-element chart.  The compatibility with
restriction in \cref{lem:canonical-linearization} shows that \(r\) and
\(s\) are homomorphisms.  Moreover, we have 
$ 
  r\circ s=\operatorname{id}_L,
$ 
because \(\ell_{\struct C}\) fixes linear-order completions.  Thus
\(\struct A_{\mathrm{lin}}\) and \(\struct L\) are homomorphically
equivalent.  This homomorphic equivalence is the substantive part of the
argument; the dimension-one pp-power merely aligns the two signatures.
\end{proof}

The source sentence \(\Phi_{\mathrm{po}}\) is universal Horn.  Hence its completion template has a semilattice polymorphism by the proof of
\cref{prop:semilattice}, and consequently has width at most two by
\cref{lem:semilattice-width}.  It follows from
\cref{prop:pp-height-1-transfer,prop:linear-orders-completion-pp} that
the completion template associated with \((\mathbb Q;<)\) has bounded width,
so the amalgamation algorithm applies.  On the other hand, \(\Phi_<\) is not
semantic Horn: the direct product of two two-element linear orders contains
the incomparable elements \((0,1)\) and \((1,0)\).  This example therefore
exhibits the intended scope of the method particularly clearly.  Semantic
Horn supplies bounded width on the source side, while the completion
pp-construction transports it to a non-semantic-Horn target.


\subsection{Quantifier-free interdefinability of universal sentences}

We record an important special case at the level of the original input
sentences.  A \emph{quantifier-free translation} from  $\sigma$ to $\tau$ is a tuple 
$\mathcal{I} =   (\rho_S(x_1,\ldots,x_{\operatorname{ar}(S)}))_{S\in\tau}$
where, for every \(S\in\tau\), the formula \(\rho_S\) is quantifier-free
over the  signature \(\sigma\), and its free variables are among
\(x_1,\ldots,x_{\operatorname{ar}(S)}\).  In particular, \(\rho_S\) may
use several relation symbols from \(\sigma\), possibly of different
arities and with repeated variables, as well as equality and arbitrary
Boolean connectives.  

The translation $\mathcal{I}$
sends every \(\sigma\)-structure \(\struct A\) to a \(\tau\)-structure
\(\mathcal{I}(\struct A)\) on
the same domain by interpreting each \(S\) through \(\rho_S\).
Quantifier-free translations commute with the pullback operator along injections $f$; explicitly:
$ 
  \mathcal{I}(f^*\struct A)=f^*\mathcal{I}(\struct A).
$

We say that two universal sentences \(\Phi\) over \(\sigma\) and \(\Psi\) over \(\tau\) are \emph{quantifier-free interdefinable} if there are quantifier-free translations  $\mathcal{I}$ and $\mathcal{J}$  from $\sigma$ to $\tau$ and vice versa such that for every $\struct A \in \fm(\Phi)$ and every $\struct B \in \fm(\Psi)$, we have 
$ 
\mathcal{J}(\mathcal{I}(\struct A))=\struct A$  and $\mathcal{I}(\mathcal{J}(\struct B))=\struct B.$

\begin{proposition} 
\label{prop:qf-interdefinition-completion-pp}
If \(\Phi\) and \(\Psi\) are quantifier-free interdefinable, then  $\T(\Theta(\Phi))$ 
   and 
  $\T(\Theta(\Psi))$  
pp-construct each other.  
\end{proposition}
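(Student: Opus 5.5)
By symmetry it suffices to show that \(\T(\Theta(\Phi))\) pp-constructs \(\T(\Theta(\Psi))\).

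The plan is to lift the translations \(\mathcal I,\mathcal J\) through \(\Gamma\) and \(\Delta\) to chart-level translations that commute with pullback. Then realize \(\T(\Theta(\Psi))\) as homomorphically equivalent to a pp-power of \(\T(\Theta(\Phi))\) whose dimension \(d\) is large enough that one coordinate tuple encodes a \(\Phi\)-chart around a \(\Psi\)-chart. Write \(b_\Phi,b_\Psi\) for the two cutoffs.

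First, extend \(\mathcal I\) to the signatures of the completion pairs. Keep \(E,L,R\) unchanged and interpret each unprimed \(S\in\tau\) by \(\rho_S\), and each primed \(S'\) by \(\rho_S'\), meaning \(\rho_S\) with every atom primed. There is a subtlety: \(\rho_S\) contains equality, and in the \(\Gamma\)-world equality must be replaced by \(E\) (respectively \(E'\)). This is needed so that the translation commutes with taking quotients by the congruence. Denote the extension by \(\widehat{\mathcal I}\), and define \(\widehat{\mathcal J}\) likewise. We then check, using the quotient characterization of \(\Gamma(\Phi)\) from \cref{subsubsec:gamma}, that \(\widehat{\mathcal I}\) maps models of \(\Gamma(\Phi)\) to models of \(\Gamma(\Psi)\) and that the two extensions are mutually inverse there. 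Next we check that \(\widehat{\mathcal I}\) sends \(\Phi_1\)-models to \(\Psi_1\)-models and \(\Phi_2\)-completions to \(\Psi_2\)-completions, with the same inverse property.

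The main obstacle is the free-tuple conditions of \(\Delta\). On a free tuple the unprimed relations must be false, while \(\rho_S\) may evaluate to true there, for instance if \(\rho_S\) is a negated atom. To handle this, I would conjoin the translation of every unprimed \(S\) with "the tuple is non-free", which is quantifier-free in \(L,R\). The linking conjunct is then satisfied: on non-free tuples \(X'=X\) for all \(X\), so \(\rho_S'=\rho_S\). On free tuples the unprimed side is false by the conjoined condition. Because quantifier-free translations commute with pullback, we obtain for every chart \(\struct C\) a bijection \(\mathcal E_{\struct C}\to\mathcal E_{\widehat{\mathcal I}(\struct C)}\) that commutes with every \(f^*\).

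Second, build the pp-power. The cutoffs may differ, and a \(\Psi\)-chart of size \(m\le b_\Psi\) is the image of the \(\Phi\)-chart on the same domain. That \(\Phi\)-chart may be larger than \(b_\Phi\) but has at most \(b_\Psi\) elements. So I take dimension \(d\) equal to the number of nonempty subsets of \([b_\Psi]\) of size at most \(b_\Phi\). An element of the power is then a tuple of \(\Phi\)-target charts, one per such subset. For a \(\Psi\)-source chart \(\struct C\) on \([m]\), the unary relation \(U_{\struct C}\) is pp-defined as follows:
\begin{itemize}
\item each coordinate indexed by \(S\subseteq[m]\) lies in \(U_{\iota_S^*\widehat{\mathcal J}(\struct C)}\);
\item the coordinates satisfy all the \(\operatorname{Res}\) constraints among themselves;
\item the coordinates indexed by sets not contained in \([m]\) are fixed to satisfy \(U\) of a one-element chart.
\end{itemize}
This is a copy of the atlas instance \(\I_{b_\Phi}(\widehat{\mathcal J}(\struct C))\). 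The binary \(\operatorname{Res}_{\struct C,\struct C',f}\) is defined by equating coordinate \(S\) of the first tuple with coordinate \(f(S)\) of the second, via the appropriate \(\operatorname{Res}\) atoms. By \cref{thm:completion-csp}, the solutions over \(U_{\struct C}\) correspond bijectively to \(\Phi_2\)-completions of \(\widehat{\mathcal J}(\struct C)\), apart from the irrelevant padding coordinates.

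Homomorphisms in both directions then come from \(\widehat{\mathcal I}\) and \(\widehat{\mathcal J}\). Their compatibility with pullback gives preservation of the \(\operatorname{Res}\) relations, and padding elements are sent to an arbitrary fixed element as in \cref{prop:linear-orders-completion-pp}. This yields homomorphic equivalence with the pp-power, and hence the pp-construction.
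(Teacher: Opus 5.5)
Your proposal follows the paper's proof stage by stage. It uses the same lift through $\Gamma$ (equality replaced by $E$, resp.\ $E'$), the same $\operatorname{nfree}$-guarded lift through $\Delta$ with primed copies given by the primed formulas, and a pp-power whose coordinates, indexed by the nonempty subsets of $[b_\Psi]$ of size at most $b_\Phi$, carry a copy of the atlas instance $\I_{b_\Phi}(\widehat{\mathcal J}(\struct C))$. The homomorphisms $r$ and $s$ satisfy $r\circ s=\mathrm{id}$. The one genuine difference is how different charts are kept apart. The paper adds $t$ extra tag coordinates carrying a binary code of $\struct C$, built from two disjoint nonempty one-element-chart predicates $V_0,V_1$, and leaves the padding coordinates unconstrained. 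You use no tags and instead constrain the padding to a fixed one-element chart.

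What you do not state is the property this choice has to deliver. The map $r$ is well defined only if the pp-defined predicates $U^{\struct X}_{\struct C}$ are pairwise disjoint, because the $U_{\struct C}$ are disjoint in $\T(\Theta(\Psi))$. In your construction this does hold, but it needs an argument:
\begin{itemize}
\item \emph{Same domain.} Two distinct charts on the same $[m]$ are separated because $\widehat{\mathcal J}$ is injective on source models. The two translated structures therefore differ on some relation tuple whose entry set $S$ has $|S|\le b_\Phi$. The coordinate $x_S$ is then required to lie in two different, hence disjoint, component predicates.
\item \emph{Different sizes.} Charts on $[m_1]$ and $[m_2]$ with $m_1<m_2$ are separated at the coordinate $S=\{1,m_2\}$. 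This coordinate is padding (a one-element chart) for the first chart and must carry a two-element chart for the second; here $b_\Phi\ge 2$ because $E$ is binary.
\end{itemize}
With unconstrained padding this separation would fail, which is exactly why the paper needs the tags. Your variant trades the tag coordinates for this separation argument.

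Two smaller repairs are needed.
\begin{itemize}
\item The padding chart must be chosen with nonempty fibre. This is possible whenever $\T(\Theta(\Phi))\neq\varnothing$. The empty case is trivial, since your fibre correspondence makes both templates empty simultaneously.
\item $\operatorname{Res}^{\struct X}_{\struct C,\struct C',f}$ must include the conjuncts $U^{\struct X}_{\struct C}(\mathbf x)\wedge U^{\struct X}_{\struct C'}(\mathbf y)$. The coordinatewise $\operatorname{Res}$ atoms alone also relate tuples with incoherent or unconstrained coordinates. Such tuples lie outside every component predicate, so $r$ sends them to the arbitrary fixed element and would not preserve the relation.
\end{itemize}
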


\begin{proof}
Write \(\Theta=\Delta\circ\Gamma\), as in
\cref{sec:inside-out}.  We first lift the interdefinition through \(\Gamma\).
Let \(E\) be the fresh congruence symbol introduced by that transformation.
For every defining formula \(\rho_S\) of \(\mathcal{I}\), let
\(\widehat\rho_S\) be obtained by replacing every equality atom
\(u=v\) by \(E(u,v)\); Boolean connectives, including negation, are left
unchanged.  Keep \(E\) itself fixed.  These formulas define a
quantifier-free translation \(\widehat{\mathcal{I}}\) between the signatures of
\(\Gamma(\Phi)\) and \(\Gamma(\Psi)\).

If \(\struct A\models\Gamma(\Phi)\), then \(E^{\struct A}\) is a congruence
and \(\struct A/E^{\struct A}\models\Phi\).  Every relation of \(\struct A\)
is constant on coordinatewise \(E^{\struct A}\)-classes, and the replacement
of equality by \(E\)
gives
$ 
  \widehat{\mathcal{I}}(\struct A)/E^{\struct A}
  =\mathcal{I}(\struct A/E^{\struct A}).$  
The structure on the right-hand side of this equality satisfies \(\Psi\), so
\(\widehat{\mathcal{I}}(\struct A)\models\Gamma(\Psi)\).  Relativizing the formulas defining
\(\mathcal{J}\) in the same way gives a translation \(\widehat{\mathcal{J}}\).  Since every
relation in a congruence expansion is determined by its quotient,
\(\widehat{\mathcal{I}}\) and \(\widehat{\mathcal{J}}\) are mutually inverse on the models of the
two \(\Gamma\)-sentences.

We next lift these translations through \(\Delta\).  In the notation of
\cref{subsubsec:delta}, let
\(\operatorname{free}(\bar x)\) be its free-tuple formula and set
$ 
  \operatorname{nfree}(\bar x)
  \coloneqq\neg\operatorname{free}(\bar x).
$ 
For every relation symbol \(S\) of \(\Gamma(\Psi)\), including \(E\), let
\(\widehat\rho_S\) denote its definition under \(\widehat{\mathcal{I}}\), with
\(\widehat\rho_E=E\).  On the source signature of \(\Delta\), keep the side
markers \(L,R\) fixed and define $S$ via
\begin{align} 
  S(\bar x)
  \quad\Longleftrightarrow\quad
  \operatorname{nfree}(\bar x)\wedge
  \widehat\rho_S(\bar x). 
  \label{eq:qf-delta-source-lift} 
\end{align}
On the target signature, use the same definition for the unprimed copy of
\(S\), and define its primed copy by
\begin{align} 
  S'(\bar x)
  \quad\Longleftrightarrow\quad
  \widehat\rho'_S(\bar x), 
  \label{eq:qf-delta-target-lift}
\end{align}
where \(\widehat\rho'_S\) is obtained from \(\widehat\rho_S\) by replacing
every relation symbol, including \(E\), by its primed copy.

We verify that these formulas give translations between the two completion
pairs.  In a source model of \(\Delta(\Gamma(\Phi))\), the unprimed
structure induced on either side is a model of \(\Gamma(\Phi)\).  Every
non-free tuple lies entirely on one of the two sides, so on such tuples
\eqref{eq:qf-delta-source-lift} agrees with \(\widehat{\mathcal{I}}\).  On free tuples
it makes every translated unprimed relation false, as required by the source
axioms of \(\Delta\).  The translated structure therefore satisfies the
source sentence associated with \(\Gamma(\Psi)\).

Now let \(\struct D\) be a target model of \(\Delta(\Gamma(\Phi))\).  Its primed
reduct is a model of \(\Gamma(\Phi)\), so
\eqref{eq:qf-delta-target-lift} makes the translated primed reduct a model of
\(\Gamma(\Psi)\).  On every non-free tuple, each unprimed relation agrees
with its primed copy.  Since all subtuples of a non-free tuple are again
non-free, the two evaluations of \(\widehat\rho_S\) agree there.  On a free
tuple, \eqref{eq:qf-delta-source-lift} makes the unprimed relation false.
Thus the linking axioms of \(\Delta\) hold.  The construction on target
models extends the construction on their source reducts.  Applying the same
guarded construction to \(\widehat{\mathcal{J}}\) gives inverse translations.  Hence
the two completion pairs produced by \(\Theta\) are quantifier-free
interdefinable in a way that commutes with taking reducts.

Denote the resulting translations on source models by \(\mathcal{I}_1,\mathcal{J}_1\), and those
on target models by \(\mathcal{I}_2,\mathcal{J}_2\), where the \(\mathcal{I}_i\) go from the pair associated
with \(\Phi\) to the pair associated with \(\Psi\).  They are inverse on the
respective model classes, \(\mathcal{I}_2\) extends \(\mathcal{I}_1\), and all four translations
commute with pullback along injections.

It remains to pass to the completion templates.  Set
\[
  \mathcal P \coloneqq \Theta(\Phi),\qquad
  \mathcal Q \coloneqq \Theta(\Psi),\qquad
  \struct A \coloneqq \T(\mathcal P),\qquad
  \struct B \coloneqq \T(\mathcal Q),
\]
and write \(p=b_{\mathcal P}\) and \(q=b_{\mathcal Q}\).  The lifted
translations preserve the existence of finite completions.  Since \(p,q\geq1\)
and every nonempty finite completion restricts to a one-element completion,
\(A\) is empty if and only if \(B\) is empty.  The assertion is immediate in
that case; hence assume that both domains are nonempty.

We first choose two component predicates that will be used to tag the chart
encoded by a tuple.  Since \(\mathcal P\) is produced by \(\Delta\), its
source signature contains the side markers \(L,R\).  Nonemptiness of \(A\)
gives a finite target model of this completion pair; its primed reduct is a
model of \(\Gamma(\Phi)\), and any one-element substructure is again
such a model.  It gives two one-element source charts, one with its element in
\(L\setminus R\) and one with its element in \(R\setminus L\); both admit the
completion obtained by using the same one-element model for the primed
relations.  Their component predicates in \(\struct A\), denoted by
\(V_0,V_1\), are therefore nonempty and disjoint.

Let \(\mathcal C_{\mathcal Q}\) be the finite set of source charts indexing
the signature of \(\struct B\).  Choose \(t\geq1\) for which there exists  an injection
$ 
  \varepsilon\colon\mathcal C_{\mathcal Q}\rightarrow\{0,1\}^t,
$
and for \(\struct C\in\mathcal C_{\mathcal Q}\) set
\begin{align} 
  \operatorname{Code}_{\struct C}(z_1,\ldots,z_t)
  \coloneqq
  \bigwedge\nolimits_{i=1}^t V_{\varepsilon(\struct C)_i}(z_i). 
  \label{eq:qf-chart-code}
\end{align}
These formulas are pairwise disjoint.
For the atlas coordinates, let
\[
  \mathcal S_{p,q}
  \coloneqq
  \{S\subseteq[q] \mid 1\leq |S|\leq p\}.
\]
Fix \(\struct C\in\mathcal C_{\mathcal Q}\) on \([m]\), and put
\(\overline{\struct C}=\mathcal{J}_1(\struct C)\).  For every nonempty
\(S\subseteq[m]\) of size at most
\(p\), let \(\iota_S\colon[|S|]\to S\) be its increasing enumeration and
set \(\struct C_S=\iota_S^*\overline{\struct C}\).  If
\(\bar x=(x_S)_{S\in\mathcal S_{p,q}}\),
define
\begin{align}
  \operatorname{Atlas}_{\struct C}(\bar x)
  \coloneqq {}&
  \bigwedge\nolimits_{\substack{\varnothing\neq S\subseteq[m]\\|S|\leq p}}
    U_{\struct C_S}(x_S)
   \wedge
  \bigwedge\nolimits_{\substack{\varnothing\neq T\subseteq S\subseteq[m]\\|S|\leq p}}
    \operatorname{Res}_{\struct C_T,\struct C_S,f_{T,S}}(x_T,x_S),
  \label{eq:qf-atlas-code}
\end{align}
where \(f_{T,S}\) is determined by
\(\iota_T=\iota_S\circ f_{T,S}\).  Coordinates indexed by sets not
contained in \([m]\) are padding and are left unconstrained.  Formula
\eqref{eq:qf-atlas-code} is precisely the conjunction of constraints in the
atlas instance \(\I_p(\overline{\struct C})\).  The bijective clause of
\cref{thm:completion-csp} therefore identifies its satisfying tuples with
the \(\mathcal P\)-completions of \(\overline{\struct C}\).

We now define a \(d\)-dimensional pp-power \(\struct X\) of \(\struct A\),
where
\[
  d=t+|\mathcal S_{p,q}|.
\]
Write an element of \(A^d\) as \(\mathbf x=(\bar z,\bar x)\).  For every
\(\struct C\in\mathcal C_{\mathcal Q}\), set
\begin{align}
  U_{\struct C}^{\struct X}(\mathbf x)
  \quad\Longleftrightarrow\quad
  \operatorname{Code}_{\struct C}(\bar z)
  \wedge \operatorname{Atlas}_{\struct C}(\bar x).
  \label{eq:qf-component-power}
\end{align}
Suppose \(\struct C=f^*\struct C'\), where \(\struct C\) is on \([m]\),
\(\struct C'\) is on \([n]\), and
\(f\colon[m]\hookrightarrow[n]\).  For
\(S\subseteq[m]\), let
\(f_S\colon[|S|]\to[|f(S)|]\) be determined by
\(\iota_{f(S)}\circ f_S=f\circ\iota_S\).  Define
\begin{align}
  &\operatorname{Res}_{\struct C,\struct C',f}^{\struct X}
    (\mathbf x,\mathbf y)
  \quad\Longleftrightarrow\quad
  U_{\struct C}^{\struct X}(\mathbf x)
  \wedge U_{\struct C'}^{\struct X}(\mathbf y)
 \wedge
  \bigwedge\nolimits_{\substack{\varnothing\neq S\subseteq[m]\\|S|\leq p}}
  \operatorname{Res}_{\struct C_S,\struct C'_{f(S)},f_S}(x_S,y_{f(S)}).
  \label{eq:qf-restriction-power}
\end{align}
All relations of \(\struct X\) are thus pp-definable in \(\struct A\).

If \(\mathbf x\in U_{\struct C}^{\struct X}\), its atlas coordinates determine a
unique \(\mathcal P\)-completion \(\struct E\) of \(\mathcal{J}_1(\struct C)\), and
\(\mathcal{I}_2(\struct E)\) is a \(\mathcal Q\)-completion of \(\struct C\).  The
chart codes ensure that an element
cannot belong to two different component predicates.  Moreover,
\eqref{eq:qf-restriction-power} holds exactly when the corresponding
\(\mathcal Q\)-completions are related by restriction: relations in the target
signature of \(\mathcal P\) have arity at most \(p\), so agreement on all the
displayed small restrictions determines the entire pullback.  This gives a homomorphism
\(r\colon\struct X\to\struct B\); tuples outside all component predicates are
isolated and may be sent to an arbitrary element of \(B\).

Conversely, send a completion \(\struct D\) of a chart \(\struct C\) to a
tuple whose atlas coordinates are the restrictions of \(\mathcal{J}_2(\struct D)\),
whose tag coordinates realize \(\operatorname{Code}_{\struct C}\), and whose
unused coordinates are filled
arbitrarily.  Choosing the tags and padding once and for all gives a
homomorphism \(s\colon\struct B\to\struct X\) with
\(r\circ s=\operatorname{id}_B\).  Hence \(\struct A\) pp-constructs
\(\struct B\).  Interchanging \(\Phi\) and \(\Psi\) gives the converse
pp-construction.
\end{proof}

\subsection{The case of betweenness} \label{sec:betweenness-no-bounded-width}
 
Fix a finite universal axiomatization \(\Phi_{\mathrm{Betw}}\) whose finite
models form the age of \((\mathbb Q;\operatorname{Betw})\); finite postulate
systems for linear betweenness go back to Huntington and
Kline~\cite{HuntingtonKlineBetweenness}.
Set $
  \struct{T}_{\mathrm{Betw}}
  \coloneqq \T(\Theta(\Phi_{\mathrm{Betw}})).
$

Although the completion template can depend on the chosen
presentation through its cutoff, the proof below does not. The
compatibility axiom introduced by \(\Gamma\) for the ternary relation
\(\operatorname{Betw}\) uses $6$ variables, so the cutoff is at
least $6$. All source charts used below have at most $4$ elements and
therefore occur in every such template; their completion fibres depend only
on the finite model class of \(\Phi_{\mathrm{Betw}}\).

\begin{proposition} \label{prop:betweenness}
The polymorphism clone of $\struct{T}_{\mathrm{Betw}}$ does not satisfy any
non-trivial height-1 condition, and hence
$\struct{T}_{\mathrm{Betw}}$ does not have bounded width.
\end{proposition}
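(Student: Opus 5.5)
We plan to show that $\struct T_{\mathrm{Betw}}$ pp-constructs the triangle $\struct K_3$. Since $\Pol(\struct K_3)$ satisfies no non-trivial height-1 condition, \cref{prop:pp-height-1-transfer} then shows that $\Pol(\struct T_{\mathrm{Betw}})$ satisfies none either. In particular the 3-4-WNU condition fails, and \cref{thm:bounded-width-known} gives unbounded width. (The domain is nonempty, since one-element charts have completions.) Everything will be done with source charts on at most $4$ elements. As remarked before the statement, these charts and their completion fibres are present in every presentation, so the argument does not depend on the chosen axiomatization.

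The first step is to find a gadget chart with a genuine three-way choice. We take a source chart $\struct C$ on elements $a,b,c,d$. The elements $a,b$ lie in $L\cap R$, $c$ lies only in $L$, and $d$ lies only in $R$. The left side $\{a,b,c\}$ carries $\operatorname{Betw}(a,b,c)$ and the right side $\{a,b,d\}$ carries $\operatorname{Betw}(a,b,d)$, with no congruence identifications. The only free pair is $\{c,d\}$. In a completion, the primed reduct is a model of $\Gamma(\Phi_{\mathrm{Betw}})$, so its quotient is a finite betweenness structure, i.e.\ a line up to reflection. Both $c$ and $d$ lie beyond $b$ as seen from $a$. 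The completions should therefore be exactly three: $c$ strictly between $b$ and $d$, $d$ strictly between $b$ and $c$, or $E'(c,d)$, i.e.\ $c$ and $d$ identified. We will verify this by a finite case analysis against the betweenness axioms, using \cref{thm:completion-csp} to pass between completions of $\struct C$ and its atlas. The crucial point is that the betweenness-only signature has no order to break the symmetry between $c$ and $d$ (contrast \cref{lem:canonical-linearization}). So the fibre $\mathcal E_{\struct C}$, or its image on the two-element subchart $\{c,d\}$, is a three-element set $\{\lambda,\mu,\varepsilon\}$ that is pp-definable as a unary relation via $U$ and $\operatorname{Res}$.

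The second step, which we expect to be the main obstacle, is to pp-define the disequality relation on $\{\lambda,\mu,\varepsilon\}$ using only the binary restriction relations. Our first attempt is to glue two copies of the gadget along a shared chart. Consider a larger $4$-element chart in which $c,d$ play the roles above and a further element takes part in two overlapping gadgets. Its fibre, projected via two restriction relations onto two copies of the $\{c,d\}$-subchart, should give a binary relation on $\{\lambda,\mu,\varepsilon\}^2$. We aim for this relation to be a permutation relation without fixed points, for instance the reflection swapping $\lambda,\mu$ combined with a shift through $\varepsilon$, or else directly $\neq$. If a single gadget does not suffice, we will instead compose several such relations by existential projection, which stays within pp-definitions. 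The target is a relation $N$ with $\{\lambda,\mu,\varepsilon\}^2\setminus N$ equal to the diagonal.

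Finally, $(\{\lambda,\mu,\varepsilon\};N)$ is then a one-dimensional pp-power of $\struct T_{\mathrm{Betw}}$ isomorphic to $\struct K_3$, which completes the reduction described at the start. Should the disequality gadget resist, the fallback is to show directly that every polymorphism restricted to the three-element fibre is essentially a projection. For this we would use the reflection symmetry together with the binary relations available, and conclude via \cref{thm:siggers} that no Siggers polymorphism exists.
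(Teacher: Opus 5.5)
\textbf{There is a genuine gap.} Your outer frame matches the paper's last step: pp-construct a finite structure whose clone is height-1 trivial, then apply \cref{prop:pp-height-1-transfer}. But all of the substance lies in your second step, which you state only as an aim, and the gadget you propose for it cannot work.

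Your count of three completions for the four-point chart is right. However, the two-point subchart on $\{c,d\}$ has only two completions, identified or not. Betweenness is strict and ternary, so no tuple over two points carries information. Hence $\lambda$ and $\mu$ have the same restriction there, contrary to your claim that this image is three-element. Any relation obtained by projecting onto copies of that subchart therefore cannot separate $\lambda$ from $\mu$, let alone give $\neq$ on $\{\lambda,\mu,\varepsilon\}$.

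Nor is there a symmetry between $c$ and $d$ to exploit. They are separated by the source markers ($c\in L\setminus R$, $d\in R\setminus L$), and $a,b$ are separated by $\operatorname{Betw}(a,b,c)$. So your chart is rigid and supplies no nontrivial self-restriction relation $\operatorname{Res}_{\struct C,\struct C,g}$. Reversing the line acts trivially on completions, so it does not help either.

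In fact your chart fixes the ray from $a$ through $b$, so its fibre is oriented. Consider the five-point chart with $a<b<c<c'$ on the left and $a<b<d$ on the right. It restricts onto two copies of your chart (on $\{a,b,c,d\}$ and $\{a,b,c',d\}$) and pp-defines the relation $\{(\mu,\mu),(\varepsilon,\mu),(\lambda,\mu),(\lambda,\varepsilon),(\lambda,\lambda)\}$. This is a monotone relation for the chain $\mu<\varepsilon<\lambda$, the behaviour of a linear-order fibre rather than a symmetric one. Nothing guarantees that $\neq$ is one-dimensionally pp-definable on this particular fibre at all. Your fallback, that every polymorphism restricted to the fibre is essentially a projection, merely restates the goal.

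\textbf{The missing idea} is to use charts that do not fix an orientation and therefore have nontrivial automorphisms, so that the endpoint symmetry of betweenness becomes pp-definable structure. The paper takes three charts, all with equality congruence and empty betweenness, which is legal because each side has at most two classes:
$\struct A$ with left points $0,1$ and right point $2$;
$\struct B$ with left point $0$ and right points $1,2$;
$\struct C$ with left points $0,1$ and right points $2,3$.

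The argument then runs as follows.
\begin{enumerate}[label=\textup{(\arabic*)}]
  \item The transpositions $\alpha=(0\ 1)$ and $\beta=(1\ 2)$ are chart automorphisms. Their self-restriction relations give pp-definable graphs $G_\alpha,G_\beta$ of permutations of the five-element fibres, with unique loops at $a_2$ and $b_0$.
  \item The four three-point subcharts of $\struct C$ give the $4$-ary relation $\widehat R$.
  \item Together these pp-define the strict completions $A_*,B_*$.
  \item The resulting six-element structure $\struct H$ is shown to be a rigid core. An explicit case analysis on $f(a_0,a_0,a_0,a_1)$ then rules out a Siggers polymorphism.
  \item \cref{thm:siggers} and \cref{prop:pp-height-1-transfer} finish the proof.
\end{enumerate}
No direct pp-construction of $\struct K_3$ is needed.
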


\begin{proof}[Proof of Proposition~\ref{prop:betweenness}]
Recall that betweenness is quantifier-free definable from a strict linear
order, whereas the order cannot be recovered from betweenness.  We shall
pp-construct from \(\struct{T}_{\mathrm{Betw}}\) an auxiliary finite structure
\(\struct{H}\) which has no Siggers polymorphism.

We consider the following source charts. The
chart \(\struct{A}\) has left points \(0,1\) and right point \(2\); the chart
\(\struct{B}\) has left point \(0\) and right points \(1,2\); and the chart
\(\struct{C}\) has left points \(0,1\) and right points \(2,3\). In all three
charts, the unprimed congruence is equality and the unprimed betweenness relation is empty. 
Each side therefore contains at most two congruence classes, so its quotient
has no strict betweenness triple; hence these empty unprimed betweenness
relations are compatible with the source sentence. By contrast, three
distinct congruence classes on one side would force a betweenness tuple.

Let
$
  \alpha \coloneqq (0\ 1) $ and $ 
  \beta \coloneqq (1\ 2)
$
be the nonidentity automorphisms of \(\struct{A}\) and \(\struct{B}\),
respectively. Pullback along these automorphisms induces permutations of the
corresponding completion fibres. Their graphs are pp-definable
binary relations of the completion template:
\begin{align*}
  G_\alpha(x,x')&\coloneqq
    U_{\struct A}(x)\wedge U_{\struct A}(x')\wedge
    \operatorname{Res}_{\struct A,\struct A,\alpha}(x,x'),\\
  G_\beta(y,y')&\coloneqq
    U_{\struct B}(y)\wedge U_{\struct B}(y')\wedge
    \operatorname{Res}_{\struct B,\struct B,\beta}(y,y').
\end{align*}

The four three-element subcharts of \(\struct{C}\) define the pp-formula
\begin{equation}
\begin{aligned}
  \widehat R(x_0,x_1,y_0,y_1)\quad\Longleftrightarrow\quad
  \exists z\;\bigl(&U_{\struct C}(z)
  \wedge \operatorname{Res}_{\struct A,\struct C,\iota_{012}}(x_0,z)
  \wedge \operatorname{Res}_{\struct A,\struct C,\iota_{013}}(x_1,z)\\
  &{}\wedge \operatorname{Res}_{\struct B,\struct C,\iota_{023}}(y_0,z)
  \wedge \operatorname{Res}_{\struct B,\struct C,\iota_{123}}(y_1,z)\bigr),
\end{aligned}
\label{eq:betw-four-restrictions}
\end{equation}
where \(\iota_S\) denotes the increasing enumeration of \(S\).

The completion fibres over \(\struct{A}\) and \(\struct{B}\) each contain $5$
completions. A completion is called \emph{strict} if its congruence is
equality, that is, if no two points of the source chart are identified.
Denote the three strict completions over \(\struct A\) by \(a_0,a_1,a_2\)
and those over \(\struct B\) by \(b_0,b_1,b_2\), where the subscript is the
middle point. In each fibre there are also two nonstrict completions in which
the unique point on one side is identified with a point on the other side.
These cannot simply be discarded, but they can be excluded by pp-formulas
inside the full completion template. Define
\begin{align}
  F_{\struct A}(x)&\coloneqq G_\alpha(x,x),\qquad
  F_{\struct B}(y)\coloneqq G_\beta(y,y),\notag\\
  A_*(x)  \coloneqq U_{\struct A}(x)  &\wedge\exists x_1,y_0,y_1\,
       \bigl(\widehat R(x,x_1,y_0,y_1)\wedge F_{\struct B}(y_1)\bigr)\notag\\
    & \wedge\exists x'_1,y'_0,y'_1\,
       \bigl(\widehat R(x,x'_1,y'_0,y'_1)\wedge F_{\struct B}(y'_0)\bigr),\notag\\
  B_*(y) \coloneqq U_{\struct B}(y) & \wedge\exists x_0,x_1,y_1\,
       \bigl(\widehat R(x_0,x_1,y,y_1)\wedge F_{\struct A}(x_0)\bigr)\notag\\
    & \wedge\exists x'_0,x'_1,y'_1\,
       \bigl(\widehat R(x'_0,x'_1,y,y'_1)\wedge F_{\struct A}(x'_1)\bigr).
  \label{eq:betw-strict-fibers}
\end{align}
The explicit calculation is recorded in
\cref{lem:betweenness-completion-certificate}. It shows that
\[
 F_{\struct A}=\{a_2\},\qquad F_{\struct B}=\{b_0\},\qquad
 A_*=\{a_0,a_1,a_2\},\qquad B_*=\{b_0,b_1,b_2\}.
\] 
Restrict \(\widehat R\) to \(A_*^2\times B_*^2\), and denote the resulting
relation by \(R\). Explicitly,
\begin{equation}
\begin{aligned}
R=\{& (a_0,a_0,b_1,b_1),(a_0,a_0,b_2,b_2),
       (a_0,a_1,b_0,b_0),(a_0,a_2,b_0,b_2),\\
    & (a_1,a_0,b_0,b_0),(a_1,a_1,b_1,b_1),
       (a_1,a_1,b_2,b_2),(a_1,a_2,b_2,b_0),\\
    & (a_2,a_0,b_0,b_1),(a_2,a_1,b_1,b_0),
       (a_2,a_2,b_1,b_2),(a_2,a_2,b_2,b_1)\}.
\end{aligned}
\label{eq:betw-obstruction-relation}
\end{equation}
Let the same symbols \(G_\alpha\) and \(G_\beta\) denote the restrictions of
the previously defined graph relations to \(A_*^2\) and \(B_*^2\).
Explicitly, we have 
\[
 G_\alpha=\{(a_0,a_1),(a_1,a_0),(a_2,a_2)\} \quad \text{and} \quad
 G_\beta=\{(b_0,b_0),(b_1,b_2),(b_2,b_1)\}.
\]
Set $H\coloneqq A_*\cup B_* =\{a_0,a_1,a_2,b_0,b_1,b_2\}$.  We now define the relational structure
\[
  \struct{H}\coloneqq
  (H;A_*,B_*,G_\alpha,G_\beta,R).
\]

All relations of \(\struct{H}\) have pp-definitions in
\(\struct{T}_{\mathrm{Betw}}\): the two sorts are given by
\eqref{eq:betw-strict-fibers}, \(G_\alpha\) and \(G_\beta\) are the
corresponding restrictions of the pp-definable graph relations above, and
\(R\) is the restriction of \(\widehat R\) to
\(A_*^2\times B_*^2\). These formulas
define a one-dimensional pp-power whose nonisolated part is
\(\struct{H}\). Mapping all remaining isolated elements to an arbitrary
element of \(H\) gives a homomorphism onto \(\struct{H}\), while the inclusion
of \(\struct{H}\) gives a homomorphism in the opposite direction. Therefore
$
  \struct{T}_{\mathrm{Betw}}$  pp-constructs $\struct{H}$.

We now show that \(\struct{H}\) has no Siggers polymorphism. 
First observe
that \(\struct{H}\) is a rigid core. Any endomorphism preserves
\(G_\alpha\) and \(G_\beta\), whose only loops are
\((a_2,a_2)\) and \((b_0,b_0)\), respectively. It therefore fixes
\(a_2\) and \(b_0\). Now apply the endomorphism to
\((a_2,a_0,b_0,b_1)\in R\). This is the unique tuple of \(R\) whose first
coordinate is \(a_2\) and whose third coordinate is \(b_0\); hence the
endomorphism also fixes \(a_0\) and \(b_1\). Preservation of
\(G_\alpha(a_0,a_1)\) and \(G_\beta(b_1,b_2)\) then shows that it fixes
\(a_1\) and \(b_2\). Thus it is the identity on \(H\).

Suppose, towards a contradiction, that \(s\) is a four-ary Siggers
polymorphism of \(\struct{H}\). 
Its diagonal
$
  d(x)=s(x,x,x,x)
$
is an endomorphism of \(\struct{H}\). Rigidity gives \(d=\operatorname{id}_H\),
so \(s\) is idempotent. 
Let
$f\colon A_*^4\to A_*$ and $g\colon B_*^4\to B_*$
be the restrictions of \(s\) to $A_*$ and $B_*$, respectively. Since \(s\) preserves
\(G_\alpha\) and \(G_\beta\), the operation \(f\) preserves \(G_\alpha\)
and the operation \(g\) preserves \(G_\beta\). Set
\begin{align*}
  p&=(a_0,a_0,b_1,b_1),&
  q&=(a_0,a_0,b_2,b_2),&
  r&=(a_0,a_1,b_0,b_0),\\
  t&=(a_0,a_2,b_0,b_2),&
  u&=(a_2,a_0,b_0,b_1),&
  v&=(a_2,a_2,b_1,b_2),
\end{align*}
all of which belong to \(R\), then set
 $ 
  a\coloneqq f(a_0,a_0,a_0,a_1)$ and 
  $a'\coloneqq f(a_0,a_0,a_0,a_2)$. 
We derive a contradiction in each of the three possible cases for \(a\).

\begin{itemize}
\item Suppose that \(a=a_2\). Applying \((f,f,g,g)\) componentwise to
\((p,p,p,r)\) gives
$
  (a_0,a_2,c,c)
$
for some \(c\in B_*\). No such tuple belongs to \(R\).

\item Suppose that \(a=a_1\). Applying the operation to
\((p,p,p,q)\) and \((p,p,p,r)\) gives
$
  g(b_1,b_1,b_1,b_2)\in\{b_1,b_2\} $ and $
  g(b_1,b_1,b_1,b_0)=b_0.
$
The input \((p,p,p,t)\) then forces
$
  a'=a_2
  \quad\text{and}\quad
  g(b_1,b_1,b_1,b_2)=b_2,
$
while \((p,p,q,r)\) forces
$
  g(b_1,b_1,b_2,b_0)=b_0.
$
The Siggers identity yields
$
  g(b_1,b_1,b_2,b_1)=g(b_1,b_1,b_1,b_2)=b_2.
$
Applying the operation to \((p,p,q,u)\) now produces
$
  (a_2,a_0,b_0,b_2)\notin R,
$
a contradiction.
\item Suppose that \(a=a_0\). The inputs \((p,p,p,r)\) and
\((p,p,p,q)\) imply, respectively,
\[
  g(b_1,b_1,b_1,b_0)\in\{b_1,b_2\},\qquad
  g(b_1,b_1,b_1,b_2)\in\{b_1,b_2\}.
\]
The input \((p,p,p,t)\) consequently forces \(a'=a_0\). Applying the
operation to \((p,q,p,p)\) gives
$ 
  c\coloneqq g(b_1,b_2,b_1,b_1)\in\{b_1,b_2\}.
$ 
The Siggers identity gives
$$
  g(b_1,b_2,b_1,b_2)
  =
  g(b_2,b_1,b_2,b_2).
$$
Moreover, each of
$ 
 (b_1,b_2)$, $(b_2,b_1)$, $(b_1,b_2)$, $(b_1,b_2)$ 
belongs to \(G_\beta\). Since \(g\) preserves \(G_\beta\), it follows that
$ 
 (c,g(b_2,b_1,b_2,b_2)) \in G_\beta.
$
The explicit description of \(G_\beta\) therefore shows that
\(g(b_2,b_1,b_2,b_2)\neq c\).
Finally, applying the operation to \((p,q,p,v)\) produces
$
 (a_0,a_0,c,g(b_2,b_1,b_2,b_2) )\notin R,
$
again a contradiction.
\end{itemize}

Thus \(\struct H\) has no Siggers polymorphism. By
\cref{thm:siggers}, its polymorphism clone satisfies no non-trivial
height-1 condition. Since \(\struct T_{\mathrm{Betw}}\) pp-constructs
\(\struct H\), \cref{prop:pp-height-1-transfer} implies that
\(\Pol(\struct T_{\mathrm{Betw}})\) satisfies no non-trivial height-1
condition either. In particular, it cannot satisfy the 3-4-WNU condition and
therefore does not have bounded width.
\end{proof}

\begin{remark}
The preceding proposition depends essentially on forgetting the orientation of
the underlying linear order. Indeed, the no-Siggers proof uses the
three possible strict completions of each three-point chart and the
nontrivial chart automorphisms
\[
\alpha=(0\ 1)\in\operatorname{Aut}(\struct A)
\qquad\text{and}\qquad
\beta=(1\ 2)\in\operatorname{Aut}(\struct B).
\]
Their actions on the completion fibres give the graph relations
\(G_\alpha\) and \(G_\beta\). The unique loops of these relations define the
distinguished completions \(a_2\) and \(b_0\); they are subsequently used to
define \(A_*\) and \(B_*\), prove that \(\struct H\) is a rigid core, and
constrain a hypothetical Siggers polymorphism.

This mechanism disappears if betweenness is expanded by a compatible strict
linear order. For example, once the two left points of \(\struct A\) are
ordered, their transposition \(\alpha\) no longer preserves the source chart;
similarly, \(\beta\) no longer preserves the ordered chart \(\struct B\).
Consequently, \(G_\alpha\) and \(G_\beta\) are no longer available as
self-restriction relations, and the above pp-construction of \(\struct H\) breaks
down at its first step. Intuitively, betweenness remembers which point is
between the other two but forgets which endpoint is on the left. The proof
turns this endpoint symmetry into the algebraic obstruction, whereas adding
the order fixes the orientation coherently.

There is no conflict here with the fact that expanding a fixed relational
structure can only reduce its polymorphism clone. Adding the order to the
original universal sentence does not merely expand
\(\struct T_{\mathrm{Betw}}\); it changes the source charts, completion fibres,
and restriction relations used to construct the completion template. In
fact, the ordered expansion is quantifier-free interdefinable with strict
linear order and therefore has a bounded-width completion template by the
\cref{prop:linear-orders-completion-pp,prop:qf-interdefinition-completion-pp}.
\end{remark}

Proposition~\ref{prop:betweenness} therefore shows that quantifier-free
interdefinability in
Proposition~\ref{prop:qf-interdefinition-completion-pp} cannot be weakened to
a one-way quantifier-free reduct.

\section{Optimizing known lower bounds}
\label{sec:optimizing-lower-bounds}

In the present section, we optimize the lower bounds from \cite[conference version]{RydvalInsideOut} for the hardness of semantic Horn ADP to match the upper bounds obtained here.
The tree-automaton construction below closes the gap in the bounded-arity
regime.  An exponential corridor-game construction then closes the
unbounded-arity gap.

 The two lower bounds use the same general mechanism. Recall that a Horn clause is complete if every two distinct variables occurring in it occur together in some premise atom. By \cref{lem:complete-free-union}, complete clauses are harmless for amalgamation: the union of two models over a common substructure still satisfies them, since an assignment spanning both sides would require a premise tuple that is present in neither side. 
 
 We therefore add suitable padding atoms so that all transition and propagation clauses are complete.
The only deliberately incomplete clauses are the initial \emph{seed clauses}. Their premise Gaifman graph is complete except for the edge between two distinguished variables $(y_1,y_2)$, called the \emph{poles}. In the critical amalgamation diagram, one pole lies in each side, while the remaining variables lie in the common part. Thus neither side alone activates the construction involving both poles, but their union does. The padding atoms ensure conversely that every newly created cross-side tuple has precisely this form: its poles lie in opposite sides and all computational data lie in the intersection. Horn closure can then simulate the relevant bottom-up computation between the poles—parsing a tree in the bounded-arity reduction and constructing the winning region of a corridor game in the unbounded-arity reduction. A final complete clause derives a contradiction exactly when the encoded instance supplies the desired obstruction.

\subsection{Bounded arity}

The bounded-arity gap can be closed by replacing words in the
$\ComplexityClass{PSpace}$-hardness proof of
\cite[conference version, Theorem~5]{RydvalInsideOut} via a reduction from the
universality problem for regular tree languages.
We use
the standard $\ComplexityClass{ExpTime}$-complete universality problem for
nondeterministic bottom-up tree automata~\cite{SeidlTreeAutomata}. 

Let $\mathbb{N}^*$ denote the set of all finite words over $\mathbb{N}$.
A \emph{ranked alphabet} is a finite set \(\Sigma\) equipped with an arity
map
$
  \ar\colon\Sigma\to\mathbb N.
$
A finite \emph{\(\Sigma\)-tree} is a map \(t\colon D\to\Sigma\), where
\(D\subseteq\mathbb N^*\) is finite, nonempty, and prefix-closed, such that,
for every \(u\in D\), we have 
$
  ui\in D$ if and only if  
  $0\leq i<\ar(t(u))$. 
Thus a node labelled by \(f\in\Sigma\) has exactly \(\ar(f)\) ordered
children. The \emph{root} is the empty word \(\varepsilon \in \mathbb{N}^{*}\). We write
\(T_\Sigma\) for the set of all finite \(\Sigma\)-trees.

A \emph{nondeterministic bottom-up tree automaton} (NBTA) is a tuple
$
  \mathcal A=(S,\Sigma,\Delta,F),
$
where \(S\) is a finite set of states, \(F\subseteq S\) is the set of
accepting states, and \(\Delta\) is a finite set of transitions of the form
$
  f(s_0,\ldots,s_{k-1})\rightarrow s$ for
  $f\in\Sigma, $  $\ar(f)=k,$ and  $s_0,\ldots,s_{k-1},s\in S.
$
For \(k=0\), such a transition is written \(f\to s\).

A \emph{run} of \(\mathcal A\) on a tree \(t\colon D\to\Sigma\) is a map
\(\rho\colon D\to S\) such that, for every \(u\in D\), writing
\(f=t(u)\) and \(k=\ar(f)\), the set of transitions $\Delta$ contains
$$
  f\bigl(\rho(u0),\ldots,\rho(u(k-1))\bigr)
  \rightarrow \rho(u). 
$$
The run is accepting if \(\rho(\varepsilon)\in F\). The language recognized
by \(\mathcal A\) is
$$
  L(\mathcal A)
  =
  \{t\in T_\Sigma \mid
    \mathcal A\text{ has an accepting run on }t\}.
$$
A tree language \(L\subseteq T_\Sigma\) is called
\emph{regular} if it is recognized by an NBTA,
that is, if \(L=L(\mathcal A)\) for some such automaton \(\mathcal A\).
The automaton \(\mathcal A\) is \emph{universal} if
$
  L(\mathcal A)=T_\Sigma.
$

Universality of NBTAs is
\(\ComplexityClass{ExpTime}\)-complete~\cite{SeidlTreeAutomata}.
In fact, it remains
$\ComplexityClass{ExpTime}$-complete when the ranked alphabet has maximum rank
two.  To see why the rank restriction is harmless, encode every node of rank
greater than two by a chain of binary nodes.  The image of this encoding is a
regular binary tree language.  On such a chain, an NBTA first
guesses an original transition
\(
  \delta=(f(s_0,\ldots,s_{k-1})\to s)
\)
and then uses auxiliary states \((\delta,j)\) to verify the required child
states one at a time along the chain.  There is one such state for each
position of each explicitly listed transition, so their number is linear in
the transition table.  Take the union of the simulating automaton with a
polynomial-size deterministic automaton accepting every malformed
encoding.  The resulting binary automaton is universal precisely when the
original automaton is universal. 
%

Let $\mathcal A=(S,\Sigma,\Delta,F)$ be an NBTA.
We construct an equality-free universal Horn sentence
\(\Phi_{\mathcal A}\).  Its signature contains the unary symbols
  $L,\ R,\ I,\ T$, $A_f$ $(f\in\Sigma)$, the
binary symbols 
$
D,C_0,C_1$, $P_s$ $(s\in S)$,
and one ternary
symbol \(Q\).  The symbols \(A_f,C_0,C_1\) describe labelled ordered trees;
\(D\) is a padding relation; \(P_s(y,x)\) says that the subtree rooted at
\(x\) admits a run with root state \(s\), anchored at the pole  \(y\); and
\(Q(y_1,y_2,x)\) is the two-pole predicate used to parse an arbitrary tree.

We write \(\Phi_{\mathcal A}=\Phi_{\mathrm{acc}}\wedge
\Phi_{\mathrm{all}}\).  For every transition
\(\delta=(f(s_0,\ldots,s_{k-1})\to s)\), the first part $\Phi_{\mathrm{acc}}$ contains
\begin{equation}
\begin{split}
 L(y)\wedge I(y)\wedge A_f(x)\wedge D(y,x)
 {}\wedge\bigwedge\nolimits_{i<k}
   \bigl(P_{s_i}(y,x_i)\wedge C_i(x,x_i)\bigr)\\
 {}\wedge
 \begin{cases}
   D(x_0,x_1),&k=2,\\
   \top,&k<2
 \end{cases}
 \quad\Rightarrow\quad P_s(y,x).
\end{split}
\label{eq:tree-acc-rule}
\end{equation}
For every \(s\in F\), it also contains
\begin{equation}
  I(y)\wedge T(x)\wedge P_s(y,x)\Rightarrow\bot.
  \label{eq:tree-acc-final}
\end{equation}

The second part $\Phi_{\mathrm{all}}$ contains, for every nullary \(a\in\Sigma\), the \emph{seed clause}
\begin{equation}
\begin{split}
 L(y_1)\wedge R(y_2)\wedge I(y_1)\wedge A_a(x)
 {}\wedge D(y_1,x)\wedge D(y_2,x)
 \Rightarrow Q(y_1,y_2,x).
\end{split}
\label{eq:tree-parse-seed}
\end{equation}
For every \(f\in\Sigma\) of positive rank \(k\), it contains the \emph{recursive
clause}
\begin{equation}
\begin{split}
 A_f(x)\wedge D(y_1,x)\wedge D(y_2,x)
 {}\wedge\bigwedge\nolimits_{i<k}
   \bigl(Q(y_1,y_2,x_i)\wedge C_i(x,x_i)\bigr)\\
 {}\wedge
 \begin{cases}
   D(x_0,x_1),&k=2,\\
   \top,&k=1
 \end{cases}
 \quad\Rightarrow\quad Q(y_1,y_2,x),
\end{split}
\label{eq:tree-parse-rule}
\end{equation}
and finally
\begin{equation}
  Q(y_1,y_2,x)\wedge T(x)\Rightarrow\bot.
  \label{eq:tree-parse-final}
\end{equation}
Here and in \eqref{eq:tree-acc-rule} an occurrence of \(\top\) denotes the
empty conjunction.
The construction is polynomial in \(\mathcal A\), and \(Q\) is its only
ternary relation symbol.

\begin{lemma}\label{lem:tree-acc-complete}
Every clause of \(\Phi_{\mathrm{acc}}\) is complete.  Every non-seed clause
of \(\Phi_{\mathrm{all}}\), including
\eqref{eq:tree-parse-final}, is complete.  The only non-complete clauses are
the seed clauses \eqref{eq:tree-parse-seed}, whose premise Gaifman graph is
missing precisely the edge \((y_1,y_2)\).
\end{lemma}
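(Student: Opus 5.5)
The lemma is a purely syntactic check, carried out clause by clause. For each clause I list the variables occurring in it and the premise atoms that connect them. Recall that completeness means every two distinct variables of the clause occur together in some negative atom, i.e.\ in some premise atom. Unary atoms contribute no edges, so only the binary atoms $D$, $C_i$, $P_s$ and the ternary atom $Q$ matter. I will treat the three clause shapes \eqref{eq:tree-acc-rule}, \eqref{eq:tree-parse-rule} and \eqref{eq:tree-parse-seed}, and dispatch the two final clauses directly.

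For the transition clause \eqref{eq:tree-acc-rule}, the variables are $y,x,x_0,\dots,x_{k-1}$ with $k\le 2$. The pairs are covered as follows:
\begin{itemize}
\item $\{y,x\}$ is covered by $D(y,x)$;
\item $\{y,x_i\}$ is covered by $P_{s_i}(y,x_i)$;
\item $\{x,x_i\}$ is covered by $C_i(x,x_i)$;
\item for $k=2$, the only remaining pair $\{x_0,x_1\}$ is covered by the padding atom $D(x_0,x_1)$.
\end{itemize}
For $k<2$ there is no further pair. The positive atom $P_s(y,x)$ introduces no new variables. The final clause \eqref{eq:tree-acc-final} has the two variables $y,x$, joined by $P_s(y,x)$. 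So every clause of $\Phi_{\mathrm{acc}}$ is complete.

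For the recursive clause \eqref{eq:tree-parse-rule}, the variables are $y_1,y_2,x,x_0,\dots,x_{k-1}$. The pairs are covered as follows:
\begin{itemize}
\item $\{y_j,x\}$ is covered by $D(y_j,x)$;
\item the pairs $\{y_1,y_2\}$, $\{y_1,x_i\}$ and $\{y_2,x_i\}$ are all covered by the ternary atom $Q(y_1,y_2,x_i)$; since $k\ge1$ in this clause, at least one such atom exists, so $\{y_1,y_2\}$ is always covered;
\item $\{x,x_i\}$ is covered by $C_i(x,x_i)$;
\item for $k=2$, the pair $\{x_0,x_1\}$ is covered by $D(x_0,x_1)$.
\end{itemize}
The final clause \eqref{eq:tree-parse-final} has premise $Q(y_1,y_2,x)$, which alone covers all three pairs.

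For the seed clause \eqref{eq:tree-parse-seed}, the variables are $y_1,y_2,x$. The pairs $\{y_1,x\}$ and $\{y_2,x\}$ are covered by the $D$-atoms. The only binary or ternary premise atoms are $D(y_1,x)$ and $D(y_2,x)$, because $Q(y_1,y_2,x)$ is the positive conclusion and does not count toward the Gaifman graph. Hence the edge $\{y_1,y_2\}$ is missing, and it is the only missing edge. Together with the fact that $\Phi_{\mathrm{acc}}\wedge\Phi_{\mathrm{all}}$ consists exactly of these clause schemes, this gives the claimed classification.

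The only delicate point is bookkeeping the indices. The children are $x_0,\dots,x_{k-1}$ and the rank is at most two (the paper reduces to maximum rank two beforehand), so the single padding atom $D(x_0,x_1)$ suffices. In the recursive clause one must also note that $k\ge 1$ guarantees a $Q$-premise joining the two poles; this is exactly what distinguishes it from the seed clause.
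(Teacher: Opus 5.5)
Your proposal is correct and follows the paper's own clause-by-clause check. In \eqref{eq:tree-acc-rule} the pairs are covered by $D(y,x)$, $P_{s_i}(y,x_i)$, $C_i(x,x_i)$ and $D(x_0,x_1)$. In the recursive clause they are covered by the $Q$-premises together with $D(y_j,x)$, $C_i(x,x_i)$ and $D(x_0,x_1)$, and the seed clause lacks only the pole edge $(y_1,y_2)$. Your remarks that the heads add no new vertices and that $k\geq 1$ supplies a $Q$-premise joining the poles only make explicit what the paper leaves implicit.
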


\begin{proof}
For \(k=2\), the six pairs among \(y,x,x_0,x_1\) in
\eqref{eq:tree-acc-rule} occur in the atoms
$ 
  D(y,x)$, $P_{s_0}(y,x_0)$, $P_{s_1}(y,x_1)$, 
  $C_0(x,x_0)$, $C_1(x,x_1)$,  $D(x_0,x_1)$. 
The cases \(k=0,1\), as well as \eqref{eq:tree-acc-final}, follow by deleting
the unused child variables.  In \eqref{eq:tree-parse-rule}, each atom
\(Q(y_1,y_2,x_i)\) supplies the three edges on
\(\{y_1,y_2,x_i\}\).  The atoms \(D(y_j,x)\), \(C_i(x,x_i)\), and, in the
binary case, \(D(x_0,x_1)\), supply all remaining pairs.  The assertions about
\eqref{eq:tree-parse-final} and \eqref{eq:tree-parse-seed} are immediate.
\end{proof}

We next make the two computations performed by the Horn clauses explicit.
For \(t\in T_\Sigma\), let \(\operatorname{pos}(t)\) be its set of positions,
let \(f_u\) be the label at \(u\), and choose a variable \(x_u\) for each
position.  Write \(t|_u\) for the subtree rooted at \(u\).  A state \(s\) is
\emph{reachable at \(u\)} if \(\mathcal A\) has a run on \(t|_u\) whose root
state is \(s\).  Set
\begin{align}
 \theta_t(\bar x)
  \coloneqq{}&
  \bigwedge\nolimits_{u\in\operatorname{pos}(t)}A_{f_u}(x_u)
  \wedge
  \bigwedge\nolimits_{\substack{u\in\operatorname{pos}(t)\\i<\ar(f_u)}}
        C_i(x_u,x_{ui}) 
  \wedge
  \bigwedge\nolimits_{\substack{u\in\operatorname{pos}(t)\\\ar(f_u)=2}}
        D(x_{u0},x_{u1}),                                     
 \label{eq:tree-pattern}\\
 \lambda_t(y,\bar x)
  \coloneqq{}&
  L(y)\wedge I(y)\wedge\theta_t(\bar x)
  \wedge\bigwedge\nolimits_{u\in\operatorname{pos}(t)}D(y,x_u),         
 \label{eq:one-pole-tree-pattern}\\
 \eta_t(y_1,y_2,\bar x)
  \coloneqq{}&
  L(y_1)\wedge R(y_2)\wedge I(y_1)\wedge T(x_\varepsilon)
  \wedge\theta_t(\bar x)\notag\\
 &{}\wedge\bigwedge\nolimits_{u\in\operatorname{pos}(t)}
       \bigl(D(y_1,x_u)\wedge D(y_2,x_u)\bigr).
 \label{eq:two-pole-tree-pattern}
\end{align}

\begin{lemma}\label{lem:tree-horn-simulation}
For every \(t\in T_\Sigma\) and \(s\in S\),
\[
 \Phi_{\mathrm{acc}}\models
 \forall y,\bar x\,
   \bigl(\lambda_t(y,\bar x)  \Rightarrow  P_s(y,x_\varepsilon)\bigr)
\]
if and only if \(\mathcal A\) has a run on \(t\) whose root state is \(s\).
Moreover,
\[
 \Phi_{\mathrm{all}}\models
 \forall y_1,y_2,\bar x\,
   \bigl(\eta_t(y_1,y_2,\bar x) \Rightarrow \bot\bigr).
\]
\end{lemma}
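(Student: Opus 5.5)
Both assertions concern Horn entailment, so I would work with the least Herbrand model. For the first claim, take the canonical structure $\struct M_t$ on $\{y\}\cup\{x_u\}$ whose atoms are exactly those of $\lambda_t$, and close it under the clauses of $\Phi_{\mathrm{acc}}$. Since $\Phi_{\mathrm{acc}}$ is equality-free Horn, $\Phi_{\mathrm{acc}}\models\forall y,\bar x(\lambda_t\Rightarrow P_s(y,x_\varepsilon))$ holds exactly when either the closure derives $\bot$ or $P_s(y,x_\varepsilon)$ lies in the closure. The clauses \eqref{eq:tree-acc-final} require $T(x)$. No clause derives $T$, and $\lambda_t$ contains no $T$-atom, so $\bot$ is never derived. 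It therefore suffices to characterize the derived $P$-atoms.

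I would show by bottom-up induction on $u$ that the closure contains $P_s(y,x_u)$ if and only if $s$ is reachable at $u$, and that it contains no $P$-atoms other than those with first argument $y$ and second argument some $x_u$. For the ``if'' direction, note that a run restricted to a child subtree is again a run, so reachability at the children holds by induction. The required premise atoms $L(y)$, $I(y)$, $A_f(x_u)$, $D(y,x_u)$, $C_i(x_u,x_{ui})$ and, for rank two, $D(x_{u0},x_{u1})$ all belong to $\lambda_t$, and the child atoms $P_{s_i}(y,x_{ui})$ are supplied by induction. Hence \eqref{eq:tree-acc-rule} fires.

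For the ``only if'' direction, I would inspect which premise matches can occur in $\struct M_t$. The $D(y,x)$ premise together with $L(y)$ forces the clause variable $y$ to be mapped to the pole and $x$ to some $x_u$. Then $A_f(x_u)$ forces $f=f_u$, and $C_i(x_u,x_i)$ forces $x_i=x_{ui}$, because the $C_i$-atoms of $\theta_t$ are functional. Consequently every derivation step corresponds to a transition applied at node $u$ with child states already derived at the children, and by induction these child states are reachable. The main obstacle is this matching argument: one must rule out degenerate matches, for instance with $y$ mapped to some $x_u$, and this uses that $L$ and $I$ hold only at the pole and $A_f$ only at tree nodes. Carrying out the induction as a least-fixed-point characterization then gives the claim at $u=\varepsilon$.

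For the second claim, I would show that in the closure of the canonical model of $\eta_t$ every position $u$ acquires $Q(y_1,y_2,x_u)$, by induction upward from the leaves. At a leaf, the seed clause \eqref{eq:tree-parse-seed} applies using $L(y_1)$, $R(y_2)$, $I(y_1)$, $A_a(x_u)$, $D(y_1,x_u)$ and $D(y_2,x_u)$. At an inner node, the recursive clause \eqref{eq:tree-parse-rule} applies, using the child $Q$-atoms and the $C_i$- and $D$-atoms present in $\theta_t$. With $Q(y_1,y_2,x_\varepsilon)$ and $T(x_\varepsilon)$ in hand, \eqref{eq:tree-parse-final} derives $\bot$. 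This direction only needs soundness of the derivations, so it is routine.
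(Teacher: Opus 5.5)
Your proposal is correct and follows essentially the same route as the paper: you derive $P_s(y,x_\varepsilon)$ from a run by induction on the tree, and for the converse you close the canonical structure of $\lambda_t$ (with $T$ absent, so no $\bot$) under the positive-head clauses of $\Phi_{\mathrm{acc}}$ and show this least model contains $P_s(y,x_u)$ exactly for the states reachable at $u$. The second claim is likewise handled by deriving $Q$ bottom-up from the seed clauses. Your explicit matching argument (via $L$, $D$, $A_f$, and the functionality of the $C_i$-atoms) spells out what the paper compresses into ``the position variables are distinct and its only $A_f$- and $C_i$-facts are those of $t$''.
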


\begin{proof}
For the first statement, suppose that \(\mathcal A\) has a run on \(t\) with
root state \(s\).  Induction on \(t\), using the transition selected by the
run at each node in \eqref{eq:tree-acc-rule}, derives
\(P_s(y,x_\varepsilon)\).  Conversely, if there is no such run, take the
structure \(\struct D_t\) with domain
\(\{y\}\cup\{x_u\mid u\in\operatorname{pos}(t)\}\) whose relational facts
are precisely the instances displayed in \(\lambda_t\), interpret all other
relations, including \(T\) and every \(P_s\), as empty, and close it under the
positive-head clauses of \(\Phi_{\mathrm{acc}}\).  Since the position
variables are distinct and its only \(A_f\)- and \(C_i\)-facts are those of
\(t\), induction shows that the resulting least model contains
\(P_s(y,x_u)\) exactly for the states reachable at position \(u\).  It is
therefore a countermodel to the displayed implication whenever \(s\) is not
reachable at the root.

For the second statement, \eqref{eq:tree-parse-seed} derives \(Q\) at every
leaf.  Successive applications of \eqref{eq:tree-parse-rule} derive \(Q\) at
each internal node, and \eqref{eq:tree-parse-final} derives \(\bot\) at the
root.
\end{proof}

Only the run-to-entailment direction of the first equivalence is needed
below; the converse records that the Horn simulation derives no spurious root
states.

Whenever we close a finite structure \(\struct U\) under positive-head
clauses, we use the following synchronous closure stages.  Set
\(\struct U^{(0)}=\struct U\), and obtain \(\struct U^{(m+1)}\) from
\(\struct U^{(m)}\) by adding every head tuple of every positive-head clause
whose premise holds in \(\struct U^{(m)}\).  The closure is the first fixed
point of this increasing sequence.

The following lemma is the tree-shaped replacement for the path-extraction
argument in the earlier regular-grammar reduction.  Notice that it concerns
only newly generated \(Q\)-tuples and therefore does not require the input
structures themselves to be trees.

\begin{lemma}\label{lem:cross-q-tree-extraction}
Let \(\struct B_1,\struct B_2\models\Phi_{\mathcal A}\) have the same induced
substructure \(\struct B_0\) on \(B_0=B_1\cap B_2\), and put
\(\struct U=\struct B_1\cup\struct B_2\).  Close \(\struct U\) under the
positive-head clauses \eqref{eq:tree-parse-seed} and
\eqref{eq:tree-parse-rule}, without changing any relation other than \(Q\).
If a tuple \(Q(a,b,c)\) is added during this process, then:
\begin{enumerate}[label=\textup{(\roman*)}]
 \item \label{item:tree1} \(a\) and \(b\) lie in opposite sides outside \(B_0\), and \(c\in B_0\);
 \item \label{item:tree2} there are \(t\in T_\Sigma\) and a map
 \(h\colon\operatorname{pos}(t)\to B_0\), with \(h(\varepsilon)=c\), such that
 the instance of \(\theta_t\) induced by \(h\) holds in \(\struct B_0\), and the following facts hold in \(\struct U\): 
 $ 
   L(a)$, $R(b)$, $I(a)$, and 
   $D(a,h(u))$, $D(b,h(u))$ for all $ u\in\operatorname{pos}(t)$. 
\end{enumerate}
The order of the two sides in \ref{item:tree1} is not prescribed.
\end{lemma}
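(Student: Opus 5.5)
We argue by induction on the synchronous closure stage at which \(Q(a,b,c)\) is first added, and prove \ref{item:tree1} and \ref{item:tree2} together. Tuples already present in \(\struct U\) are not "added", so every added tuple arises from a seed clause \eqref{eq:tree-parse-seed} or a recursive clause \eqref{eq:tree-parse-rule} whose premise holds at the previous stage. Throughout, closure only changes \(Q\); hence all atoms over \(L,R,I,A_f,D,C_i\) are read off in \(\struct U\) itself. A fact in \(\struct U\) holds in \(\struct B_1\) or in \(\struct B_2\), and a fact on a tuple inside \(B_0\) holds in \(\struct B_0\) because the two sides induce the same substructure there.

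\emph{Seed case.} The premise gives \(L(a),R(b),I(a),A_a(c),D(a,c),D(b,c)\) in \(\struct U\). We show that \(a,b\) lie on opposite sides outside \(B_0\) and that \(c\in B_0\).
\begin{itemize}
\item Suppose \(a,b\) lie in a common side \(B_j\). The fact \(D(a,c)\) must then hold in some side containing both \(a\) and \(c\), and likewise for \(D(b,c)\). The aim is to conclude that the whole seed premise holds in a single \(\struct B_j\), or else that \(c\in B_0\) and both facts transfer into \(\struct B_j\). Then \(Q(a,b,c)\) already holds in \(\struct B_j\), since \(\struct B_j\) satisfies the seed clause. This contradicts that the tuple was newly added.
\item Hence \(a\) and \(b\) lie on different sides, and neither lies in \(B_0\). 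Since \(c\) is \(D\)-adjacent to both, \(c\) lies in both sides, i.e.\ \(c\in B_0\).
\item For \ref{item:tree2} take \(t=a\), the one-node tree, with \(h(\varepsilon)=c\). Then \(\theta_t\) reduces to \(A_a(c)\), which holds in \(\struct B_0\).
\end{itemize}

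\emph{Recursive case.} Let the clause for \(f\) of rank \(k\) fire with children \(c_0,\dots,c_{k-1}\). Each \(Q(a,b,c_i)\) holds at the previous stage. If it were already in \(\struct U\), it would lie in one \(\struct B_j\). I would first show that this situation forces a contradiction or reduces to the case where all children are newly added. The cleanest route is to argue as in the seed case: if some premise instance lives entirely in one side \(\struct B_j\), then \(\struct B_j\models\Phi_{\mathcal A}\) already contains the head, so the head was not new. So assume some premise \(Q(a,b,c_i)\) is new. By induction, \(a,b\) are then on opposite sides outside \(B_0\) and \(c_i\in B_0\).
\begin{itemize}
\item The facts \(D(a,c)\) and \(D(b,c)\) place \(c\) in both sides, so \(c\in B_0\).
\item For any other child \(c_j\), the atom \(Q(a,b,c_j)\) cannot hold in a single side, since \(a,b\) do not share a side. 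So it is also new, and induction gives \(c_j\in B_0\) together with a tree \(t_j\) and map \(h_j\).
\item Set \(t=f(t_0,\dots,t_{k-1})\) and define \(h\) by \(h(\varepsilon)=c\) and \(h(iu)=h_i(u)\).
\item The facts \(A_f(c)\), \(C_i(c,c_i)\) and \(D(c_0,c_1)\) hold in \(\struct U\) on tuples inside \(B_0\), hence in \(\struct B_0\). Together with the inductive facts, this shows that \(\theta_t\) under \(h\) holds in \(\struct B_0\).
\item The pole facts \(L(a),R(b),I(a)\) and all \(D(a,h(u)),D(b,h(u))\) come from the premise and from induction.
\end{itemize}
Note that \(h\) need not be injective, which the statement permits.

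\emph{Main obstacle.} The delicate step is the base argument excluding that a newly derived tuple has both poles on one side, or a pole in \(B_0\). The difficulty is that facts like \(D(a,c)\) and \(D(b,c)\) may be witnessed in different structures. I would handle this by a careful case split on the locations of \(a,b,c\in B_1\cup B_2\). When \(c\in B_0\) and \(a,b\in B_j\), every premise fact involves only elements of \(B_j\). Each such fact holds in some side containing its entries, and it transfers into \(\struct B_j\) via \(\struct B_0\) when its entries lie in \(B_0\). The case \(a\in B_0\) (or \(b\in B_0\)) reduces to this one, because then both poles lie in a common side. The same transfer argument is what drives the reduction in the recursive case to the situation where all children are new.
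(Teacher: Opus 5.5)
Your skeleton matches the paper's proof. You argue by induction on the closure stage and analyse pole locations in the seed case. In the recursive case you observe that once one child atom $Q(a,b,c_i)$ is new, the poles are cross. Then every other child atom is new as well, $D(a,c)$ and $D(b,c)$ force $c\in B_0$, and the subtrees are glued under a root labelled $f$.

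The gap is a step you only announce: that not all premise $Q$-atoms of \eqref{eq:tree-parse-rule} can already belong to $Q^{\struct U}$. You write ``so assume some premise $Q(a,b,c_i)$ is new'' without having shown it. The route you propose (``argue as in the seed case'') does not carry over.

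Your seed-case device puts all premise elements into one side through their $D$-adjacency to the poles. In the recursive case this gives nothing when $a,b\in B_0$, because such poles belong to both sides. Your remark that an old atom $Q(a,b,c_i)$ ``would lie in one $\struct B_j$'' holds atom by atom. But for $k=2$ the two old atoms $Q(a,b,c_0)$ and $Q(a,b,c_1)$ could a priori lie in different sides, with $c_0\in B_1\setminus B_0$ and $c_1\in B_2\setminus B_0$.

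What excludes this is the padding atom $D(c_0,c_1)$, which no side can contain in that configuration. Together with $C_i(c,c_i)$, $D(a,c)$ and $D(b,c)$, it makes the premise Gaifman graph of the recursive clause complete (\cref{lem:tree-acc-complete}). So if all premise atoms lie in $\struct U$, the argument of \cref{lem:complete-free-union} puts the whole assignment into a single side. That side is a model, so it already contains the head. This is exactly the sentence the paper's proof uses.

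The padding is indispensable here. Suppose $D(x_0,x_1)$ were dropped from \eqref{eq:tree-parse-rule}. Take two models sharing $a,b,c$, with $A_f(c),D(a,c),D(b,c)$ in the common part, $Q(a,b,c_0),C_0(c,c_0)$ only in $\struct B_1$, and $Q(a,b,c_1),C_1(c,c_1)$ only in $\struct B_2$. They would yield a newly derived $Q(a,b,c)$ with both poles in $B_0$, contradicting part~(i). Once you replace ``argue as in the seed case'' by this explicit completeness argument, the rest of your proof goes through as written.
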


\begin{proof}
We proceed by induction on the stage at which \(Q(a,b,c)\) is first added.  Suppose first
that it is added by a seed clause.  Neither pole can lie in \(B_0\).  Indeed,
if, say, \(a\in B_0\) and \(b\notin B_0\), let \(\struct B_i\) be the side
containing \(b\).  The atom \(D(b,c)\) forces \(c\in B_i\).  If both poles
lie in \(B_0\), instead choose a side containing \(c\).  In either case every
premise atom belongs to the chosen side; for tuples contained in \(B_0\),
this uses that the two sides induce the same structure there.  Since that side
satisfies the seed clause, the head would already belong to
\(Q^{\struct U}\), a contradiction.  The case in which only \(b\) lies in
\(B_0\) is symmetric.  If the two poles lay outside \(B_0\) on the same side,
the atoms \(D(a,c)\) and \(D(b,c)\) would again put the entire premise in
that side, giving the same contradiction.  Thus \(a,b\) lie in opposite sides
outside \(B_0\), and the two atoms \(D(a,c)\) and \(D(b,c)\) force
\(c\in B_0\).
The remaining assertions hold for the one-node tree labelled by the nullary
symbol used in the seed clause.

Suppose next that the tuple is added by a recursive clause.  At least one of
its premise \(Q\)-atoms lies outside \(Q^{\struct U}\).  Indeed, the other
premise relations are not changed during the closure, so if all premise
\(Q\)-atoms belonged to \(Q^{\struct U}\), completeness of the recursive
clause would put its head in \(Q^{\struct U}\).  Such a premise tuple was
added at an earlier closure stage.  By the induction hypothesis, it has poles
\(a,b\) in opposite sides and its root lies in \(B_0\).  The atoms
\(D(a,c)\) and \(D(b,c)\) force the new root
\(c\) into \(B_0\).  Every other premise \(Q(a,b,c_i)\) is also a cross tuple
and therefore cannot belong to \(Q^{\struct U}\); by induction it supplies a
tree \(t_i\) rooted at \(c_i\in B_0\).
Attach these trees below a new root labelled by \(f\).  The \(C_i\)-atoms and,
in the binary case, \(D(c_0,c_1)\), supply exactly the remaining conjuncts of
\(\theta_t\).  This also proves the assertions concerning both pole anchors.
\end{proof}

\begin{lemma}\label{lem:tree-automaton-ap-equivalence}
For every NBTA  \(\mathcal A\), we have
\[
  L(\mathcal A)=T_\Sigma
  \quad\Longleftrightarrow\quad
  \fm(\Phi_{\mathcal A})\text{ has the SAP}
  \quad\Longleftrightarrow\quad
  \fm(\Phi_{\mathcal A})\text{ has the AP}.
\]
\end{lemma}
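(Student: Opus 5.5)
We prove the cycle of implications: universality implies the SAP, the SAP trivially implies the AP, and the AP implies universality (argued by contraposition).

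\emph{Universality $\Rightarrow$ SAP.} Let $\struct B_1,\struct B_2\models\Phi_{\mathcal A}$ induce the same substructure $\struct B_0$ on $B_0=B_1\cap B_2$, and put $\struct U=\struct B_1\cup\struct B_2$. We take as amalgam the closure $\struct U^*$ of $\struct U$ under the positive-head clauses \eqref{eq:tree-parse-seed} and \eqref{eq:tree-parse-rule}. These clauses only add $Q$-tuples, so $\Phi_{\mathrm{acc}}$ is unaffected: it mentions no $Q$, all its clauses are complete by \cref{lem:tree-acc-complete}, and hence the argument of \cref{lem:complete-free-union} applies to it. Every $Q$-tuple added by the closure is a cross tuple by \cref{lem:cross-q-tree-extraction}\ref{item:tree1}. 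It therefore involves elements from both $B_1\setminus B_0$ and $B_2\setminus B_0$, so both $\struct B_i$ remain induced substructures and the inclusions are embeddings. The amalgam is strong because the domain is still $B_1\cup B_2$.

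It remains to check that $\struct U^*$ satisfies the Horn clauses. The positive-head parsing clauses hold because we took the closure. Clause \eqref{eq:tree-parse-final} can only fail on an added tuple $Q(a,b,c)$ with $T(c)$. In that case \cref{lem:cross-q-tree-extraction}\ref{item:tree2} gives a tree $t$ and a map $h$ realizing $\theta_t$ in $\struct B_0$, with $L(a)$, $I(a)$, and $D(a,h(u))$ for all positions $u$. Suppose $a\in B_i$. Then every atom of $\lambda_t(a,h(\bar x))$ holds in $\struct B_i$; the $D(a,h(u))$ facts lie in $\struct B_i$ because their tuples avoid $B_{3-i}\setminus B_0$. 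By universality, $t$ has an accepting run with some root state $s\in F$. The run-to-entailment direction of \cref{lem:tree-horn-simulation} then gives $P_s(a,c)$ in $\struct B_i$. Since $T(c)$ also holds there, clause \eqref{eq:tree-acc-final} is violated in $\struct B_i$, which contradicts $\struct B_i\models\Phi_{\mathcal A}$. Note that $h$ need not be injective; this is harmless because the entailment in \cref{lem:tree-horn-simulation} is universally quantified, so it applies to non-injective instances.

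\emph{AP $\Rightarrow$ universality, by contraposition.} Suppose $t\notin L(\mathcal A)$. Take the common part $\struct B_0$ on position variables realizing $\theta_t$ together with $T(x_\varepsilon)$. Let $\struct B_1$ add a pole $y_1$ with $L,I$ and $D(y_1,x_u)$ for all $u$, and let $\struct B_2$ add $y_2$ with $R$ and $D(y_2,x_u)$. Make each side a model by closing under positive-head clauses. On side $1$, the only derivable facts are the $P_s(y_1,x_u)$ for reachable states, by the converse part of \cref{lem:tree-horn-simulation}. No accepting state is reachable at the root, so \eqref{eq:tree-acc-final} is never triggered. No seed clause fires on either side, since each side contains only one of the marker atoms $L(y_1)$ and $R(y_2)$ required by the seed. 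The closures add no tuples inside $B_0$ (the heads $P_s$ all involve the pole), so the two sides still agree on $B_0$.

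Now consider any amalgam, with embeddings $f_1,f_2$ agreeing on $B_0$. It satisfies $\eta_t(f_1(y_1),f_2(y_2),\bar x)$, and \cref{lem:tree-horn-simulation} yields $\bot$, a contradiction. It does not matter whether the amalgam identifies the poles, because the entailment is universally quantified. Hence the AP fails.

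The main obstacle is verifying that the one-sided closures really are models: we must exclude spurious $P_s$ derivations and make sure no other clause fires. This rests on the precise least-model claim in \cref{lem:tree-horn-simulation}, which requires distinct position variables; that holds here since the positions of $t$ are distinct elements. A secondary point is that the closure in the first direction terminates and leaves all relations other than $Q$ unchanged, which is clear because the seed and recursive clauses have $Q$-heads only.
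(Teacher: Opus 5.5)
Your proof is correct and follows essentially the same route as the paper. For a rejected tree you use the same two-pole counterexample diagram; your Horn closure of the first side produces exactly the paper's explicit $\struct B_1$ with the reachable $P_s$-facts. For the converse you use the same $Q$-closure of $\struct B_1\cup\struct B_2$ as a strong amalgam, and you rule out \eqref{eq:tree-parse-final} the same way, via \cref{lem:cross-q-tree-extraction} and an accepting run that contradicts \eqref{eq:tree-acc-final} in the side containing $a$.

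One small gap: you do not say why \eqref{eq:tree-parse-final} cannot already fail on a $Q$-tuple present in the union. The reason is the one-line completeness argument the paper gives: such a tuple and its $T$-label lie in one side. Your remarks on non-injective $h$ and on possibly identified poles are correct, and they spell out points the paper leaves tacit.
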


\begin{proof}
Suppose first that \(L(\mathcal A)\neq T_\Sigma\), and fix
\(t\in T_\Sigma\setminus L(\mathcal A)\). Take one distinct element
\(x_u\) for every position of \(t\), and let the relational facts of
\(\struct B_0\) be precisely the instances displayed in
\(\theta_t(\bar x)\), together with \(T(x_\varepsilon)\).
Extend it to \(\struct B_1\) by a new element \(a\), the facts
$ 
 L(a)$, $I(a)$, and $D(a,x_u)$ for all $u\in\operatorname{pos}(t)$,
and the facts \(P_s(a,x_u)\) for exactly those states \(s\) reachable at
position \(u\).  Extend \(\struct B_0\) to \(\struct B_2\) by a new element
\(b\), the fact \(R(b)\), and all facts \(D(b,x_u)\).  All unmentioned tuples
are absent.  

The transition semantics of \(\mathcal A\) shows that
\(\struct B_1\models\Phi_{\mathrm{acc}}\); clause
\eqref{eq:tree-acc-final} does not fire because no final state is reachable at
the root.  No clause of \(\Phi_{\mathrm{all}}\) fires in either side, since
\(R\) is empty in \(\struct B_1\), while \(L,I,Q\) are empty in
\(\struct B_2\).  Hence \(\struct B_1,\struct B_2\) is an amalgamation diagram
in \(\fm(\Phi_{\mathcal A})\). 
In any amalgam, the images of these elements satisfy the premise
\(\eta_t(a,b,\bar x)\).  By \cref{lem:tree-horn-simulation}, this is
impossible in a model of \(\Phi_{\mathrm{all}}\).  Thus the diagram has no
amalgam, and \(\fm(\Phi_{\mathcal A})\) does not have the AP.
The diagram and the obstruction forced in every prospective amalgam are
illustrated in \cref{fig:tree-automaton-ap-obstruction}.

\begin{figure}[tbp]
\centering
\begin{tikzpicture}[x=1cm,y=1cm]
  \node[horn point,draw=hornblue,fill=hornblue!8] (B1a) at (-1.75,.55) {$a$};
  \node[horn point] (B1e) at (0,1.25) {$x_\varepsilon$};
  \node[horn point] (B10) at (-.78,0) {$x_0$};
  \node[horn point] (B11) at (.78,0) {$x_1$};
  \draw[line width=.65pt,draw=horngray]
    (B1e) -- node[horn label,above left=-1pt] {$C_0$} (B10);
  \draw[line width=.65pt,draw=horngray]
    (B1e) -- node[horn label,above right=-1pt] {$C_1$} (B11);
  \draw[densely dotted,line width=.55pt,draw=horngray]
    (B10) to[bend right=22] node[horn label,below] {$D$} (B11);
  \foreach \v in {B1e,B10,B11}
    \draw[densely dotted,line width=.5pt,draw=hornblue!75] (B1a) -- (\v);
  \node[horn label,text=hornblue,anchor=south] at (B1a.north) {$L,I$};
  \node[horn label,anchor=south] at ($(B1e.north)+(0,.08)$) {$A_f,T$};
  \node[horn label,anchor=north] at ($(B10.south)+(0,-.08)$) {$A_{c_0}$};
  \node[horn label,anchor=north] at ($(B11.south)+(0,-.08)$) {$A_{c_1}$};

  \node[horn point] (B2e) at (3.65,1.25) {$x_\varepsilon$};
  \node[horn point] (B20) at (2.87,0) {$x_0$};
  \node[horn point] (B21) at (4.43,0) {$x_1$};
  \node[horn point,draw=hornaccent,fill=hornaccent!8] (B2b) at (5.40,.55) {$b$};
  \draw[line width=.65pt,draw=horngray]
    (B2e) -- node[horn label,above left=-1pt] {$C_0$} (B20);
  \draw[line width=.65pt,draw=horngray]
    (B2e) -- node[horn label,above right=-1pt] {$C_1$} (B21);
  \draw[densely dotted,line width=.55pt,draw=horngray]
    (B20) to[bend right=22] node[horn label,below] {$D$} (B21);
  \foreach \v in {B2e,B20,B21}
    \draw[densely dotted,line width=.5pt,draw=hornaccent!80] (B2b) -- (\v);
  \node[horn label,text=hornaccent,anchor=south] at (B2b.north) {$R$};
  \node[horn label,anchor=south] at ($(B2e.north)+(0,.08)$) {$A_f,T$};
  \node[horn label,anchor=north] at ($(B20.south)+(0,-.08)$) {$A_{c_0}$};
  \node[horn label,anchor=north] at ($(B21.south)+(0,-.08)$) {$A_{c_1}$};


  \node[font=\Large,text=horngray] at (6.55,.65) {$\Rightarrow$};

\draw [opacity=0.5, draw=hornaccent,-,line width=0.35mm,double distance = 0.2cm,line cap=round] plot [smooth,tension=0.4 ] coordinates {  (7.60,.55) (11.10,.55)  (8.57,0) };  
\draw [opacity=0.5, draw=hornaccent,-,line width=0.35mm,double distance = 0.2cm,line cap=round] plot [smooth,tension=0.4 ] coordinates {  (7.60,.55) (11.10,.55)  (10.13,0) };  
\draw [opacity=0.5, draw=hornaccent,-,line width=0.35mm,double distance = 0.2cm,line cap=round] plot [smooth,tension=0.4 ] coordinates {  (7.60,.55) (11.10,.55)  (9.35,1.25) };

  \node[horn point,draw=hornblue,fill=hornblue!8] (Aa) at (7.60,.55) {$a$};
  \node[horn point] (Ae) at (9.35,1.25) {$x_\varepsilon$};
  \node[horn point] (A0) at (8.57,0) {$x_0$};
  \node[horn point] (A1) at (10.13,0) {$x_1$};
  \node[horn point,draw=hornaccent,fill=hornaccent!8] (Ab) at (11.10,.55) {$b$};
  \draw[line width=.65pt,draw=horngray]
    (Ae) -- node[horn label,above left=-1pt] {$C_0$} (A0);
  \draw[line width=.65pt,draw=horngray]
    (Ae) -- node[horn label,above right=-1pt] {$C_1$} (A1);
  \draw[densely dotted,line width=.55pt,draw=horngray]
    (A0) to[bend right=22] node[horn label,below] {$D$} (A1);
  \foreach \v in {Ae,A0,A1}{
    \draw[densely dotted,line width=.5pt,draw=hornblue!75] (Aa) -- (\v);
    \draw[densely dotted,line width=.5pt,draw=hornaccent!80] (Ab) -- (\v);
  }
  \node[horn label,text=hornblue,anchor=south] at (Aa.north) {$L,I$};
  \node[horn label,text=hornaccent,anchor=south] at (Ab.north) {$R$};

  \node[horn label,anchor=south] at ($(Ae.north)+(0,.08)$) {$A_f,T$};
 %
  \node[horn label,anchor=north] at ($(A0.south)+(0,-.08)$) {$A_{c_0}$};
  \node[horn label,anchor=north] at ($(A1.south)+(0,-.08)$) {$A_{c_1}$};


\end{tikzpicture}
\caption{The obstruction associated with a rejected binary tree
$t=f(c_0,c_1)$.  The two models share the tree-shaped structure
$\struct B_0(t)$.  The first side adds the left pole $a$, together with
the states reachable at each node, while the second adds the right pole
$b$.  In any amalgam, the seed clauses produce $Q(a,b,x_0)$ and
$Q(a,b,x_1)$; the recursive clause then produces
$Q(a,b,x_\varepsilon)$, and the $T$-mark at the root activates
\eqref{eq:tree-parse-final}.  Hence the diagram has no amalgam.  Dotted
edges represent padding atoms $D$; brown hyperedges mark the forced $Q$-atoms.
}
\label{fig:tree-automaton-ap-obstruction}
\end{figure}

Conversely, suppose that \(L(\mathcal A)=T_\Sigma\), and consider an arbitrary
amalgamation diagram \(\struct B_1,\struct B_2\) as in
\cref{lem:cross-q-tree-extraction}.  By
\cref{lem:complete-free-union,lem:tree-acc-complete}, their union
\(\struct U\) satisfies \(\Phi_{\mathrm{acc}}\).  Form \(\struct U^*\) by the
\(Q\)-closure from \cref{lem:cross-q-tree-extraction}.  Since the domain and
signature are finite, this process reaches a fixed point.

We claim that \eqref{eq:tree-parse-final} does not fire in \(\struct U^*\).
It cannot fire on a \(Q\)-tuple already present in \(\struct U\), because that
tuple and its \(T\)-label lie in one of the sides.  If it fired on a newly
added tuple \(Q(a,b,c)\), then \cref{lem:cross-q-tree-extraction} would supply
a tree \(t\) and a map \(h\) of all its positions into \(B_0\), with
\(h(\varepsilon)=c\).  Choose an accepting run of \(\mathcal A\) on \(t\).
All atoms of the corresponding instance of \(\lambda_t(a,\bar x)\) hold in
the side containing \(a\).  Applying the accepting-run direction of
\cref{lem:tree-horn-simulation} inside that side derives
\(P_s(a,c)\) for a final state \(s\).  Since \(T(c)\) belongs to the common
substructure, \eqref{eq:tree-acc-final} contradicts the assumption that this
side is a model.

It follows that \(\struct U^*\models\Phi_{\mathcal A}\).  The only tuples
added to \(\struct U\) are \(Q\)-tuples whose poles lie in opposite sides, by
\cref{lem:cross-q-tree-extraction}.  Hence the inclusions of
\(\struct B_1,\struct B_2\) into \(\struct U^*\) are embeddings and introduce
no new identification.  Thus \(\struct U^*\) is a strong amalgam.
\end{proof}

\begin{proof}[Proof of \cref{thm:bounded-arity-hardness}]
If \(\Sigma\) has no nullary symbol, then \(T_\Sigma=\varnothing\) and
\(\Phi_{\mathrm{all}}\) has no seed clause.  All clauses are then complete,
so \cref{lem:complete-free-union} gives the FAP for
$\fm(\Phi_{\mathcal A})$, consistently with the
fact that every automaton over \(\Sigma\) is universal.  Thus the construction
does not require a nondegeneracy assumption on the ranked alphabet.

The map \(\mathcal A\mapsto\Phi_{\mathcal A}\) is computable in polynomial
time, produces a syntactic universal Horn sentence, and uses relation symbols
of arity at most three.  By
\cref{lem:tree-automaton-ap-equivalence}, it maps universal automata to
sentences with SAP and non-universal automata to sentences without AP.
Universality for nondeterministic finite tree automata of rank at most two is
\(\ComplexityClass{ExpTime}\)-hard~\cite{SeidlTreeAutomata}.  This proves the lower bound.
\end{proof}

\subsection{Unbounded arity}

A similar extension, this time by a finite strategy tree, closes the
unbounded-arity gap. We use a certain binary normal form of the exponential
corridor-tiling game, generalizing the exponential corridor-tiling problem
used in \cite[conference version]{RydvalInsideOut} to obtain
$\ComplexityClass{ExpSpace}$-hardness.
The idea is that a position in this game is an exponentially wide row, as in the $\ComplexityClass{ExpSpace}$
construction, but we transition from $\ComplexityClass{ExpSpace}$ to $\ComplexityClass{2ExpTime}$ by introducing alternation in the form of a
finite strategy tree.  Exponential corridor games originate in the work of Chlebus~\cite{ChlebusDominoTiling}; the normal form below
results from a configuration-row presentation of an alternating
exponential-space computation.


We first specify the source problem.  A \emph{binary exponential corridor
game} is a tuple
\[
 \mathcal G=(n,T,H,V_0,V_1,
 B_\ell,B_r,B_{\mathrm{in}},B_{\mathrm{EP}},B_{\mathrm{UP}},B_{\mathrm{win}}),
\]
where $n$ is given in unary, $T$ is a finite tile set,
$H,V_0,V_1\subseteq T^2$, and the remaining entries are subsets of $T$.
We require $B_{\mathrm{in}}\subseteq B_\ell$.
Set $N=2^n$.  A \emph{row} is a tuple $\rho\in T^N$.  It is \emph{legal} if
$ 
 \rho(0)\in B_\ell$, $\rho(N-1)\in B_r$, and $(\rho(j),\rho(j+1))\in H$ for all $j<N-1$. 
It is \emph{initial} if its first tile belongs to $B_{\mathrm{in}}$.
Independently, it is an \emph{Existential Player (EP) row}, a
\emph{Universal Player (UP) row}, or a \emph{winning row} depending on whether
its first tile belongs to $B_{\mathrm{EP}},B_{\mathrm{UP}}$, or
$B_{\mathrm{win}}$, respectively.
We restrict ourselves to the following normal form:
\begin{enumerate}[label=\textup{(NF\arabic*)}]
 \item \label{item:NF1} there is a unique initial row;
 \item \label{item:NF2} $B_{\mathrm{EP}},B_{\mathrm{UP}},B_{\mathrm{win}}$ partition $B_\ell$;
 \item \label{item:NF3} for every non-winning legal row $\rho$ and every $i\in\{0,1\}$,
 there is a unique legal row $\rho'$ such that $ (\rho(j),\rho'(j))\in V_i$ for all $j<N$; we write  $\rho\rightarrow_i\rho'$. 
\end{enumerate}
The game starts at the initial row.  At an EP row, EP chooses
$i\in\{0,1\}$; at a UP row, UP chooses $i\in\{0,1\}$.  EP wins by reaching a
winning row, and UP wins infinite plays.
Thus, a finite winning strategy tree for EP has a winning row at every leaf,
one $i$-successor at every EP node, and both the $0$- and the $1$-successor at
every UP node.
Define sets of legal rows $W_r$ inductively.  Let $W_0$ consist of all
winning rows, and, for $r\geq 0$, put
\begin{align}
 W_{r+1}=W_r
 &\mathbin{\cup}
 \{\rho\mid \rho\text{ is an EP row and, for some }i\in\{0,1\},
                  \ \rho\to_i\rho_i\text{ with }\rho_i\in W_r\}
 \notag\\
 &\mathbin{\cup}
 \{\rho\mid \rho\text{ is a UP row and, for both }i\in\{0,1\},
                  \ \rho\to_i\rho_i\text{ with }\rho_i\in W_r\}.
 \label{eq:corridor-winning-stages}
\end{align}
The \emph{winning region} is defined as $W\coloneqq\bigcup_{r\geq0}W_r$.

Induction on $r$ shows that a row belongs to $W_r$ exactly when EP has a
finite winning strategy tree of height at most $r$ from that row.  Since UP
wins every infinite play, EP has a winning strategy if and only if the
initial row belongs to $W$.

\begin{restatable}{lemma}{corridorgame} \label{lem:binary-corridor-hardness}   
Deciding whether EP has a winning strategy in a binary exponential corridor
game in the above normal form is $\ComplexityClass{2ExpTime}$-complete.
\end{restatable}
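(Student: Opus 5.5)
The proof splits into an upper bound, which is a direct computation of the least fixed point $W$, and a lower bound, which reduces from alternating exponential-space Turing machines (AEXPSPACE $=$ 2EXPTIME by Chandra--Kozen--Stockmeyer).

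\emph{Upper bound.} There are at most $|T|^{N}=2^{2^{n}\log|T|}$ rows, and legality of a row can be checked in time polynomial in $N$. For each non-winning legal row $\rho$ and each $i\in\{0,1\}$, we compute the unique successor $\rho\to_i\rho'$ by scanning all candidate rows. This is doubly exponential but deterministic. We then iterate \eqref{eq:corridor-winning-stages}. The sequence $W_r$ stabilises after at most the number of rows many stages, and each stage costs time polynomial in the size of the explicit row graph. Finally we test whether the initial row, which is unique by \ref{item:NF1}, lies in $W$. The whole procedure runs in time $2^{2^{O(n+\log|T|)}}$, which is doubly exponential in the input size because $n$ is given in unary.

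\emph{Lower bound.} Fix an alternating TM $M$ using space $2^{p(m)}$ on inputs of length $m$. We may normalise $M$ as follows:
\begin{itemize}
\item every non-halting configuration has exactly two successors, indexed $0$ and $1$;
\item existential, universal and accepting states are disjoint;
\item rejecting configurations loop, so that UP wins infinite plays;
\item the machine is deterministic given the choice bit.
\end{itemize}
Set $n=p(m)$, or $p(m)+O(1)$ to leave room for an end marker. A row encodes one configuration of $M$. Each tile is a tape symbol, possibly paired with a state when the head is on that cell. The first tile additionally records, in its own symbol, the global type of the row (EP, UP or win) and whether it is initial. For this to work, the tile at position $0$ must know the current state, which the head cell may not be.

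The natural fix is to propagate the state together with the player type along the whole row through $H$: every tile carries a copy of the current state. $H$ enforces that this copy is constant along the row and agrees with the head tile, and $B_\ell$ reads the copy off the first tile. The initial row is pinned down by adding position information to $H$, so that the input word $w$ occupies the first $m$ cells and blanks follow. This restricts the initial row only, and is done by introducing a distinct set of initial tiles, counting up to $m$, that are only allowed together with $B_{\mathrm{in}}$-first rows.

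The vertical relations $V_0,V_1$ encode one step of $M$ under choice $i$. This is the standard tiling encoding, except that a head move to the left or right affects a neighbouring cell, so a purely columnwise relation cannot see it. I handle this in the usual way: each tile also stores the symbols of its left and right neighbours, and $H$ checks consistency of these stored values. Then $V_i$, acting on one cell with its neighbourhood, determines the next content uniquely. The successor of a legal non-winning row is then exactly the encoding of the $i$-th successor configuration. Uniqueness holds because every cell's next value is a function of its stored neighbourhood, and legality of the successor row holds because configurations map to configurations. This gives \ref{item:NF3}, and \ref{item:NF2} follows from the state classification.

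\emph{Main obstacle.} The hardest step is making \ref{item:NF3} hold for \emph{all} non-winning legal rows, including rows that encode no reachable or even well-formed configuration. Examples are rows with zero or two heads, which $H$ must exclude by carrying a ``head seen'' flag along the row, and rows whose inconsistent stored neighbour data $H$ already forbids. I will design the tiles so that every legal row encodes a genuine configuration, with exactly one head. The transition relations are then total and functional on legal rows. Once this holds, the game from the initial row mirrors the acceptance game of $M$ on $w$, and EP wins iff $M$ accepts $w$.
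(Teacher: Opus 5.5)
Your plan follows the paper's proof. The upper bound computes the stages $W_r$ over the doubly exponentially many rows. The lower bound comes from $\ComplexityClass{AExpSpace}=\ComplexityClass{2ExpTime}$ through the same steps: a binary-branching normal form, three-cell windows, a state copied along the row, a flag forcing exactly one head, a position counter pinning the initial row, and self-looping rejecting configurations. One small difference: if the stored neighbour cells include their head marks, $V_i$ can compute each target centre directly from the source window. That replaces the paper's head tags and its exclusion of vertical $(\mathsf{before},\mathsf{after})$ pairs.

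However, your sentence ``every legal row encodes a genuine configuration \ldots\ the transition relations are then total and functional on legal rows'' does not follow. Condition~\ref{item:NF3} quantifies over all legal rows, reachable or not, and three rules are missing.

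\emph{Right boundary.} Consider a legal row whose head sits on the last usable cell, in a state whose transition moves right. It has no legal successor. Reserving an end-marker cell only moves the problem one cell to the left. The paper totalizes the successor map: the write and the state change are performed, but the head stays on the boundary cell, as at the left boundary. This alters no reachable transition, because $M$ uses at most $N$ cells on $w$.

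\emph{Initial tiles.} You mark the initial row by special tiles rather than by its state, so $V_i$ must forbid those tiles in the target row. Otherwise, any configuration whose $i$-successor equals the initial configuration has two $V_i$-related legal rows, and uniqueness fails. The paper avoids this with a fresh state $q_{\mathrm{init}}$ that no transition enters, and it restricts rows in that state to the initial configuration.

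\emph{Rejecting states.} Condition~\ref{item:NF2} requires rejecting states to be placed in $B_{\mathrm{EP}}$ or $B_{\mathrm{UP}}$; the paper uses $B_{\mathrm{UP}}$.

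Each of these is a one-line fix, after which your argument goes through.
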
   

A particularly close formulation in terms
of finite tiling trees with exponentially wide rows appears in
\cite[Definition~5.4 and Lemma~5.5]{AxelssonHagueKreutzerLangeLatte}; see also
the exponential corridor game in
\cite[Proposition~9]{ArtaleEtAlSafety}. 
However, since the normal form used here deviates significantly from the
formulations in the literature, we give a self-contained proof
in
\cref{app:binary-corridor-hardness}. The proof uses
$\ComplexityClass{AExpSpace}=\ComplexityClass{2ExpTime}$, due to Chandra,
Kozen, and Stockmeyer~\cite{ChandraKozenStockmeyerAlternation}.

The normal form for the binary exponential corridor
game is needed to translate the game rules into a Horn sentence. 
Binary branching lets
one represent an EP choice by choosing one of two Horn rules, and a UP choice
by one Horn rule containing exactly two successors.  Moreover, the label $i$
must determine a unique successor row:
otherwise a derived $\operatorname{Step}_i$-fact would merely witness some
compatible row, and the Horn closure could combine different witnesses in a
way that does not encode a strategy tree.  Marking ownership and acceptance in
the first tile lets the row scanner select the appropriate rule, while the two
identity successors of a rejecting row turn rejection into an infinite play,
which is winning for UP.

We now translate a game $\mathcal G$ into a Horn sentence.  The construction is
a streamlined version of the binary-counter mechanism from
\cite[conference version, proof of Theorem~5]{RydvalInsideOut}.  Besides unary
symbols $L,R$ and binary symbols $E,D$, the  signature contains an
$(n+1)$-ary symbol $T_\alpha$ for every $\alpha\in T$.  An atom
$ 
 T_\alpha(c_1,\ldots,c_n,p)
$ 
says that the row represented by $p$ contains $\alpha$ at address
$(c_1,\ldots,c_n)$.  The binary relation $D$ distinguishes the two elements
used as bits.  We include the complete Horn clauses
\begin{align}
 D(x,x')\wedge D(x',x)&\Rightarrow\bot,
 \label{eq:corridor-bit-asymmetry}\\
 T_\alpha(\bar c,p)\wedge T_\beta(\bar c,p)&\Rightarrow\bot
 \qquad(\alpha\ne\beta).
 \label{eq:corridor-tile-functionality}
\end{align}

For a list $\bar z$ of variables, put
\[
 \operatorname{PAD}(y_1,y_2;\bar z)\coloneqq
 L(y_1)\wedge R(y_2)\wedge
 \bigwedge\nolimits_{z\in\bar z}\bigl(E(y_1,z)\wedge E(y_2,z)\bigr)
 \wedge\bigwedge\nolimits_{z,z'\in\bar z}E(z,z').
\]
The contribution of $\operatorname{PAD}(y_1,y_2;\bar z)$ to the premise
Gaifman graph contains every edge among its variables except
$(y_1,y_2)$.
Fix variables $b_0,b_1$ for the two bits, and write
\(\mathbf 0=(b_0,\ldots,b_0)\) and
\(\mathbf 1=(b_1,\ldots,b_1)\).  For $j\in[n]$ and
$\bar z=(z_1,\ldots,z_{j-1})$, define
\begin{align*}
 \ell_j(\bar z)&=(z_1,\ldots,z_{j-1},b_0,
                   \underbrace{b_1,\ldots,b_1}_{n-j}),\\
 r_j(\bar z)&=(z_1,\ldots,z_{j-1},b_1,
                   \underbrace{b_0,\ldots,b_0}_{n-j}).
\end{align*}
These are precisely the two sides of a binary increment whose carry occurs at
position $j$.

For
$X\in\{\mathrm{in},\mathrm{EP},\mathrm{UP},\mathrm{win}\}$ introduce an
$(n+5)$-ary scan relation $S_X$ and a $5$-ary relation
$\operatorname{Row}_X$.  For every $\alpha\in B_X$ include the seed rule
\begin{equation}
 \begin{split}
 D(b_0,b_1)\wedge T_\alpha(\mathbf0,p)
 \wedge\operatorname{PAD}(y_1,y_2;p,b_0,b_1)
 \Rightarrow
 S_X(y_1,y_2,p,\mathbf0,b_0,b_1).
 \end{split}
 \label{eq:corridor-row-seed}
\end{equation}
For every $j\in[n]$ and $(\alpha,\beta)\in H$ include
\begin{equation}
 \begin{split}
 S_X(y_1,y_2,p,\ell_j(\bar z),b_0,b_1)
 \wedge T_\alpha(\ell_j(\bar z),p)
 \wedge T_\beta(r_j(\bar z),p)\\
 \Rightarrow S_X(y_1,y_2,p,r_j(\bar z),b_0,b_1),
 \end{split}
 \label{eq:corridor-row-step}
\end{equation}
and, for every $\alpha\in B_r$, include
\begin{equation}
 S_X(y_1,y_2,p,\mathbf1,b_0,b_1)
 \wedge T_\alpha(\mathbf1,p)
 \Rightarrow\operatorname{Row}_X(y_1,y_2,p,b_0,b_1).
 \label{eq:corridor-row-finish}
\end{equation}

The transition scanner is analogous.  For $i\in\{0,1\}$ introduce an
$(n+6)$-ary relation $S_i^\to$ and a $6$-ary relation
$\operatorname{Step}_i$.  For $(\alpha,\beta)\in V_i$ include
\begin{equation}
 \begin{split}
 D(b_0,b_1)\wedge T_\alpha(\mathbf0,p)
 \wedge T_\beta(\mathbf0,p')
 \wedge\operatorname{PAD}(y_1,y_2;p,p',b_0,b_1)\\
 \Rightarrow S_i^\to(y_1,y_2,p,p',\mathbf0,b_0,b_1),
 \end{split}
 \label{eq:corridor-edge-seed}
\end{equation}
and, for every $j\in[n]$ and $(\alpha,\beta)\in V_i$, include
\begin{equation}
 \begin{split}
 S_i^\to(y_1,y_2,p,p',\ell_j(\bar z),b_0,b_1)
 \wedge T_\alpha(r_j(\bar z),p)
 \wedge T_\beta(r_j(\bar z),p')\\
 \Rightarrow S_i^\to(y_1,y_2,p,p',r_j(\bar z),b_0,b_1).
 \end{split}
 \label{eq:corridor-edge-step}
\end{equation}
Finally, include
\begin{equation}
 S_i^\to(y_1,y_2,p,p',\mathbf1,b_0,b_1)
 \Rightarrow\operatorname{Step}_i(y_1,y_2,p,p',b_0,b_1).
 \label{eq:corridor-edge-finish}
\end{equation}

It remains to recognize winning strategy trees.  Introduce a $5$-ary relation
$W$.  The following rules mirror the induction defining the sets $W_r$
in~\eqref{eq:corridor-winning-stages}:
\begin{equation}
 \operatorname{Row}_{\mathrm{win}}(y_1,y_2,p,b_0,b_1)
 \Rightarrow W(y_1,y_2,p,b_0,b_1).
 \label{eq:corridor-win-leaf}
\end{equation}
For each $i\in\{0,1\}$, include
\begin{multline}
 \operatorname{Row}_{\mathrm{EP}}(y_1,y_2,p,b_0,b_1)
 \wedge\operatorname{Step}_i(y_1,y_2,p,p',b_0,b_1)
 \wedge W(y_1,y_2,p',b_0,b_1)\\
 \wedge\operatorname{PAD}(y_1,y_2;p,p',b_0,b_1)
 \Rightarrow W(y_1,y_2,p,b_0,b_1).
 \label{eq:corridor-win-ep}
\end{multline}
For UP rows, include
\begin{multline}
 \operatorname{Row}_{\mathrm{UP}}(y_1,y_2,p,b_0,b_1)
 \wedge\operatorname{Step}_0(y_1,y_2,p,p_0,b_0,b_1)
 \wedge W(y_1,y_2,p_0,b_0,b_1)\\
 \wedge\operatorname{Step}_1(y_1,y_2,p,p_1,b_0,b_1)
 \wedge W(y_1,y_2,p_1,b_0,b_1)
 \wedge\operatorname{PAD}(y_1,y_2;p,p_0,p_1,b_0,b_1)\\
 \Rightarrow W(y_1,y_2,p,b_0,b_1).
 \label{eq:corridor-win-up}
\end{multline}
The padding conjunct in \eqref{eq:corridor-win-ep} is redundant for
completeness, since the \(\operatorname{Step}_i\)-atom already contains all
six variables; we retain it to display the two-pole padding uniformly.  In
\eqref{eq:corridor-win-up}, the padding is genuinely needed to cover the
pair \((p_0,p_1)\).
Finally, include
\begin{equation}
 \operatorname{Row}_{\mathrm{in}}(y_1,y_2,p,b_0,b_1)
 \wedge W(y_1,y_2,p,b_0,b_1)\Rightarrow\bot.
 \label{eq:corridor-win-final}
\end{equation}
Let $\Phi_{\mathcal G}$ be the conjunction of all the displayed clauses.

\begin{lemma}\label{lem:corridor-scanners}
Assume that \eqref{eq:corridor-bit-asymmetry} and
\eqref{eq:corridor-tile-functionality} hold, that $D(b_0,b_1)$ holds, and
that the relevant padding conjunction is present: namely,
$\operatorname{PAD}(y_1,y_2;p,b_0,b_1)$ for the row scanner and
$\operatorname{PAD}(y_1,y_2;p,p',b_0,b_1)$ for the transition scanner.
Starting with empty scan relations and closing under
\eqref{eq:corridor-row-seed}--\eqref{eq:corridor-edge-finish} gives:
\begin{enumerate}[label=\textup{(\roman*)}]
 \item \label{item:funct1} $\operatorname{Row}_X(y_1,y_2,p,b_0,b_1)$ precisely when the
 $T_\alpha$-facts at $p$ encode a row of type $X$;
 \item \label{item:funct2}  $\operatorname{Step}_i(y_1,y_2,p,p',b_0,b_1)$ precisely when \(p\) and
 \(p'\) each carry a tile at every address and the resulting tuples are
 pointwise related by $V_i$.
\end{enumerate}
Whenever the displayed row facts hold, they determine their row uniquely.
\end{lemma}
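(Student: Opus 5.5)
The plan is to argue by induction on the closure stage, treating the row scanner and the transition scanner separately, with the carry positions of binary increments as the organizing invariant. Under the stated hypotheses the bits are fixed and distinguished: $D(b_0,b_1)$ holds, and clause \eqref{eq:corridor-bit-asymmetry} rules out $D(b_1,b_0)$, so $b_0\neq b_1$. Hence each tuple $\bar c\in\{b_0,b_1\}^n$ names a unique address in $[0,N)$ by reading $b_0$ as $0$ and $b_1$ as $1$, most significant bit first. For every $j$ and every prefix $\bar z\in\{b_0,b_1\}^{j-1}$, the tuples $\ell_j(\bar z)$ and $r_j(\bar z)$ name consecutive addresses $m$ and $m+1$. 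Conversely, every pair of consecutive addresses $(m,m+1)$ arises this way for exactly one $j$ and one prefix, namely with $j$ the position of the lowest $0$-bit of $m$. Clause \eqref{eq:corridor-tile-functionality} makes the tile at each address of $p$ unique whenever one is present, which will give the final uniqueness assertion.

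For \ref{item:funct1} I will prove the following invariant for each type $X$. The fact $S_X(y_1,y_2,p,\bar c,b_0,b_1)$ is derived exactly when $p$ carries tiles at all addresses $0,\dots,m$ (where $\bar c$ names $m$), the tile at $0$ lies in $B_X$, and consecutive tiles up to $m$ are $H$-related.
\begin{itemize}
\item Base case: the seed rule \eqref{eq:corridor-row-seed} handles address $0$.
\item Soundness: each application of \eqref{eq:corridor-row-step} extends a derived prefix by one address. Its premise asserts $T_\alpha$ at $m$ and $T_\beta$ at $m+1$ with $(\alpha,\beta)\in H$, and functionality forces $\alpha$ to be the tile already read at $m$.
\item Completeness: induct on $m$, using the unique $(j,\bar z)$ for the step from $m$ to $m+1$.
\end{itemize}
The finishing rule \eqref{eq:corridor-row-finish} then derives $\operatorname{Row}_X$ exactly when the prefix reaches $\mathbf 1$, i.e.\ address $N-1$, and the last tile lies in $B_r$. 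This is precisely legality plus type $X$. For $X=\mathrm{in}$, the hypothesis $B_{\mathrm{in}}\subseteq B_\ell$ takes care of the leftmost condition.

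For \ref{item:funct2} the same scheme applies with invariant $S_i^\to(\dots,\bar c,\dots)$: both $p$ and $p'$ carry tiles at addresses $0,\dots,m$, and these tiles are pointwise $V_i$-related. The seed rule \eqref{eq:corridor-edge-seed} checks address $0$. The step rule \eqref{eq:corridor-edge-step} checks the new address $r_j(\bar z)$, so it adds a single new column rather than a horizontal pair. Rule \eqref{eq:corridor-edge-finish} fires exactly at $\mathbf 1$.

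The main point needing care is that no spurious derivation arises from degenerate address tuples, such as ones containing elements other than $b_0,b_1$. Every scan fact has its address argument produced from $\mathbf 0$ by step rules, and those rules only emit tuples of the form $r_j(\bar z)$ with $\bar z$ copied from a previously derived address. An induction therefore shows all derived addresses lie in $\{b_0,b_1\}^n$, and the counting argument above is exact. Since the statement starts from empty scan relations, no other source of scan facts exists, and the padding hypothesis only serves to satisfy the $\operatorname{PAD}$ premises of the seed rules.
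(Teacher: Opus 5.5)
Your proposal is correct and follows the same route as the paper. It derives $b_0\neq b_1$ from \eqref{eq:corridor-bit-asymmetry} and observes that the pairs $(\ell_j(\bar z),r_j(\bar z))$ run through the successive addresses $0,\dots,2^n-1$. It then inducts along these addresses, with the seed checking the left boundary and the type and \eqref{eq:corridor-row-finish} checking the right boundary, repeats this for the transition scanner, and uses \eqref{eq:corridor-tile-functionality} for uniqueness. Your extra induction showing that every derived address lies in $\{b_0,b_1\}^n$, and that each consecutive pair of addresses comes from a unique $(j,\bar z)$, spells out what the paper's remark that the pairs ``describe exactly the successive addresses'' leaves implicit.
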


\begin{proof}
Clause \eqref{eq:corridor-bit-asymmetry} and $D(b_0,b_1)$ imply
$b_0\ne b_1$.  Starting with $\mathbf0$, the pairs
$(\ell_j(\bar z),r_j(\bar z))$ describe exactly the successive addresses
$ 
  0,1,\ldots,2^n-1
$ 
in binary order.  Induction on this number shows that the row scanner reaches
an address exactly when every preceding address carries tiles forming an
$H$-compatible prefix.  The seed checks the left boundary and the type, and
\eqref{eq:corridor-row-finish} checks the right boundary.  This proves
\ref{item:funct1}.  

The same induction for the transition scanner checks one
$V_i$-related tile pair at each address and proves \ref{item:funct2}.  Clause
\eqref{eq:corridor-tile-functionality} gives uniqueness.
\end{proof}

All clauses above except the two scanner seed schemes are complete. Indeed,
each recursive or finishing scanner clause contains a scan atom involving all
of its variables. The winning-leaf and final clauses each contain an atom
involving all of their variables, while the EP and UP clauses are complete
because of their high-arity atoms and the displayed padding. A seed premise is
complete except for the single missing edge $(y_1,y_2)$.

\begin{lemma}\label{lem:corridor-cross-extraction}
Let $\struct B_1,\struct B_2\models\Phi_{\mathcal G}$ induce the same
substructure $\struct B_0$ on $B_1\cap B_2$, and let
$\struct U=\struct B_1\cup\struct B_2$.  Close $\struct U$ under all clauses
of $\Phi_{\mathcal G}$ with a positive head.  Every newly added tuple has its
first two entries in opposite sides outside $B_0$, and all remaining entries
in $B_0$.  If the premise of \eqref{eq:corridor-win-final} becomes true using
a newly added tuple, then the common $T_\alpha$-facts unfold to a finite
winning strategy tree for EP whose root is an initial row.
\end{lemma}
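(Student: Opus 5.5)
We argue by induction on the synchronous closure stage at which a tuple is first added, following the pattern of \cref{lem:cross-q-tree-extraction}. The claim is a two-part invariant for every newly added tuple: its poles $y_1,y_2$ lie in opposite sides outside $B_0$, and all remaining entries lie in $B_0$. We also attach a semantic certificate to each new tuple, depending on its relation symbol:
\begin{itemize}
\item a new $S_X$- or $S_i^\to$-tuple at address $\bar c$ certifies that the common $T_\alpha$-facts at $p$ (resp.\ at $p,p'$) form an $H$-compatible prefix of the right type (resp.\ a $V_i$-related pair of prefixes) up to $\bar c$, with $D(b_0,b_1)$ in $\struct B_0$;
\item a new $\operatorname{Row}_X$- or $\operatorname{Step}_i$-tuple certifies the statements of \cref{lem:corridor-scanners} evaluated in $\struct B_0$;
\item a new $W(y_1,y_2,p,b_0,b_1)$ certifies that the row encoded at $p$ in $\struct B_0$ lies in $W_r$ for some $r$, witnessed by a finite strategy tree whose rows are represented by elements of $B_0$.
\end{itemize}

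For the seed clauses \eqref{eq:corridor-row-seed} and \eqref{eq:corridor-edge-seed}, we argue exactly as in the tree case. The premise contains $\operatorname{PAD}$, whose Gaifman graph misses only the edge $(y_1,y_2)$. Suppose the poles are both in $B_0$, or one is in $B_0$, or both are on the same side outside $B_0$. Then the $E$-atoms from each pole to every other variable put the whole premise inside a single side $\struct B_i$; for tuples inside $B_0$ this uses that the two sides agree there. Since $\struct B_i$ satisfies the clause, the head would already be in $\struct U$, a contradiction. Hence the poles lie on opposite sides outside $B_0$. The atoms $E(y_1,z)$ and $E(y_2,z)$ then force every other variable into both sides, that is, into $B_0$.

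For all other positive-head clauses we use completeness. If every premise atom belonged to $\struct U$, \cref{lem:complete-free-union} (applied to the complete clauses) would already put the head in $\struct U$. So some premise atom is a tuple newly added at an earlier stage, and by induction its poles are the two poles of the clause, placed in opposite sides. Every variable of the clause co-occurs with both poles in some premise atom: via the scan atom, the $\operatorname{Step}$ atom, or $\operatorname{PAD}$. These atoms contain both poles, and a relational tuple meeting both $B_1\setminus B_0$ and $B_2\setminus B_0$ cannot belong to $\struct U$. Hence each such atom is itself new, and by the induction hypothesis its non-pole entries lie in $B_0$. This places every non-pole variable, and so every non-pole entry of the head, in $B_0$.

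For the same reason, the $T_\alpha$-atoms in the premises, which do not mention the poles, are facts of $\struct B_0$. With \eqref{eq:corridor-bit-asymmetry} and \eqref{eq:corridor-tile-functionality} holding in $\struct B_0$, the certificates propagate exactly as in the induction proving \cref{lem:corridor-scanners}. The one place where uniqueness matters is the $W$-rules, which use (NF3) to conclude that $\operatorname{Step}_i$ identifies the unique $i$-successor row. Rule \eqref{eq:corridor-win-leaf} gives a leaf, \eqref{eq:corridor-win-ep} attaches one successor subtree, and \eqref{eq:corridor-win-up} attaches both; this reproduces the stages \eqref{eq:corridor-winning-stages}. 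Finally, suppose the premise of \eqref{eq:corridor-win-final} holds using a new tuple. Both premise atoms contain both poles, so both are new, and their certificates together give a finite winning strategy tree rooted at a row of type $\mathrm{in}$, that is, the initial row.

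The main obstacle is the completeness bookkeeping. We must check that in every non-seed clause each variable shares a premise atom with both $y_1$ and $y_2$, so that the opposite-sides argument reaches every variable; this is where the padding in \eqref{eq:corridor-win-up} is needed for $p_0,p_1$. We must also check that pre-existing same-side tuples cannot be mixed with new cross tuples. Both reduce to the observation that any atom containing both poles is new, and hence falls under the induction hypothesis.
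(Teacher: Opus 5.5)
Your proposal is correct and follows essentially the paper's proof: induction on the closure stage, the $\operatorname{PAD}$ argument for the two seed schemes, completeness of every other positive-head clause to obtain an earlier new premise tuple that fixes the cross poles, and unfolding of new $W$-tuples by rank into a strategy tree, with (NF3) identifying $\operatorname{Step}_i$ with the $i$-successor. The structural difference is that you carry the row, step and winning-region certificates inside the same induction, whereas the paper invokes \cref{lem:corridor-scanners} afterwards on the pole slices, which it notes are initially empty.

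Two small slips do not affect correctness. In the seed case with both poles in $B_0$, it is the atoms $E(z,z')$ among the non-pole variables, not the atoms at the poles, that confine the premise to one side. Also, the padding in \eqref{eq:corridor-win-up} is needed for completeness (the pair $(p_0,p_1)$), not for linking variables to both poles; the $\operatorname{Step}_i$-atoms already do that, and single-pole $E$-atoms cannot.
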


\begin{proof}
For the first assertion, we proceed by induction on the closure stage.  

A new seed tuple cannot
have a pole in \(B_0\).  Indeed, suppose first that exactly one pole lies in
\(B_0\), and choose the side containing the other pole.  Every remaining
variable is joined by a padding atom to the outside pole, and hence lies in
that side.  If both poles lie in \(B_0\) and some remaining entry lies outside
\(B_0\), choose the side containing such an entry; the padding atoms between
all remaining entries put all of them in that side.  If all remaining entries
lie in \(B_0\), choose either side.  In every case each premise atom belongs
to the chosen side, using agreement of the induced structures for tuples contained in
\(B_0\).  Since the side satisfies the seed clause, the head would already
belong to \(\struct U\), a contradiction.  The two poles cannot lie outside
\(B_0\) on the same side for the same reason: each remaining variable is
joined to either pole.  Hence the poles lie in opposite sides outside
\(B_0\).  The padding atoms connecting every remaining entry to both poles
now force all remaining entries into \(B_0\).

In every non-seed rule, a new head requires a premise tuple outside
\(\struct U\): if every premise tuple belonged to \(\struct U\), completeness
would already put the head in \(\struct U\).  By the definition of the closure
stages, such a tuple was added at an earlier stage.  It fixes the two cross
poles, and the other premise atoms and padding preserve the invariant.

It remains to extract the strategy.  By
\cref{lem:corridor-scanners}, every newly derived row tuple determines a
legal common row of the indicated type, and every newly derived step tuple
determines the indicated successor relation between two common rows.  To
justify this invocation, fix the two cross poles of such a tuple.  No tuple
already in \(\struct U\) can contain both poles, because every relation tuple
of \(\struct U\) lies wholly in one side.  Thus the slices of the scan
relations carrying these poles are initially empty, exactly as required in
\cref{lem:corridor-scanners}.

Give each newly derived $W$-tuple its first closure stage as rank.  Induction
on this rank unfolds \eqref{eq:corridor-win-leaf} into a winning leaf,
\eqref{eq:corridor-win-ep} into one lower-rank child, and
\eqref{eq:corridor-win-up} into two lower-rank children, one for each move.
This produces a finite winning strategy tree.  If
\eqref{eq:corridor-win-final} fires, the other newly derived premise marks its
root row as initial.
\end{proof}

\begin{lemma}\label{lem:corridor-ap-equivalence}
For every binary exponential corridor game $\mathcal G$ in normal form,
\[
 \begin{aligned}
 \text{EP has no winning strategy in }\mathcal G
 &\quad\Longleftrightarrow\quad
 \fm(\Phi_{\mathcal G})\text{ has the SAP}\\
 &\quad\Longleftrightarrow\quad
 \fm(\Phi_{\mathcal G})\text{ has the AP}.
 \end{aligned}
\]
\end{lemma}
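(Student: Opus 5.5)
We follow the pattern of \cref{lem:tree-automaton-ap-equivalence}: we prove the cycle
(EP has no winning strategy) $\Rightarrow$ SAP $\Rightarrow$ AP $\Rightarrow$ (EP has no winning strategy).
The middle implication is trivial. For the last one we argue by contraposition and build an explicit amalgamation diagram without an amalgam.

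\textbf{Failure of the AP.} Suppose EP has a finite winning strategy tree rooted at the initial row. We take the common structure $\struct B_0$ as follows.
\begin{itemize}
\item It contains two bit elements $b_0,b_1$ with the fact $D(b_0,b_1)$.
\item It contains one element $p_\rho$ for every distinct legal row $\rho$ occurring in the tree (or simply every row in the finite winning region reached from the initial row).
\item It contains the facts $T_\alpha(\bar c,p_\rho)$ encoding each such row.
\item It contains $E$ on all pairs among these elements.
\end{itemize}
Side $\struct B_1$ adds a new element $a$ with $L(a)$ and $E(a,z)$ for all $z\in B_0$. Side $\struct B_2$ adds $b$ with $R(b)$ and the same $E$-facts. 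All scan, $\operatorname{Row}$, $\operatorname{Step}$ and $W$ relations are empty.

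We then check that $\struct B_1,\struct B_2\models\Phi_{\mathcal G}$:
\begin{itemize}
\item Every seed rule needs both $L(y_1)$ and $R(y_2)$, and each side contains only one of the markers.
\item All other rules have scan-type atoms in their premises, and these are empty.
\item \eqref{eq:corridor-bit-asymmetry} holds because $b_0\neq b_1$ and only $D(b_0,b_1)$ is present.
\item \eqref{eq:corridor-tile-functionality} holds because each $p_\rho$ encodes a single row.
\end{itemize}
In any amalgam the images satisfy every premise used in a forward Horn derivation. By \cref{lem:corridor-scanners}, which applies once $\operatorname{PAD}(a,b;\dots)$ holds, we derive $\operatorname{Row}_X(a,b,p_\rho,b_0,b_1)$ for the correct types and $\operatorname{Step}_i(a,b,p_\rho,p_{\rho'},b_0,b_1)$ whenever $\rho\to_i\rho'$. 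Induction on the height of the strategy tree, via \eqref{eq:corridor-win-leaf}--\eqref{eq:corridor-win-up}, derives $W(a,b,p_{\rho_{\mathrm{in}}},b_0,b_1)$. Then \eqref{eq:corridor-win-final} fires, so there is no amalgam.

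\textbf{SAP.} Suppose EP has no winning strategy, and take an arbitrary diagram $\struct B_1,\struct B_2$. By \cref{lem:complete-free-union}, applied to the complete clauses, the union $\struct U$ violates at most seed-initiated derivations. Let $\struct U^*$ be the closure of $\struct U$ under all positive-head clauses.
\begin{itemize}
\item By \cref{lem:corridor-cross-extraction}, every new tuple has its two poles in opposite sides outside $B_0$ and all other entries in $B_0$.
\item Hence no new tuple lies inside $B_1$ or $B_2$, so the inclusions remain embeddings and no identifications occur.
\item The complete $\bot$-clauses \eqref{eq:corridor-bit-asymmetry} and \eqref{eq:corridor-tile-functionality} involve only $D$ and $T_\alpha$, which the closure never changes; they therefore still hold.
\end{itemize}
It remains to rule out \eqref{eq:corridor-win-final}. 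Its premise cannot be satisfied by old tuples alone: both atoms contain all five variables, so old tuples would lie in one side, which is a model. By the invariant, a premise mixing old and new tuples is impossible, since an old tuple cannot contain both cross poles. So both atoms are new. Then \cref{lem:corridor-cross-extraction} unfolds the common $T_\alpha$-facts into a finite winning strategy tree for EP rooted at an initial row.

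\textbf{Main obstacle.} We must check that this extracted tree is a genuine strategy tree in $\mathcal G$. We use \ref{item:NF1} to identify its root with the unique initial row. We use \ref{item:NF3} together with the uniqueness clause of \cref{lem:corridor-scanners}: every derived $\operatorname{Step}_i$ links a row to its unique $i$-successor, so distinct witnesses cannot be mixed. Thus the tree is a winning strategy from the initial row, contradicting the assumption, and $\struct U^*$ is a strong amalgam. The remaining risk is the legality of rows in the Step scanner: \eqref{eq:corridor-edge-seed} does not check legality of $p'$. We resolve this because each child $p'$ used in \eqref{eq:corridor-win-ep}/\eqref{eq:corridor-win-up} carries a $W$-fact, which requires a $\operatorname{Row}_X$ fact, and that certifies legality.
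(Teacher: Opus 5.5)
Your proof is correct and follows the paper's argument. The failure direction uses the same two-pole diagram over a common structure encoding a finite winning strategy tree; your choice of one element per distinct row instead of one per tree node works equally well. The SAP direction uses the same closure argument via \cref{lem:complete-free-union} and \cref{lem:corridor-cross-extraction}. Your extra checks only make explicit what the paper folds into \cref{lem:corridor-cross-extraction}: that both premise atoms of \eqref{eq:corridor-win-final} must be new, that successor rows are legal because every new $W$-fact carries a $\operatorname{Row}_X$ premise, and that \ref{item:NF1} identifies the extracted root with the initial row.
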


\begin{proof}
Suppose first that EP has a winning strategy.  By the finite-strategy
characterization following \eqref{eq:corridor-winning-stages}, EP then has a
finite winning strategy tree.  Put one element
$p_v$ into a common structure $\struct B_0$ for every node $v$ of this tree,
together with two elements $b_0,b_1$.  Add $D(b_0,b_1)$ and all
$T_\alpha$-facts describing the row at $v$ on $p_v$.  Interpret $E$ as the
full relation on $B_0$.  Let $\struct B_1$ add an element $a$ with $L(a)$
and $E(a,z)$ for all $z\in B_0$, and let $\struct B_2$ add an element $b$ with
$R(b)$ and $E(b,z)$ for all $z\in B_0$.  All scan and strategy relations are
empty in the two sides.  Since neither side contains both a left and a right
pole, no seed rule fires there; tile functionality and bit asymmetry hold by
construction.  Thus both sides are models.

In any amalgam, the row and transition scanners can follow the given strategy
tree.  Induction from its leaves using
\eqref{eq:corridor-win-leaf}--\eqref{eq:corridor-win-up} derives $W$ at its
initial root, and \eqref{eq:corridor-win-final} derives $\bot$.  The diagram
therefore has no amalgam.

Conversely, assume that EP has no winning strategy and take an arbitrary
amalgamation diagram.  By \cref{lem:complete-free-union}, its union satisfies
all complete clauses.  Close the union under the positive-head
clauses.  The closure is finite.  The two safety clauses remain true, and all
positive-head clauses hold at the fixed point.  Clause
\eqref{eq:corridor-win-final} cannot already fail in the union, since its
premise is complete and both sides are models.  It cannot become false after
closure either: by \cref{lem:corridor-cross-extraction}, that would yield a
finite winning strategy for EP.  Hence the closure is a model of
$\Phi_{\mathcal G}$. The only added tuples have opposite outside poles, so the
inclusions of $\struct B_1$ and $\struct B_2$ into the closure are embeddings.
Since the closure has universe $B_1\cup B_2$, their images intersect precisely
in $B_0$. The closure is therefore a strong amalgam.
\end{proof}

\begin{proof}[Proof of \cref{thm:unbounded-arity-hardness}]
The sentence $\Phi_{\mathcal G}$ is equality-free and syntactic Horn.  Its
largest relation has arity $n+6$.  It contains
$O(n(|H|+|V_0|+|V_1|)+|T|^2)$ rules, each of length polynomial in the game
instance, so the construction is polynomial.  By
\cref{lem:corridor-ap-equivalence}, winning games map to failure of the AP and
non-winning games map to SAP.  Since deterministic $\ComplexityClass{2ExpTime}$ is
closed under complement, \cref{lem:binary-corridor-hardness} gives the lower
bound.  
\end{proof}

\section{Conclusion}
\label{sec:conclusion}

\subsection{Summary of the results}

The main result of this paper is that the amalgamation decision problem is
decidable for classes of finite relational structures
specified by semantic Horn universal sentences, that is, sentences whose finite
models are closed under binary direct products.  By McKinsey's theorem these are
precisely the universal sentences equivalent to universal Horn sentences.  The proof combines
three ingredients: a part of the inside-out
correspondence (\cref{thm:inside-out-correspondence}), a finite
set-valued context strategy for that completion problem, and the algebraic
fact that, for every fixed source chart, local completions are closed under
relationwise intersection of the added relations.  These componentwise meet
operations commute with restriction maps and therefore extend to a semilattice
polymorphism of the completion template.  Since the completion template is at
most binary, this gives width $2$ by the elementary
\cref{lem:semilattice-width}, without appealing to the general classification
of bounded-width templates.

The same construction gives explicit complexity bounds.  For semantic Horn
input, the direct enumeration of the finite completion template and all
$2$-contexts yields a deterministic \ComplexityClass{2ExpTime} algorithm in general
and a deterministic \ComplexityClass{ExpTime} algorithm for every fixed bound on the
maximum input arity; see \cref{thm:semantic-horn-complexity}.
The latter bound is tight: by \cref{thm:bounded-arity-hardness}, semantic Horn
ADP is \(\ComplexityClass{ExpTime}\)-complete already for syntactic universal Horn
sentences of maximum relation arity three.
Without an arity bound, \cref{thm:unbounded-arity-hardness} gives matching
\(\ComplexityClass{2ExpTime}\)-hardness, already for equality-free syntactic universal
Horn sentences.  Hence the general semantic Horn algorithm is
\(\ComplexityClass{2ExpTime}\)-complete.  Since the completion templates associated
with semantic Horn inputs have width $2$, this also makes the
\(\ComplexityClass{2ExpTime}\) algorithm under the bounded-width promise optimal.
%

Beyond the semantic Horn condition, the paper isolates bounded width of the completion template
as a sufficient condition.  Under the promise that
$\T_\Phi$ has bounded width, the collapse of the bounded-width hierarchy allows
one to use the fixed parameter $k=3$, and the resulting AP test runs
deterministically in \ComplexityClass{2ExpTime}; see
\cref{prop:bw-promise-bound}.  Recognizing whether the promise holds is a
different problem.  By the NP recognizability of bounded width for finite CSP templates, the language $L_{\mathrm{BW,AP}}$ simultaneously testing the bounded width condition and the AP belongs to
\ComplexityClass{2NExpTime}; see \cref{prop:bw-recognition-2nexp}.  However, the naive  
three-way procedure that must also certify failure of bounded width remains
triply exponential deterministically, as explained in
 \cref{prop:bw-three-way}.  

The completion pp-constructions in \cref{sec:algebraic-approach} make the
broader scope of the bounded width criterion explicit.  They preserve bounded width, so any pp-construction
from a bounded-width source extends the applicability of the amalgamation
algorithm.  Quantifier-free interdefinitions of the original universal
sentences form a basic but useful special case: by
\cref{prop:qf-interdefinition-completion-pp} they lift through the inside-out
transformation and yield mutually inverse finite-dimensional completion
pp-constructions between the completion templates.  In particular, strict linear orders are not semantic Horn, but
their completion template admits a one-dimensional pp-construction from the
completion template  for strict partial orders and therefore inherits bounded
width.  Thus semantic Horn is a transparent sufficient source of the
algebra, whereas bounded width and its preservation under completion
pp-constructions describe the broader algorithmic fragment established here.

\subsection{Role of the inside-out correspondence}

\cref{thm:inside-out-correspondence} is particularly important for the positive
direction of the contextual strategy criterion, but the proof of
\cref{thm:positive-type-strategies} does not explicitly use the SAP.  Indeed, for a positive input $\Phi$ to the ADP,
\cref{thm:inside-out-correspondence} supplies the two properties needed there:
every finite source has a completion, and the transformed source class
$\fm(\Phi_1)$ has AP (in fact, SAP).  Since this class is also nonempty,
hereditary, and closed under isomorphisms, AP makes it a Fra\"{i}ss\'{e} class.
The extension property of its Fra\"{i}ss\'{e} limit is then used to extend a
local assignment along a one-chart extension, that is, to verify
\ref{item:T3}, while the completion property provides a single completion of
that limit.  Together these yield compatible local target choices on all
contexts.
The stronger SAP belongs to an earlier stage of the inside-out construction.
The transformation \(\Gamma\) represents identifications by a relational
congruence and thereby turns AP into SAP; the transformation \(\Delta\) then
uses strongness to encode an amalgam as an expansion of the same underlying
union, without identifying elements from opposite sides.  Thus SAP is crucial
for this particular same-domain completion encoding, whereas the subsequent
context-strategy argument requires only the Fra\"{i}ss\'{e}/AP property of the
transformed source class together with the completion property.

\subsection{Open questions and further research directions}

The results leave two concrete complexity questions open.  
First, the fixed-arity lower bound in \cref{thm:bounded-arity-hardness} uses a ternary
relation symbol in a way that cannot be simulated over binary signatures.  For semantic Horn sentences over binary signatures,
\cref{thm:semantic-horn-complexity} gives a deterministic
\(\ComplexityClass{ExpTime}\) algorithm; but there is also an earlier result~\cite[Proposition~22]{rydval2022CSP} providing a $\Pi^{p}_2$ upper bound for syntactic Horn.

\medskip\noindent\textbf{Open question 1.}
What is the complexity of semantic Horn ADP over binary 
signatures?

\medskip
Second, it remains open whether bounded width can be recognized more
efficiently for the succinctly represented completion templates produced by
\(\Theta\).  The general NP recognition procedure for an explicitly given
finite template yields the
\(\ComplexityClass{2NExpTime}\) upper bounds of
\cref{prop:bw-recognition-2nexp}.  
Simultaneously, \(\ComplexityClass{2NExpTime}\) is also the strongest known lower bound for the ADP~\cite{RydvalInsideOut}.
On the other hand, deterministic exhaustive search only gives the triply exponential upper bound in \cref{prop:bw-three-way}.

\medskip\noindent\textbf{Open question 2.}
Can the three-way
procedure of \cref{prop:bw-algorithm-correctness} be implemented deterministically or non-deterministically in double-exponential time?

\medskip 
%
The more fundamental limitation of the present criterion is not the
complexity of recognizing bounded width, but the fact that bounded width is
not necessary for a positive instance.  The sentence
\(\Phi_{\mathrm{Betw}}\) from
\cref{sec:betweenness-no-bounded-width} axiomatizes the age of the homogeneous
betweenness relation and hence its finite models have the AP.  Nevertheless, its
completion template has no bounded width; in fact, its polymorphism clone is trivial from the point of view of the algebraic theory of finite-domain CSPs.
Thus even a very natural positive
completion problem can fall outside the scope of our methods.

This example suggests that algorithms for arbitrary CSP-instances over
completion templates may be too coarse a tool.  The instances relevant to
amalgamation are the atlas instances 
and the context instances obtained from them.  
It is possible that the  atlas instances admit efficient algorithms even when the
full CSP of the completion template does not have bounded width or is equationally trivial.

There are two natural directions for extending the method.  One is to replace
the set-valued states \(P_{\mathfrak p}\) by richer objects such as affine subspaces.  The other is to replace the completion
pair \(\Theta(\Phi)\) by an AP-equivalent lift or cover whose completions have
better algebraic properties.  In either approach, the essential requirement
is compatibility with restriction and one-chart extension, rather than
tractability of arbitrary instances over the original completion template.
The contextual-strategy characterization makes the remaining  
obstacle precise.  By \cref{cor:special-positive-strategy},
\[
  \fm(\Phi)\text{ has AP}
  \quad\Longleftrightarrow\quad
  \text{a uniform \(k\)-contextual strategy exists for every \(k\geq1\)}.
\]
For each fixed \(k\), existence of such a strategy is decidable by the finite
fixed-point computation.  Hence failure of the AP is eventually witnessed at
some finite context size.  For positive inputs, however, bounded width is
what presently turns the infinite family of successful tests into a single
finite certificate.  Finding alternative finite positive certificates is
therefore the central problem beyond bounded width.

We remark that the failure of the AP is recursively enumerable for arbitrary universal input, since
finite amalgamation diagrams can be enumerated and checked.  Consequently,
unrestricted ADP would be decidable if AP itself admitted a recursively
enumerable family of sound and complete positive certificates.

\medskip\noindent\textbf{Open question 3.}
Is there an effective family of finite symbolic certificates for positive
completion pairs that is complete beyond bounded width?

\medskip\noindent\textbf{Open question 4.}
Is the ADP decidable for arbitrary universal
sentences?

\appendix

\section{Proof of Lemma~\ref{lem:semilattice-width}} \label{section:lemma_proof}

\semilatticewidth*
\begin{proof} 
Let $\wedge\colon T^2 \rightarrow T$ be the said semilattice polymorphism.
Set $k\coloneqq \max\{2,r\}$ and let $\mathcal H$ be an existential $k$-strategy for a
finite instance $\struct J$ over $\T$.  For a variable $x$ of $\struct J$ set
\[
  S_x\coloneqq \{h(x) \mid h\in\mathcal H,\ x\in\dom(h)\}.
\]
Since $\mathcal H$ is nonempty, \ref{item:S1} puts the empty map into
$\mathcal H$, and \ref{item:S2} (applicable as $0<k$) then makes every $S_x$
nonempty.  Each $S_x$ is a finite nonempty subset of the domain of $\T$, so
$g(x)\coloneqq \bigwedge\nolimits S_x$ is well defined, the finite meet being independent of
order and multiplicity by the associativity, commutativity, and idempotency of $\wedge$.

We claim that $g$ is a homomorphism.  Let $R(x_1,\dots,x_s)$ be a constraint of
$\struct J$, so $s\leq r\leq k$, and let $V$ be the set of distinct
variables occurring in this constraint.  Fix $i\leq s$ and $a\in S_{x_i}$,
and choose $h\in\mathcal H$ with $x_i\in\dom(h)$ and $h(x_i)=a$; by
\ref{item:S1} we may restrict it so that $\dom(h)=\{x_i\}$.  Add the
variables of $V\setminus\{x_i\}$ one at a time using \ref{item:S2}, and
after each step restrict by \ref{item:S1} to the old domain together with the
new variable.  This requires at most $|V|-1$ extension steps, and before each
step the domain has size less than $|V|\leq s\leq k$.  We obtain
$h^{(i,a)}\in\mathcal H$ with $\dom(h^{(i,a)})=V$ and
$h^{(i,a)}(x_i)=a$.  Being a partial homomorphism whose domain contains every
variable occurring in the constraint, $h^{(i,a)}$ satisfies it, so
\[
  t^{(i,a)}\coloneqq \bigl(h^{(i,a)}(x_1),\dots,h^{(i,a)}(x_s)\bigr)\in R^{\T}.
\]
Now take the meet of the finitely many tuples $t^{(i,a)}$, over all $i\leq s$
and all $a\in S_{x_i}$, coordinatewise.  Since $\wedge$ is a polymorphism,
$R^{\T}$ is closed under coordinatewise meets, so the resulting tuple lies in
$R^{\T}$.  Its $j$-th coordinate is the meet of all $h^{(i,a)}(x_j)$.  Every
such value lies in $S_{x_j}$, and every element $c\in S_{x_j}$ occurs among
them, namely as $h^{(j,c)}(x_j)$; by idempotency the $j$-th coordinate is
therefore exactly $\bigwedge\nolimits S_{x_j}=g(x_j)$.  Hence
$(g(x_1),\dots,g(x_s))\in R^{\T}$, and $g$ is a homomorphism $\struct J\to\T$.  
\end{proof}

\section{Missing calculations for Proposition~\ref{prop:betweenness}}  
Recall that
\(\widehat R\) lists the restrictions of a completion over \(\struct{C}\)
to the subcharts on $012$, $013$, $023$, and $123$, in this order:
\begin{align*}
  \widehat R(x_0,x_1,y_0,y_1)\quad\Longleftrightarrow\quad
  \exists z\;\bigl(&U_{\struct C}(z)
  \wedge \operatorname{Res}_{\struct A,\struct C,\iota_{012}}(x_0,z)
  \wedge \operatorname{Res}_{\struct A,\struct C,\iota_{013}}(x_1,z)\\
  &{}\wedge \operatorname{Res}_{\struct B,\struct C,\iota_{023}}(y_0,z)
  \wedge \operatorname{Res}_{\struct B,\struct C,\iota_{123}}(y_1,z)\bigr).
\end{align*}

\begin{lemma}
\label{lem:betweenness-completion-certificate}
For the charts used in the proof of \cref{prop:betweenness}, the following
statements hold.
\begin{itemize} 
\item The completion fibres over \(\struct{A}\) and \(\struct{B}\) have $5$ elements. Moreover, we have $F_{\struct A}=\{a_2\}$ and $F_{\struct B}=\{b_0\}$.
\item The formulas \(A_*\) and \(B_*\) define the $3$
strict completions $\{a_0,a_1,a_2\}$ and $\{b_0,b_1,b_2\}$, respectively. 
\item On these strict completions, the relation defined by
\(\widehat R\) is exactly the relation displayed in
\eqref{eq:betw-obstruction-relation}.
\end{itemize}
\end{lemma}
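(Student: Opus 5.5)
The plan is to compute the fibres explicitly from the description of completions in \cref{sec:inside-out}. A completion of a source chart keeps the unprimed data and the markers fixed. Its primed reduct is a model of \(\Gamma(\Phi_{\mathrm{Betw}})\), that is, a congruence \(E'\) together with a betweenness relation \(\operatorname{Betw}'\) whose quotient lies in the age of \((\mathbb Q;\operatorname{Betw})\). The linking conjuncts force \(E'\) and \(\operatorname{Betw}'\) to agree with the unprimed relations on non-free tuples.

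\textbf{Step 1: the fibre over \(\struct A\).} In \(\struct A\), the pairs inside \(\{0,1\}\) are non-free, so \(E'\) cannot identify \(0\) with \(1\). The only possible identifications are therefore \(0\sim2\) or \(1\sim2\), each giving a two-class quotient. Such a quotient has empty betweenness, and this is consistent with the linking axioms because every Betw-triple on the chart is free or has repeated entries that are forced false. This accounts for the 2 nonstrict completions. If \(E'\) is equality, the quotient is a three-point linear betweenness, which is determined by its middle point. That gives \(a_0,a_1,a_2\), for 5 completions in total. The fibre over \(\struct B\) is handled symmetrically.

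\textbf{Step 2: the fixed points \(F_{\struct A}\) and \(F_{\struct B}\).} Pullback along \(\alpha=(0\ 1)\) swaps \(a_0\leftrightarrow a_1\) and fixes \(a_2\). It also swaps the two nonstrict completions \(0\sim2\leftrightarrow 1\sim2\). Hence \(F_{\struct A}=\{a_2\}\), and by the same reasoning with \(\beta\), \(F_{\struct B}=\{b_0\}\).

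\textbf{Step 3: the relation \(\widehat R\).} I will enumerate the completions of \(\struct C\). Identifications are only allowed across sides, so \(E'\) is a partial matching between \(\{0,1\}\) and \(\{2,3\}\).
\begin{itemize}
\item In the strict case, \(E'\) is equality and the quotient is a 4-point linear betweenness. Up to reversal there are 12 such orderings, and each one induces a tuple of middle points on the subcharts \(012,013,023,123\). Listing them should reproduce exactly the 12 tuples in \eqref{eq:betw-obstruction-relation}.
\item In the nonstrict cases, some subchart restriction is nonstrict. So when \(\widehat R\) is restricted to \(A_*^2\times B_*^2\), only strict completions of \(\struct C\) contribute, with one subtlety: a completion of \(\struct C\) with a single identification, say \(0\sim3\), restricts strictly to \(012\) and \(123\) but nonstrictly to \(013\) and \(023\). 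It is therefore excluded automatically, since every restriction must land in the strict sets.
\end{itemize}

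\textbf{Step 4: the sort formulas \(A_*\) and \(B_*\) (the main obstacle).} For \(A_*(x)\) I need the following.
\begin{itemize}
\item Each strict \(a_i\) has two completions \(z\) of \(\struct C\) with restriction \(x\) on \(012\): one whose \(123\)-restriction is \(b_0\), and one whose \(023\)-restriction is \(b_0\). These can be read off from the strict list, allowing nonstrict values in the other coordinates if needed.
\item Each nonstrict \(x\), say \(0\sim2\) on \(012\), has no such \(z\). Both conjuncts require a restriction equal to \(b_0\), which is strict on three points. Here \(z\) identifies \(0\) with \(2\). The \(123\)-restriction avoids \(0\), so I must check directly that the two conditions cannot hold simultaneously in the two conjuncts. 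Concretely, the point identified with \(2\) forces one of the subcharts \(023\) or \(123\) that contains both to be nonstrict in the corresponding conjunct.
\end{itemize}
This case analysis over the matchings of \(E'\), together with the analogous one for \(B_*\), is where I expect the real bookkeeping. I would carry it out by tabulating, for each of the few matchings, which of the four subcharts become nonstrict.
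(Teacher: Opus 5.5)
Your proposal is correct and follows essentially the paper's route: you enumerate the fibres over \(\struct A\), \(\struct B\), \(\struct C\), use the automorphism action to get \(F_{\struct A}\) and \(F_{\struct B}\), and search for witnesses to obtain \(A_*\), \(B_*\) and \(\widehat R\). The one difference is that you discard the nonstrict cases structurally (an identified cross pair such as \(0\sim2\) makes every subchart containing it nonstrict, so \(a_{02}\) fails the \(023\)-conjunct and \(a_{12}\) fails the \(123\)-conjunct), where the paper reads this off its table of all \(26\) completions of \(\struct C\); you still have to carry out the finite check that the \(12\) strict orders, translated to chart indices via the increasing enumerations, give exactly the tuples in \eqref{eq:betw-obstruction-relation} and provide the strict witnesses.
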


\begin{proof}
In every completion of any of the three charts, a pair contained in one side
is non-free. The linking conjuncts of \(\Delta\) therefore force the primed
congruence \(E'\) to agree on such pairs with the unprimed congruence \(E\),
which is equality. Consequently, every \(E'\)-class contains at most one
left point and at most one right point, and every nontrivial identification is
cross-side.

Write \(a_i\) for the strict completion over \(\struct{A}\) in which \(i\)
is the middle point, and write \(a_{02}\) and \(a_{12}\) for the completions
in which the indicated pair is identified; treat $\struct{B}$ similarly. 
It follows that 
 $\mathcal{E}_{\struct{A}}=\{a_0,a_1,a_2,a_{02},a_{12}\}$  and   $\mathcal{E}_{\struct{B}}=\{b_0,b_1,b_2,b_{01},b_{02}\}.$
These lists are exhaustive: on the fixed labelled universe of the source
chart, a strict $3$-point completion is uniquely determined by its middle
point, whereas a nonstrict completion can identify
the unique point from one side only with one of the two points from the
other side.

A completion over \(\struct{C}\) is specified by its congruence and by a
linear order on the congruence classes, modulo reversal. By the preceding
observation, the nontrivial congruence classes form a partial matching between
the left side \(\{0,1\}\) and the right side \(\{2,3\}\). There are
$ 
 4!/2=12
$ 
completions without an identification. There are \(4\) choices of one
cross-side pair and \(3!/2=3\) betweenness orders on the resulting three
classes, giving \(12\) further completions. Finally, the two perfect
matchings give one completion each. This proves that the completion fibre over
\(\struct{C}\) has \(12+12+2=26\) elements.

\cref{tab:betweenness-completion-certificate} is the complete restriction certificate. Square brackets mark
an identified pair; the displayed quotient order is merely a representative
of its reversal class. The last four columns give the restrictions to
\(012,013,023,123\), respectively.

\begin{table}[ht]
\centering
\small
\setlength{\tabcolsep}{5.5pt}
\renewcommand{\arraystretch}{1.04}
\begin{tabular}{@{}r l c c c c@{}}
\toprule
   &  representative & \(012\) & \(013\) & \(023\) & \(123\) \\
\midrule
 1  & \(0<1<2<3\)       & \(a_1\)    & \(a_1\)    & \(b_1\)    & \(b_1\) \\
 2  & \(0<1<3<2\)       & \(a_1\)    & \(a_1\)    & \(b_2\)    & \(b_2\) \\
 3  & \(0<2<1<3\)       & \(a_2\)    & \(a_1\)    & \(b_1\)    & \(b_0\) \\
 4  & \(0<2<3<1\)       & \(a_2\)    & \(a_2\)    & \(b_1\)    & \(b_2\) \\
 5  & \(0<3<1<2\)       & \(a_1\)    & \(a_2\)    & \(b_2\)    & \(b_0\) \\
 6  & \(0<3<2<1\)       & \(a_2\)    & \(a_2\)    & \(b_2\)    & \(b_1\) \\
 7  & \(1<0<2<3\)       & \(a_0\)    & \(a_0\)    & \(b_1\)    & \(b_1\) \\
 8  & \(1<0<3<2\)       & \(a_0\)    & \(a_0\)    & \(b_2\)    & \(b_2\) \\
 9  & \(1<2<0<3\)       & \(a_2\)    & \(a_0\)    & \(b_0\)    & \(b_1\) \\
 10 & \(1<3<0<2\)       & \(a_0\)    & \(a_2\)    & \(b_0\)    & \(b_2\) \\
 11 & \(2<0<1<3\)       & \(a_0\)    & \(a_1\)    & \(b_0\)    & \(b_0\) \\
 12 & \(2<1<0<3\)       & \(a_1\)    & \(a_0\)    & \(b_0\)    & \(b_0\) \\
\midrule
 13 & \([02]<1<3\)      & \(a_{02}\) & \(a_1\)    & \(b_{01}\) & \(b_0\) \\
 14 & \([02]<3<1\)      & \(a_{02}\) & \(a_2\)    & \(b_{01}\) & \(b_2\) \\
 15 & \(1<[02]<3\)      & \(a_{02}\) & \(a_0\)    & \(b_{01}\) & \(b_1\) \\
 16 & \([03]<1<2\)      & \(a_1\)    & \(a_{02}\) & \(b_{02}\) & \(b_0\) \\
 17 & \([03]<2<1\)      & \(a_2\)    & \(a_{02}\) & \(b_{02}\) & \(b_1\) \\
 18 & \(1<[03]<2\)      & \(a_0\)    & \(a_{02}\) & \(b_{02}\) & \(b_2\) \\
 19 & \(0<[12]<3\)      & \(a_{12}\) & \(a_1\)    & \(b_1\)    & \(b_{01}\) \\
 20 & \(0<3<[12]\)      & \(a_{12}\) & \(a_2\)    & \(b_2\)    & \(b_{01}\) \\
 21 & \([12]<0<3\)      & \(a_{12}\) & \(a_0\)    & \(b_0\)    & \(b_{01}\) \\
 22 & \(0<[13]<2\)      & \(a_1\)    & \(a_{12}\) & \(b_2\)    & \(b_{02}\) \\
 23 & \(0<2<[13]\)      & \(a_2\)    & \(a_{12}\) & \(b_1\)    & \(b_{02}\) \\
 24 & \([13]<0<2\)      & \(a_0\)    & \(a_{12}\) & \(b_0\)    & \(b_{02}\) \\
 25 & \([02]<[13]\)     & \(a_{02}\) & \(a_{12}\) & \(b_{01}\) & \(b_{02}\) \\
 26 & \([03]<[12]\)     & \(a_{12}\) & \(a_{02}\) & \(b_{02}\) & \(b_{01}\) \\
\bottomrule
\end{tabular}
\caption{The \(26\) completions over \(\struct{C}\) and their four
three-point restrictions.}
\label{tab:betweenness-completion-certificate}
\end{table}

The automorphisms \(\alpha=(0\ 1)\) and \(\beta=(1\ 2)\) act on \(\mathcal{E}_{\struct{A}}\) and \(\mathcal{E}_{\struct{B}}\) as  
$$ 
 (a_0\ a_1)(a_{02}\ a_{12})(a_2)  \quad \text{and} \quad   (b_1\ b_2)(b_{01}\ b_{02})(b_0),$$ 
respectively. 
Consequently, the loops of \(G_\alpha\) and \(G_\beta\) are \((a_2,a_2)\) and \((b_0,b_0)\), respectively. Since
\(F_{\struct A}(x)=G_\alpha(x,x)\) and
\(F_{\struct B}(y)=G_\beta(y,y)\), we have 
\(F_{\struct A}=\{a_2\}\) and \(F_{\struct B}=\{b_0\}\).

It remains to verify the two conjuncts in each of the definitions of \(A_*\)
and \(B_*\). 
Recall:
\begin{align*} 
  A_*(x)  \coloneqq U_{\struct A}(x)  &\wedge\exists x_1,y_0,y_1\,
       \bigl(\widehat R(x,x_1,y_0,y_1)\wedge F_{\struct B}(y_1)\bigr) \\
    & \wedge\exists x'_1,y'_0,y'_1\,
       \bigl(\widehat R(x,x'_1,y'_0,y'_1)\wedge F_{\struct B}(y'_0)\bigr), \\
  B_*(y) \coloneqq U_{\struct B}(y) & \wedge\exists x_0,x_1,y_1\,
       \bigl(\widehat R(x_0,x_1,y,y_1)\wedge F_{\struct A}(x_0)\bigr) \\
    & \wedge\exists x'_0,x'_1,y'_1\,
       \bigl(\widehat R(x'_0,x'_1,y,y'_1)\wedge F_{\struct A}(x'_1)\bigr). 
\end{align*}

For \(A_*\), the first conjunct asks for a row whose first
coordinate is \(x\) and whose fourth coordinate is \(b_0\), while the second
asks for a row whose first coordinate is \(x\) and whose third coordinate is
\(b_0\). The witnesses, or the absence of one, are as follows:
\[
\begin{array}{c|ccccc}
x&a_0&a_1&a_2&a_{02}&a_{12}\\ \hline
\text{first conjunct}&11&5&3&13&\text{none}\\
\text{second conjunct}&10&12&9&\text{none}&21
\end{array}
\]
Thus \(A_*=\{a_0,a_1,a_2\}\). For \(B_*\), the first conjunct asks for a row
whose third coordinate is \(y\) and whose first coordinate is \(a_2\), while
the second asks for one whose third coordinate is \(y\) and whose second
coordinate is \(a_2\). Here the certificate is
\[
\begin{array}{c|ccccc}
y&b_0&b_1&b_2&b_{01}&b_{02}\\ \hline
\text{first conjunct}&9&3&6&\text{none}&17\\
\text{second conjunct}&10&4&5&14&\text{none}
\end{array}
\]
and hence \(B_*=\{b_0,b_1,b_2\}\).

Finally, restricting \cref{tab:betweenness-completion-certificate} to
\(A_*^2\times B_*^2\) leaves precisely rows \(1\)--\(12\). Reading their
four restriction entries gives exactly the 12 tuples in
\eqref{eq:betw-obstruction-relation}.
\end{proof}

\section{Proof of Lemma~\ref{lem:binary-corridor-hardness}}
\label{app:binary-corridor-hardness}

\corridorgame*

We use one-tape alternating Turing machines
\[
 M=(Q_{\exists},Q_{\forall},Q_{\mathrm{acc}},Q_{\mathrm{rej}},
     \Sigma,\Gamma,\square,\delta,q_{\mathrm{init}}),
\]
where the four sets $Q_{\exists},Q_{\forall},Q_{\mathrm{acc}},Q_{\mathrm{rej}}$ of \emph{states} form a partition of $Q$, the input
alphabet $\Sigma$ is contained in the tape alphabet $\Gamma$, $\square$ is the
blank symbol, $q_{\mathrm{init}}\in Q$ is the \emph{initial state}, and we have a \emph{transition function}
\[
 \delta\colon
 (Q_{\exists}\cup Q_{\forall})\times\Gamma
 \longrightarrow  2^{(Q\times\Gamma\times\{-1,0,1\})}.
\]  
We assume that $\delta(q,a)$ is nonempty for every
$q\in Q_{\exists}\cup Q_{\forall}$ and $a\in\Gamma$; accepting and rejecting
states have no outgoing transitions.
The tape is indexed by the non-negative integers. A move to the left from cell
$0$ leaves the head at cell $0$. 

A \emph{configuration} consists of a state,
a head position, and a tape content which is blank at all but finitely many
positions. If $(q',b,d)\in\delta(q,a)$, then the corresponding successor is
obtained by writing $b$ in place of the symbol $a$ scanned by the head,
changing the state to $q'$, and moving the head by $d$, subject to the
convention at cell $0$. On input $w\in\Sigma^*$, the initial configuration has
state $q_{\mathrm{init}}$, has $w$ written from cell $0$ onwards and
blanks thereafter, and has its head at cell $0$.

We use the acceptance-game semantics. A configuration in $Q_{\exists}$ is
owned by the EP and one in $Q_{\forall}$ by the UP; its owner chooses one of its
successor configurations. Reaching a configuration in $Q_{\mathrm{acc}}$ is
winning for the EP, whereas reaching one in $Q_{\mathrm{rej}}$, or producing an
infinite play, is winning for the UP. Thus an input is accepted precisely when the EP
has a winning strategy from its initial configuration.

The machine uses space at most $s(m)$ on inputs of length $m$ if the head
position is smaller than $s(m)$ in every configuration reachable from the
initial configuration on such an input. 
The proof below relies on the alternation theorem of
Chandra, Kozen, and Stockmeyer~\cite{ChandraKozenStockmeyerAlternation} in the form
$ 
 \ComplexityClass{AExpSpace}=\ComplexityClass{2ExpTime}.
$

\begin{proof}
Let $\ell_{\mathcal G}$ be the size of the encoding of the game and put
$N=2^n$. There are at
most
 $ 
 N_{\mathrm{rows}}\coloneqq |T|^N=|T|^{2^n}
 \leq 2^{2^{O(\ell_{\mathcal G})}}
$ 
rows. Given one or two explicitly represented rows, legality and the relations
$\mathord\to_i$ can be tested in time
$N\cdot \operatorname{poly}(\ell_{\mathcal G})$.  By the induction following
\eqref{eq:corridor-winning-stages}, a row belongs to $W_r$ exactly when it
admits a winning strategy tree of height at most $r$.  The increasing sequence
$W_0\subseteq W_1\subseteq\cdots$ stabilizes after at most
$N_{\mathrm{rows}}$ stages.  Enumerating the rows and row pairs and computing
this sequence therefore takes
$N_{\mathrm{rows}}^{O(1)}\cdot N \cdot\operatorname{poly}(\ell_{\mathcal G})
=2^{2^{O(\ell_{\mathcal G})}}$ time. This proves membership in
$\ComplexityClass{2ExpTime}$.\,\footnote{The same enumeration can also verify conditions~\ref{item:NF1}--\ref{item:NF3} within this bound if membership in the normal form
is regarded as part of the input language rather than as a promise.} 

For hardness, let $L\in\ComplexityClass{2ExpTime}$. By the alternation theorem,
there are a polynomial \(p\) and an alternating Turing machine \(M\)
which decides \(L\) using space at most \(2^{p(m)}\) on inputs of
length \(m\).  We may assume that \(M\) has the normal form described
below.
Given an input word \(w\), with \(m=|w|\), we construct in polynomial
time a binary exponential corridor game
$ 
  \mathcal G_{M,w}
$ 
of width
$ 
  N=2^{p(m)}
$ 
such that
\[
  w\in L
  \quad\Longleftrightarrow\quad
  \text{EP has a winning strategy in }\mathcal G_{M,w}.
\]
The remainder of the proof defines \(\mathcal G_{M,w}\) and verifies
the equivalence.

Fix an input word $w$ of length $m$. We bring $M$ into a binary-branching normal form. 
For every non-halting state \(q\) and every tape symbol \(a\in\Gamma\), choose a finite full binary decision tree whose leaves are labelled by the transitions in \(\delta(q,a)\).
The root represents the original state scanning
that symbol, and every other internal vertex is represented by a fresh
auxiliary state with the same owner as the original state. Label the two edges
leaving every internal vertex by $0$ and $1$. An edge leading to another
internal vertex changes only the state, whereas an edge leading to a leaf
performs the original transition labelling that leaf. Duplicate leaves if
necessary; in particular, if only one original transition is available, label
both leaves by it. If an auxiliary state is paired with a scanned symbol for
which it was not intended, let both labelled transitions lead, without
changing the tape or head, to a fresh rejecting state. Such configurations
are unreachable.
For an existential state, the decision tree replaces a finite disjunction of
successors by iterated binary disjunctions; for a universal state, it replaces
a finite conjunction by iterated binary conjunctions. Hence the acceptance
game is unchanged. The auxiliary transitions use no additional tape cells,
and every non-halting configuration now has exactly one successor  under each
label $i\in\{0,1\}$.

We also arrange that the initial state $q_{\mathrm{init}}$ does not occur as
the target state of any transition. Rename the old initial state and add a
fresh state used only at the start, with the same owner (existential or
universal) or the same halting status (accepting or rejecting) and, when
non-halting, the same two labelled outgoing transitions as the renamed old
initial state. Since every transition moves the head by at
most one cell, its effect can be checked locally using the scanned cell and
its two neighbours. After increasing $p$ if necessary, assume that
$m+1<2^{p(m)}$, and set $N\coloneqq 2^n$ for 
 $n\coloneqq p(m)$.  
 
 An \emph{$N$-cell configuration} is a configuration
whose head position lies in $\{0,\ldots,N-1\}$ and whose tape is blank
outside this interval.  Write $\mathcal C_N$ for the set of all $N$-cell
configurations.
On $\mathcal C_N$, we  \emph{totalize} both labelled successor
maps: if a prescribed move would leave the interval $\{0,\ldots,N-1\}$, its
write and  state update are performed but the head remains at the
corresponding boundary cell. At the left boundary this agrees with our tape
convention. At the right boundary it changes no transition reachable from the
initial configuration, because $M$ uses at most $N$ tape cells on input $w$.
Hence, totalization does not change the acceptance game. 

We represent every
rejecting halting configuration by a UP row which is not marked as winning,
and give it itself as its successor under both labels. Thus reaching a
rejecting configuration of $M$ is simulated by an infinite self-looping play,
which is winning for UP. Consequently, for every non-accepting
$C\in\mathcal C_N$ and every $i\in\{0,1\}$, there is a unique successor under
label $i$; we denote it by $\operatorname{succ}_i(C)$. For a rejecting
configuration, this convention gives $\operatorname{succ}_i(C)=C$.
Writing $C_{\mathrm{init}}(w)$ for the initial configuration, set
\[
 \mathcal S_w\coloneqq
 \{C\in\mathcal C_N \mid \text{the state of $C$ is not $q_{\mathrm{init}}$}\}
 \cup\{C_{\mathrm{init}}(w)\}.
\]
Since no transition enters $q_{\mathrm{init}}$, the set $\mathcal S_w$
contains the initial configuration and is closed under both successor maps on
its non-accepting configurations.

We use the following standard claim, spelling out the
part needed for condition~\ref{item:NF3}. Given $M$ and an input word $w$ of
length $m$, one can construct in polynomial time a tile set and relations
$H,V_0,V_1$, with the already fixed parameter $n=p(m)$ written in unary,
together with a bijection
$ 
 \operatorname{enc}\colon
 \mathcal S_w
 \longrightarrow \{\text{legal rows}\},   
$ 
such that:
\begin{itemize}
    \item the initial configuration is mapped to the unique initial row,
    \item the
first tile records whether the configuration is existential, universal,
accepting, or rejecting, and, 
    \item for every non-accepting
$C\in\mathcal S_w$ and every $C'\in\mathcal S_w$,
\begin{align}
 \bigl(\operatorname{enc}(C)(j),\operatorname{enc}(C')(j)\bigr)\in V_i
 \text{ for all }j<N
 \quad\Longleftrightarrow\quad
 C'=\operatorname{succ}_i(C). \label{eq:succ}
\end{align}
\end{itemize} 

To prove the claim, encode a configuration as a word of length $N$ over the
tape alphabet, with one distinguished head position. The tile at position $j$
records the three-cell window consisting of positions $j-1,j,j+1$. At the two
boundaries, a missing neighbouring cell is replaced by a fixed left or right
endmarker. The centre of every tile is a genuine tape symbol, and membership
in $B_\ell$ and $B_r$ requires the corresponding boundary pattern. The tile
also records the current state, copied into every tile, and a
\emph{head-position tag} from
$\{\mathsf{before},\mathsf{head},\mathsf{after}\}$. If the unique head is at
position $h$, the intended tag at position $j$ is $\mathsf{before}$ when
$j<h$, $\mathsf{head}$ when $j=h$, and $\mathsf{after}$ when $j>h$; the tag
$\mathsf{head}$ is required exactly when the centre of the recorded window is
the distinguished head position. Thus the intended sequence of tags has the
form
$ 
 \mathsf{before}^{*}\,\mathsf{head}\,\mathsf{after}^{*}.
$ 
The relation $H$ enforces this pattern locally. Besides requiring successive
windows to overlap exactly and the copied states to agree, it permits
only the following pairs of tags on horizontally adjacent tiles:
\[
 (\mathsf{before},\mathsf{before}),\qquad
 (\mathsf{before},\mathsf{head}),\qquad
 (\mathsf{head},\mathsf{after}),\qquad
 (\mathsf{after},\mathsf{after}).
\]
Moreover, $B_\ell$ contains only tiles whose tag is
$\mathsf{before}$ or $\mathsf{head}$, whereas $B_r$ contains only tiles whose
tag is $\mathsf{head}$ or $\mathsf{after}$. The allowed adjacent pairs forbid
two head tags and also forbid passing directly from $\mathsf{before}$ to
$\mathsf{after}$. The right boundary excludes an all-$\mathsf{before}$ row,
and the left boundary excludes an all-$\mathsf{after}$ row. Hence every legal
row has exactly one distinguished head position.

Add a deterministic horizontal position component with $O(m)$ states which
labels the first $m+1$ tape cells and then remains in a blank-tail state; $H$
determines this component throughout every legal row, and $B_\ell$ fixes its
initial value.  In the fresh initial state $q_{\mathrm{init}}$, restrict the permitted
tiles so that the first $m$ cells spell $w$, the head is in its initial
position, and all remaining cells are blank.  For every other state,
all well-formed configurations are permitted.  Exact window overlap and the
boundary conditions now show that the legal rows are precisely the
encodings of the configurations in $\mathcal S_w$.  This proves the existence of the bijection  $\operatorname{enc}$, and it also makes the encoding of $C_{\mathrm{init}}(w)$ the
unique initial row.  Let $B_{\mathrm{in}}$ consist of the left-boundary tiles
carrying $q_{\mathrm{init}}$.  The leftmost tile carries the copied control
state, so the existential, universal, accepting, and rejecting states
define a partition
$B_{\mathrm{EP}},B_{\mathrm{UP}},B_{\mathrm{win}}$ of $B_\ell$, where rejecting
states are placed in $B_{\mathrm{UP}}$.

For an interior head position, the three tiles centred immediately around the head are illustrated
in~\cref{fig:corridor-row-encoding}; 
the deterministic position component is omitted.
\begin{figure}[ht]
\centering
\begin{tikzpicture}[x=1cm,y=1cm]
  \node[horn math,minimum width=3.55cm,minimum height=1.55cm,
        font=\scriptsize] (lefttile) at (0.85,.82)
    {$\rho(h-1)$\\[2pt]
     window $(a_{h-2},a_{h-1},a_h)$\\[1pt]
     state $q$, tag $\mathsf{before}$};
  \node[horn math,draw=hornaccent,fill=hornwarmfill,
        minimum width=3.55cm,minimum height=1.55cm,
        font=\scriptsize] (headtile) at (5.85,.82)
    {$\rho(h)$\\[2pt]
     window $(a_{h-1},a_h,a_{h+1})$\\[1pt]
     state $q$, tag $\mathsf{head}$};
  \node[horn math,minimum width=3.55cm,minimum height=1.55cm,
        font=\scriptsize] (righttile) at (10.85,.82)
    {$\rho(h+1)$\\[2pt]
     window $(a_h,a_{h+1},a_{h+2})$\\[1pt]
     state $q$, tag $\mathsf{after}$};

  \draw[horn relation] (lefttile.east) --
    node[horn label,above] {$H$} (headtile.west);
  \draw[horn relation] (headtile.east) --
    node[horn label,above] {$H$} (righttile.west);
\end{tikzpicture}
\caption{Assuming \(1\leq h\leq N-2\), three horizontally consecutive
tiles \(\rho(h-1),\rho(h),\rho(h+1)\), where \(\rho\in T^N\) is the
legal row encoding a configuration whose head is at position \(h\).
At a boundary head position, the missing neighbouring tape symbol is
replaced by the corresponding endmarker. A tile stores
an overlapping three-cell tape window, the copied state $q$, and its
head-position tag; $H$ checks adjacent overlap and horizontal consistency.}
\label{fig:corridor-row-encoding}
\end{figure}

We define $V_i$ to consist precisely of the pairs of source and target
tiles satisfying the conditions described below.
The relation compares tiles at the same position and requires their
deterministic position components to agree. On
the head-position tags it never permits either of the pairs
$ 
 (\mathsf{before},\mathsf{after}) $ and $ 
 (\mathsf{after},\mathsf{before}).$  
Suppose that legal source and target rows have their heads at positions $h$
and $h'$, respectively.  If $h'\leq h-2$, then position $h-1$ has vertical
tag pair $(\mathsf{before},\mathsf{after})$; if $h'\geq h+2$, then position
$h+1$ has vertical tag pair $(\mathsf{after},\mathsf{before})$.  The two
exclusions therefore imply
$  
 |h-h'|\leq 1.  $

For a rejecting source state, we let $V_i$ require equality of the source and
target data, thereby implementing the identity successor fixed above.
Suppose henceforth that the source state is non-halting.  At the source-head position, \(V_i\) sees the current state and the tape symbol under the head, and hence determines the unique totalized transition with label \(i\).  Let
$d\in\{-1,0,1\}$ be its  head displacement, where a prescribed
outward move at a boundary has displacement $0$.  At the
source-head position, $V_i$ requires the target tag to be
$ 
 \mathsf{after}$, $\mathsf{head}$, or 
 $\mathsf{before}$ 
according as $d=-1,0,1$, respectively.  Together
with~$ |h-h'|\leq 1$, this forces
$h'=h+d$.

At the source-head position, $V_i$ also checks the symbol written there and
the new state.  At every other position it requires the centre tape symbol to
remain unchanged.  Exact window overlap assembles these centre-symbol
requirements into the target tape word, while horizontal compatibility in
the target row propagates the copied new state through the entire row.  The
boundary endmarkers make the effective displacement visible at the two ends.
Thus these local conditions implement the totalized successor map defined
above.
Consequently $V_i$ need not be the graph of a function on individual tiles,
but it induces the graph of the function
$C\mapsto\operatorname{succ}_i(C)$ on legal rows. This proves~\eqref{eq:succ}
and condition~\ref{item:NF3} follows: since $\operatorname{enc}$ is
onto the legal rows, every non-winning legal row is
$\operatorname{enc}(C)$ for a unique non-accepting $C\in\mathcal S_w$; closure
of $\mathcal S_w$ gives $\operatorname{succ}_i(C)\in\mathcal S_w$, and
\eqref{eq:succ} says that
$\operatorname{enc}(\operatorname{succ}_i(C))$ is the unique legal row
pointwise $V_i$-related to $\operatorname{enc}(C)$.

The tile set and all compatibility relations of
\(\mathcal G_{M,w}\) have size polynomial in \(|M|+|w|\), while its
width is specified succinctly by \(n=p(|w|)\), meaning that the actual
width is \(2^n\).  Hence
$ 
  w\longmapsto\mathcal G_{M,w}
$ 
is a polynomial-time many-one reduction. Under the bijection
\(\operatorname{enc}\), every labelled transition from a non-halting
configuration corresponds to the unique transition between the associated
legal rows. Accepting configurations correspond to winning rows, while
rejecting configurations are replaced by UP-owned identity loops. This
modification preserves the winner. Therefore
\[
  w\in L
  \iff M\text{ accepts }w
  \iff \text{EP wins }\mathcal G_{M,w}.
\]
Since \(L\) was an arbitrary language in
\(\ComplexityClass{2ExpTime}\), the binary exponential corridor game
is \(\ComplexityClass{2ExpTime}\)-hard. This finishes the proof.
\end{proof}

\bibliographystyle{abbrv}
\bibliography{references}

@article{RydvalInsideOut,
  author        = {Rydval, Jakub},
  title         = {Finitely Bounded Homogeneity Turned Inside-Out},
  journal       = {Journal of Mathematical Logic},
  year          = {2025},
  note          = {Article 2550017. Conference version: ``Homogeneity and Homogenizability: Hard Problems for the Logic SNP'', ICALP 2024, LIPIcs 297, Article 150},
  doi           = {10.1142/S0219061325500175},
  eprint        = {2108.00452},
  archivePrefix = {arXiv}
}

@article{SeidlTreeAutomata,
  author  = {Helmut Seidl},
  title   = {Deciding Equivalence of Finite Tree Automata},
  journal = {SIAM Journal on Computing},
  volume  = {19},
  number  = {3},
  pages   = {424--437},
  year    = {1990},
  doi     = {10.1137/0219027}
}

@article{Siggers_2010,
  title = {{A} strong {M}al'cev condition for locally finite varieties omitting the unary type},
  volume = {64},
  doi = {10.1007/s00012-010-0082-3}, 
  number = {1–2},
  journal = {Algebra Univ.},
  author = {Siggers, Mark H.},
  year = {2010},
  pages = {15–20},
note = {DOI: \href{https://doi.org/10.1007/s00012-010-0082-3}{10.1007/s00012-010-0082-3}}
}

@inproceedings{BodirskyKnauerStarkeASNP,
  author        = {Bodirsky, Manuel and Kn{\"a}uer, Simon and Starke, Florian},
  title         = {{ASNP}: A Tame Fragment of Existential Second-Order Logic},
  booktitle     = {Developments in Language Theory 2020},
  series        = {Lecture Notes in Computer Science},
  volume        = {12098},
  pages         = {149--162},
  publisher     = {Springer},
  year          = {2020},
  doi           = {10.1007/978-3-030-51466-2_13},
  eprint        = {2001.08190},
  archivePrefix = {arXiv}
}

@inproceedings{AxelssonHagueKreutzerLangeLatte,
  author    = {Roland Axelsson and Matthew Hague and Stephan Kreutzer and
               Martin Lange and Markus Latte},
  title     = {Extended Computation Tree Logic},
  booktitle = {Logic for Programming, Artificial Intelligence, and Reasoning},
  series    = {Lecture Notes in Computer Science},
  volume    = {6397},
  pages     = {67--81},
  publisher = {Springer},
  year      = {2010},
  doi       = {10.1007/978-3-642-16242-8_6}
}

@article{ArtaleEtAlSafety,
  author  = {Alessandro Artale and Luca Geatti and Nicola Gigante and
             Andrea Mazzullo and Angelo Montanari},
  title   = {Complexity of Safety and coSafety Fragments of Linear Temporal Logic},
  journal = {Proceedings of the AAAI Conference on Artificial Intelligence},
  volume  = {37},
  number  = {5},
  pages   = {6236--6244},
  year    = {2023},
  doi     = {10.1609/aaai.v37i5.25768}
}

@article{BOP,
  author  = {Libor Barto and Jakub Opr{\v{s}}al and Michael Pinsker},
  title   = {The Wonderland of Reflections},
  journal = {Israel Journal of Mathematics},
  volume  = {223},
  number  = {1},
  pages   = {363--398},
  year    = {2018},
  doi     = {10.1007/s11856-017-1621-9}
}

@article{ChandraKozenStockmeyerAlternation,
  author  = {Ashok K. Chandra and Dexter C. Kozen and Larry J. Stockmeyer},
  title   = {Alternation},
  journal = {Journal of the ACM},
  volume  = {28},
  number  = {1},
  pages   = {114--133},
  year    = {1981},
  doi     = {10.1145/322234.322243}
}

@article{ChlebusDominoTiling,
  author  = {Bogdan S. Chlebus},
  title   = {Domino-Tiling Games},
  journal = {Journal of Computer and System Sciences},
  volume  = {32},
  number  = {3},
  pages   = {374--392},
  year    = {1986},
  doi     = {10.1016/0022-0000(86)90036-X}
}

@article{HuntingtonKlineBetweenness,
  author  = {Huntington, Edward V. and Kline, J. Robert},
  title   = {Sets of Independent Postulates for Betweenness},
  journal = {Transactions of the American Mathematical Society},
  volume  = {18},
  number  = {3},
  year    = {1917},
  pages   = {301--325},
  doi     = {10.2307/1988957}
}

@article{KolaitisVardi,
  author  = {Kolaitis, Phokion G. and Vardi, Moshe Y.},
  title   = {Conjunctive-Query Containment and Constraint Satisfaction},
  journal = {Journal of Computer and System Sciences},
  volume  = {61},
  number  = {2},
  pages   = {302--332},
  year    = {2000}
}

@article{BartoKozik,
  author  = {Barto, Libor and Kozik, Marcin},
  title   = {Constraint Satisfaction Problems Solvable by Local Consistency Methods},
  journal = {Journal of the ACM},
  volume  = {61},
  number  = {1},
  pages   = {3:1--3:19},
  year    = {2014}
}

@article{BartoCollapse,
  author  = {Barto, Libor},
  title   = {The Collapse of the Bounded Width Hierarchy},
  journal = {Journal of Logic and Computation},
  volume  = {26},
  number  = {3},
  pages   = {923--943},
  year    = {2016}
}

@article{KKVW,
  author  = {Kozik, Marcin and Krokhin, Andrei and Valeriote, Matthew and Willard, Ross},
  title   = {Characterizations of Several {Maltsev} Conditions},
  journal = {Algebra Universalis},
  volume  = {73},
  number  = {3},
  pages   = {205--224},
  year    = {2015}
}

@article{BojanczykKlinLasota2014,
  author = {Miko{\l}aj Boja{\'n}czyk and Bartek Klin and S{\l}awomir Lasota},
  title = {Automata Theory in Nominal Sets},
  journal = {Logical Methods in Computer Science},
  volume = {10},
  number = {3},
  year = {2014},
  doi = {10.2168/LMCS-10(3:4)2014},
  url = {https://doi.org/10.2168/LMCS-10(3:4)2014},
  note = {available on \href{https://arxiv.org/abs/1402.0897}{arXiv}}
}

@inproceedings{clemente2015reachability,
  title={{Reachability Analysis of First-order Definable Pushdown Systems}},
  author={Clemente, Lorenzo and Lasota, S{\l}awomir},
  booktitle={Proc. 24th Annual {EACSL} Conference on Computer Science Logic (CSL’15)},
  pages={244},
  year={2015}
}

@inproceedings{borgwardt2024precise, 
  title = {{T}he {P}recise {C}omplexity of {R}easoning in $\mathcal{ALC}$ with $\omega$-{A}dmissible {C}oncrete {D}omains},
  author = {Stefan Borgwardt, Filippo De Bortoli, Patrick Koopmann}, 
  booktitle = {Proc. 37th Int. Workshop on Description Logics (DL’24)},
  year = {2024},
  pages = {1--17},
  publisher = {CEUR-WS.org}, 
  volume = {3739}, 
  note = {available \href{https://ceur-ws.org/Vol-3739/paper-1.pdf}{online}}
}

@article{baader2022using,
  title = {{U}sing {M}odel {T}heory to {F}ind {D}ecidable and {T}ractable {D}escription {L}ogics with {C}oncrete {D}omains},
  author = {Franz Baader and Jakub Rydval},
  journal = {J. Autom. Reason.},
  volume = {66},
  number = {3},
  pages = {357--407},
  year = {2022},
  publisher = {Springer}, 
  doi = {10.1007/s10817-022-09626-2} 
}

@inproceedings{Bodirsky-Mottet,
  author = {Manuel Bodirsky and Antoine Mottet},
  title = {Reducts of finitely bounded homogeneous structures, and lifting tractability from finite-domain constraint satisfaction},
  booktitle = {Proc. 31st Annual {ACM/IEEE} Symposium on Logic in Computer Science (LICS’16)},  
  pages = {623--632},
  year = {2016}, 
  doi = {10.1145/2933575.2934515},
  publisher = {{ACM}},
  url = {https://doi.org/10.1145/2933575.2934515},
  editor = {Martin Grohe and Eric Koskinen and Natarajan Shankar}, 
  note = {long version available on \href{https://arxiv.org/abs/1601.04520}{arXiv}}
}

@article{schrottenloher2022universal,
  title = {{Universal Horn Sentences and the Joint Embedding Property}},
  author = {Manuel Bodirsky and Jakub Rydval and Andr{\'{e}} Schrottenloher},
  journal = {Discret. Math. Theor. Comput. Sci.},
  volume = {23},
  year = {2021},
  publisher = {episciences.org}, 
  doi = {10.46298/dmtcs.7435},
  url = {https://doi.org/10.46298/dmtcs.7435},
  number = {2},
  pages = {1--15},
  note = {available on \href{https://arxiv.org/abs/2104.11123}{arXiv}}
}

@inproceedings{bojanczyk2013verification,
  title = {Verification of database-driven systems via amalgamation},
  author = {Mikolaj Bojanczyk and Luc Segoufin and Szymon Torunczyk},
  booktitle = {Proc. 32nd {ACM} {SIGMOD-SIGACT-SIGART} Symposium on Principles of Database Systems (PODS’13)},
  pages = {63--74},
  year = {2013}, 
  doi = {10.1145/2463664.2465228},
  publisher = {{ACM}}, 
  editor = {Richard Hull and Wenfei Fan} 
}

@inproceedings{lachlan1986homogeneous,
  title = {Homogeneous structures},
  author = {Lachlan, Alistair H},
  booktitle = {Proc. Int. Congress of Mathematicians},
  pages = {314--321},
  year = {1986},
  organization = {AMS},
  school = {Berkeley, California, USA},
  note = {available \href{https://iuuk.mff.cuni.cz/~sbraunfeld/Lachlan_ICM.pdf}{online}} 
}

@article{lutz2007tableau,
  title = {{A} {T}ableau {A}lgorithm for {D}escription {L}ogics with {C}oncrete {D}omains and {G}eneral {TB}oxes},
  author = {Carsten Lutz and Maja Mili\v{c}i\'{c}},
  journal = {J. Autom. Reason.},
  volume = {38},
  number = {1-3},
  pages = {227--259},
  year = {2007},
  publisher = {Springer}, 
  doi = {10.1007/s10817-006-9049-7} 
}

@inproceedings{mottet2022smooth,
  title={Smooth approximations and {CSP}s over finitely bounded homogeneous structures},
  author={Mottet, Antoine and Pinsker, Michael},
  booktitle = {Proc. 37th Annual {ACM/IEEE} Symposium on Logic in Computer Science (LICS’22)}, 
  pages={1--13},
  year={2022}, 
  doi = {10.1145/3531130.353335} 
}

@misc{OttoLocalGlobal,
  author        = {Martin Otto},
  title         = {Amalgamation and Symmetry: From Local to Global Consistency in the Finite},
  year          = {2017},
  eprint        = {1709.00031},
  archivePrefix = {arXiv},
  primaryClass  = {math.CO},
  note          = {Version 3, revised July 2024}
}

@inproceedings{DalmauPearson,
  author    = {V{\'{i}}ctor Dalmau and Justin Pearson},
  title     = {Closure Functions and Width 1 Problems},
  booktitle = {Principles and Practice of Constraint Programming -- CP'99},
  editor    = {Joxan Jaffar},
  series    = {Lecture Notes in Computer Science},
  volume    = {1713},
  pages     = {159--173},
  publisher = {Springer},
  year      = {1999},
  doi       = {10.1007/978-3-540-48085-3_12}
}

@article{McKinsey1943,
  author  = {McKinsey, J. C. C.},
  title   = {The Decision Problem for Some Classes of Sentences Without Quantifiers},
  journal = {Journal of Symbolic Logic},
  volume  = {8},
  pages   = {61--76},
  year    = {1943},
  doi     = {10.2307/2268172}
}

@article{Horn1951,
  author  = {Horn, Alfred},
  title   = {On Sentences Which Are True of Direct Unions of Algebras},
  journal = {Journal of Symbolic Logic},
  volume  = {16},
  number  = {1},
  pages   = {14--21},
  year    = {1951},
  doi     = {10.2307/2268661}
}

@article{Galvin1970,
  author  = {Galvin, Fred},
  title   = {Horn Sentences},
  journal = {Annals of Mathematical Logic},
  volume  = {1},
  number  = {4},
  pages   = {389--422},
  year    = {1970},
  doi     = {10.1016/0003-4843(70)90002-1}
}

@phdthesis{rydval2022CSP,
  author = {Jakub Rydval},
  title = {Using Model Theory to Find Decidable and Tractable Description Logics with Concrete Domains},
  school = {Technische Universit{\"a}t Dresden},
  year = {2022}
}

@incollection{LachlanStableSurvey1997,
  author    = {Lachlan, Alistair H.},
  title     = {Stable Finitely Homogeneous Structures: A Survey},
  booktitle = {Algebraic Model Theory},
  editor    = {Hart, Bradd T. and Lachlan, Alistair H. and Valeriote, Matthew A.},
  series    = {NATO ASI Series C},
  volume    = {496},
  pages     = {145--159},
  publisher = {Kluwer Academic Publishers},
  address   = {Dordrecht},
  year      = {1997},
  doi       = {10.1007/978-94-015-8923-9_6}
}

\end{document}